\pgfplotsset{compat=1.15}
\tiny\color{gray}, 
\newtheorem{Def}{Definition}
\newtheorem{theorem}{Theorem}
\newtheorem{lemma}{Lemma}
\newtheorem{problem}{Problem}
\theoremstyle{remark}
\newtheorem{Remark}{Remark}
\newcommand{\1}{\mathds{1}}
\newcommand{\mac}[1]{\mathcal{#1}}
\newcommand{\mtr}[1]{\mathrm{#1}}
\newcommand{\ot}{\otimes}
\DeclareRobustCommand{\erase}{\bgroup\markoverwith{\textcolor{red}{\rule[.5ex]{2pt}{0.4pt}}}\ULon}
\newcommand{\coloneq}{:=}
\begin{document}

\renewcommand{\thetable}{\arabic{table}}
\renewcommand{\thefigure}{\arabic{figure}}
\setcounter{equation}{0}
\setcounter{table}{0}
\setcounter{figure}{0}
\setcounter{Def}{0}
\setcounter{page}{1}

\onecolumngrid

\title{ \mbox{Comprehensive Study on Heisenberg-limited Quantum Algorithms} \\
for Multiple Observables Estimation}

\author{Yuki Koizumi}
\email{koizumiyuki903@gmail.com}
 \affiliation{Department of Applied Physics, University of \mbox{Tokyo, 7-3-1 Hongo, Bunkyo-ku, Tokyo 113-8656, Japan}}
\author{Kaito Wada}%
\affiliation{\mbox{Graduate School of Science and Technology, Keio University, 3-14-1 Hiyoshi, Kohoku, Yokohama, Kanagawa, 223-8522, Japan}}

\author{Wataru Mizukami}
\affiliation{\mbox{Center for Quantum Information and Quantum Biology, Osaka
University, 1-2 Machikaneyama, Toyonaka, Osaka 560-0043, Japan}}
\affiliation{Graduate School of Engineering Science, \mbox{Osaka University, 1-3 Machikaneyama, Toyonaka, Osaka 560-8531, Japan}}

\author{Nobuyuki Yoshioka}
\email{ny.nobuyoshioka@gmail.com}
\affiliation{\mbox{International Center for Elementary Particle Physics, The University of Tokyo, 7-3-1 Hongo, Bunkyo-ku, Tokyo 113-0033, Japan}}

\begin{abstract}
In the accompanying paper of arXiv:2505.00697, we have presented a generalized scheme of adaptive quantum gradient estimation (QGE) algorithm, and further proposed two practical variants which not only achieve doubly quantum enhancement in query complexity regarding estimation precision and number of observables, but also enable minimal cost to estimate $k$-RDMs in fermionic systems among existing quantum algorithms. Here, we provide full descriptions on the algorithm, and provide theoretical guarantee for the estimation precision in terms of the root mean squared error. 
Furthermore, we analyze the performance of the quantum amplitude estimation algorithm, another variant of the Heisenberg-limited scaling algorithm, and show how the estimation error is minimized under the circuit structure that resembles the phase estimation algorithm.
We finally describe the details for the numerical evaluation of the query complexity of the Heisenberg-limited algorithms and sampling-based methods to make a thorough comparison in the task of estimating fermionic $k$-RDMs.
\end{abstract}

\maketitle
\onecolumngrid

\section{Introduction} \label{sec:introduction}
Collecting information of observables from quantum system is one of the core subroutines that undermine various subfields of quantum information theory such as quantum simulation~\cite{lloyd1996universal, aspuru2005simulated},   quantum machine learning~\cite{biamonte2017quantum}, and quantum benchmarking~\cite{knill2007optimal, Erhard2019Characterizing, Eisert2020Quantum}. 
Especially when we aim for high accuracy,  it is known that there is a quadratic gap between standard sampling and quantum-enhanced 
estimation, and correspondingly,
the minimum level of achievable uncertainty in two schemes are discriminated as the standard quantum limit (SQL) and the Heisenberg limit (HL). It has been of keen interest for researchers to answer what the best strategy to reach such scaling is, and what resource is required for such methods, not least because of its importance in that it is the ultimate limitation due to quantum mechanical principle, but also from practical interest. 


When the quantum computers are not capable of running deep circuits, information is extracted via sampling in the Pauli basis by projective measurement. Since this approach is hardware friendly, numerous ideas have been proposed with the scope of implementation in existing devices, including mapping into randomized measurement~\cite{enk2012measuring, elben2018renyi,elben2019statistical, Huang:2020tih, elben2023randomized}, partial tomography using local basis transformation~\cite{cotler2020quantum, bonet2020nearly}, and adaptive scheme~\cite{garcia2021learning}. A notable algorithm is the classical shadow tomography~\cite{Huang:2020tih} which achieves $O(\log M/\epsilon^2)$ scaling for specific set of observables, where $M$ is the number of observables and $\varepsilon$ is the target precision.
By varying the measurement basis, it is known to achieve near-optimal sample complexity not only for local Pauli observables but also for fermionic~\cite{Zhao:2020vxp, low2022classical} and bosonic~\cite{gu2023efficient} systems.
By allowing use of ancila qubits, one may also achieve $O(\log M/\epsilon^4)$ scaling for various  sets of observables 
with fixed magnitude, e.g., Pauli operators, using the gentle measurement~\cite{huang2021information}.

One of the most renown examples achieving the HL scaling in observable estimation is the quantum amplitude estimation (QAE) algorithm~\cite{brassard2002amplitude,knill2007optimal,rall2020estimating}.
The QAE algorithm was originally proposed by Brassard {\it et al.}~\cite{brassard2002amplitude} as a method to estimate $\|\Pi |\psi\rangle\|^2$ 
for a projector $\Pi$ and a state $|\psi\rangle$ by performing  phase estimation algorithm to a Grover operator consisting of reflections.
Stimulated by proposals to apply the QAE algorithm to Monte Carlo sampling~\cite{montanaro2015quantum}, further works presented advancement in the scaling with respect to the root mean square error (MSE)
and the reduction of constant factors~\cite{suzuki2020amplitude,nakaji2020faster,grinko2021iterative,zhao2022adaptive,rall2023amplitude,fukuzawa2023modified}, although there is no known method that provably achieves the lower bound determined by the genuine HL including its prefactor.

Another thread of exploration for the HL scaling is based on the quantum gradient estimation (QGE) algorithm, which is an extension of the quantum phase estimation (QPE) algorithm to multi-parameter setup~\cite{jordan2005fast, gilyen2019gradient}.
By encoding the information of all the observables into the phase, the QGE algorithm can achieve quadratic improvement in both accuracy and number of non-commuting target observables.
Initial proposals have shown that QGE indeed achieves almost quadratic speedup 
in terms of the additive error~\cite{apeldoorn2023quantum, huggins2022nearly}, while it was later shown that this yields unwanted logarithmic correction 
in a more strict error metric---the root MSE
~\cite{wada2024Heisenberg}.
Such a challenge was overcome by Ref.~\cite{wada2024Heisenberg} which proposed to employ an adaptive scheme; the HL scaling was achieved in terms of the root MSE, with the bonus of reducing the use of ancillary qubits and classical computation to determine the phase factor for quantum signal processing.

In this work, we aim to complement the details of accompanying paper~\cite{koizumi2025letter}, which aim to
answer a crucial question to be answered for practitioners of quantum computing; {\it how does the state-of-the-art estimation algorithms compare with each other in practical problems?} Our contributions are three-fold. First, we prove that the QAE algorithm using sinusoidal amplitude state as the probe achieves near-optimal query complexity in MSE with a significantly small prefactor, under the circuit structure resembling the phase estimation algorithm. Second, we present a generalized framework of the QGE algorithm that leads us to propose two novel variants, later referred to as Method I which incorporates the symmetry inherent in the target state and Method II which further utilizes the parallel scheme in a single-shot manner to accelerate the estimation. Finally, we show results of numerical comparison between state-of-the-art algorithms in challenging problems, namely partial tomography in strongly correlated fermionic systems such as the FeMo cofactor and Fermi-Hubbard models.

The remainder of this paper is organized as follows.
In Sec.~\ref{sec:Prob_setting}, we summarize the problem setting considered in this work.
In Sec.~\ref{sec:est_with_QAE}, we first revisit the quantum amplitude estimation algorithm, and then show how to nearly saturate the lower bound of MSE under a circuit structure similar to the phase estimation algorithm. In Sec.~\ref{sec:qge-general}, we present a generalized framework of adaptive QGE algorithm, together with a brief review on prior works.
Then we proceed to Sec.~\ref{sec:all-symemtry-diven-QGE} where we propose two practical variants: Method I which incorporates symmetry in target system, and Method II which inherits the benefit of the fully-parallel scheme and adaptive scheme.
We provide numerical evaluations of state-of-the-art estimation algorithm in Sec.~\ref{sec:Application}, and finally present the discussion in Sec.~\ref{sec:discussion}.

\section{Problem setting}
\label{sec:Prob_setting}

In this section, we briefly outline the problem setting considered in this work. Our study focuses on estimating the expectation values of $2^N$-dimensional $M$ observables $\{O_j\}_{j=1}^M $ with spectral norm $\|O_j\| \leq 1$ regarding an $N$-qubit pure state prepared as $\ket{\psi} = U_{\psi}|0\rangle^{\otimes N} $.
The Hilbert space dimension is alternatively denoted as $d$.  
Here, we consider quantum algorithms that yield estimators $\hat u_j$ of the expectation values $
\langle O_j \rangle \coloneq \langle \psi \lvert O_j \rvert \psi \rangle,
$
ensuring that the overall mean squared error (MSE) remains below a specified threshold $\varepsilon^2$, i.e., for any $j$, $\hat u_j$ satisfies $\mtr{MSE}[\hat u_j] \coloneq \mathbb{E}[(\hat u_j -  \langle O_j \rangle)^2 ] \leq \varepsilon^2 $.
We choose the root MSE $\varepsilon$ as the target precision measure because it successfully bounds other
measures of uncertainty such as additive error or confidence intervals, while the converse claim does not hold in general
\cite{2009NJPh...11g3023H, PhysRevA.63.053804}.

To facilitate the implementation of the target state and target observables on quantum circuits, we employ a state-preparation oracle access and a block-encoding framework.
The quantum state of interest, $\lvert \psi \rangle$, is assumed to be generated via a state preparation oracle $U_\psi$, which transforms the initial computational basis state $\lvert 0 \rangle^{\otimes N}$ into $\lvert \psi \rangle$, formally described by $
U_\psi : \ket{0}^{\otimes N} \mapsto \lvert \psi \rangle.
$
We further assume oracle access to both $U_\psi$ and its inverse $U_\psi^\dagger$.
Additionally, for each observable $O_j$, we can call to a block-encoding $B_j$ of $O_j$. In other words, we have each corresponding unitary operator $B_j$ acting on the combined system of the $d$-dimensional Hilbert space and an $a$-qubit ancillary space, such that the top-left block of $B_j$ is exactly $O_j$. Formally, we define a block-encoding as follows:
\begin{Def}[Block-encoding]
For positive real numbers \(\alpha, \varepsilon\), a non-negative integer \(a\), and an \(n\)-qubit operator \(A\), an \((n+a)\)-qubit operator \(U_A\) is said to be a \((\alpha, a, \varepsilon)\)-block-encoding of \(A\) if
\begin{eqnarray}
    \norm{A - \alpha \Bigl[(\bra{0}^{\otimes a} \otimes \1_n) U_A (\ket{0}^{\otimes a} \otimes \1_n)\Bigr]} \leq \varepsilon,
\end{eqnarray}
where $\norm{\cdot}$ denotes the spectral norm and $\1_n$ is the identity operator acting on an 
$n$-qubit space.
\end{Def}

Under such a condition, we evaluate the performance of various methods in terms of the query complexity of $U_\psi$ and its inverse. Since $U_\psi$ typically scales with the system size, the efficiency of the quantum algorithm is predominantly determined by the total number of queries to $U_\psi$ (with the cost of $U_\psi^\dagger$ assumed to be comparable). Therefore, it is imperative to design an estimation algorithm that minimizes statistical uncertainty while keeping the query complexity of $U_\psi$ and its inverse as low as possible. Summarizing setups and assumptions, the formal problem is expressed as
\begin{problem} \label{prob:query_estimation}
Let \(N\) and \(M\) be positive integers. Suppose we have oracle access to a state-preparation unitary 
\(U_\psi: \ket{0}^{\otimes N} \mapsto \ket{\psi}\) and its inverse, and assume that for each \(j \in \{1,\ldots,M\}\) we have access to a block-encoding \(B_j\) of a \(2^N\)-dimensional observable \(O_j\). Given a target precision \(\varepsilon \in (0,1)\), the goal is to produce estimators \(\{\hat{u}_j\}_{j=1}^M\) satisfying
\begin{align}
    \forall j \in \{1,\ldots,M\}, \quad \mtr{MSE}[\hat{u}_j] \coloneq \mathbb{E}\Bigl[\Bigl(\hat{u}_j - \bra{\psi} O_j \ket{\psi}\Bigr)^2\Bigr] \le \varepsilon^2,
\end{align}
while minimizing the number of queries to \(U_\psi\) and \(U_\psi^\dagger\).
Here, the expectation value is taken over all possible outcomes of $\{\hat{u}_j\}$.
\end{problem}

It is important to remark that we assume oracular access to the state preparation unitary $U_\psi$ and its inverse, which is a stronger assumption than the case where one can only make a call to the state $|\psi\rangle$ itself. The sampling-based methods such as the classical shadow tomography can be applied under both cases, while numerous algorithms with HL can only be applied when $U_\psi$ and its inverse are available.


We also note that although it is not standard practice, we often write $\mathcal{O}(x/\varepsilon)$ in the form $\mathcal{O}(x)/\varepsilon$ for asymptotic analysis. This notation serves to distinguish between $\tilde{\mathcal{O}}(x/\varepsilon)$—which carries additional logarithmic factors in $\varepsilon^{-1}$—and $\tilde{\mathcal{O}}(x)/\varepsilon$, which has no such terms. 


\section{Quantum amplitude estimation algorithm}

In Sec.~\ref{sec:HL_amplitude}, we revisit the quantum amplitude estimation algorithm,
and then prove that the MSE is nearly minimized by employing a {\it sine state}, which is a probe state with its amplitude expressed by a sinusoidal function.%
Then, in Sec.~\ref{sec:Expectation_base_QAE}, 
we study the complexity to extract information of a given physical observable based on the nearly-optimal QAE algorithm. 

\label{sec:est_with_QAE}

\subsection{Heisenberg-limited scaling quantum amplitude estimation algorithm}
\label{sec:HL_amplitude}

Quantum Amplitude Estimation (QAE) algorithm \cite{brassard2002amplitude} stands as a cornerstone quantum algorithm that offers a quadratic speedup in query complexity over classical methods in estimating an amplitude of a quantum state. 
Concretely, the goal of the QAE algorithm is to estimate an amplitude $a \in [0,1]$ when given access to a quantum oracle $\mac A $ defined by
\begin{align}
    \mathcal{A} \ket{0}^{\otimes N} \ket{0} = \sqrt{1-a}\,\ket{\Psi_0} \ket{0} + \sqrt{a}\,\ket{\Psi_1} \ket{1}
\end{align}
 where $N$ is a positive integer, 
 $\ket{\Psi_1}$ and $\ket{\Psi_0}$ are arbitrary normalized quantum states. 
The original QAE paper \cite{brassard2002amplitude} employed additive error as the error metric and constructed a quantum circuit based on the quantum phase estimation (QPE) algorithm. In this approach, a quantum circuit is divided into two main components: the system register and the probe register. The system register stores the quantum state that encodes the target amplitude, and the probe register extracts the information of amplitude into the complex phase by   coherently interacting with the system register.
After a sequence of interactions, one measures the probe system from which the phase information is extracted.
 
Similar to the case of the QPE algorithm, it has been recognized that the performance of the algorithm heavily relies on the concrete structure of the probe state.
While the state of uniform superposition over all computational basis 
was originally used in Ref.~\cite{brassard2002amplitude},
it was pointed out in the context of QPE that such a state fails to achieve the HL scaling~\cite{berry2009perform, ji2008parameter}. Thus, QPE-based amplitude estimation with a superposition state is also expected not to achieve the HL scaling in terms of MSE (see discussion around Eq.~\eqref{eq:superposition-mse}). 
Some studies have proposed how to overcome this issue~\cite{Suzuki:2020amp, PRXQuantum.2.010346}, while it remains open how we can implement the QAE with minimal queries to state preparation oracle. 
This motivates the need to optimize the initial state of the probe register and the lower bound of the MSE.

\subsubsection{Sine state enables HL-scaling QAE}

To address the above problem, we show that QPE-based amplitude estimation using a \textit{sine state} attains the HL scaling, and that it is almost optimal regarding the query to the oracle $\mathcal{A}$.

\begin{lemma}
\label{lem:HL_scaling_QAE}
Let \(q\in\mathbb{N}\). For a given quantum oracle 
\begin{align}
    \mathcal{A} \ket{0}^{\otimes N} \ket{0} = \sqrt{1-a}\,\ket{\Psi_0} \ket{0} + \sqrt{a}\,\ket{\Psi_1} \ket{1},
\end{align}
with arbitrary \(N\)-qubit states \(\ket{\Psi_0}\), \(\ket{\Psi_1}\), and \(a\in[0,1]\), we consider the quantum amplitude estimation circuit that uses the \(q\)-qubit sine state,
\begin{align}
    \label{eq:def_sin_state}
    \sqrt{\frac{2}{2^q}} \sum_{k=0}^{2^q-1} \sin\Bigl(\frac{k \pi}{2^q}\Bigr) \ket{k},
\end{align}
as depicted in Fig.~\ref{fig:heisenberg-limited_improved_circuit}. When this circuit yields the measurement outcome \(l\) in the computational basis for \(l \in \{0,1,\ldots,2^q-1\}\), we set the estimator \(\hat{a} \coloneq \sin^2(l \pi/2^q)\) for the amplitude \(a\). Then, such an estimator $\hat{a}$ satisfies
\begin{align}
    \widehat{\mtr{MSE} } [\hat{a}] = \Bigl(\frac{\pi}{2^{q+1}}\Bigr)^2 + \mac{O}\Bigl(\Bigl(\frac{1}{2^q}\Bigr)^3\Bigr),
\end{align}
with $2^q+1$ uses of $\mac{A}$ and its inverse. Here, \( \widehat{\mtr{MSE}[\hat{a}]} \) denotes the supremum of the MSE for $a$, $\mtr{MSE}[\hat a] \coloneq \mathbb{E}[(a-\hat a)^2]$, over the interval $[0,1]$, i.e., the maximum value of MSE over all possible target values $a \in [0,1]$.
\end{lemma}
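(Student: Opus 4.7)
The plan is to analyze the phase-estimation-type circuit in Fig.~\ref{fig:heisenberg-limited_improved_circuit} with the sine state as the probe register, and to derive the MSE of the estimator $\hat a = \sin^2(l\pi/2^q)$ by pulling the analysis back to the underlying phase estimator $\hat\theta \coloneq l\pi/2^q$.

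First, I would set up the standard Grover-like iterate $Q$ associated with $\mathcal{A}$, built from the reflections about $\ket{0}^{\otimes N}\ket{0}$ and about the ``good'' subspace $\1\otimes\ket{1}\bra{1}$. On the two-dimensional invariant subspace containing $\mathcal{A}\ket{0}^{\otimes N}\ket{0}$, the operator $Q$ has eigenvalues $e^{\pm 2i\theta}$ with $a = \sin^2\theta$, $\theta\in[0,\pi/2]$, and the initial system state decomposes with equal weight onto the two eigenstates. The circuit is thus equivalent to a QPE driven by the sine state for the phase $\pm 2\theta$, and a direct count of applications of $\mathcal{A}$ and $\mathcal{A}^\dagger$ in the circuit yields the stated $2^q+1$ oracle uses.

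Second, I would derive the outcome distribution on the probe after the inverse QFT. A direct geometric-sum calculation using the sine-state amplitudes $\sqrt{2/2^q}\sin(k\pi/2^q)$ gives
\begin{align}
    \Pr(l\mid\theta) = \tfrac{1}{2}\bigl[F_q(l/2^q - \theta/\pi) + F_q(l/2^q + \theta/\pi)\bigr],
\end{align}
where $F_q$ is the squared magnitude of the Fourier transform of the sine state, a kernel sharply concentrated at $l/2^q \equiv \pm\theta/\pi\pmod 1$ with second moment of order $(1/2^{q+1})^2$. Because $\hat a = \sin^2(l\pi/2^q)$ is invariant under $l\mapsto -l \pmod{2^q}$, the two $\pm\theta$ branches contribute symmetrically and the sign ambiguity is irrelevant for amplitude estimation.

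Third, I would apply the identity $\sin^2 A - \sin^2 B = \sin(A+B)\sin(A-B)$ to rewrite
\begin{align}
    (\hat a - a)^2 = \sin^2(\hat\theta+\theta)\,\sin^2(\hat\theta-\theta),
\end{align}
and Taylor-expand around the concentration point $\hat\theta\approx \theta$. Using the asymptotic sine-state Holevo-variance estimate $\mathbb{E}[\sin^2(\hat\theta-\theta)] = (\pi/2^{q+1})^2 + \mac{O}(2^{-3q})$ together with $\mathbb{E}[|\hat\theta-\theta|^3] = \mac{O}(2^{-3q})$, I would conclude $\mtr{MSE}[\hat a] = \sin^2(2\theta)(\pi/2^{q+1})^2 + \mac{O}(2^{-3q})$.

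Finally, taking the supremum over $a\in[0,1]$ corresponds to maximizing $\sin^2(2\theta)$ over $\theta\in[0,\pi/2]$, attained at $\theta=\pi/4$ with value $1$, which gives exactly the stated leading term $(\pi/2^{q+1})^2$. The main obstacle will be making the sine-state variance estimate rigorous with uniform control in $\theta$: one must obtain $F_q$ in closed form, evaluate its second moment against $\sin^2(\hat\theta-\theta)$, and bound all higher-order terms uniformly by $\mac{O}(2^{-3q})$, with particular care at the boundary $\theta\in\{0,\pi/2\}$ where the distribution of $l$ can degenerate onto a single value and the third-order remainder must be verified by direct summation rather than via the generic expansion.
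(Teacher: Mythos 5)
Your overall reduction — Grover iterate with eigenvalues $e^{\pm 2i\theta}$, equal weight on the two eigenstates, the identity $\sin^2\hat\theta-\sin^2\theta=\sin(\hat\theta+\theta)\sin(\hat\theta-\theta)$, and maximizing $\sin^2(2\theta)$ at the end — matches the paper's strategy. The problem is the step where the lemma's actual content lives. You invoke ``the asymptotic sine-state Holevo-variance estimate'' $\mathbb{E}[\sin^2(\hat\theta-\theta)]=(\pi/2^{q+1})^2+\mac{O}(2^{-3q})$ as an input, but establishing precisely this constant, uniformly in $\theta$ and for this finite-$q$ circuit, is what the lemma is proving; the Holevo variance $|\mathbb{E}[e^{i(\hat\phi-\phi)}]|^{-2}-1$ from the QPE literature is not the same functional as $\mathbb{E}[\sin^2(\hat\theta-\theta)]$ or $\tfrac14\mathbb{E}[(\cos 2\hat\theta-\cos 2\theta)^2]$, so you cannot simply import it. The paper avoids needing any such estimate: writing the MSE as $\tfrac14\cos^2(2\theta)-\tfrac12\cos(2\theta)\mathbb{E}[\cos 2\hat\theta]+\tfrac18+\tfrac18\mathbb{E}[\cos 4\hat\theta]$, it evaluates $\mathbb{E}[\cos(2s\hat\theta)]$ \emph{exactly} — the Fourier sums collapse via Kronecker deltas to nearest- and next-nearest-neighbour overlaps $\alpha_k^*\alpha_{k\pm s}$ of the probe amplitudes — yielding a closed form in $\theta$ whose expansion in $2^{-q}$ is then elementary.

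Your proposed remainder control is also shakier than you suggest. Expanding $\sin^2(\hat\theta+\theta)=\sin^2(2\theta)+(\hat\theta-\theta)\sin(4\theta)+\mac{O}((\hat\theta-\theta)^2)$ leaves a cross term $\sin(4\theta)\,\mathbb{E}[(\hat\theta-\theta)\sin^2(\hat\theta-\theta)]$ that does not vanish by symmetry for generic $\theta$ off the grid, and bounding it by $\mathbb{E}[|\hat\theta-\theta|^3]$ runs into the fact that the sine-state outcome distribution has tails decaying only like $j^{-4}$ in the bin distance $j$: the third absolute moment then picks up a harmonic sum, giving $\mac{O}(q\,2^{-3q})$ rather than the claimed $\mac{O}(2^{-3q})$. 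That is still $o(2^{-2q})$, so the leading term survives, but it does not deliver the remainder at the order stated in the lemma without the kind of exact summation the paper performs. If you want to keep your route, you must compute the kernel $F_q$ explicitly and evaluate $\mathbb{E}[\sin^2(\hat\theta-\theta)]$ and the cross term by direct summation — at which point you have essentially reproduced the paper's $\mathbb{E}[\cos(2s\hat\theta)]$ calculation.
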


\begin{figure}[t]
\centering
\includegraphics[width=1.\linewidth]{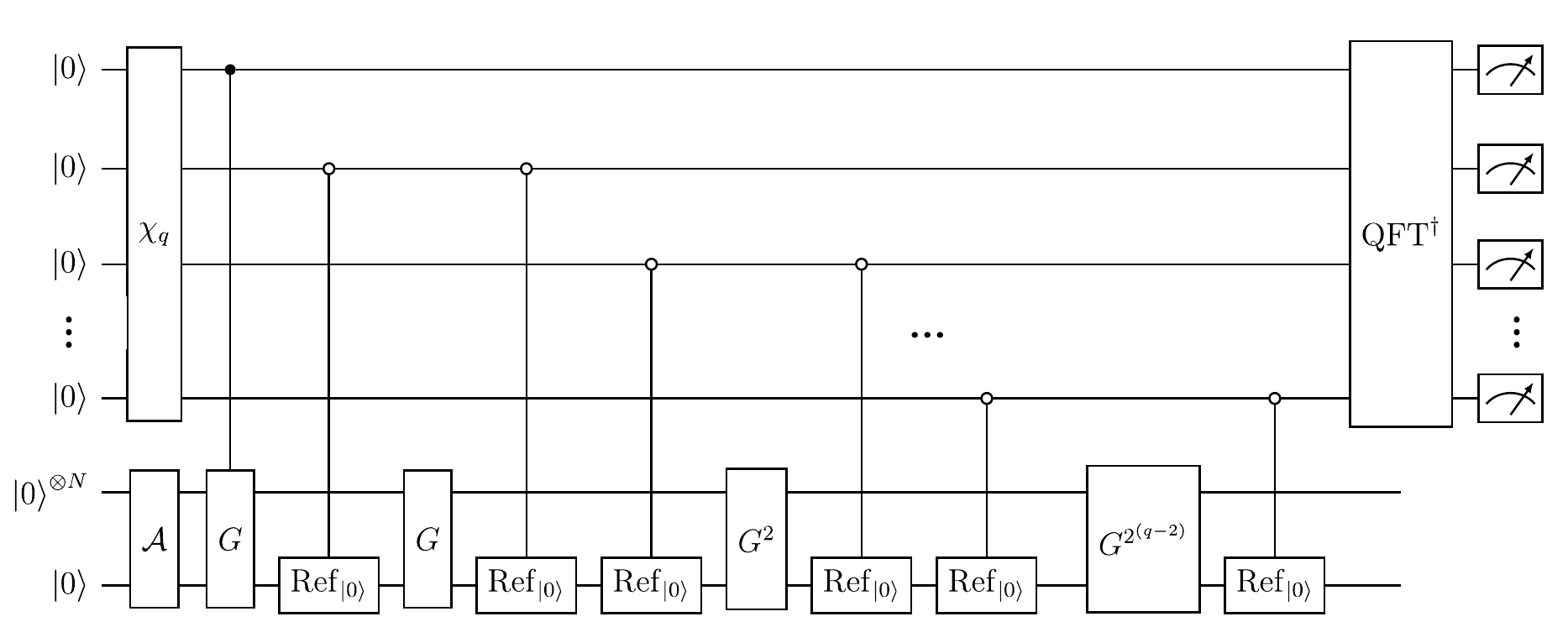}
\caption{
\justifying{
Improved circuit construction of QAE algorithm with an initial state preparation unitary $\chi_q$ inspired by the circuit in Ref.~\cite{Obrien2022efficient}. The gate labeled \(\mtr{QFT}^\dag\) implements the inverse quantum Fourier transform. In this circuit, the \((N+1)\)-qubit oracle \(\mac{A}\) embeds the target amplitude \(a\), and the operator $G \coloneqq \mac{A} \bigl(\1 - 2 \ketbra{0}^{\otimes (N+1)}\bigr) \mac{A}^\dagger \bigl(\1 - 2 \cdot \1_N \otimes \ketbra{0}\bigr)$ is defined to encode $e^{\pm i 2 \arcsin(\sqrt a )}$ on $\mac{A}\ket{0}^{\otimes (N+1)}$ with equal amplitude. Additionally, $\mtr{Ref}_{\ket{0} } $ denotes the reflection operator on $\ket{0} $, i.e., $\mtr{Ref}_{\ket{0} } \coloneq \1 -2 \ketbra{0} $. } }
\label{fig:heisenberg-limited_improved_circuit}
\end{figure}

\begin{figure}[t]
\centering
\includegraphics[width=0.7\linewidth]{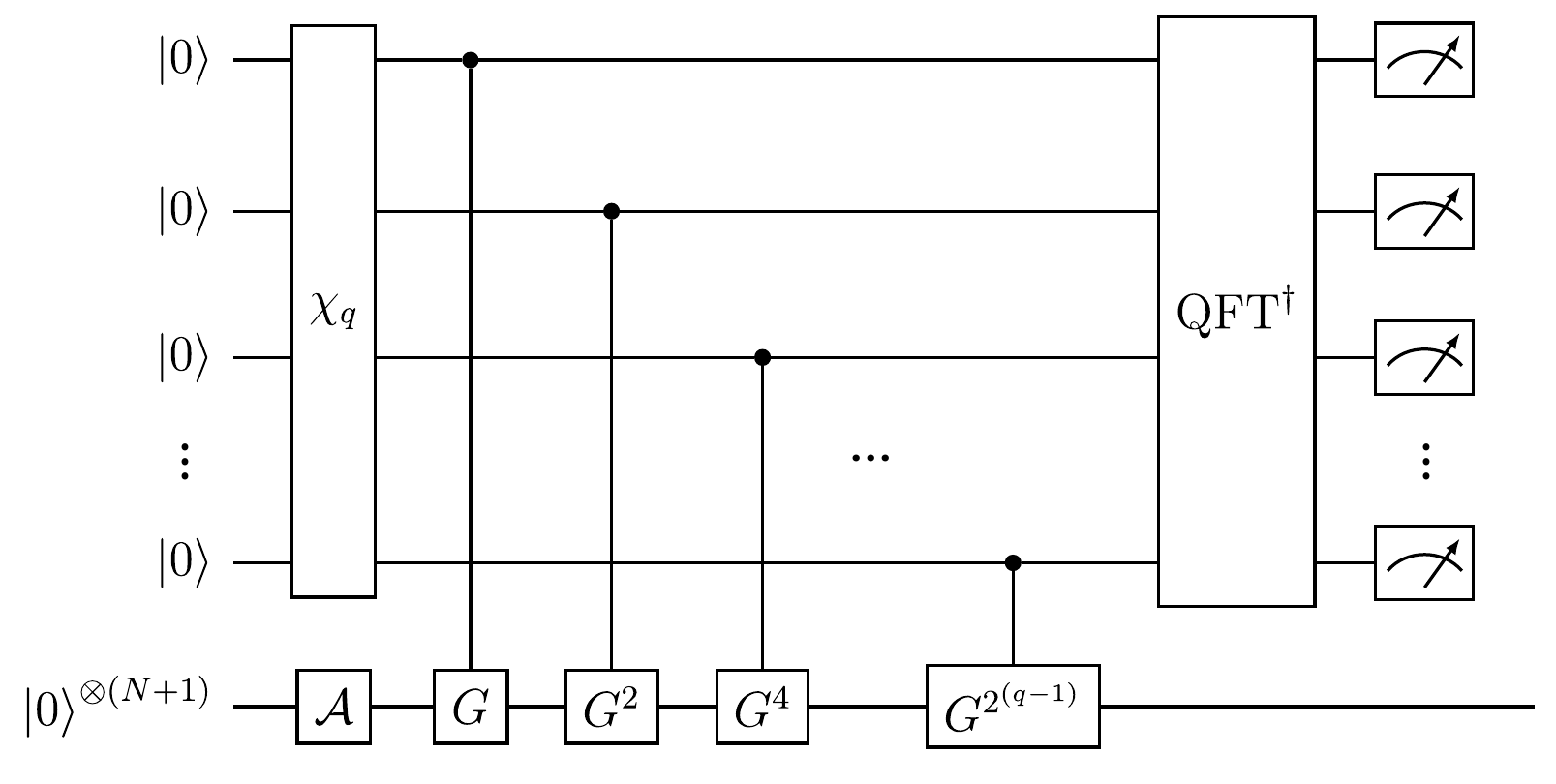}
\caption{
\justifying{
Standard circuit construction of QAE algorithm with an initial state preparation unitary $\chi_q$. The gate labeled \(\mtr{QFT}\) implements the quantum Fourier transform. In this circuit, the \((N+1)\)-qubit oracle \(\mac{A}\) embeds the target amplitude \(a\), and the operator $G \coloneq \mac{A} \bigl(\1 - 2 \ketbra{0}^{\otimes (N+1)}\bigr) \mac{A}^\dagger \bigl(\1 - 2 \cdot \1_N \otimes \ketbra{0}\bigr)$ is defined to encode $e^{\pm i 2 \arcsin(\sqrt a )}$ on $\mac{A}\ket{0}^{\otimes (N+1)}$ with equal amplitude. }}
\label{fig:heisenberg-limited_standard_circuit}
\end{figure}

\begin{proof}
To establish that the circuit in Fig.~\ref{fig:heisenberg-limited_improved_circuit} attains the HL scaling with the small constant factor, we proceed as follows. We begin by considering the action of a standard QPE-based amplitude estimation circuit shown in Fig.~\ref{fig:heisenberg-limited_standard_circuit} \cite{brassard2002amplitude}. We then show that, while functionally equivalent to the standard construction, the circuit implementation in Fig.~\ref{fig:heisenberg-limited_improved_circuit} achieves the same outcome with fewer queries \cite{Obrien2022efficient}. Next, we evaluate the probability of obtaining measurement outcome \(l\) in the computational basis, followed by an explicit derivation of \(\mtr{MSE}[\hat{a}]\) for an arbitrary initial state of the probe register. Finally, we specialize to the sine state as the initial state, demonstrating that it achieves Heisenberg-limited scaling with the small multiplicative prefactor.

In the circuit shown in Fig.~\ref{fig:heisenberg-limited_standard_circuit}, we introduce a unitary \(\chi_q\) preparing the $q$-qubit initial state:
\[
\chi_q: \ket{0}^{\otimes q} \mapsto \sum_{k=0}^{2^q-1} \alpha_k \ket{k},\quad \text{with } \sum_{k=0}^{2^q-1} |\alpha_k|^2 = 1.
\]
The Grover operator \(G\) is defined by
\begin{align}
    G \coloneq \mac{A} \bigl(\1 - 2 \ketbra{0}^{\otimes (N+1)}\bigr) \mac{A}^\dagger \bigl(\1 - 2\cdot\1_N\otimes \ketbra{0}\bigr).
\end{align}
When we denote $\theta = \arcsin (\sqrt{a}) \in [0,\pi/2]$, the matrix representation of this operator on the subspace spanned by \(\{\ket{\Psi_0}\ket{0},\,\ket{\Psi_1}\ket{1}\}\) is given by
\begin{align}
    G = \begin{pmatrix}
\cos 2\theta & -\sin 2\theta \\
\sin 2\theta & \cos 2\theta
\end{pmatrix},
\end{align}
which implies that \(G\) has eigenvalues \(e^{\mp i  2\theta}\) and corresponding eigenstates \(\ket{\Psi_\pm}:=(\ket{\Psi_0}\ket{0}\pm i \ket{\Psi_1}\ket{1})/\sqrt{2}\). 
Since we can expand the input state $\mac{A}\ket{0}^{\otimes (N+1)}$ into
\[
\mac{A}\ket{0}^{\otimes (N+1)} = \frac{e^{-i\theta}\ket{\Psi_+} + e^{i\theta}\ket{\Psi_-}}{\sqrt{2}},
\]
the resulting state after a single application of $G$ is given by, 
\begin{align}
    G (\mac{A} \ket{0}^{\otimes (N+1)}) = \frac{e^{-i\theta}e^{-i2\theta}\ket{\Psi_+} + e^{i\theta}e^{i2\theta}\ket{\Psi_-}}{\sqrt{2}}.
\end{align}
Therefore, just before the Quantum Fourier Transformation (QFT), the whole state is
\begin{align}
    \label{eq:state_before_QFT}
  \frac{1}{\sqrt{2}} \Biggl( e^{-i\theta} \sum_{k=0}^{2^q-1} e^{-ik\cdot2\theta}\,\alpha_k \ket{k}\ket{\Psi_+} + e^{i\theta} \sum_{k=0}^{2^q-1} e^{ik\cdot2\theta}\,\alpha_k \ket{k}\ket{\Psi_-} \Biggr).
\end{align}
After applying the inverse QFT, this state becomes
\begin{align}
  \frac{1}{\sqrt{2^{(q+1)}}} \sum^{2^{q}-1}_{l=0} \Biggl( e^{-i\theta} \sum_{k=0}^{2^q-1} e^{-i (k\cdot2\theta + 2\pi k \cdot \frac{l}{2^q} )}\,\alpha_k \ket{l}\ket{\Psi_+} + e^{i\theta} \sum_{k=0}^{2^q-1} e^{i(k\cdot2\theta - 2\pi k \cdot \frac{l}{2^q} )}\,\alpha_k \ket{l}\ket{\Psi_-} \Biggr).
\end{align}
The \(l\)-th measurement outcome in the computational basis occurs with the probability expressed as $P_l^{(+)} + P_l^{(-)}$ where
\begin{align}
    \label{eq:QAE_lth_probability}
    P_l^{(+)} &\coloneq \frac12 \Biggl|\sum_{k=0}^{2^q-1} \frac{\alpha_k}{\sqrt{2^q}}\,  e^{i2\pi\Bigl(1 - \frac{\theta}{\pi} - \frac{l}{2^q}\Bigr)k}\Biggr|^2, \\
    \label{eq:QAE_lth_probability2}
    P_l^{(-)} &\coloneq  \frac12 \Biggl|\sum_{k=0}^{2^q-1} \frac{\alpha_k}{\sqrt{2^q}}\, e^{i2\pi\Bigl(\frac{\theta}{\pi} - \frac{l}{2^q}\Bigr)k}\Biggr|^2,
\end{align}
for \(l\in\{0,1,\ldots,2^q-1\}\). 

Inspired by the QPE circuit implementation in Ref.~\cite{babbush2018encoding}, Ref.~\cite{Obrien2022efficient} proposed to use a circuit shown in Fig.~\ref{fig:heisenberg-limited_improved_circuit}, a variant of the  QAE circuit (Fig.~\ref{fig:heisenberg-limited_standard_circuit}) that halves the total query complexity of \(\mathcal{A}\) and its inverse. 
The key idea is that, 
whenever the \(q'\)-th qubit of the probe register is in the state \(\ket{0}\), we apply \((G^{2^{q'-1}})^\dagger\) instead of the identity operator. 
For simplicity, we introduce 
$
\mtr{Ref}_{\ket{0} } \coloneq 1 - 2 \ketbra{0}, 
$
which satisfies
\begin{eqnarray}
\quad (\1_N \ot \mtr{Ref}_{\ket{0} } ) (G)^n (\1_N \ot \mtr{Ref}_{\ket{0} } ) = ((\1_N \ot \mtr{Ref}_{\ket{0} } )  G (\1_N \ot \mtr{Ref}_{\ket{0} } ) )^n =  (G^\dagger)^n,\forall n \in \mathbb{N}.
\end{eqnarray}
From this relation, the \(n\)-th modified controlled phase gate,
$
\ketbra{0} \otimes (G^n)^\dagger + \ketbra{1} \otimes G^n,
$
can be implemented as
\begin{align}
  \Bigl(\ketbra{0} \ot \1_N \otimes \mtr{Ref}_{\ket{0} } + \ketbra{1} \ot \1_{N+1} \Bigr) \Bigl(\1 \ot G^n \Bigr)  \Bigl(\ketbra{0} \ot \1_N \otimes \mtr{Ref}_{\ket{0} } + \ketbra{1} \ot \1_{N+1} \Bigr).
\end{align}
By using this modified gate, the whole state before the inverse QFT in Fig.~\ref{fig:heisenberg-limited_improved_circuit} is expressed as 
\begin{align}
  \frac{1}{\sqrt{2}} \Biggl( e^{-i(2^{q} -3)\theta} \sum_{k=0}^{2^q-1} e^{-ik\cdot2\theta}\,\alpha_k \ket{k}\ket{\Psi_+} + e^{i(2^{q}-3)\theta} \sum_{k=0}^{2^q-1} e^{ik\cdot2\theta}\,\alpha_k \ket{k}\ket{\Psi_-} \Biggr).
\end{align}
Since no measurement is performed on the system register in the state \(\ket{\Psi^+}\) or \(\ket{\Psi^-}\), we trace out this register at the end. 
As a result, the phase difference regarding $e^{\pm i (2^q - 3)\theta}$ can be disregarded. 
When measuring this probe system after the inverse QFT, the probabilities of obtaining the \(l\)-th outcome are equivalent to those given in Eqs.~\eqref{eq:QAE_lth_probability} and \eqref{eq:QAE_lth_probability2}, respectively.

Regarding the query complexity, 
the modified controlled phase gate for $q'$-th qubit of the probe register requires $2^{q'}$ queries to $G$.
Hence, the total query complexity of our improved circuit is
\begin{align}
    \label{eq:the_query_comp_HLamp}
    1 + 2 + \sum_{q'=1}^{q-1} 2^{q'}  
    \;=\; 2^q + 1.
\end{align}
In comparison, the standard QAE circuit needs \(2^{q+1} - 1\) queries to \(\mathcal{A}\) and its inverse, so this method reduces the query complexity by approximately half.

Since we obtain the expression of the probability of the $l$-th measurement outcome, we next consider a simple estimator and evaluate its MSE.
Following the original approach in \cite{brassard2002amplitude}, a simple estimate \(\hat{\theta}_l \coloneq \frac{\pi l}{2^q}\) is adopted when the \(l\)-th outcome occurs. This phase estimator is then converted into an amplitude estimate via \(\hat{a}=\sin^2\hat{\theta}\). Using this estimator, the maximum MSE of \(\hat{a}\) can be written as
\begin{align}
     \widehat{\mtr{MSE}}[\hat{a}] &= \max_{\theta \in[0,\pi/2] } \mathbb{E}\Bigl[\Bigl(\sin^2\hat{\theta} - \sin^2\theta\Bigr)^2\Bigr] \\
    \label{eq:eval_MSE_amp_with_cos}
    &= \max_{\theta \in[0,\pi/2] }\frac{1}{4}\,\mathbb{E}\Bigl[\Bigl(\cos(2\hat{\theta}) - \cos(2\theta)\Bigr)^2\Bigr] \\
    \label{eq:explict_formula_eval_cos}
    &= \max_{\theta \in[0,\pi/2] } \frac{1}{4}\cos^2(2\theta) - \frac{1}{2}\cos(2\theta)\,\mathbb{E}[\cos(2\hat{\theta})] + \frac{1}{8} + \frac{1}{8}\,\mathbb{E} \qty[\cos(4\hat{\theta})].
\end{align}
Thus, it suffices to evaluate \(\mathbb{E}[\cos(2s\hat{\theta})]\) for \(s=1,2\). 
(Note that if we denote \(\theta = \arccos\sqrt{a}\) instead, the expression in Eq.~\eqref{eq:eval_MSE_amp_with_cos} remains unchanged, since \(\mathbb{E}[(\cos^2\hat{\theta} - \cos^2\theta)^2]\) is equivalent to $(1/4) \cdot \mathbb{E}\Bigl[\Bigl(\cos(2\hat{\theta}) - \cos(2\theta)\Bigr)^2\Bigr]$.)

From this context, we derive the explicit relationship between \(\mathbb{E}[\cos(2s\hat{\theta})]\) and amplitudes of the initial state $\alpha_k$.  
The probability of the \(l\)-th measurement outcome is given by Eq.~\eqref{eq:QAE_lth_probability}.Additionally, by adopting the estimator \(\hat{\theta}_l \coloneq \frac{\pi l}{2^q}\) for the \(l\)th outcome, we obtain
\begin{align}
    \mathbb{E}\left[\cos(2s\hat{\theta})\right] &= \Re\Biggl(\sum_{l=0}^{2^q-1} \Bigl(P_l^{(+)} + P_l^{(-)}\Bigr) e^{i2 s \hat \theta_l }\Biggr) \notag\\
    &= \Re\Biggl(\sum_{l=0}^{2^q-1} \Bigl(P_l^{(+)} + P_l^{(-)}\Bigr) e^{i2 \pi s l/2^q }\Biggr) . 
\end{align}
For the term \(P_l^{(\mp)}e^{i2\pi s \frac{l}{2^q}}\), inserting Eq.~\eqref{eq:QAE_lth_probability} yields
\begin{align}
    \sum_l P_l^{(\mp)} e^{i2\pi s \frac{l}{2^q}} &=
     \frac{1}{2^{(q+1)}}\, \sum_{k,k'} \alpha^*_{k'}\alpha_k\, e^{i2\pi \left( \pm \frac{\theta}{\pi}  - \frac{l}{2^q}  \right)k } e^{-i2\pi \left( \pm \frac{\theta}{\pi}  - \frac{l}{2^q}  \right)k' }  \sum_l e^{i2\pi\frac{l}{2^q}s} \\
     &= 
    \frac{1}{2^{(q+1)}}\, \sum_{k,k'} \alpha^*_{k'}\alpha_k\, e^{i2\pi\frac{\pm\theta}{\pi}(k'-k)} \sum_l e^{-i2\pi\frac{l}{2^q}(k'-k-s)} \\
    &= 
    \label{eq:formula2}
    \frac{1}{2^{(q+1)}}\,e^{i2\pi s \frac{\pm\theta}{\pi}} \sum_{k,k'} \alpha^*_{k'}\alpha_k\, e^{i2\pi\frac{\pm\theta}{\pi}(k'-k-s)} \sum_l e^{-i2\pi\frac{l}{2^q}(k'-k-s)}.
\end{align}
For the last term of Eq.~\eqref{eq:formula2}, we can use the following identity
\begin{align}
    \sum_l e^{-i2\pi\frac{l}{2^q}(\mu-\nu)} = 2^q \sum_{j\in\mathbb{Z}} \delta_{\mu+j2^q,\nu},
\end{align}
where \(\delta_{a,b}\) is the Kronecker delta.
Using this identity, we can express \(\mathbb{E}[\cos(2s\hat{\theta})]\) for \(s=1,2\) as
\begin{align}
    \label{eq:eval_ex_cos1}
    \mathbb{E}\left[\cos(2\hat{\theta})\right] &= \cos(2\theta) \Biggl[\sum_{k=0}^{2^q-2}\frac{\alpha^*_{k+1}\alpha_k + \text{(c.c.)}}{2}\Biggr] + \frac{\alpha^*_0\alpha_{2^q-1} + \text{(c.c.)}}{2}\cos\Bigl(2(2^q-1)\theta\Bigr), \\
    \label{eq:eval_ex_cos2}
    \mathbb{E}\left[\cos(4\hat{\theta})\right] &= \cos(4\theta) \Biggl[\sum_{k=0}^{2^q-3}\frac{\alpha^*_{k+2}\alpha_k + \text{(c.c.)}}{2}\Biggr] \notag\\[1mm]
    &\quad + \Biggl(\frac{\alpha^*_0\alpha_{2^q-2} + \text{(c.c.)}}{2} + \frac{\alpha^*_1\alpha_{2^q-1} + \text{(c.c.)}}{2}\Biggr)\cos\Bigl(2(2^q-2)\theta\Bigr),
\end{align}
where $\text{(c.c.)}$ denotes the complex conjugate of the preceding term.
By substituting Eqs.~\eqref{eq:eval_ex_cos1} and \eqref{eq:eval_ex_cos2} into Eq.~\eqref{eq:explict_formula_eval_cos}, the MSE can be expressed as a quadratic form:
\begin{align}\label{eq:def-msehat-QAE}
    \widehat{\mtr{MSE}}[\hat{a}] = \max_{\theta \in [0.\pi/2]  }\bm{\alpha}^\dagger W(\theta)\,\bm{\alpha} + C(\theta),
\end{align}
where \(\bm{\alpha} = [\alpha_0,\alpha_1,\ldots,\alpha_{2^q-1}]^T\) and 
$C(\theta)=\frac{1}{4}+\frac{1}{8}\cos(4\theta)$.
The matrix \(W(\theta)\) is a \(2^q\times2^q\) matrix given by
\begin{align}
    W(\theta) \coloneq \begin{bmatrix}
0 & a & b & 0 & \cdots & b_1 & a_0 \\
a & 0 & a & b & \cdots & 0 & b_1 \\
b & a & 0 & a & \cdots & 0 & 0 \\
0 & b & a & 0 & \cdots & 0 & 0 \\
\vdots & \vdots & \vdots & \vdots & \ddots & \vdots & \vdots \\
b_1 & 0 & 0 & 0 & \cdots & 0 & a \\
a_0 & b_1 & 0 & 0 & \cdots & a & 0
\end{bmatrix},
\end{align}
where the coefficients are defined as
\begin{gather}
a = -\frac{1}{4}\cos^2(2\theta),\quad b = \frac{1}{16}\cos(4\theta), \label{def:a_and_b}\\[1mm]
a_0 = -\frac{1}{4}\cos(2\theta)\cos\Bigl(2(2^q-1)\theta\Bigr),\quad b_1 = \frac{1}{16}\cos\Bigl(2(2^q-2)\theta\Bigr).
\end{gather}

Now the problem of computing the MSE is reduced to calculating the minimum of \(\bm{\alpha}^\dagger W(\theta)\bm{\alpha}\), or the smallest eigenvalue of \(W(\theta)\). The associated eigenvector provides the optimal initial state coefficients. 
Although an analytical solution for this eigenvector is challenging to obtain, we find that a state with $\alpha_k \propto \sin\Bigl(\frac{k\pi}{2^q}\Bigr)$ is nearly optimal. Furthermore, the unitary defined as
\begin{align}
    \quad \chi_q: \ket{0}^{\otimes q} \mapsto \sqrt{\frac{2}{2^q}} \sum_{k=0}^{2^q-1}\sin\Bigl(\frac{k\pi}{2^q}\Bigr)\ket{k},
\end{align}
can be implemented efficiently with a gate complexity of \(\tilde{O}(q)\) \cite{babbush2018encoding}. Using this state, referred to as the {\it sine state} in literature, the MSE can be computed as follows. 
\begin{align}
   \widehat{\mtr{MSE}}[\hat{a}]
    &=\max_{\theta \in [0,\pi/2]} -\frac{1}{2}\cos^2(2\theta)\cos\Bigl(\frac{\pi}{2^q}\Bigr)
    +\frac{1}{8}\cos(4\theta)\cos\Bigl(\frac{2\pi}{2^q}\Bigr) + C(\theta) \notag\\[1mm]
    &\quad + \frac{1}{2^q}\cdot\frac{1}{8}\cos(2\theta)\Bigl[1-\cos\Bigl(\frac{2\pi}{2^q}\Bigr)\Bigr]
    - \frac{1}{2^q}\cdot\frac{1}{8}\cos\Bigl(2(2^q-2)\theta\Bigr)\Bigl[1-\cos\Bigl(\frac{2\pi}{2^q}\Bigr)\Bigr].
\end{align}
Assuming that \(2^q\gg1\), we can approximate the cosine term as 
\[
\cos\Bigl(\frac{1}{2^q}\Bigr) = 1 - \frac{1}{2}\Bigl(\frac{1}{2^q}\Bigr)^2 + \mathcal{O}\!\Bigl(\Bigl(\frac{1}{2^q}\Bigr)^4\Bigr).
\]
By substituting this into the expression of the MSE, we obtain
\begin{align}
    \widehat{\mtr{MSE}}[\hat{a}]
    &= \max_{\theta \in [0,\pi/2]}-\frac{1}{2}\cos^2(2\theta)\Bigl(1 - \frac{1}{2}\Bigl(\frac{\pi}{2^q}\Bigr)^2\Bigr)
    + \frac{1}{8}\cos(4\theta)\Bigl(1 - \frac{1}{2}\Bigl(\frac{2\pi}{2^q}\Bigr)^2\Bigr) 
    + C(\theta) + \mathcal{O}\!\Bigl(\Bigl(\frac{1}{2^q}\Bigr)^3\Bigr)  \\[1mm]
    &= \max_{\theta \in [0,\pi/2]} -\frac{1}{2}\cos^2(2\theta) + \frac{1}{4}\cos(4\theta) + \frac{1}{4} + \Bigl(\cos^2(2\theta)-\cos(4\theta)\Bigr) \Bigl(\frac{\pi}{2\cdot2^q}\Bigr)^2 + \mathcal{O}\!\Bigl(\Bigl(\frac{1}{2^q}\Bigr)^3\Bigr) \\[1mm]
    &= \max_{\theta \in [0,\pi/2]} \sin^2(2\theta) \Bigl(\frac{\pi}{2\cdot2^q}\Bigr)^2 + \mathcal{O}\!\Bigl(\Bigl(\frac{1}{2^q}\Bigr)^3\Bigr) = \Bigl(\frac{\pi}{2\cdot2^q}\Bigr)^2 + \mathcal{O}\!\Bigl(\Bigl(\frac{1}{2^q}\Bigr)^3\Bigr),
\end{align}
which explicitly indicates that we have achieved the HL scaling.
\end{proof}
We importantly remark that the optimality discussed here holds specifically for the quantum algorithm shown in Fig.~\ref{fig:heisenberg-limited_improved_circuit} and does not necessarily apply to all QAE algorithms using $2^q$ queries to $\mathcal{A}$ and its inverse. While the circuit in Fig.~\ref{fig:heisenberg-limited_improved_circuit} requires a total of $2^{q} + 1$ queries, it remains possible that another quantum algorithm could achieve the same asymptotic performance with fewer queries.

To demonstrate the near-optimality of the sine state, we have numerically evaluated the MSE, comparing its performance with that of the optimal state derived from the analysis of \(W(\theta)\). As shown in Fig.~\ref{fig:comparsion_sine_num}, the sine state indeed achieves nearly optimal performance in QAE in terms of the MSE.

\subsubsection{Uniform superposition probe state does not achieve HL scaling}
In contrast to the sine state, the commonly-used uniform superposition state fails to attain the HL scaling. To illustrate this, we substitute  $\forall k \in \{0, 1,\ldots,2^{q-1}\}, \alpha_k = 1/\sqrt{2^q} $ into Eq.~\eqref{eq:def-msehat-QAE}:
\begin{align}
    \bm{\alpha}^\dagger W(\theta)\,\bm{\alpha} + C(\theta) 
    &= \frac{\qty[ 2(2^{q}-1)a  + 2(2^q-2)b +2a_0 + 4b_1 ]}{2^q} + C(\theta)  \\
    &= 2a + 2b + C(\theta) + \frac{2}{2^q}(a_0 - a) + \frac{4}{2^q} (b_1-b) .
\end{align}
Using the definitions of $C(\theta)$, $a$, and $b$, it can be shown that  \(2a + 2b + C(\theta)\equiv0\) using  the double-angle formula. We thus focus on the remaining part:
\begin{align} \label{eq:superposition-mse}
    \bm{\alpha}^\dagger W(\theta)\,\bm{\alpha} + C(\theta)
    &= \frac{2}{2^q}(a_0 - a) + \frac{4}{2^q} (b_1-b) \\
    \label{eq:sine_cosine_derive}
    &= \frac{1}{2^{q+1} } \cos (2\theta) \qty[\cos(2\theta) - \cos(2(2^{q}-1)\theta ) ] + \frac1{2^{q+2}} \qty[\cos(2(2^q-2)\theta) - \cos(4\theta) ] \\
    &= \frac{1}{2^{q+1} }  \cos (2\theta) \qty[2 \sin (2^q \theta) \sin ((2^q -2 ) \theta) ] - \frac1{2^{q+2}} [2 \sin (2^q \theta) \sin ((2^q-4)\theta) ] \\
    &= \frac{\sin(2^q \theta)}{2^{q+1}} \qty[2 \cos(2\theta) \sin ((2^q-2)\theta) - \sin((2^q-4 )\theta)  ] \\
    &= \frac{\sin^2 (2^q \theta)}{2^{q+1}} .
\end{align}
Here, we have used the product-to-sum formulas in the third equality, and the last equality arises from the sum-to-product formula, $2 \cos (2 \theta) \sin ((2^q-2) \theta) = \sin(2^q \theta) + \sin((2^q-4) \theta)$. Consequently, the $\widehat{\mtr{MSE}}[\hat{a}]$ is evaluated as follows,
\begin{align}
   \widehat{\mtr{MSE}}[\hat{a}] = \max_{\theta \in [0, \pi/2]} \frac{\sin^2(2^q \theta)}{2^{q+1}} = \frac1{2^{q+1}}.
\end{align}
The MSE scales quadratically worse than the case for the sine state, indicating that we can only achieve the standard quantum limit using the uniform superposition state.
We remark that when the target amplitude $\theta = l\cdot \pi/ 2^q, l \in \{0,1,\ldots,2^q \}$, then the eigenvalue of the uniform superposition state corresponds to the smallest eigenvalue of $W(\theta)$.

\begin{figure}[t]
    \centering
    \includegraphics[width=1.\linewidth]{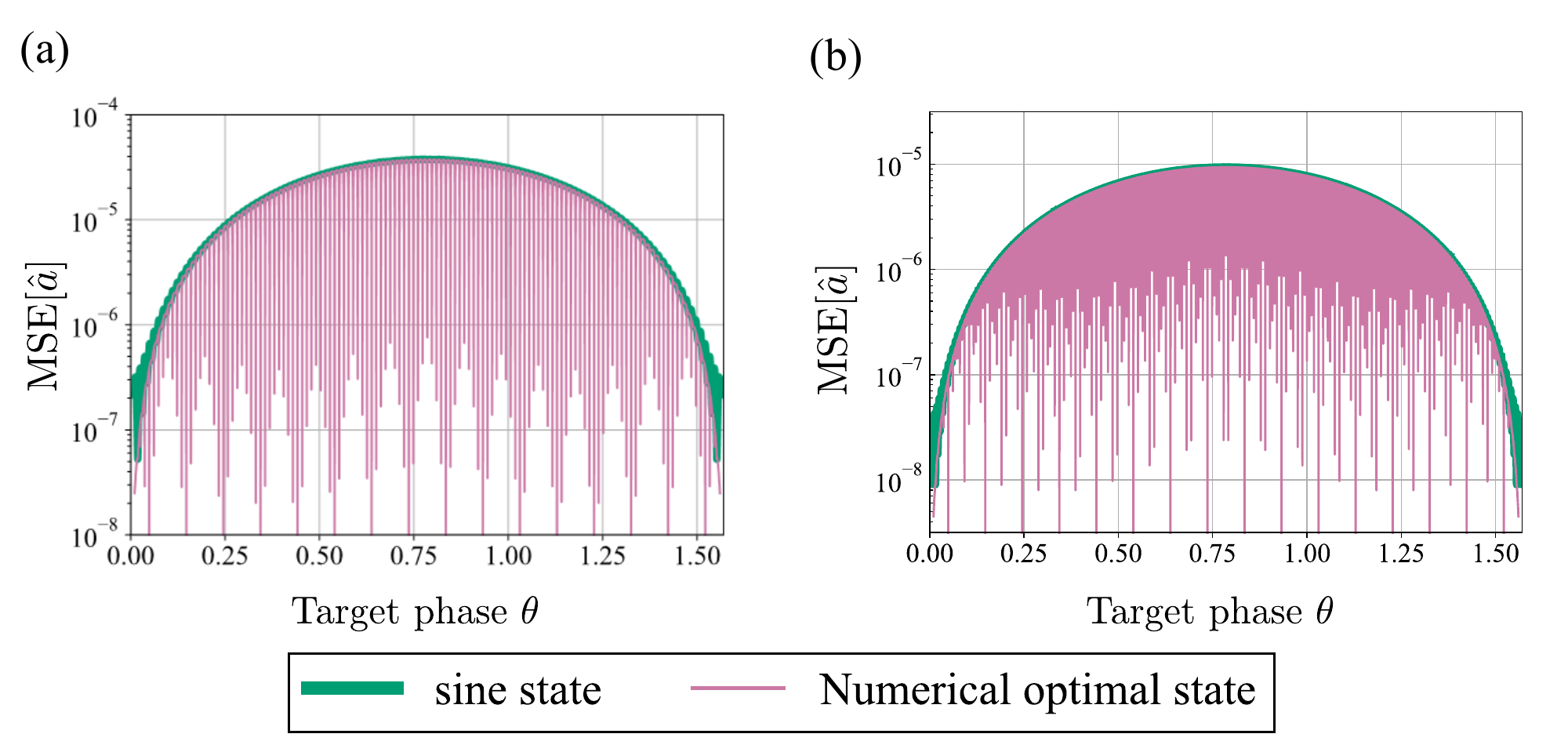}
    \caption{
    \justifying{
    The MSE in the amplitude estimation with (a) $q=8$ and (b) $q=9$ probe system qubits. The interval $\theta \in [0.01,\pi/2-0.01]$ was divided into 10,000 points, and the numerical optimal state was determined by computing the minimum eigenvalue of $W(\theta)$. The green line represents the envelope of the purple line; they nearly coincide except at a specific value of \(\theta\) where the MSE of the numerical optimal state is exceptionally small.
    }}
    \label{fig:comparsion_sine_num}
\end{figure}


\subsection{Expectation value estimation based on QAE algorithm}
\label{sec:Expectation_base_QAE}

Building on our previous discussion, we now focus on estimating the expectation value of an observable via the QAE algorithm. From the previous section, the QAE algorithm using a \(q\)-qubit probe system can estimate a parameter \(a\) with the MSE of 
$
\left(\frac{\pi}{2^{q+1}}\right)^2 + \mac{O}\qty( \qty(\frac{1}{2^q})^3 ),
$
when provided with an oracle that encodes the amplitude \(\sqrt{a}\). To bridge the gap between amplitude estimation and expectation value estimation, it is crucial to embed the expectation value onto an amplitude that can be processed by the QAE algorithm.

To achieve this goal, we present a concrete quantum circuit implementation that, given a block-encoding of the target observable, yields an estimator for its expectation value.
To tackle expectation value estimation problem, 
several studies have explored approaches that rely on phase-estimation-like circuits~\cite{knill2007optimal, rall2020estimating, PRXQuantum.2.010346, Obrien2022efficient}.
To deal with a broad class of observables, we consider the scenario in which an \(a\)-block-encoding \(B\) of the observable \(O\) is available.
By employing block-encoding techniques and a state-preparation oracle, we can embed the target value \(\bra{\psi} O \ket{\psi}\) into an amplitude. 
For instance, the quantum circuit shown in Fig.~\ref{fig:QAE_subroutine} can embed the expectation values into the amplitude as ${a} = \frac{1+\langle O \rangle }2 \geq 0$.
By using this circuit, we now present our main result, which rigorously gives the query complexity for the state preparation oracle and enables  expectation value estimation with the HL scaling:

\begin{lemma}[Expectation value estimation based on QAE algorithm]
    \label{thm:query_comp_HL_amp_est}
    Let $q$ be a positive integer, and \(O\) be an observable on \(N\) qubits with spectral norm \(\norm{O} \leq 1\). Given a state-preparation unitary \(U_\psi:\ket{0}^{\otimes N} \mapsto \ket{\psi}\) and a controlled \(a\)-block-encoding of \(O\), there exists a quantum algorithm that outputs an estimator \(\hat o\) for \(\bra{\psi} O \ket{\psi}\) satisfying 
    \[
    \mtr{MSE}[\hat o] = \qty(\frac{\pi}{2^q} )^2 + \mac{O} \qty(2^{-3q}) ,
    \]
   with $2^q +1$ uses of \(U_\psi\) and its inverse.
\end{lemma}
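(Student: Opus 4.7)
The plan is to reduce the expectation value estimation to the amplitude estimation problem solved in Lemma~\ref{lem:HL_scaling_QAE}, and then transfer the MSE bound through a simple affine rescaling. First, using $U_\psi$ and the controlled block-encoding of $O$, I would construct an oracle $\mathcal{A}$ (the subroutine in Fig.~\ref{fig:QAE_subroutine}) of the form
\begin{align*}
\mathcal{A}\ket{0}\ket{0} = \sqrt{1-a}\,\ket{\Psi_0}\ket{0} + \sqrt{a}\,\ket{\Psi_1}\ket{1},\qquad a = \frac{1+\langle O\rangle}{2}\in[0,1].
\end{align*}
Concretely, this is a Hadamard-test–style construction: after preparing $\ket{\psi}$ with $U_\psi$, a flag qubit is placed in superposition, the block-encoding of $O$ is applied controlled on this flag, and a final Hadamard interferes $\ket{\psi}$ with $O\ket{\psi}$; grouping the outcome ``flag $=0$ \emph{and} block-encoding ancillas $=0$'' as the success label reproduces the amplitude $\sqrt{(1+\langle O\rangle)/2}$. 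The assumption $\|O\|\le 1$ guarantees $a\in[0,1]$, so that the amplitude is a legitimate probability.

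Second, I would invoke Lemma~\ref{lem:HL_scaling_QAE} directly on this $\mathcal{A}$, yielding an estimator $\hat{a}$ with $\widehat{\mtr{MSE}}[\hat{a}] = (\pi/2^{q+1})^2 + \mathcal{O}(2^{-3q})$ using $2^q+1$ combined queries to $\mathcal{A}$ and $\mathcal{A}^\dagger$. Defining $\hat{o} \coloneq 2\hat{a}-1$, the identity $(\hat{o}-\langle O\rangle)^2 = 4(\hat{a}-a)^2$ immediately gives
\begin{align*}
\mtr{MSE}[\hat{o}] = 4\,\mtr{MSE}[\hat{a}] = \left(\frac{\pi}{2^q}\right)^2 + \mathcal{O}(2^{-3q}),
\end{align*}
uniformly in $\langle O\rangle$ since the amplitude MSE of Lemma~\ref{lem:HL_scaling_QAE} is already the supremum over $a\in[0,1]$. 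For the query count, each invocation of $\mathcal{A}$ uses $U_\psi$ exactly once, and each $\mathcal{A}^\dagger$ uses $U_\psi^\dagger$ exactly once; hence the $2^q+1$ queries to $\mathcal{A}/\mathcal{A}^\dagger$ translate to exactly $2^q+1$ combined uses of $U_\psi/U_\psi^\dagger$, matching the stated complexity.

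The main obstacle I anticipate is ensuring that the Hadamard-test subroutine is packaged in precisely the oracle form required by Lemma~\ref{lem:HL_scaling_QAE}, with a \emph{single} distinguished qubit carrying the amplitude $\sqrt{a}$ rather than the ``success'' being spread across multiple ancillary registers. This can be handled either by appending a multi-controlled NOT that collates all success indicators onto one flag qubit, or equivalently by absorbing the block-encoding ancillas into the definitions of $\ket{\Psi_0}$ and $\ket{\Psi_1}$. Once this packaging is in place, the reflection inside the Grover operator $G$ acts correctly on the two-dimensional invariant subspace, Lemma~\ref{lem:HL_scaling_QAE} applies verbatim, and the affine rescaling $\hat{o}=2\hat{a}-1$ closes the proof.
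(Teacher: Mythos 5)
Your proposal is correct and follows essentially the same route as the paper: the Hadamard-test oracle of Fig.~\ref{fig:QAE_subroutine} embedding $a=(1+\langle O\rangle)/2$, a direct invocation of Lemma~\ref{lem:HL_scaling_QAE}, and the affine rescaling $\hat o = 2\hat a - 1$ contributing the factor of $4$ to the MSE. The packaging concern you raise (collating the success indicators onto a single flag qubit versus absorbing the block-encoding ancillas into $\ket{\Psi_0},\ket{\Psi_1}$) is handled implicitly in the paper in exactly the second way you describe.
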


\begin{figure}[t]
    \centering
    \includegraphics[width=0.5\linewidth]{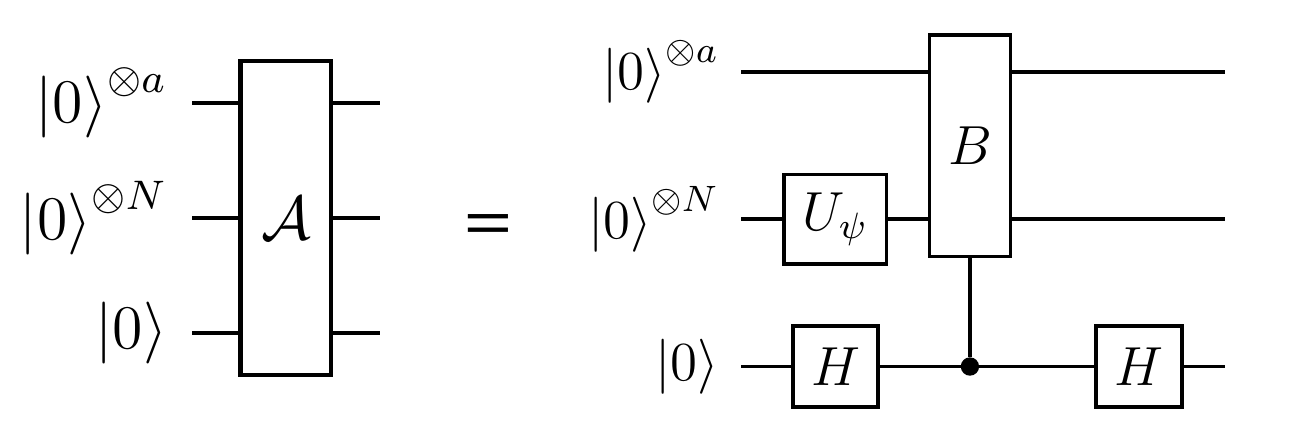}
    \caption{Implementation of a $(N+a+1)$-qubit oracle $\mac{A}$ that embeds the amplitude $\sqrt{\frac{1+ \bra{\psi}O\ket{\psi} }2 }$. Here, $U_\psi $ denotes a $N$-qubit state preparation unitary and $B$ is a $a$-block-encoding of a target observable $O$. }
    \label{fig:QAE_subroutine}
\end{figure}

\begin{proof}
We first consider the embedding of $\bra{\psi} O \ket{\psi} \in [-1,1] $ into the amplitude $\sin \theta$. Due to the difference of the interval between the expectation value and the amplitude, we need to modify the encoding $B$ of the target observable. To address this issue, we consider an oracle $\mac{A}$ that embeds a target amplitude $a = (1 + \langle O \rangle )/2 $ as follows (see also Fig.~\ref{fig:QAE_subroutine}),
\begin{align}
    \mac{A} = \Bigl(H\ot \1_{(a+N)}\Bigr) \left( \ketbra{1} \ot B  \right)  (H \otimes U_\psi \otimes \1_a).
\end{align}
We demonstrate that this oracle embeds the target amplitude as follows,
\begin{align}
    \mac{A} \ket{0} \ket{0}^{\otimes (a+N)}
    &= (H \otimes \1_{(a + N)}) \qty(\frac1{\sqrt 2} \ket{0} \ket{\psi} \ket{0}^{\otimes a}    
    +
    \frac1{\sqrt 2} \ket{1} B(\ket{\psi} \ket{0}^{\otimes a} ) ) \\
    &= \frac12 \qty(\ket{0} + \ket{1} )\ket{\psi} \ket{0}^{\otimes a}  + \frac12 \qty(\ket{0} - \ket{1} ) B(\ket{\psi} \ket{0}^{\otimes a} ) \\
    &= \ket{0} \qty(\frac{1+B}2)\ket{\psi} \ket{0}^{\otimes a} + \ket{1} \qty(\frac{1-B}2)\ket{\psi} \ket{0}^{\otimes a} \\
    &= \sqrt{\frac{1 + \langle O \rangle }2 } \ket{0}\ket{\Psi_0}
    + \sqrt{\frac{1 - \langle O \rangle }2 } \ket{1}\ket{\Psi_1},
\end{align}
where $\ket{\Psi_0} $ and $\ket{\Psi_1}$ are normalized quantum states.
In the last inequality, each amplitude is derived from
\begin{align}
    \sqrt{\bra{\psi}\bra{0}^{\otimes a} \qty(\frac{1+B^\dagger}2)
    \qty(\frac{1+B}2) \ket{\psi}\ket{0}^{\otimes a}} &= 
    \sqrt{
    \frac{1}2 + \frac{\bra{\psi} O + O^\dagger  \ket{\psi}}4 } \\
    &= \sqrt{ \frac{1 + \mtr{Re}[\langle O \rangle ] }2},  \\ 
    \sqrt{\bra{\psi}\bra{0}^{\otimes a} \qty(\frac{1-B^\dagger}2)
    \qty(\frac{1-B}2) \ket{\psi}\ket{0}^{\otimes a}} &= \sqrt{ \frac{1 - \mtr{Re}[\langle O \rangle ] }2}.
\end{align}
Here, we consider the block-encoding of an observable, so $\mtr{Re}[\langle O \rangle ] = \langle O \rangle $ holds.

By applying this oracle to the quantum circuit that uses $q$-qubit sine state expressed in Lemma \ref{lem:HL_scaling_QAE}, we can obtain an estimator $\hat a$ for $(1 + \langle O \rangle)/2$ with $2^q + 1 $ queries to $\mac{A}$ and its inverse. 
We convert it into $\hat{o} = 2 \hat a - 1$, and the MSE of this estimator is calculates as follows,
\begin{align}
    \widehat{\mtr{MSE}}[\hat{o}] &= \widehat{\mtr{MSE}}[{2 \hat{a} - 1}] = 4\widehat{\mtr{MSE}}[\hat a] \\
    &= \qty(\frac{\pi}{2^q} )^2 + \mac{O} (2^{-3q}).
\end{align}

\end{proof}

In the asymptotic regime where $2^q \gg 1$, we can neglect the term $\mac{O}\qty(2^{-3q})$. Hence, when the target MSE $\varepsilon \in (0,1)$ is sufficiently small, it suffices to choose $q = \lceil \log_2(\pi/\varepsilon) \rceil$ to ensure that $\varepsilon^2 \geq \widehat{\mtr{MSE}}[\hat o]$. 
With this choice, the total query complexity of \( U_\psi \) is bounded by
\begin{align}
    2^q + 1 
    &\leq 2^{\log_2(\pi/\varepsilon) + 1} + 1 \\
    &= \frac{2\pi}{\varepsilon} + 1 = \mac{O}(1)/\varepsilon.
\end{align}
Note that the number of queries is halved compared to the best known result~\cite{Obrien2022efficient} which approximated the MSE with the Holvevo variance under the assumption that the number of ancilla is sufficiently large.

\section{Framework of QGE algorithm for estimating observables} \label{sec:qge-general}
In this section, we review the quantum gradient estimation (QGE) algorithm and present a general framework for its adaptive variant applied to observable estimation. Section~\ref{sec:core_idea_Jordan} introduces the core idea behind the QGE algorithm, namely Jordan's algorithm~\cite{jordan2005fast}. Section~\ref{sec:general_framwork} describes the overall structure of the adaptive QGE algorithm for estimating multiple observables with HL scaling and highlights the key factors enabling its enhancement. Section~\ref{sec:WYY2024} demonstrates how the previously proposed adaptive QGE algorithm~\cite{wada2024Heisenberg} fits within our framework and provides a concise overview of its structure.

\subsection{Core of the QGE algorithm}
\label{sec:core_idea_Jordan}
For a given black-box oracle for a real scalar function \(f(\bm{x})\) with \(\bm{x} \in \mathbb{R}^M\), the quantum algorithm introduced by Stephan P. Jordan \cite{jordan2005fast} allows one to estimate the \(M\)-dimensional gradient \(\nabla f(\bm{0})\). 
The target point \(\bm{x} = 0\) can be generalized to any \(\bm{x}\) by appropriately redefining \(f(\bm{x})\). The quantum algorithm proceeds in three main steps: (i) preparing a superposition state over a grid of points \(\bm{x}\) around the target \(\bm{x} = 0\) on the probe system;
(ii) applying the black-box oracle to encode the phase \(e^{i f(\bm{x})}\) at each grid point $\bm x$; and (iii) performing an inverse quantum Fourier transform followed by a measurement in the computational basis. In the following, we describe the algorithm step by step.

To describe step (i), it is useful to define a grid \(G_p^M\) centered at \(\bm{x} = 0\) for evaluating \(f\):
\begin{align}
    G_p^M := \left\{ \frac{\mu}{2^p} - \frac{1}{2} + \frac{1}{2^{p+1}} : \mu \in \{0, 1, \ldots, 2^p - 1\} \right\}^M, 
\end{align}
where \(p\) is a positive integer determining the estimation precision. A one-to-one mapping \(\phi : \mu \mapsto \phi(\mu) := \frac{\mu}{2^p}  - \frac{1}{2} + \frac{1}{2^{p+1}}\) is defined on \(G_p\), which allows us to label the \(p\)-qubit computational basis states \(|\mu\rangle\) by the corresponding grid point \(\bm{x} \equiv \phi(\mu) \in G_p\).
By applying Hadamard gates to  \(pM\) qubits initialized as \(|0\rangle^{\otimes pM}\), we create a uniform superposition over the grid:
\begin{align}
    \label{eq:super_position}
    \frac{1}{\sqrt{2^{pM}}} \sum_{\bm{x} \in G_p^M} |\bm{x}\rangle 
    = \frac{1}{\sqrt{2^{pM}}} \sum_{(x_1, \ldots, x_M) \in G_p^M} |x_1\rangle |x_2\rangle \cdots |x_M\rangle, 
\end{align}
where each \(|x_j\rangle\) represents a \(p\)-qubit block corresponding to the \(j\)-th coordinate.

To illustrate step (ii), we assume that we have access to a phase oracle \(O_f\) for \(f(\bm{x})\), defined as
\begin{align}
    O_f : |\bm{x}\rangle \mapsto  e^{i f(\bm{x})} |\bm{x}\rangle, \quad \forall \bm{x} \in G_p^M.
\end{align}
For this context, Ref.~\cite{gilyen2019gradient} analyzes Jordan’s algorithm using phase (and probability) oracles, while the original work \cite{jordan2005fast} assumes the existence of an \(\eta\)-accurate binary oracle that outputs \(f(\bm{x})\) in binary with precision \(\eta\).
For simplicity, we consider the case $f$ is affine linear: $f(\bm x) = \bm g \cdot \bm x$, where $\bm g \in \mathbb{R}^M$ holds $\norm{\bm{g}}_{\infty} \leq 1/3 $ where $\norm{\cdot }_{\infty} \coloneq \max \{ \abs{g_1}, \abs{g_2}, \ldots, \abs{g_M} \}$.
Applying the modified phase oracle \(\left(O_f\right)^{2\pi 2^p}\) to the state in Eq.~\eqref{eq:super_position} yields
\begin{align}
    \label{eq:gradient_state}
    \frac{1}{\sqrt{2^{pM}}} \sum_{\bm{x} \in G_p^M} e^{2\pi i 2^p \bm g \cdot \bm x} |\bm{x}\rangle. 
\end{align}

In step (iii), we subsequently apply a slightly modified inverse QFT, denoted by \(\mtr{QFT}_{G_p}^\dagger\), to individual  \(p\)-qubit block. For every \( x \in G_p\), the operation of this variant of QFT is given as
\begin{align}
    \mtr{QFT}_{G_p} : |x \rangle \mapsto \frac{1}{\sqrt{2^p}} \sum_{k \in {G_p}} e^{2\pi i 2^p k\cdot x} | k \rangle. 
\end{align}
This transform is equivalent to the standard \(p\)-qubit QFT, up to conjugation by a tensor product of \(p\) single-qubit gates \cite{jordan2005fast}.

Finally, upon measuring the state in Eq.~\eqref{eq:gradient_state} after applying \((\mtr{QFT}_{G_p}^\dagger )^{\otimes M}\), the resulting measurement statistics mimic those of the conventional quantum phase estimation (QPE) algorithm \cite{nielsen2010quantum}. Specifically, if the measurement outcome is \((k_1, k_2, \ldots, k_M) \in G_p^M\), then for each \(j \in \{1, 2, \ldots, M\}\) we have
\begin{align}
    \mathrm{Pr}\left[ |k_j - g_j| > \frac{3}{2^p} \right] \leq \frac{1}{4}, 
\end{align}
which implies that when the target function is exactly affine linear, the gradient \(\bm g\) can be estimated with high confidence. Moreover, if the target function is nearly affine linear, meaning that the Euclidean norm expressed as
\begin{align}
    \norm{\frac{1}{\sqrt{2^{pM}}} \sum_{\bm{x} \in G_p^M} e^{2\pi i 2^p \bm g \cdot \bm x} |\bm{x}\rangle  - \frac{1}{\sqrt{2^{pM}}} \sum_{\bm{x} \in G_p^M} e^{2\pi i 2^p f(\bm x) } |\bm{x}\rangle}
\end{align}
is sufficiently small, then the estimate \(\bm g\) accurately approximates the gradient \(\nabla f(\bm{0})\) \cite{gilyen2019gradient, apeldoorn2023quantum}.
Based on this analysis, the QGE algorithm can be interpreted as multi-parameter extension of the QPE algorithm.
In particular, by focusing on the state expressed in Eq.~\eqref{eq:gradient_state}, we observe that this state is equivalent to the $M$ tensor product of states in the form of
    \begin{equation} \label{eq:qge-preqft-state}
        \frac{1}{{2^p} } \sum_{k,x\in G_p}e^{2\pi i2^p x(g-k)}\ket{k},
    \end{equation}
    after $({\rm QFT}_{G_p}^\dagger)^{\otimes M}$.
    We find that Eq.~\eqref{eq:qge-preqft-state} is the same as the following final state of the QPE algorithm:
    \begin{equation}
        \frac{1}{{2^p} }\sum_{\mu, \nu=0}^{2^p-1}e^{2\pi i \nu(\theta- \mu/2^p)} |\mu\rangle
    \end{equation}
    up to a global phase, if we replace the parameters $x,k,g$ appropriately. 
    Here, $\theta$ is connected with $g$ as $\theta=g+1/2-1/2^{p+1}$.
    Therefore, if we establish a scheme where the estimator of phase $\hat{\theta}$ satisfies
    \begin{equation}
        {\rm Pr}\left[|\hat{\theta}-\theta|>\frac{1}{2\pi}\right]\leq  \delta',
    \end{equation}
    where $\delta' \in (0,1)$,
    we can immediately convert this inequality into that of gradient $\hat{g}$ as
    \begin{equation}
        {\rm Pr}\left[|\hat{g}-g|>\frac{1}{2\pi}\right]\leq  \delta'.
    \end{equation}
    An error $\delta'$ in QPE corresponding that of QGE is exploited in discussions later (See Sec.~\ref{subsec:modify_probe_state}).
    

\subsection{General framework of adaptive QGE algorithm}
\label{sec:general_framwork}


If we have access to an oracle that embeds the function $f(\bm{x}) = \sum_{j=1}^M x_j \langle O_j \rangle$ as a phase, then the QGE algorithm can be applied to $M$ multiple observables estimation. More specifically, 
the brief procedure of multiple observables estimation with the QGE algorithm is expressed as 
\begin{align}
        \sum_{\bm x} c_{\bm x} \ket{\bm x} \xrightarrow{\text{encode info.}} \sum_{\bm x} c_{\bm x} e^{i \sum_j  \langle O_j \rangle x_j} \ket{\bm x} \xrightarrow{\text{inverse QFT}} \ket{\langle O_1 \rangle} \cdots \ket{\langle O_M \rangle} \xrightarrow{\text{get samples}} \{ \langle O_j \rangle  \}_{j=1}^M,
\end{align}
\begin{algorithm}[H]
\caption{General framework of adaptive QGE algorithm for multiple observables}
  \label{alg:unified_framework}
  \begin{algorithmic}[1]
  \Statex \textbf{Input:} $\log_2 d$-qubit state preparation unitary $U_\psi$ and its inverse, $M$ observables $\{O_j\}_{j=1}^M$ of bounded spectral norm $\norm{O_j} \leq 1$ where  $M \geq 2\log_2 d + 24 $, 
  confidence parameter $c \in (0, \frac{3}{8(1 + \pi)^2}]$, target precision parameter $\varepsilon \in (0, 1)$, an integer $p\geq 1$, and a set of integers $\{R^{(q)}\}_{q=0}^{\lceil \log_2(1/\varepsilon) \rceil }$
  \smallskip

  \Statex \textbf{Subroutine:} a probe-state preparation subroutine $\mathcal{U}_{\Upsilon}^{(q)}(\{A_j\}_{j=1}^M)$.
  This process is ensured to work as follows:
  for given integer $q\geq 0$ and observables $\{A_j\}_{j=1}^M$ of bounded spectral norm $\norm{A_j} \leq 1$
  , the process $\mathcal{U}_{\Upsilon}^{(q)}$ prepares $R^{(q)}$ copies of a $pM$-qubit quantum state \(\ket{\Upsilon(q)}\) that is close in the Euclidean norm to the following ideal probe state under fixed set of amplitudes $\{c_{\bm x}\}$:
  \begin{equation}
      \ket{\Upsilon(q) } \simeq \sum_{\bm x \in G_p^M} c_{\bm x} 
      e^{2\pi i {2^p}{\sum_{j=1}^M x_j 2^{q} \pi^{-1} \bra{\psi}A_j\ket{\psi}}} \ket{\bm x},~~\mbox{if}~~~ |\bra{\psi}A_j\ket{\psi}|\leq 2^{-q},~\forall j.
  \end{equation}

  \Statex \textbf{Output:} A sample from an estimator $\hat{u} = (\hat{u}_1, \dots, \hat{u}_M)$ whose $j$-th element estimates $\langle O_j \rangle := \langle \psi | O_j | \psi \rangle$ within MSE $\epsilon^2$ as 
  $$
  \max_{j=1,2,\dots,M} \mathbb{E}[(\hat{u}_j - \langle O_j \rangle)^2] \leq \epsilon^2 
  $$
  \State $\tilde{u}_j^{(0)} \leftarrow 0$ for $j = 1, 2, \dots, M$
  \For{$q = 0, 1, \dots, q_{\max} \coloneq \lceil \log_2(1/\epsilon) \rceil$}
      \State $A_j \leftarrow {O}_j-\tilde{u}^{(q)}_j\bm{1}$
      \State Call the subroutine $\mathcal{U}_{\Upsilon}(q,\{A_j\})$ for preparing $R^{(q)}$ copies of the quantum state $\ket{\Upsilon(q)}$
      \State Apply  $(\mtr{QFT}_{G_p}^\dagger)^{\otimes M}$ on each copy 
      \State Perform computational basis measurement to obtain output
      $(k_1, \ldots, k_M) \in G_p^M$.  
      \State $g_j^{(q)} \leftarrow$ coordinate-wise medians of the measurement results \label{step:get-coordinate-wise-median}
      \State $\tilde{u}_j^{(q+1)} \leftarrow \tilde{u}_j^{(q)} + {\pi}{2^{-q}} g_j^{(q)}$.
      \For{$j=1, ..., M$}
      \If{$\tilde{u}_j^{(q+1)}\geq 1$ (or $\leq -1$)} 
          \State $\tilde{u}_j^{(q+1)} \leftarrow 1$ (or $-1$) 
      \EndIf
      \EndFor
  \EndFor
  \end{algorithmic}
\end{algorithm}
\noindent
where $\bm x \in \mathbb{R}^M$ is defined as a specific grid point around $\bm x = \bm 0$ and $c_{\bm x}$ denotes the amplitude of the initial state. Based on this idea,
the proposals in Refs.~\cite{huggins2022nearly, apeldoorn2023quantum} achieve a query complexity of $\tilde{\mathcal{O}}(\sqrt{M}/\varepsilon_{\text{add}})$ for $U_\psi$ and its inverse, with additional space complexity $\mathcal{O}(M \log(1/\varepsilon_{\text{add}}))$. However, achieving the genuine HL scaling---$\tilde{\mathcal{O}}(\sqrt{M})/\varepsilon$---in total query complexity, along with reduced space complexity, is essential for efficient observable estimation. To this end, adopting an adaptive strategy has proven effective~\cite{wada2024Heisenberg}.
In what follows, we first present a general framework for the adaptive QGE algorithm for observable estimation, and subsequently introduce a novel circuit implementation that enables even faster estimation.

The core idea of the adaptive QGE algorithm \cite{wada2024Heisenberg} is to estimate each binary digit of multiple target expectation values \(\langle O_j \rangle\) iteratively. We obtain $1$-th temporary estimates \(\{\tilde{u}_j^{(1)}\}\) for \(\{\langle O_j \rangle\}\) by measuring the state
\begin{align}
    \ket{\Upsilon(0)} \simeq \sum_{\bm x \in G_p^M} c_{\bm x } e^{2 \pi i 2^p\sum_j x_j \pi^{-1} \langle O_j \rangle} \ket{\bm x}.
\end{align}
Based on these estimates, we adaptively construct quantum circuits that output the \((q+1)\)-th temporary estimates \(\{\tilde{u}_j^{(q+1)}\}\), using the previous estimates \(\{\tilde{u}_j^{(q)}\}\). To achieve this, it is crucial to prepare the following quantum state:
\begin{align}
    \ket{\Upsilon(q)} \simeq \sum_{\bm x \in G_p^M} c_{\bm x } e^{2 \pi i 2^p \sum_j x_j 2^q \pi^{-1} \langle O_j - \tilde{u}_j^{(q)} \cdot \mathds{1} \rangle} \ket{\bm x}.
\end{align}
By repeating this procedure up to \(q_{\max} \coloneq \lceil \log_2(1/\varepsilon) \rceil \), we expect to obtain final estimators that satisfy the target precision \(\varepsilon\).

To ensure successful estimation, a sufficient number of measurement samples is required to construct each temporary estimate \(\tilde{u}_j^{(q)}\), and the number of required samples \(R^{(q)}\) generally varies with \(q\). To formalize this, we define \(\mathcal{U}_{\Upsilon}^{(q)}\) as a {\it state preparation subroutine} which, given a set of bounded Hermitian operators \(\{A_j\}_{j=1}^M\) and integers \(\{R^{(q)}\}_{q=0}^{q_{\max}}\), prepares \(R^{(q)}\) copies of the state
\begin{align}
    \ket{\Upsilon(q)} \simeq \sum_{\bm x \in G_p^M} c_{\bm x } e^{2 \pi i 2^p \sum_j x_j 2^q \pi^{-1} \langle A_j \rangle} \ket{\bm x}.
\end{align}
Based on this idea, we summarize the generalized adaptive QGE algorithm for multiple observables in Algorithm \ref{alg:unified_framework}.

This framework provides a sufficient condition under which the MSE of the output estimators is guaranteed to be upper bounded by \(\varepsilon^2\) for any target precision \(\varepsilon \in (0,1)\). Following the proof of Reference~\cite{wada2024Heisenberg}, we formalize this result in the following theorem:
\begin{theorem}
\label{thm:performance_main_alg}
Let \(\varepsilon \in (0,1)\) be a target precision. 
Given \(M\) observables \(\{O_j\}_{j=1}^M\) of bounded spectral norm $\|O_j\|\leq 1$, a state-preparation oracle \(U_\psi\) and its inverse, and a set of integers $\{R^{(q)}\}_{q=0}^{q_{\max} }$ where $q_{\max} \coloneq \lceil \log_2(1/\varepsilon) \rceil$, suppose that there exists a probe-state preparation process \(\mac{U}_\Upsilon^{(q)} \) such that the estimates \(\{\tilde{u}_j^{(q+1)}\}_{j=1}^M\), obtained from the measurement outcomes in Step~8 in Algorithm~\ref{alg:unified_framework}, satisfy
\begin{align}
    \forall j,\quad \abs{\tilde{u}_j^{(q+1)} - \langle O_j \rangle} \le \frac{1}{2^{q+1}},
\end{align}
with probability at least \(1 - \delta^{(q)}\) where $\delta^{(q)} \coloneq \frac{c}{8^{q_{\max} - q} }$, under the assumption
\begin{align}
    \label{eq:assume-grad-est}
    \forall j,\quad \abs{\tilde{u}_j^{(q)} - \langle O_j \rangle} \le \frac{1}{2^{q}}.
\end{align}
Then, Algorithm~\ref{alg:unified_framework} outputs estimators \(\{\hat{u}_j\}_{j=1}^M\) such that
\begin{align}
    \label{eq:MSE_general_framework}
    \max_{j=1,\dots,M} \operatorname{MSE}[\hat{u}_j] \le \varepsilon^2.
\end{align}
\end{theorem}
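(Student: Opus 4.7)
The plan is to track, through a coupled induction, how the sequence of adaptive estimates $\{\tilde{u}^{(q)}_j\}$ remains within $2^{-q}$ of the true expectation value $\langle O_j\rangle$ with high probability, and then to bound the MSE by a case analysis on the first round in which this guarantee fails. I would introduce the ``good events'' $E^{(q)}:=\{\,\forall j:\,|\tilde{u}^{(q)}_j - \langle O_j\rangle|\leq 2^{-q}\,\}$, noting that $E^{(0)}$ holds deterministically because $\tilde{u}^{(0)}=0$ and $\|O_j\|\leq 1$. The theorem's hypothesis provides $\mathrm{Pr}[E^{(q+1)}\mid E^{(q)}]\geq 1-\delta^{(q)}$, so a chain-rule and union-bound argument yields $\mathrm{Pr}\bigl[\bigcap_{q=0}^{q_{\max}+1} E^{(q)}\bigr]\geq 1-\sum_q \delta^{(q)}$, and on this intersection the final error is at most $2^{-(q_{\max}+1)}\leq \varepsilon/2$, contributing $\varepsilon^2/4$ to the MSE.

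Next, I would decompose the complement into disjoint ``first-failure'' events $F_q:=E^{(0)}\cap\cdots\cap E^{(q)}\cap\overline{E^{(q+1)}}$ satisfying $\mathrm{Pr}[F_q]\leq\delta^{(q)}$, and bound the resulting error on each $F_q$. On $F_q$ one still has $|\tilde{u}^{(q)}_j-\langle O_j\rangle|\leq 2^{-q}$, and each subsequent update $\pi\cdot 2^{-q'}g^{(q')}_j$ has magnitude at most $\pi/2^{q'+1}$ because $g^{(q')}_j\in G_p\subset(-1/2,1/2)$. Summing the geometric tail of the updates for $q'=q,q+1,\dots,q_{\max}$ and using that the clamping step onto $[-1,1]$ is nonexpansive with respect to any point of $[-1,1]$ (in particular $\langle O_j\rangle$), I obtain the post-failure bound $|\hat{u}_j-\langle O_j\rangle|\leq(1+\pi)/2^q$.

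Combining the two cases gives $\mtr{MSE}[\hat{u}_j]\leq\varepsilon^2/4 + (1+\pi)^2\sum_{q=0}^{q_{\max}} 4^{-q}\delta^{(q)}$. Substituting $\delta^{(q)}=c/8^{q_{\max}-q}$ collapses the sum to $(1+\pi)^2\cdot c\cdot 8^{-q_{\max}}\sum_{q=0}^{q_{\max}} 2^q \leq 2c(1+\pi)^2\cdot 4^{-q_{\max}}\leq 2c(1+\pi)^2\,\varepsilon^2$, where the last inequality uses $2^{q_{\max}}\geq 1/\varepsilon$. The constraint $c\leq 3/\bigl(8(1+\pi)^2\bigr)$ imposed in Algorithm~\ref{alg:unified_framework} is exactly what is needed to make this failure contribution at most $3\varepsilon^2/4$, so the two contributions add to the advertised bound $\varepsilon^2$.

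The principal obstacle I expect to encounter lies in the failure-mode bookkeeping: a naive bound that treats any failure as producing the worst-case error $|\hat{u}_j - \langle O_j\rangle|\leq 2$ would give a contribution of order $c$ that does not decay with $\varepsilon$ and is thus too coarse. The proof really needs the sharper observation that a failure at round $q$ can only corrupt the low-order bits accumulated in rounds $q,q+1,\dots,q_{\max}$, each contributing at most $\pi/2^{q'+1}$, and that the iterated clamping never amplifies the distance to $\langle O_j\rangle\in[-1,1]$. This round-by-round error propagation, together with the matching $1/8^{q_{\max}-q}$ decay of $\delta^{(q)}$ built into the hypothesis, is what ultimately delivers the clean $\varepsilon^2$ bound.
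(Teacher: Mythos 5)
Your proposal is correct and follows essentially the same route as the paper's proof: the same split into the all-rounds-success event and disjoint first-failure events, the same post-failure error bound $(1+\pi)/2^{q'}$ obtained by summing the geometric tail of the updates $\pi 2^{-q}g_j^{(q)}$ with $|g_j^{(q)}|\le 1/2$, and the same evaluation of $\sum_q \delta^{(q)}/4^q$ using $\delta^{(q)}=c/8^{q_{\max}-q}$ and the constraint on $c$. Your explicit remarks on the chain-rule/union-bound formalization and on the nonexpansiveness of the clamping onto $[-1,1]$ are slightly more careful than the paper's write-up but do not change the argument.
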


\begin{proof}
    From the assumption of \(\mac{U}_\Upsilon^{(q)} \), we can divide the behavior of its estimators into two cases. One is when all estimates $\tilde{u}_j^{(q+1)}$ satisfy $\abs{\tilde{u}_j^{(q+1)} - \langle O_j \rangle} \le 1/2^{q+1}$ for all iterations, which we define as the success of estimation, and the other is when at least one estimate violates this condition, which we define this event as the estimation failure event. 
    We can easily analyze the additive error of the success of estimation. By considering the condition that all \( \{ \tilde{u}_j^{(q_{\max}+1)} \}_{j=1}^M \) must satisfy, we obtain
    \begin{align}
        \forall j, \quad \abs{\tilde{u}_j^{(q_{\max}+1)} - \langle O_j \rangle }^2 \leq \frac{1}{2^{2(q_{\max} + 1)}}.
    \end{align}
    
    In the case where estimation fails, we simply bound the medians $g_j^{(q)}$ within the interval $[-1/2, 1/2]$, since each measurement outcome $k_j$ lies within this range. 
    Based on this consideration, we analyze the additive error of the final estimate 
    when at least one of the $(q')$-th temporal estimates violates the condition at some intermediate iteration $q'$ where $0 \leq q' \leq q_{\max}$.  
    In such a case, we have
    \begin{align}
        \abs{\tilde u_j^{(q_{\max}+1)} - \langle O_j \rangle} &\leq \abs{\tilde u_j^{(q_{\max}+1)} - \tilde u_j^{(q')}} + \abs{\tilde u_j^{(q')} - \langle O_j \rangle} \\
        &\leq \pi \abs{\sum^{q_{\max}}_{q = q'} \frac{g_j^{(q)}}{2^q}} + \frac{1}{2^{q'}} \\
        &\leq \frac{1+\pi}{2^{q'}},
    \end{align}
    where the final inequality follows from \(\abs{g_j^{(q)}} \leq 1/2\). 
    
    By combining the results of two cases, the MSE of \(\hat u_j\) can be bounded as
    \begin{align}
        &\mathbb{E}[(\hat u_j - \langle O_j\rangle)^2] \\
        &\leq \frac{1}{2^{2(q_{\max} + 1)}} + (1+\pi)^2 \sum^{q_{\max}}_{q=0} \frac{\delta^{(q)}}{4^q} \\
        &
        \label{eq:the_choice_of_c}
        \leq \frac{1}{2^{2(q_{\max} + 1)}} + (1+\pi)^2 c\, 2^{-2 q_{\max}+1} \\
        &\leq \varepsilon^2.
    \end{align}
    In the last inequality, we use \(q_{\max} \coloneq \lceil \log_2(1/\varepsilon) \rceil\) and we choose \(c \in \left(0, \frac{3}{8(1+\pi)^2}\right]\). This bound holds for every \(j\), thereby establishing Eq.~\eqref{eq:MSE_general_framework}.
\end{proof}
As stated in Problem~\ref{prob:query_estimation}, our ultimate goal is to minimize the total query to $U_\psi$ and its inverse.
Importantly, we can easily see what determines this cost 
in the framework---probe-state preparation subroutine \(\mac{U}_\Upsilon\). Specifically, the total number of queries is given by
\begin{align}
    \text{(Total number of queries)} = \sum_{q=0}^{q_{\max}} \qty(\text{Cost of } \mac{U}_\Upsilon^{(q)}) .
\end{align}
We find that, to attain HL scaling, it suffices to ensure that $\qty(\text{Cost of } \mac{U}_\Upsilon^{(q)})$ is $ \mac{O} \qty(\aleph\cdot 2^q \log(1/\delta^{(q)}) ) $ for all $q$, where $\aleph$ is a positive factor independent of $q$ and $\varepsilon$. Indeed, the total query complexity is bounded if $\delta^{(q)}=c/8^{q_{\rm max} - q}$ as
\begin{align}
    \sum_{q=0}^{q_{\max}} \qty(\text{Cost of } \mac{U}_\Upsilon^{(q)}) &\leq 
     \aleph \sum_{q=0}^{q_{\max}} 2^q \log(1/\delta{(q)
     }) \\
     &= \aleph \cdot \qty[ 2^{q_{\max +1}} \log(8/c) - (q_{\max} + 2) \log 8 - \log(1/c) ] 
     \\
     &\leq\aleph \cdot 2^{q_{\max +1}} \log(8/c) =   \mac{O} (\aleph)/\varepsilon.
\end{align}
This indicates that it is essential to reduce the prefactor \(\aleph\) without affecting its dependence on \(q\) and \(\varepsilon\).
Motivated by this requirement, we present strategies for reducing this prefactor in Sec.~\ref{sec:all-symemtry-diven-QGE}. Table \ref{tab:factor_como} shows a brief comparison between the previously proposed adaptive QGE algorithm and our methods that are detailed in Secs.~\ref{subsec:Symmetry-QGE-algorithm},\ref{subsec:parallel_QGE_alg}, and \ref{subsec:Integral-QGE}.

\begin{table}[h]
\centering
\caption{
\justifying{
Comparison of the cost of the probe-state preparation circuit \(\mathcal{U}_{\Upsilon}\) and the prefactor \(\aleph\) for estimating \(M\) \(d\)-dimensional observables using Algorithm~\ref{alg:unified_framework}. In Method~I and Method~II, we assume access to block-encodings of direct-sum structured observables \(O_j = \bigoplus_{\lambda} O_j^{(\lambda)}\), where each \(\lambda\) labels a Hilbert subspace, and \(O_j^{(\lambda)}\) denotes the corresponding observable acting on that subspace. The dimension of the subspace labeled by \(\lambda\) is denoted by \(m_\lambda\). Here the subspace decomposition is assumed to be common among all the observables. For simplicity, we are interested in a single $\lambda$ in particular.
Note that in all methods, the number of samples at $q$-th iteration can be chosen as \(R^{(q)} = \mathcal{O}\bigl(\log(M/\delta^{(q)})\bigr)\), where \(\delta^{(q)} \coloneq c / 8^{q_{\max} - q}\). The adaptive QGE in Ref.~\cite{wada2024Heisenberg} also achieves the similar query complexity of Method I if block encoding to $O_j^{(\lambda)}$ are available.
}
}
\label{tab:factor_como}
\begin{tabular}{lcc}
\toprule
\textbf{Method} & $\qty(\text{Cost of } \mac{U}_\Upsilon^{(q)})$ & $\aleph$ \\
\midrule
$[\text{WYY2024}]$ \cite{wada2024Heisenberg}    
    & $\mathcal{O}\qty(2^{q+p} R^{(q)} \sqrt{M\log d})$
    & $\mathcal{O}\qty(\sqrt{M \log d} \log M )$ \\
\midrule
Method I (Sec.~\ref{subsec:Symmetry-QGE-algorithm}) 
    & $\mathcal{O}\qty(2^{p+q} R^{(q)} \sqrt{\norm{\sum_{j} [O^{(\lambda)}_j]^2 } \log m_\lambda})$
    & $\mathcal{O}\qty(\sqrt{\norm{\sum_{j} [O^{(\lambda)}_j]^2 } \log m_\lambda} \log M )$ \\
Method by parallel scheme (Sec.~\ref{subsec:parallel_QGE_alg}) 
    & $\mathcal{O}\qty(2^{q+p} \sqrt{MR^{(q)} \log(d/\delta^{(q)})})$
    & $\mathcal{O}\qty(\sqrt{M \log d  \log M} )$ \\
Method II (Sec:~\ref{subsec:Integral-QGE})
    & $\mathcal{O}\qty(2^{p+q} \sqrt{ \norm{\sum_{j} [O^{(\lambda)}_j]^2 } R^{(q)} \log(m_\lambda /\delta^{(q)})})$
    & $\mathcal{O}\qty(\sqrt{ \norm{\sum_{j} [O^{(\lambda)}_j]^2 } \log m_\lambda  \log M} )$ \\
\bottomrule
\end{tabular}
\end{table}

\subsection{Review on Adaptive QGE algorithm for multiple observable estimation}
\label{sec:WYY2024}

Before explaining our proposal, we review the adaptive QGE algorithm for observable estimation proposed by Ref.~\cite{wada2024Heisenberg} and describe how their method fits within the generalized framework. In their approach, the subroutine \(\mac{U}_{\Upsilon, \mathrm{WYY}}\) generates $R^{(q)}$ copies of a following probe state
\begin{align}
    \ket{\Upsilon (q) }_{\text{WYY}} \simeq \frac{1}{\sqrt{2^{pM}}} \sum_{\bm x \in G_p^M} 
    e^{2\pi i 2^p \sum_j x_j  \frac{2^q}{\pi} \bigl(\langle O_j \rangle - u_j^{(q)}\bigr) } \ket{\bm x}.
\end{align}
In this case, the amplitude \(c_{\bm{x}}\) is uniform and equals \(1/\sqrt{2^{pM}}\) for all \(\bm{x} \in G_p^M\), as in the original QGE algorithm \cite{jordan2005fast,gilyen2019gradient}. As discussed later, the preparation of one target probe state
requires \(\mathcal{O}(2^{p+q+2} \sigma) = \mathcal{O}(2^{p+q} \sqrt{M \log d})\) queries to \(U_\psi\) and its inverse. 
Based on numerical evaluation and Hoeffding's inequality, it suffices to set \(p = 3\) (which is smaller than the value of $p$ determined by an analytical approach), and the number of samples is bounded as \(R^{(q)} \le 9 \log(M/\delta^{(q)})\). Consequently, the total cost of $\mac{U}_{\Upsilon,\mtr{WYY}}^{(q)}$ is given by the product of \(R^{(q)}\) and the query complexity of preparing a single probe state: $\text{(Cost of } \mathcal{U}_{\Upsilon, \mathrm{WYY}}^{(q)} \text{)} =\mac{O} \qty(2^{p+q} \sqrt{M \log d })\cdot R^{(q)}  = \mac{O} \qty( \aleph \cdot 2^q \log(1/\delta^{(q)}) )$, where \(\aleph \propto \sqrt{M \log d} \log M\) is independent of both \(q\) and \(\varepsilon\). This implies that the adaptive QGE algorithm achieves HL scaling.

We now briefly review the query complexity of the preparation of one $\ket{\Upsilon(q)}_{\text{WYY}}$.
The preparation of $\ket{\Upsilon(q)}_{\text{WYY}}$ is achieved by 
applying to a superposition state $\ket{+}^{\otimes pM}$ a subroutine that approximately imposes a phase $e^{2\pi i 2^p \sum_j x_j  \frac{2^q}{\pi} \bigl(\langle O_j \rangle - u_j^{(q)}\bigr) }$ on each computational state $\ket{x_1}\cdots\ket{x_j}\cdots\ket{x_M}$.  
Such a procedure consists of
three main steps: (i) block encode a linear combination of observables with random coefficients, (ii) amplify the magnitude of the block-encoding via uniform singular value amplification, and (iii) apply quantum singular value transformation (QSVT) to the Hamiltonian obtained from the amplified block-encoding in conjunction with state-preparation oracles.

Before reviewing each step, we clarify the notation for the target observables. As marked in Sec.~\ref{sec:Prob_setting}, we denote \(B_j\) as an \(a\)-block-encoding of a \(d \times d\) observable \(O_j\).
Given the values of temporal estimates $\{u_j^{(q)} \}_{j=1}^M$, we can implement the $(a+3)$-block-encoding $B_j^{(q)}$ of $O_j^{(q)} \coloneq (O_j - u_j^{(q)}\cdot \1 )/2$ \cite{wada2024Heisenberg}. Since the spectral norm of \(O_j^{(q)}\) is still bounded by 1, we can handle \(B_j^{(q)}\) in the same manner as \(B_j\), up to the difference in the number of ancilla qubits.

We now begin with step (i). For any \(\bm x = (x_1, \ldots, x_M) \in G_{p}^M\), a modified quantum circuit \(\tilde{U}^{(\bm x)}\) is constructed, controlled by \(\bm x\), which is a block-encoding of
\begin{align}
    \tilde H^{(\bm x)} \coloneq \frac{1}{M} \sum_{j=1}^M x_j O^{(q)}_j,
\end{align}
by employing the LCU method with controlled versions of each \(B_j\). This implementation of an \(\epsilon\)-precise block-encoding of \(\tilde{H}^{(\bm{x})}\) requires a circuit depth of \(\mac{O}( pM \log(1/\epsilon))\) over \(\mac{O}(pM + a + \lceil \log_2 M \rceil + \log_2 d)\) qubits \footnote{More precisely, Ref.~\cite{wada2024Heisenberg} provides an \(\epsilon\)-precise block-encoding of \(\left(\frac{2M}{\pi 2^{\lceil \log_2 M \rceil}}\right)\), but we ignore this normalization factor as it can be adjusted to unity via uniform singular value amplification.}.

To achieve a speedup with respect to \(M\), one employs singular value amplification, as described in Appendix~\ref{sec:Quantum_arithmetic}, Lemma~\ref{lem:uniform_amplification}, during step~(ii). More concretely, we construct a block-encoding of the rescaled Hamiltonian
\begin{align}
    \label{eq:rescaled_Hamiltonian}
    H^{(\bm x)} = \frac{1}{\sigma} \sum_{j=1}^M x_j O^{(q)}_j, 
    \mbox{where}~ 
    \sigma = \sqrt{2 \left\| \sum_j (O^{(q)}_j)^2 \right\| \log(2d/\delta')} < M
\end{align}
for some fixed $\delta'$.
Note that this construction is valid for almost all \(\bm x \in G_p^M\), though not for all. However, the contribution from the exceptional cases is ultimately truncated during the measurement process and does not affect the estimation.

From this amplified block-encoding, the QSVT technique allows us to construct for any $t\in \mathbb{R}$ a phase oracle that encodes $e^{i t \sum_j x_j \langle O_j^{(q)} \rangle}$ controlled by ${\bm x}$ in step (iii). 
Based on the amplified block-encoding and a state-preparation oracle \(U_\psi\), we can construct an approximate block-encoding of 
\[
\sum_{\bm x \in G_{\bm p} } \left(\frac{x_j \langle O^{(q)}_j \rangle}{\sigma}\right) \ketbra{\bm x}.
\]
For this operator, the optimal Hamiltonian simulation, described in Appendix~\ref{sec:Quantum_arithmetic}, Lemma \ref{lem:optimal_HS}, implements an \(\epsilon'\)-precise block-encoding of the time-evolution operator \(e^{i \tilde f(\bm x) T}\) for $T \in \mathbb{R}/\{0\}$, where \(\tilde f(\bm x)\) approximates \(\sigma^{-1} \sum_j x_j \langle O^{(q)}_j \rangle\) for almost all \(\bm x \in G_p^M\).
This protocol requires \( \mac{O}(T + \log(1/\epsilon'))\) queries. Then, the preparation of one copy of $\ket{\Upsilon(q)}_{\text{WYY}}$  is achieved by applying this block encoding of \(e^{i \tilde f(\bm x) T}\) with \(T = 2^{p+q+2}\sigma \) to the superposition state \(\ket{+}^{\otimes pM}\) in the probe system and an ancilla state \(\ket{\bm 0}\). This procedure has complexity \(\mathcal{O}(T + \log(1/\varepsilon')) = \mathcal{O}(2^{p+q+2} \sigma) = \mathcal{O}(2^{p+q} \sqrt{M \log d})\), since \(\varepsilon'\) and \(\delta'\) can be chosen as constants. This concludes the discussion on the preparation of a single target probe state.

Note that although a quantum circuit for the Hamiltonian simulation of the block-encoded Hamiltonian requires a large number of parameterized gates for quantum signal processing (QSP), the number of parameterized gates is significantly reduced by using the QSP for the Chebyshev polynomials. 
This leads to an exponential reduction of classical computational resources to synthesize the corresponding quantum circuit at the expense of increasing the queries to the block-encoding by a factor of about 3; see Ref.~\cite{wada2024Heisenberg}.
In this work, our primary goal is to minimize total query complexity for numerical evaluation, so we adopt the standard optimal Hamiltonian simulation in our proposals.

\section{Our proposal}
\label{sec:all-symemtry-diven-QGE}

In this section, we propose methods to enhance the efficiency of the QGE algorithm by modifying the initial state Eq.~\eqref{eq:super_position}, by leveraging the symmetry inherent in the target state $\ket{\psi}$ for multiple observable estimation, and by incorporating auxiliary qubits. In Sec.~\ref{subsec:modify_probe_state}, we analyze various initial states to determine which amplitude $c_{\bm x}$ yields superior performance for the original QGE algorithm. In Sec.~\ref{subsec:Symmetry-QGE-algorithm}, we propose Method I
which fully exploits the symmetry inherent in the target state $\ket{\psi}$ to further enhance the efficiency. In Sec.~\ref{subsec:parallel_QGE_alg}, we propose a new state-preparation circuit, inspired by a parallel scheme,
that utilizes entanglement with. 
Finally, in Sec.~\ref{subsec:Integral-QGE}, we integrate these approaches into Method II. 


\subsection{Modified initial state for the QGE algorithm}
\label{subsec:modify_probe_state}
From the previous section, we observed that the Jordan algorithm can be regarded as an application of the QPE algorithm. This observation implies that the efficiency of the QGE algorithm may be further enhanced by employing an alternative initial state that has been proposed in earlier studies for the QPE algorithm. In particular, when the cost function is given by the Holevo variance approximating the MSE in the QPE algorithm, the optimal initial state of the probe register is known to be the sine state described in Sec.~\ref{sec:HL_amplitude} \cite{ji2008parameter,berry2009perform}. 
However, when one also considers the additive error and the failure probability inherent in QPE, 
another initial state  
can more effectively lower the failure probability in the asymptotic limit~\cite{rendon2022effects, greenaway2024case, patel2024optimal}. Because reducing the failure probability while using minimal ancillary resources directly improves $R^{(q)}$ in the adaptive QGE algorithm, our focus in this section is on optimizing the initial state to further enhance QGE performance.

Ref.~\cite{wada2024Heisenberg} adopts an uniform superposition state as the initial state:
\begin{align}
    \ket{\text{init}}_{\text{sp}} = \sum_{\bm x \in G_p^M} c_{\bm x }^{(\text{sp)} } \ket{\bm x}  \coloneq  \frac1{\sqrt{2^{pM}}} \sum_{\bm x \in G_p^M} \ket{\bm x},
\end{align}
where $c_{\bm x}^{(\text{sp})} = 1/\sqrt{2^{pM}} $ denotes the amplitude of the uniform superpositon state over the grid $G_p^M$. 
When estimating a gradient component of $f(\bm x) = \bm g \cdot \bm x$ where $\bm x \in \mathbb{R}^M$ with the QGE algorithm, we apply the oracle that embeds all $g_j$ as phase and the variant of QFT to this state. Consequently, we obtain the following state,
\begin{align}
    \label{eq:superposition_with_phase}
     \frac{1}{\sqrt{2^{pM}}} \sum_{\bm k ,\bm x \in G_p^M} e^{2\pi i 2^p \bm x \cdot (\bm g - \bm k)} \ket{\bm k},
\end{align}
For this state, we evaluate the probability that the additive error between each measurement outcome \(k_j\) and the target value \(g_j\) is bounded. This probability admits the following tight bound:
\begin{align}
\label{eq:tight_eval_estimate}
    \forall j \in \{1,\ldots,M\}, \quad \Pr\!\Bigl[ \bigl| k_j - g_j \bigr| > \frac{1}{2\pi} \Bigr] < 0.18.
\end{align}
This inequality holds when \(p = 3\). To ensure success in estimation via coordinate-wise medians, the error probability must be below \(1/2\); thus, \(p = 3\) is the minimal choice that satisfies this requirement.

Alternative initial states, however, may yield improved estimation performance, so we investigate the performance of various initial states. To end this, recall the observation that the state expressed in Eq.~\eqref{eq:superposition_with_phase} is equivalent to $M$ tensor product of final states of the QPE algorithm by replacing the parameters appropriately, as discussed in Sec.~\ref{sec:core_idea_Jordan}.
From this observation, we can analyze the property regarding a measurement outcome of the QGE algorithm through the lens of the QPE algorithm. 

Let \(\theta \in [0,1)\) be the target phase. We consider the initial state of the QPE algorithm as $
\sum_{\mu = 0}^{2^p - 1} c_\mu \ket{\mu} $ where $\sum \abs{c_\mu}^2 =1 $. 
After applying a series of gates that embed the target phase \(\theta\) followed by the inverse QFT, the resulting state can be expressed as
\begin{align}
     {\ket{\Gamma } = \frac{1}{\sqrt{2^p}} \sum_{\mu=0}^{2^p-1} c_\mu \sum_{k'=0}^{2^p-1} e^{2\pi i \left(\theta - \frac{k'}{2^p}\right)\mu} \ket{k'}.}
\end{align}
The probability of obtaining the \(l\)-th outcome in the computational basis is given by
\[
P(l\mid\theta) = \left|\langle l| {\Gamma} \rangle\right|^2.
\]
Accordingly, if we set an estimator of $\theta$ as $\hat{\theta} = l/2^p$, the probability that the estimation error exceeds the threshold can be written as
\begin{align}
    \Pr\!\Bigl[ \Bigl|\hat{\theta} - \theta\Bigr| > \frac{1}{2\pi} \Bigr] = \sum_{l \in S} P(l\mid \theta),
\end{align}
where \(S\) denotes the set of outcomes for which 
\(\Bigl|\frac{l}{2^p} - \theta \Bigr| > \frac{1}{2\pi}\).
Note that this probability corresponds to the probability term in Eq.~\eqref{eq:tight_eval_estimate}. 
Hence, if we choose \(c_\mu = 1/\sqrt{2^p}\) for all \(\mu\), the upper bound of this probability is given by 0.18, as shown in Eq.~\eqref{eq:tight_eval_estimate}.

Following this formulation, we numerically evaluated the error probability for several promising probe states. As candidates, we consider the probe states proposed in Refs.~\cite{rendon2022effects, greenaway2024case, patel2024optimal}. However, here we define the amplitude using \(\phi(\mu) = \frac{2\mu - 2^p + 1}{2^{p+1}}\) to simplify the correspondence between the computational basis \(\ket{\mu}\) and the basis used in the QGE algorithm. Hence, the amplitudes are given by
\begin{gather}
    \label{def:sp_state}
     c_{\mu}^{\mtr{(sp)} } = \frac{1}{\sqrt{2^p}}, \\
     \label{def:cosine_state}
     c_{\mu}^{(\cos,1)} = \frac{\sqrt{2}}{\sqrt{2^p+1}} \cos\Biggl(\frac{2^p\phi(\mu) \pi}{(2^p+1)}\Biggr), \\ 
     \label{def:cosine2_state}
     c_{\mu}^{(\cos,2)} = \frac{\sqrt{2}}{\sqrt{2^p}} \cos( {\phi(\mu)} \pi ), \\
     \label{def:Kaiser_state}
     c_{\mu}^{(\mtr{Kaiser})} \propto \frac{I_0 \Bigl(\pi \alpha \sqrt{1-(2\phi(\mu))^2}\Bigr)}{I_0(\pi \alpha)},
\end{gather}
where \(I_0\) is the modified Bessel function of the first kind of order 0 and $\alpha$ is a non-negative parameter. 
Figure~\ref{fig:compare_various_probe} shows the numerical results for $p=3$.
\begin{figure}[t]
    \centering
    \includegraphics[width=0.7\linewidth]{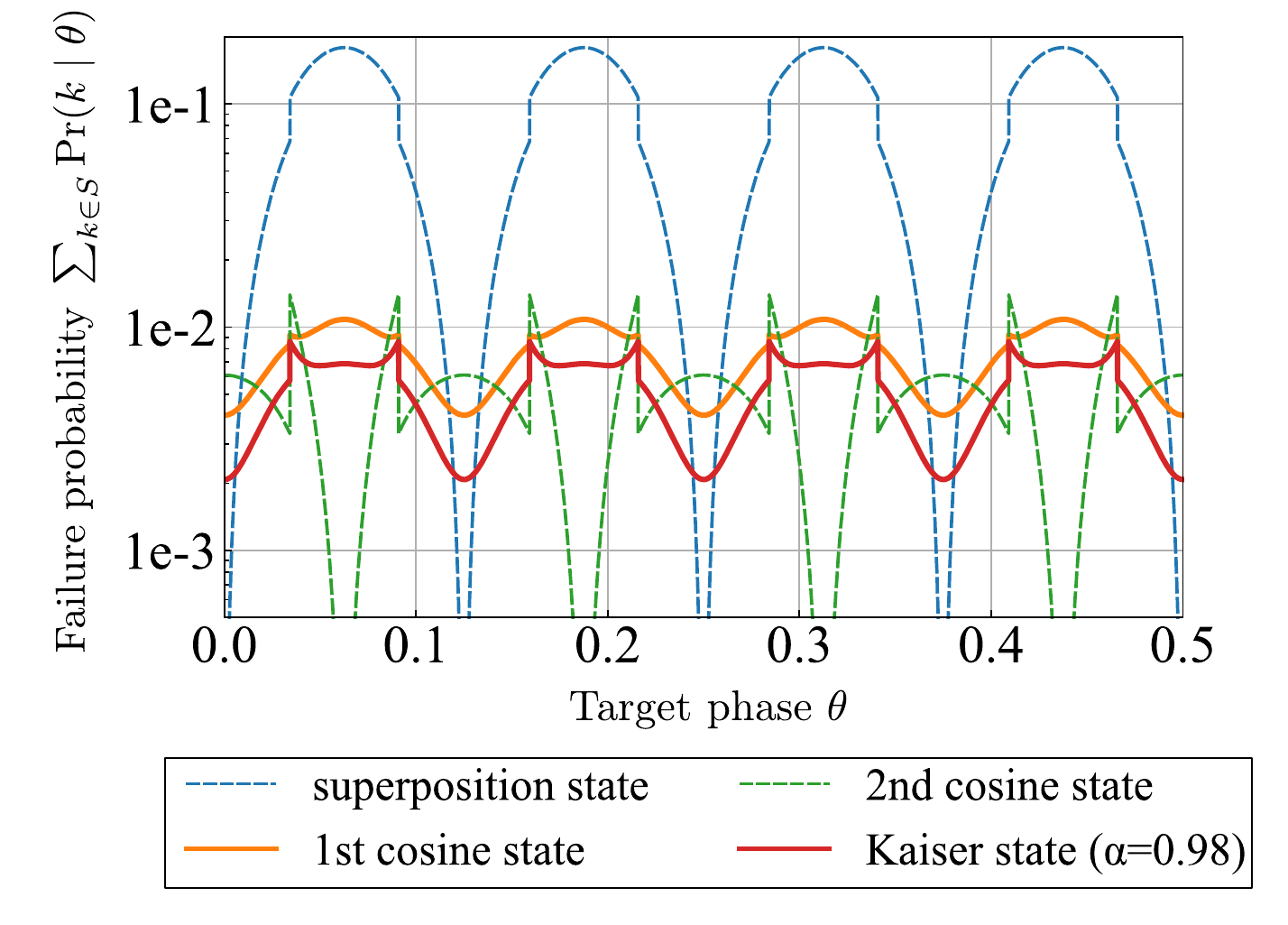}
    \caption{
    \justifying{
    Performance of various $(p=3)$-qubit probe states. The interval $\theta \in [0, 1]$ is discretized into $10^6$ points. For clarity, we present the failure probability in the range $\theta \in [0, 0.5]$, as the plot over $\theta \in [0.5, 1]$ exhibits identical behavior due to periodicity. The probe states evaluated are the superposition state defined in Eq.~\eqref{def:sp_state}, the first cosine state in Eq.~\eqref{def:cosine_state}, the second cosine state in Eq.~\eqref{def:cosine2_state}, and the Kaiser state with $\alpha = 0.98$ defined in Eq.~\eqref{def:Kaiser_state}.
    }}
    \label{fig:compare_various_probe}
\end{figure}
From this calculation, we obtain the maximum failure probability for each probe state,
\begin{gather}
    \max_{\theta \in [0,1) } \mtr{Pr}\!\left[ \left|\hat{\theta} - \theta\right| > \frac{1}{2\pi}\right]_{\mtr{sp}} 
     = 0.1789\ldots < 0.18,
    \\
    \label{eq:prob_sine_state}
    \max_{\theta \in [0,1) } \mtr{Pr}\!\left[ \left|\hat{\theta} - \theta\right| > \frac{1}{2\pi}\right]_{\cos,1} = 0.0108\ldots  < 0.011.
    \\
    \max_{\theta \in [0,1) } \mtr{Pr}\!\left[ \left|\hat{\theta} - \theta\right| > \frac{1}{2\pi}\right]_{\cos,2} = 0.0139  \ldots < 0.014
    \\
    \max_{\theta \in [0,1) } \mtr{Pr}\!\left[ \left|\hat{\theta} - \theta \right| > \frac{1}{2\pi}\right]_{\mtr{Kaiser}, \alpha=0.98} = 0.00860  \ldots < 0.009
\end{gather}
Here, we choose \(\alpha = 0.98\), as the Kaiser state shows the best performance near this value within the range \(\alpha \in [0, 30]\).
Based on these results, the Kaiser state with \(\alpha = 0.98\) provides the lowest error probability. However, due to the implementation difficulties associated with the Kaiser state \cite{obrien2024quantum} and the relatively small performance difference compared to the cosine state, we choose the cosine state defined in Eq.~\eqref{def:cosine_state} as the new probe state and denote $c_x^{(\cos)} \coloneq \frac{\sqrt 2}{\sqrt{2^p+1}} \cos(2^px\pi / (2^p +1))$ for all $x \in G_p$. 

We note that we have imposed symmetry on the amplitudes, i.e., \(c_{x} = c_{-x}\) for all \(x \in G_p\), in order to ensure that the subsequent amplification of the block-encoding (discussed in Sec.~\ref{subsec:Symmetry-QGE-algorithm}) succeeds. Additionally, we define a crucial factor \(v\) for numerical evaluation, which is determined by \(c_x\). Consider a random variable \(X\) on \(G_p\), with the probability given by \(\mathrm{Pr}[X = x] = |c_{x}|^2\), and \(v\) is defined as
\begin{align}
    v \coloneq \mathbb{E}[(2X)^2] = \sum_{x \in G_p} |2x|^2 |c_{x}|^2.
\end{align}
When we choose \(c_x^{\cos}\) for \(p = 3\), this value is bounded by 0.1652.

Consequently, our objective in the QGE algorithm is to 
construct the probe-state preparation whose amplitude profile matches the amplitude $c_{\bm x}^{(\cos)}$ defined as 
\begin{align}
    c^{(\cos) }_{\bm x} \coloneq c^{(\cos) }_{x_1}\, c_{x_2}^{(\cos)} \cdots c^{(\cos) }_{x_M},
\end{align}
for $\bm x \in G_p^M$.



\subsection{Method I : QGE algorithm under symmetry}
\label{subsec:Symmetry-QGE-algorithm}
In this section, we aim to accelerate the implementation of $\mac{U}_{\Upsilon}^{(q)}$ in our framework by exploiting symmetry conditions. The motivation behind imposing symmetry is twofold: firstly, enforcing a symmetry on the state of our interest is a natural and modest assumption; secondly, such symmetry often induces a decomposition of the original Hilbert space into smaller subspaces, which reduces the computational cost by applying QSVT restricted to a specific subspace—namely, subspace QSVT (See the details in Section \ref{sec:subspace_QSVT}). Based on these insights, we develop an efficient probe-state preparation scheme capable of handling a broad class of physical states, referring to this approach enhanced by symmetry condition as Method~I. The primary objective of this section is to evaluate the query complexity of our proposed method.

In order to state our main theorem, we initially outline the essential tools and frameworks that will be used throughout the analysis. We begin by the symmetry condition \cite{mitsuhashi2024unitary} to formally define symmetry and to utilize the subspace QSVT.

\begin{Def}
    For a group $\mac{G}$ and its representation $\mac{R}$ on the $n$-qubit Hilbert space $(\mathbb{C}^2)^{\otimes n} $, a linear operator $A$ on $(\mathbb{C}^2)^{\otimes n} $ is $\mac{G}$-symmetric if $A$ commutes with $\mac{R}(g)$ for all $g \in \mac{G}$. 
\end{Def}

The central concept of Method I is to shrink an effective size of a target Hilbert space by leveraging symmetries inherent in a target state. More specifically, in the expression for \(\sigma\) in Eq.~\eqref{eq:rescaled_Hamiltonian}, we can replace the collection \(\{O_j\}\) and the dimension \(d\) with the corresponding elements characterized by symmetry. This insight allows us to establish a relation between the MSE \(\varepsilon\) and the total number of queries to the state-preparation subroutine when a target state is restricted to a particular subspace.

\begin{theorem}[Method I]
\label{thm:evaluation_symmetry_QGE}
Let $\varepsilon \in (0,1)$ be a target precision and $\mac{G}$ be a group. For given $M$ $\mac{G}$-symmetric observables $\{O_j\}_{j=1}^M$ of bounded spectral norm $\norm{O_j}\leq 1$, assume that each $O_j$ has the irreducible decomposition $
O_j = \bigoplus_{\lambda \in \Lambda} \1(\mathbb{C}^{d_\lambda}) \otimes O^{(\lambda)}_j,
$
where $\lambda$ labels the irreducible subspaces and each $O^{(\lambda)}_j$ is an $m_\lambda$-dimensional operator.
For a target state $\ket{\psi}$ prepared by a given oracle $U_{\psi}$, we further assume that there exists an index subset $\Delta\subseteq\Lambda$ such that the irreducible subspaces specified by $\lambda\in \Delta$ contain the target state $\ket{\psi}$.
Then, there exists a quantum algorithm that outputs a sample from estimators $\{\hat{u}_j\}_{j=1}^M$ for the expectation values $\{\langle \psi|O_j|\psi \rangle\}$ satisfying
\begin{align}
    \label{eq:MSE_symmetric_QSVT}
    \max_{j=1,2,\ldots,M} \operatorname{MSE}[\hat{u}_j] \le \varepsilon^2 
\end{align}
using
\begin{align}
    \label{eq:query_comp_of_methodI}
    \varepsilon^{-1}\times \mac{O} \left(
    \log(M)\log^{1/2}(m_{\Delta})
    \max_{\lambda \in \Delta}~{ \norm{ \sum_{j=1}^M [O_j^{(\lambda)} ]^2 }^{1/2}}\right)
\end{align}
queries to the state preparation oracles $U_\psi$ and $U_\psi^\dagger$ in total, where $m_{\Delta}$ is defined as $m_{\Delta}:=\sum_{\lambda'\in \Delta}m_{\lambda'}$.
Here, the mean squared error of an estimator $\hat{u}_j$ is defined as
\[
\operatorname{MSE}[\hat{u}_j] := \mathbb{E}\Bigl[\Bigl(\hat{u}_j - \langle \psi|O_j |\psi\rangle\Bigr)^2\Bigr].
\]
\end{theorem}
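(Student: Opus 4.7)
The plan is to instantiate the generalized adaptive QGE framework of Theorem~\ref{thm:performance_main_alg} with a probe-state preparation subroutine $\mathcal{U}_\Upsilon^{(q)}$ that follows the three-stage LCU $\to$ uniform singular value amplification $\to$ Hamiltonian simulation construction reviewed in Section~\ref{sec:WYY2024}, but specialized so as to exploit the $\mathcal{G}$-symmetric direct-sum structure of the $\{O_j\}$. I would fix the initial amplitudes of each $p$-qubit grid register to the cosine distribution $c^{(\cos)}_x$ from Section~\ref{subsec:modify_probe_state}, and set $p=3$ so that the per-coordinate failure probability is bounded as in Eq.~\eqref{eq:prob_sine_state}. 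Combined with the tensor-product view of the QGE output state noted in Section~\ref{sec:core_idea_Jordan} and the coordinate-wise median-of-$R^{(q)}$ step (Step~\ref{step:get-coordinate-wise-median} of Algorithm~\ref{alg:unified_framework}), choosing $R^{(q)} = \Theta(\log(M/\delta^{(q)}))$ via a union bound and Hoeffding-type concentration suffices to guarantee $|\tilde u_j^{(q+1)} - \langle O_j\rangle| \le 1/2^{q+1}$ for all $j$ with probability $\ge 1 - \delta^{(q)}$, which is precisely the hypothesis of Theorem~\ref{thm:performance_main_alg}.

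The main construction of $\mathcal{U}_\Upsilon^{(q)}$ would proceed as follows. First, I would block-encode the controlled Hamiltonian $\tilde H^{(\bm x)} = \frac{1}{M}\sum_j x_j O_j^{(q)}$ with $O_j^{(q)} = (O_j - \tilde u_j^{(q)}\mathds{1})/2$ via the LCU of the shifted encodings $B_j^{(q)}$ exactly as in WYY2024. Next, in place of the full-space uniform amplification, I would apply \emph{subspace} uniform singular value amplification (Sec.~\ref{sec:subspace_QSVT}) restricted to the $\mathcal{G}$-invariant subspace $V_\Delta = \bigoplus_{\lambda\in\Delta}(\mathbb{C}^{d_\lambda}\otimes\mathbb{C}^{m_\lambda})$ that contains $\ket\psi$, with amplification factor $M/\sigma_\Delta$ where $\sigma_\Delta = \sqrt{2\max_{\lambda\in\Delta}\norm{\sum_j [O_j^{(\lambda)}]^2}\log(2m_\Delta/\delta')}$; because every $B_j^{(q)}$ inherits $\mathcal{G}$-symmetry from $O_j$, the LCU encoding is block-diagonal with respect to the isotypic decomposition, so the Chernoff-type argument validating the amplification can be run on $V_\Delta$ alone, replacing the ambient $d$ by $m_\Delta$ and $\norm{\sum_j O_j^2}$ by $\max_{\lambda\in\Delta}\norm{\sum_j [O_j^{(\lambda)}]^2}$. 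Finally, optimal Hamiltonian simulation applied to the amplified block-encoding produces the phase $e^{it\tilde f(\bm x)}$ at $t=2^{p+q+2}\sigma_\Delta$, which acting on $\ket{+}^{\otimes pM}$ together with $U_\psi\ket 0$ approximates the target probe state $\ket{\Upsilon(q)}$ up to error absorbable into the existing failure budget.

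Putting these ingredients together, a single copy of $\ket{\Upsilon(q)}$ costs $\mathcal{O}(2^{p+q}\sigma_\Delta)$ queries to $U_\psi$ and $U_\psi^\dagger$, so preparing $R^{(q)}$ copies costs $\mathcal{O}(2^{p+q}\sigma_\Delta\log(M/\delta^{(q)}))$. Summing $\sum_{q=0}^{q_{\max}}\mathrm{cost}(\mathcal{U}_\Upsilon^{(q)})$ with $\delta^{(q)} = c/8^{q_{\max}-q}$ and $q_{\max} = \lceil\log_2(1/\varepsilon)\rceil$, exactly as in the geometric-series bookkeeping of Section~\ref{sec:general_framwork}, telescopes to $\mathcal{O}(\aleph)/\varepsilon$ with $\aleph = \mathcal{O}(\sigma_\Delta\log M)$, which matches the right-hand side of Eq.~\eqref{eq:query_comp_of_methodI}. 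The MSE bound \eqref{eq:MSE_symmetric_QSVT} is then immediate from Theorem~\ref{thm:performance_main_alg}.

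The main obstacle will be the symmetry-preservation check in the middle stage: one must verify that the LCU block-encoding, its uniform amplification, and the subsequent QSVT polynomial all restrict cleanly to $V_\Delta$, so that the effective ambient dimension entering the Chernoff-type concentration is truly $m_\Delta$ rather than $d$, and that the sign-type polynomial used for amplification commutes with the projector onto $V_\Delta$. Provided the block-encodings $B_j$ can be chosen to respect the $\mathcal{G}$-action---which is natural when each $O_j^{(\lambda)}$ is encoded directly in the symmetry-adapted basis and the ancilla registers are acted on trivially by $\mathcal{R}$---this decomposition goes through, and the remainder of the analysis is a mechanical adaptation of the WYY2024 bookkeeping.
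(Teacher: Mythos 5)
Your proposal is correct and follows essentially the same route as the paper: the general adaptive framework of Theorem~\ref{thm:performance_main_alg} instantiated with the cosine probe state at $p=3$, a median-of-$R^{(q)}$ estimator with $R^{(q)}=\Theta(\log(M/\delta^{(q)}))$, and a probe-state subroutine built from LCU, subspace singular value amplification on $\mathcal{H}_\Delta$ (with the matrix Bernstein concentration run per irreducible block and union-bounded over $\lambda\in\Delta$, which is exactly how the paper justifies replacing $d$ by $m_\Delta$ and $\|\sum_j O_j^2\|$ by $\max_{\lambda}\|\sum_j[O_j^{(\lambda)}]^2\|$ in $\sigma_\Delta$), followed by optimal Hamiltonian simulation costing $2Q=\mathcal{O}(2^{p+q}\sigma_\Delta)$ calls to $U_\psi$ per copy. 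The only cosmetic slip is that you apply the final phase to $\ket{+}^{\otimes pM}$ after having committed to the cosine amplitudes; the paper consistently uses $\ket{\cos}^{\otimes M}$ and accounts for the resulting $1/12$ state-preparation error inside the failure probability $\mu=0.011+1/12\le 3/8$, which is the same bookkeeping you sketch.
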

\begin{Remark}
    We note that in this theorem, we use \(m_\lambda\) to denote the dimension of the irreducible (symmetric) subspace labeled by \(\lambda\), instead of \(d_\lambda\) as shown in the letter. The reason for using \(m_\lambda\) is that the symmetry inherent in the target state of interest and the target observable is defined by the unitary representation \(\mathcal{R}\) and the group \(\mathcal{G}\). From this perspective, the multiplicity \(m_\lambda\) corresponds to the dimension of the target observable \(O^{(\lambda)}\) on the irreducible (symmetric) subspace.
\end{Remark}


From the discussion in the previous section, we determine the probe state that exhibits superior performance. The next step toward enhancing the adaptive QGE algorithm with the cosine state is to construct a quantum circuit \(\mathcal{U}_{\Upsilon}^{(\cos)}\) that (approximately) generates the probe state \(\ket{\Upsilon(q)}_{\cos}\) at each iteration \(q\), defined as
\begin{align}
    \label{def:cosine_probe_state}
    \ket{\Upsilon(q)}_{\cos} \coloneq \sum_{\bm x \in G_p^M} c_{\bm x}^{(\cos)}
    e^{2 \pi i 2^p \sum_{j=1}^M x_j 2^q \pi^{-1} ( \langle O_j \rangle - \tilde{u}_j^{(q) } ) } \ket{\bm x}. 
\end{align}
The circuit \(\mathcal{U}_{\Upsilon}^{(\cos)}\) is constructed by a phase embedding circuit implementing the phase
$e^{2\pi i 2^p \sum_j x_j  \frac{2^q}{\pi} \bigl(\langle O_j \rangle - u_j^{(q)}\bigr)} $
to a cosine state over the grid \(G_p^M\), expressed as
\begin{align}
    \ket{\cos}^{\otimes M} \coloneq \left(\sum_{x \in G_p} c_x^{(\cos)} \ket{x}\right)^{\otimes M} = \sum_{\bm{x} \in G_p^M} c_{\bm{x}}^{(\cos)} \ket{\bm{x}}.
\end{align}
As described in Section \ref{sec:WYY2024}, the implementation of the phase embedding circuit proceeds in three main steps: (i) constructing a block-encoding for a linear combination of observables with random coefficients, (ii) enhancing this block-encoding via uniform singular value amplification, and (iii) generating the target probe state by applying the QSVT to the Hamiltonian derived from the amplified block-encoding together with the state-preparation oracles.

To prove this theorem, we first see that the imposed symmetry condition induces a decomposition of the original Hilbert space into a direct-sum structure of smaller subspaces or, more generally, an inequivalent irreducible decomposition. In what follows, we denote the set of all $\mac{G}$-symmetric $n$-qubit linear operators as $\mac{L}_{n,\mac{G}}$: 
  \begin{align}
    \mathcal{L}_{n,\mathcal{G}} \coloneq \{ L \in \mac{L} ((\mathbb{C}^2)^{\otimes n}) \mid [L, \mac{R}(g) ] =0, \forall g \in \mac{G} \}.
  \end{align}
From definition, \(\mathcal{G}\)-symmetric operators can be expressed in an inequivalent irreducible decomposition. To make this explicit, we first note that every unitary representation \(\mac{R}\) on \((\mathbb{C}^2)^{\otimes n}\) can be decomposed as follows,
\begin{align}
    \mac{R}(g) = \bigoplus_{\lambda \in \Lambda} \mac{R}_{\lambda}(g) \otimes \1(\mathbb{C}^{m_\lambda}) 
\end{align}
where $\lambda $ is the label of each irreducible space, $\Lambda$ is the set of the labels, $\mac R_\lambda$ is an irreducible representation of dimension $d_\lambda$, $\1(\mathbb{C}^{m_\lambda}) $ is the identity on $\mathbb{C}^{m_\lambda}$ where $m_\lambda$ is the multiplicity of each irreducible space, respectively. 
This decomposition implies the following isomorphism between the Hilbert space of $n$ qubits 
\begin{align}
    (\mathbb{C}^2)^{\otimes n} \simeq \bigoplus_{\lambda \in \Lambda}\mathbb{C}^{d_\lambda} \otimes \mathbb{C}^{m_\lambda},
\end{align}
Due to Schur's lemma, arbitrary $\mac G$-symmetric operator $O \in \mathcal{L}_{n,\mac G} $ also has the decomposition based on inequivalent irreducible representation, 
\begin{align}
    O = \bigoplus_{\lambda \in \Lambda} \1(\mathbb{C}^{d_\lambda}) \otimes O^{(\lambda)},
\end{align} 
where $O^{(\lambda)}$ acts on the multiplicity space $\mathbb{C}^{m_\lambda}.$
This direct sum structure precisely enables us to directly reduce the computational cost of $\mac{U}^{(\cos)}_{\Upsilon}$.

When $M$ observables $\{O_j \}_{j=1}^M$ are $\mac{G}$-symmetric and they have the inequivalent irreducible decomposition on the considering basis,
\begin{align}
    \forall j,   O_j  =
    \bigoplus_{\lambda \in \Lambda}\1(\mathbb{C}^{d_\lambda}) \otimes O_j^{(\lambda)},
\end{align}
the linear combination of $O_j$ also can be irreducibly decomposed. From this observation, for $\bm x \in G_{ p}^{M} $, the following equality holds, 
\begin{align}
    \qty( \sum_{j=1}^M x_j  O_j)  &= \sum_{j=1}^M x_j \bigoplus_{\lambda \in \Lambda} \1(\mathbb{C}^{d_\lambda}) \otimes O_j^{(\lambda)} \\
     &=
    \begin{pmatrix}
    \sum_{j=1}^M x_j \1(\mathbb{C}^{d_{1} }) \otimes O_j^{(1)}  & \bm O & \cdots & \bm O  \\
    \bm O  & \sum_{j=1}^M x_j \1(\mathbb{C}^{d_{2} }) \otimes O_j^{(2)}  & \cdots & \bm O  \\
    \vdots & \vdots & \ddots & \vdots \\
    \bm O  & \bm O  & \cdots & \sum_{j=1}^M x_j \1(\mathbb{C}^{d_{\abs{\Lambda}} }) \otimes O_j^{(\abs{\Lambda})}
    \end{pmatrix}.
\end{align}

In estimating expectation values of a quantum system, it is natural to consider a situation where a quantum state is subject to certain symmetries, such as U(1) or SU(2) symmetry. Under such a situation, the target state \(\ket{\psi}\) is typically supported on specific subspaces, which implies that any operator acting on the orthogonal subspaces does not affect \(\ket{\psi}\). 
To be more concrete, let \(\Delta \subseteq \Lambda\) be the relevant set of labels, and define \(\mathcal{H}_\Delta\) as the corresponding target subspace. Then, the projector $\Pi_\Delta$ on the subspace $\mac{H}_\Delta$ satisfies $\Pi_\Delta (\1(\mathbb{C}^{d_\lambda}) \otimes O^{(\lambda)}) \Pi_\Delta = 0$ for $\lambda \not \in \Delta$.

Thus, when our target state $\ket{\psi}$ belongs to a certain subspace $\mac{H}_\Delta$, we need only implement operators whose action is effective on \(\mathcal{H}_\Delta\). Applying this logic in the state-preparation method $\mac{U}^{(\cos)}_{\Upsilon}$, it is sufficient to construct the block-encoding of \(\frac{1}{\sigma}\sum_j x_j O_j\) so that it acts effectively only within \(\mac{H}_\Delta\). This observation implies that leveraging symmetry allows us to amplify the amplitude of the block-encoding in the desired symmetry sectors via subspace QSVT. 

Here, we omit the details of subspace QSVT and state only the core lemma regarding amplitude amplification on specific subspaces. The proofs of Lemma~\ref{lem:Amplitude_Amp_subspace} and the detailed framework of subspace QSVT are provided in Section \ref{sec:subspace_QSVT}.

\begin{lemma}[Amplitude amplification on specific subspaces]
    \label{lem:Amplitude_Amp_subspace}
    Let $\delta, \varepsilon \in (0, 1/2)$ and $\gamma > 1$. Suppose we have a group $\mac G$ and a $a$-block-encoding $U_O$ of $\mac{G}$-symmetric $d$-dimensional Hermitian operator $O$ satisfying $O =  \bigoplus_{\lambda \in \Lambda} O^{(\lambda)} $, where we define $\Lambda $ as a label set and $O^{(\lambda)}$ is a corresponding summand, both of which are determined by $\mac{G}$.
    For a subset $\Delta \subseteq \Lambda$, let \(\Pi_\Delta\) denote the projector onto the subspace \(\mathcal{H}_\Delta\subseteq\mathbb{C}^d\) associated with \(\Delta\), and assume that $\Pi_\Delta$ satisfies $\|\Pi_\Delta O \Pi_\Delta\| \leq \frac{1 - \delta}{\gamma}$. Then, we can implement a $(a+1)$-block-encoding of $\tilde O$ satisfying $\norm{\Pi_\Delta(\tilde O  - \gamma O) \Pi_\Delta } \leq \varepsilon$, 
    with $m = \mac{O}\left(\frac{\gamma}{\delta} \log\left(\frac{\gamma}{\varepsilon}\right)\right)$ queries to $U_O$ or $U_O^\dagger$, $2m$ uses of NOT gates controlled by $a$-qubit, $\mac{O}(m)$ single-qubit gates, and $\mac{O}(\mathrm{poly}(m))$ classical computation to find quantum circuit parameters.
\end{lemma}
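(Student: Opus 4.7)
The plan is to invoke the subspace QSVT framework developed in Section \ref{sec:subspace_QSVT} using a single odd polynomial that approximates the linear amplification map $x \mapsto \gamma x$ on the narrow interval where the relevant spectrum lives. The key leverage, compared to ordinary uniform singular value amplification, is that subspace QSVT only constrains the polynomial on those singular values supported inside $\mac{H}_\Delta$; since the hypothesis gives $\|\Pi_\Delta O \Pi_\Delta\| \le (1-\delta)/\gamma$, we only need the approximation to hold on $[-(1-\delta)/\gamma,(1-\delta)/\gamma]$ rather than on all of $[-1,1]$, and that is what buys the $\gamma/\delta$ degree rather than a $1/\delta$-style blow-up that would arise from trying to amplify unrelated singular values of $O$.

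First, I would construct an odd polynomial $P$ of degree $m = \mac{O}\!\left(\tfrac{\gamma}{\delta}\log(\gamma/\varepsilon)\right)$ satisfying $|P(x)-\gamma x|\le\varepsilon$ for every $x\in[-(1-\delta)/\gamma,(1-\delta)/\gamma]$ together with the QSVT feasibility bound $|P(x)|\le 1$ on $[-1,1]$. Such a polynomial is the standard uniform singular value amplification polynomial (obtainable, for example, by a truncated and rescaled Chebyshev expansion of the linear function, as in Gilyén--Su--Low--Wiebe), and the degree estimate follows from the usual Chebyshev tail bound combined with a gentle rescaling to enforce the $[-1,1]$ sup-norm constraint.

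Second, I would feed $P$ into the subspace QSVT construction of Section \ref{sec:subspace_QSVT} applied to $U_O$, obtaining an $(a+1)$-block-encoding $\tilde{O}$. Because of the direct-sum structure $O=\bigoplus_\lambda O^{(\lambda)}$ and the definition of $\Pi_\Delta$, every eigenvalue of $\Pi_\Delta O\Pi_\Delta$ lies in the interval on which $P$ approximates $\gamma x$. Spectral application of $P$ restricted to $\mac{H}_\Delta$ then gives $\Pi_\Delta\tilde{O}\Pi_\Delta = P(\Pi_\Delta O\Pi_\Delta)$, so $\|\Pi_\Delta(\tilde{O}-\gamma O)\Pi_\Delta\|=\sup_{|x|\le (1-\delta)/\gamma}|P(x)-\gamma x|\le\varepsilon$ as required. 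The resource count then falls out of standard QSVT bookkeeping: each of the $m$ QSVT layers consumes one query to $U_O$ or $U_O^\dagger$ and one $a$-qubit-controlled NOT reflection (amounting to $2m$ such NOTs over the circuit after counting the two sandwiched reflections per phase step), together with one single-qubit rotation set by a classically pre-computed QSP angle, with all angles obtainable in $\mac{O}(\mathrm{poly}(m))$ time by existing angle-finding routines.

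The main obstacle is not the polynomial itself---that is a well-known construction---but verifying that the subspace QSVT guarantee of Section \ref{sec:subspace_QSVT} genuinely applies here, namely that $\Pi_\Delta$ is compatible with the block-encoding projectors of $U_O$ so that the resulting QSVT polynomial acts as $P(\Pi_\Delta O\Pi_\Delta)$ on the $\Delta$ sector without leakage into or from the other irreducible components. This compatibility is precisely where the direct-sum hypothesis $O=\bigoplus_\lambda O^{(\lambda)}$ is essential, since it guarantees that $U_O$ respects the $\Delta$ versus $\Lambda\setminus\Delta$ decomposition on which subspace QSVT relies; verifying this carefully is the one nontrivial step, after which the amplification bound and resource accounting follow routinely from standard QSVT/QSP lemmas.
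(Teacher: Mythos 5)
Your proposal is correct and follows essentially the same route as the paper: it combines the standard uniform singular value amplification polynomial (with the $\varepsilon$-approximation of $\gamma x$ required only on $[-(1-\delta)/\gamma,(1-\delta)/\gamma]$) with the subspace QSVT lemma, using the direct-sum structure $O=\bigoplus_\lambda O^{(\lambda)}$ to confine the polynomial's action block-wise so that $\Pi_\Delta\tilde O\Pi_\Delta=\bigoplus_{\lambda\in\Delta}P(O^{(\lambda)})$ inherits the approximation guarantee. The compatibility issue you flag as the one nontrivial step is exactly what the paper's subspace QSVT lemma (Lemma~\ref{lem:subspace-qsvt}) resolves, and the resource counts match Lemma~\ref{lem:uniform_amplification}.
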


From this lemma, we obtain the amplification procedure only effective to specific subspaces. Applying this technique to $\frac1M \sum_j x_j O_j = \frac1M \sum_j x_j \bigoplus_{\lambda \in \Lambda} \1(\mathbb{C}^{d_\lambda}) \otimes O_j^{(\lambda)}$, we can obtain the following lemma,

\begin{lemma}
    \label{lem:construction_sigma_delta}
  Let \(\delta' > 0\), \({\varepsilon'} \in (0, 1/2)\), \(p\) be a positive integer, and \(\mathcal{G}\) be a group.
  Suppose that for $\mac{G}$-symmetric $d$-dimensional observables $\{O_j\}_{j=1}^M$ with $\norm{O_j} \leq 1$, we have access to the oracle 
  $U'_{\mtr{SEL}}= \sum_{\bm  x \in G_{ p}^M} \ketbra{\bm x} \otimes U^{(\bm x)}$ where $U^{(\bm x)}$ is an $(a + \lceil \log_2 M \rceil)$-block encoding of observable
  $\tilde H^{(\bm x)} \coloneq M^{-1} \sum x_j O_j$ for some $a \in \mathbb{N}$.
  Here, we denote the irreducible decomposition of $O_j$ as ${O_j = \bigoplus_{\lambda \in \Lambda}  \1(\mathbb{C}^{d_\lambda}) \otimes O^{(\lambda)}_j  }$, where $\Lambda $ denotes a label set, $O_j^{(\lambda)}$ is a corresponding summand, $d_\lambda$ is the dimension of a representation for $\mac{G}$, $\1 (\mathbb{C}^{d_\lambda}) $ is the identity on $\mathbb{C}^{d_\lambda}$, and $m_\lambda = \dim O_j^{(\lambda)}$. 
  
  For a subset $\Delta \subseteq \Lambda$ and independent symmetric distributed random variables $\{X_j \}_{j=1}^M$ supported on $G_{p}^M$ with variance $v \coloneq \mathbb{E}[(2X_j)^2]$, we assume that the following inequality holds,
  \begin{align}
        \label{eq:def_sigma_Delta2}
        \sigma_\Delta\coloneq \max_{\lambda\in \Delta}\sqrt{2  \norm{\sum_{j=1}^M  [O_j^{(\lambda)}]^2 } \log(2 m_\Delta/\delta')} < M,
\end{align}
    or 
\begin{align}
    \label{eq:def_sigma_Delta}
      \sigma_\Delta \coloneq   \max_{\lambda \in \Delta} \sqrt{2 v \norm{\sum_{j=1}^M  [O_j^{(\lambda)}]^2 } \log(2 m_\Delta /\delta')} + \frac43 \log(2m_\Delta /\delta')  < M,
  \end{align}
  where $m_{\Delta}$ is defined as $m_{\Delta}:=\sum_{\lambda'\in \Delta}m_{\lambda'}$.
  Then, there exists a subset $F_\Delta \subset {G_p^M}$ 
  satisfying 
  \begin{align}
      \mtr{Pr}_{X_1, \ldots X_M} [ \bm X \in F_{\Delta} ] \geq 1 - \delta'.
  \end{align}
  and we can implement 
  a unitary (with $pM + \lceil \log_2 M \rceil + \log_2 d + a + 1$ qubits in total),
  \begin{align}
    U_{\mtr{obs}, \Delta} \coloneq \sum_{\bm x \in {G_{p}^M}} \ketbra{\bm x} \otimes U_{\mtr{obs}, \Delta}^{(\bm x)}\;,
  \end{align}
  such that $U^{(\bm x)}_{\mtr{obs},{\Delta}}$ is a $(a + \lceil \log_2 M \rceil + 1)$-block-encoding of the Hamiltonian $H^{(\bm x)}_\Delta $ satisfying
  \begin{align}
  \label{eq:subspace_block_encoding}
    \norm{ \Pi_\Delta \qty(H^{(\bm x)}_\Delta  - \frac1{\sigma_\Delta  } \sum_{j=1}^M x_j O_j ) \Pi_\Delta}\leq \varepsilon'  \;, \ \text{if } \bm x \in F_{\Delta} \subset {G_p^M}\;,
  \end{align}
  where $\Pi_\Delta $ is a projector on the subspace associated with $\Delta$.  
  For $m = \mac{O}(M \sigma_\Delta^{-1} \log(M \sigma_\Delta^{-1}/\varepsilon')) $, this implementation of ${U_{\mtr{obs},\Delta}}$ requires $m$ queries to $U'_{\mtr{SEL}}$ or its inverse, $2m$ X gates controlled by $a + \lceil \log_2 M \rceil $ qubits, $\mac{O}(m)$ single-qubit gates, $\mac{O}(\mtr{poly}(m))$ classical precomputation.
\end{lemma}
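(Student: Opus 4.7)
The plan is to decompose the proof into two largely independent ingredients and then combine them: (i) a matrix-concentration step that identifies a large subset $F_\Delta \subseteq G_p^M$ on which the projected Hamiltonian $\Pi_\Delta \tilde H^{(\bm x)} \Pi_\Delta$ is small enough to admit amplification, and (ii) a controlled application of the subspace amplitude amplification of Lemma~\ref{lem:Amplitude_Amp_subspace} to the oracle $U'_{\mathrm{SEL}}$ with amplification factor $\gamma = M/\sigma_\Delta$. The symmetry assumption is used precisely to replace the ``ambient'' dimension $d$ and the ``ambient'' norm $\|\sum_j O_j^2\|$ by their subspace counterparts $m_\Delta$ and $\max_{\lambda\in\Delta}\norm{\sum_j [O_j^{(\lambda)}]^2}$, which is exactly what appears inside $\sigma_\Delta$.

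For the concentration step, I will first exploit that $\mac{G}$-symmetry makes the projected sum block-diagonal: $\Pi_\Delta (\sum_j x_j O_j)\Pi_\Delta$ reduces, up to the trivial multiplicity factor coming from the isotypic part $\1(\mathbb{C}^{d_\lambda})$, to $\bigoplus_{\lambda \in \Delta}\sum_j x_j O_j^{(\lambda)}$, so its spectral norm equals $\max_{\lambda\in\Delta}\norm{\sum_j X_j O_j^{(\lambda)}}$. For each fixed $\lambda$ the summands are independent symmetric random Hermitian matrices on $\mathbb{C}^{m_\lambda}$. Under hypothesis Eq.~\eqref{eq:def_sigma_Delta2} I would apply Matrix Hoeffding, using the almost-sure bound $|X_j|\leq 1/2$ coming from the grid $G_p$ together with the variance proxy $\sum_j [O_j^{(\lambda)}]^2$; under hypothesis Eq.~\eqref{eq:def_sigma_Delta} I would instead apply Matrix Bernstein, using the actual variance $v$ in place of the crude almost-sure bound. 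Setting the tail threshold to $(1-\delta_0)\sigma_\Delta$ for a small absolute constant $\delta_0\in(0,1)$ and taking a union bound over $\lambda\in\Delta$---the sum $m_\Delta = \sum_{\lambda\in\Delta} m_\lambda$ appearing inside the logarithm of $\sigma_\Delta$ is exactly the factor needed to absorb this union bound---delivers the desired $\mathrm{Pr}[\bm X \in F_\Delta]\geq 1-\delta'$ with $F_\Delta \coloneq \{\bm x : \max_{\lambda\in\Delta}\norm{\sum_j x_j O_j^{(\lambda)}} \leq (1-\delta_0)\sigma_\Delta\}$.

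Next, I would upgrade Lemma~\ref{lem:Amplitude_Amp_subspace} to a $\bm x$-controlled setting. Because $U'_{\mathrm{SEL}}=\sum_{\bm x}\dyad{\bm x}\otimes U^{(\bm x)}$ already has coherent branch structure, a single QSVT polynomial---whose phase factors are precomputed classically from the target amplification gain $\gamma$ and the tolerance $\varepsilon'$---acts identically on every $\bm x$-branch and therefore produces a branchwise-amplified block-encoding $U_{\mathrm{obs},\Delta} = \sum_{\bm x}\dyad{\bm x}\otimes U^{(\bm x)}_{\mathrm{obs},\Delta}$, adding only the one ancilla qubit accounted for in the $(a+\lceil\log_2 M\rceil + 1)$-block-encoding of the statement. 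For any $\bm x \in F_\Delta$, the hypothesis $\norm{\Pi_\Delta \tilde H^{(\bm x)}\Pi_\Delta} \leq (1-\delta_0)/\gamma$ of Lemma~\ref{lem:Amplitude_Amp_subspace} is satisfied, so its conclusion translates directly to $\norm{\Pi_\Delta(H^{(\bm x)}_\Delta - \gamma \tilde H^{(\bm x)})\Pi_\Delta} \leq \varepsilon'$, which is precisely Eq.~\eqref{eq:subspace_block_encoding} after substituting $\gamma \tilde H^{(\bm x)} = \sigma_\Delta^{-1}\sum_j x_j O_j$; for $\bm x \notin F_\Delta$ the lemma's conclusion imposes no requirement, in line with the probabilistic guarantee on $F_\Delta$.

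The query, gate, and classical-precomputation counts are then inherited directly from Lemma~\ref{lem:Amplitude_Amp_subspace} with $m = \mac{O}((M/\sigma_\Delta)\log(M/(\sigma_\Delta \varepsilon')))$ queries to $U'_{\mathrm{SEL}}$ or its inverse, $2m$ controlled NOTs on $a+\lceil\log_2 M\rceil$ ancillas, $\mac{O}(m)$ single-qubit gates, and $\mac{O}(\mathrm{poly}(m))$ classical work for the phase factors. The main obstacle I anticipate is verifying that a \emph{single} QSVT polynomial performs the amplification uniformly across the entire family $\{\tilde H^{(\bm x)}\}_{\bm x \in F_\Delta}$ while remaining controlled only through the block-encoding and not through $\bm x$ itself; this is exactly the feature that subspace QSVT (Section~\ref{sec:subspace_QSVT}) is designed to provide, but care is needed in tracking how the polynomial treats singular values \emph{outside} $\mathcal{H}_\Delta$---which are generally of order $1$, not $1/\gamma$---so that the closeness bound Eq.~\eqref{eq:subspace_block_encoding} is genuinely local to $\mathcal{H}_\Delta$ rather than global. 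A secondary technical point, routine but worth stating carefully, is that the two admissible forms of $\sigma_\Delta$ in Eqs.~\eqref{eq:def_sigma_Delta2} and \eqref{eq:def_sigma_Delta} plug into the downstream amplification in the same way; only the constants arising in the concentration step differ.
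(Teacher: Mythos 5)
Your proposal is correct and follows essentially the same route as the paper: reduce $\norm{\Pi_\Delta(\sum_j X_j O_j)\Pi_\Delta}$ to $\max_{\lambda\in\Delta}\norm{\sum_j X_j O_j^{(\lambda)}}$ via the direct-sum structure, apply matrix Bernstein (or the Hoeffding-type bound for the variant of $\sigma_\Delta$ without $v$) per block with a union bound over $\lambda\in\Delta$ absorbed by the $m_\Delta$ inside the logarithm, define $F_\Delta$ as the good event, and then invoke the subspace amplification of Lemma~\ref{lem:Amplitude_Amp_subspace} with $\gamma = M/\sigma_\Delta$ coherently across the $\bm x$-branches of $U'_{\mathrm{SEL}}$. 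The two cautionary points you raise at the end (uniformity of the single QSVT polynomial over all branches, and locality of the error bound to $\mathcal{H}_\Delta$) are exactly what the paper's subspace-QSVT appendix is set up to handle, so they do not constitute gaps.
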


\begin{Remark}
    In this lemma, we present two alternative choices for \(\sigma_\Delta\) as specified in Eqs.~\eqref{eq:def_sigma_Delta2} and \eqref{eq:def_sigma_Delta}. Specifically, Eq.~\eqref{eq:def_sigma_Delta} provides a numerically smaller value, which is advantageous for precise numerical analysis in Sec.~\ref{sec:k-RDM-with-QGE}, whereas Eq.~\eqref{eq:def_sigma_Delta2} yields a simpler expression that facilitates asymptotic evaluations easily. In what follows, we prove the case when \(\sigma_\Delta\) is chosen as in Eq.~\eqref{eq:def_sigma_Delta}; however, by following the same proof strategy and applying the matrix series inequality from Theorem 35 in Ref.~\cite{apeldoorn2023quantum}, one can similarly prove the result for the choice given in Eq.~\eqref{eq:def_sigma_Delta2}.
\end{Remark}

\begin{proof}
    Let $\gamma \coloneq  M/\sigma_\Delta  > 1$. To perform the amplification in Lemma \ref{lem:Amplitude_Amp_subspace}, it is required that the spectral norm $\norm{ \Pi_\Delta \tilde{H}^{(\bm x) } \Pi_\Delta  }$ is upper bounded by $(1 -\delta )/\gamma$ for some $\delta \in (0, 1/2)$. 
    Following the proof of lemma 12 in Ref.~\cite{wada2024Heisenberg}, let us consider independent random variable $X_j, (j=1,\ldots,M)$ symmetrically distributed on $G_p$. For these variables, the following equality satisfies,
    \begin{align}
      \norm{\Pi_\Delta  \qty( \sum_j 2X_j O_j ) \Pi_\Delta }
      &= \norm{  \sum_j 2X_j \bigoplus_{\lambda \in \Delta } \qty( \1(\mathbb{C}^{d_\lambda}) \otimes O^{(\lambda)}_j ) } \\ 
      &= \norm{ \bigoplus_{\lambda \in \Delta } \sum_j 2X_j \qty(\1(\mathbb{C}^{d_\lambda}) \otimes O^{(\lambda)}_j ) }  \\
      \label{eq:direct_sum_use}
      &= \max_{\lambda \in \Delta } \norm{ \sum_j 2X_j \qty(\1(\mathbb{C}^{d_\lambda}) \otimes O^{(\lambda)}_j ) } \\
      \label{eq:tensor_prod_use}
      &=  \max_{\lambda \in \Delta } \norm{ \sum_j2 X_j O_j^{(\lambda)} },
    \end{align}
    where we use the fact that $ \norm{A \oplus B} = \max(\norm{A}$, $\norm{B}) $  in Eq.~\eqref{eq:direct_sum_use} and $\|\bm{1}\otimes A\|=\|A\|$ in Eq.~\eqref{eq:tensor_prod_use}. From this expression, we can derive the following inequality,
    \begin{align}
          \label{Berry}
          \mtr{Pr}_{X_1, \ldots, X_M} \qty[ \max_{\lambda \in \Delta} \norm{ \sum_j 2X_j O_j^{(\lambda)} } < t ] 
          &=   \mtr{Pr}_{X_1, \ldots, X_M} \qty[  \bigcap_{\lambda \in \Delta}  \qty( \norm{ \sum_j 2X_j O_j^{(\lambda)} } < t ) ] \\
          \label{eq:estimate_symmetry_norm}
          &\geq 1 - \sum_{{\lambda\in \Delta}} \mtr{Pr}_{X_1, \ldots, X_M} \qty[  \norm{ \sum_j 2X_j O_j^{(\lambda)} } \geq t ].
    \end{align}
    In the first inequality, we use the union bound: $\mtr{Pr}  \qty(\bigcap_i E_i) \geq 1 - \sum_i \mtr{Pr}(E_i^c)  $ for events $E_i$ and its complement.

    Since the random variables $X_j$ are symmetrically distributed on $G_p$, for $m_\lambda \times m_\lambda$ matrices $O_j^{(\lambda)}$
    the matrix Bernstein inequality holds 
    (See Sec.~\ref{appsub:proof_matrix_ineq_subgaussian}),  
    \begin{align}
        \label{eq:matrix_inequality_for_subgaussian}
        \mtr{Pr}_{X_1,\ldots, X_M} \qty[ \norm{\sum_{j=1}^M (2X_j) O_j^{(\lambda)}} \geq t ] \leq 2m_{\lambda}\cdot \exp[ - \frac{t^2/2}{\mathbb{E}[(2X_1)^2] \norm{\sum_{j=1}^M (O^{(\lambda)}_{j})^2   } + t/3 } ], \quad \text{for each $\lambda \in \Delta$. }
    \end{align}
    By choosing $t = {\sigma_\Delta}$, the following equality for all $\lambda \in \Delta$ holds,
    $$
    \exp \qty( -\frac{t^2}{{2\mathbb{E}[(2X_1)^2] \norm{\sum_{j=1}^M (O^{(\lambda)}_j)^2   } + 2t/3}  } ) \leq \frac{\delta ' } {2 \sum_{\lambda \in \Delta} m_\lambda }
    $$
    because we can rewrite this inequality as follows, when denoting $L = \log( \frac{2 \sum_{\lambda \in \Delta} m_\lambda }{\delta '} )$,
    \begin{align}
        t^2 - \frac{2t}3 L - 2 L \qty( \mathbb{E}[(2X_1)^2] \norm{\sum_{j=1}^M (O_j^{(\lambda)})^2   } )  \geq 0.
    \end{align}
    We can solve this inequality regarding $t$, 
    \begin{align}
        \frac13 \sqrt{18 \mathbb{E}[(2X_1)^2] \norm{\sum_{j=1}^M (O_j^{(\lambda)})^2   } L + L^2} + \frac13 L \leq t .
    \end{align}
    and it is sufficient to take $t = \sigma_\Delta$ to hold this inequality because $\sqrt{a+b} \leq \sqrt{a} + \sqrt{b}$ for $a,b>0$.
    
    Thus, inserting $M/\gamma$ into $t$, we obtain
    \begin{align}
        \forall \lambda, 
        \mtr{Pr}_{X_1,\ldots, X_M} \qty[ \frac1M \norm{\sum_{j=1}^M X_j O_j^{(\lambda)}} \geq \frac{1}{2\gamma }  ] 
        &\leq 2m_\lambda \cdot \exp[ - \frac{\sigma_{\Delta}^2/2}{\mathbb{E}[(2X_1)^2] \norm{\sum_{j=1}^M (O_j^{(\lambda)})^2   } + \sigma_{\Delta}/3 } ],  \\
        &\leq \frac{m_\lambda}{{\sum_{\lambda \in \Delta} m_\lambda}} \cdot \delta'.
    \end{align}
    From this inequality and Eq.~\eqref{eq:estimate_symmetry_norm}, we can derive
    \begin{align}
        \mtr{Pr}_{X_1,\ldots, X_M} \qty[ \norm{\Pi_\Delta  \qty( {{M}^{-1}}\sum_j X_j O_j ) \Pi_\Delta } \leq \frac1{2\gamma} ] \geq 1 - \sum_{\lambda \in \Delta } \frac{m_\lambda}{{\sum_{\lambda \in \Delta} m_\lambda}} \cdot \delta' = 1-\delta'.
    \end{align}
    
    Denoting $F_\Delta$ as the set of events for which the independent symmetrically distributed random variables $X_j$ satisfy $\norm{\Pi_\Delta  \qty({M}^{-1} \sum_j X_j O_j ) \Pi_\Delta } \leq \frac1{2\gamma} $, we can express $\mtr{Pr}_{X_1, \ldots, X_M} [\bm X \in F_\Delta] \geq 1-\delta'$. Equivalently, if $\bm x \in F_\Delta $, we can amplify the block-encoding of $\tilde H^{(\bm x)} $, thereby complete Lemma \ref{lem:Amplitude_Amp_subspace}.
\end{proof}

Next, we aim to show two quantum algorithms to prepare a quantum state approximates the state defined in Eq.~\eqref{def:cosine_probe_state}. 
Note that we modified the QSVT circuit for Hamiltonian simulation to reduce the use of $U_\psi$ and its inverse from $4Q $ in Ref.~\cite{wada2024Heisenberg} to $2Q$, where $Q$ is a degree of a polynomial $f(x)$ approximating $e^{i x t} $. (See Figs.~\ref{fig:HS_state_prep_circ} and \ref{fig:Rep_circ}.) 

\begin{figure}[t]
    \centering
    \includegraphics[width=1.\linewidth]{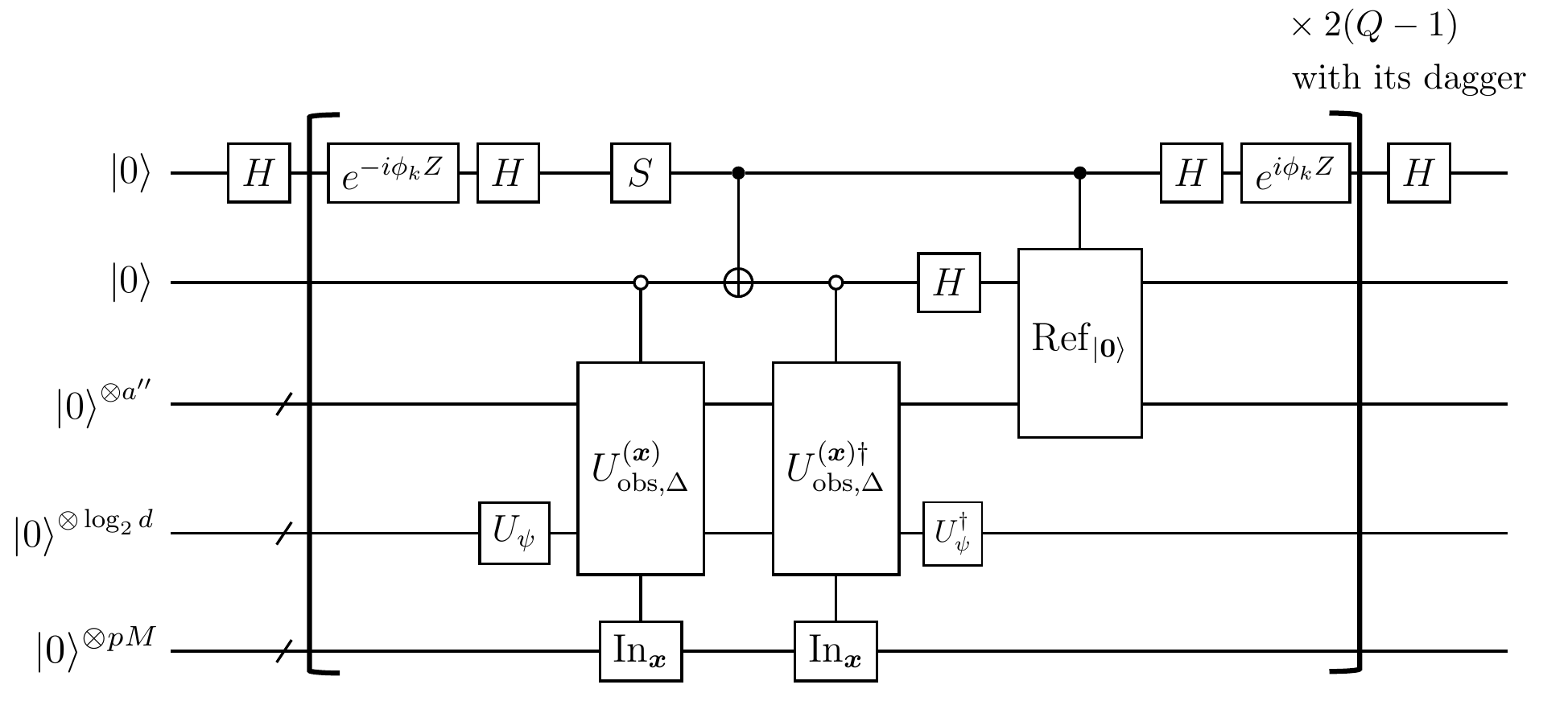}
    \caption{
    \justifying{
    Quantum circuit for the optimal Hamiltonian simulation for the Hamiltonian $\sum_{\bm x \in G_p^M} f(\bm x) \ketbra{\bm x} $ defined in Eq.~\eqref{eq:target_hamiltonian}. In this circuit, the unitary \( U_{\mtr{obs},\Delta}^{(\bm x)} \) is implemented with \( a' = a + \lceil \log_2 M \rceil + 1 \), as specified in Lemma~\ref{lem:construction_sigma_delta}. Here, \( Q = \mac{O}\Bigl(2^{q+p+1}\sigma_\Delta + \log(1/\varepsilon'')\Bigr) \), \( \mtr{Ref}_{\ket{\bm 0}} \) denotes the reflection about \( \ket{\bm 0} \) (i.e., \( \mtr{Ref}_{\ket{\bm 0}} \coloneq 2\ketbra{0} - \1 \)), and \( \mtr{In}_{\bm x} \) associated with $U_{\rm obs}^{({\bf x})}$ indicates that the unitary is an  \( \bm x \)-controlled operation. The phase angles $\{\phi_k\}$ for the top ancilla qubit are used for quantum signal processing to achieve \( \varepsilon'' \)-precise Hamiltonian simulation \cite{low2019hamiltonian}. 
    }}
    \label{fig:HS_state_prep_circ}
\end{figure}

\begin{figure}[t]
    \centering
    \includegraphics[width=1.0\linewidth]{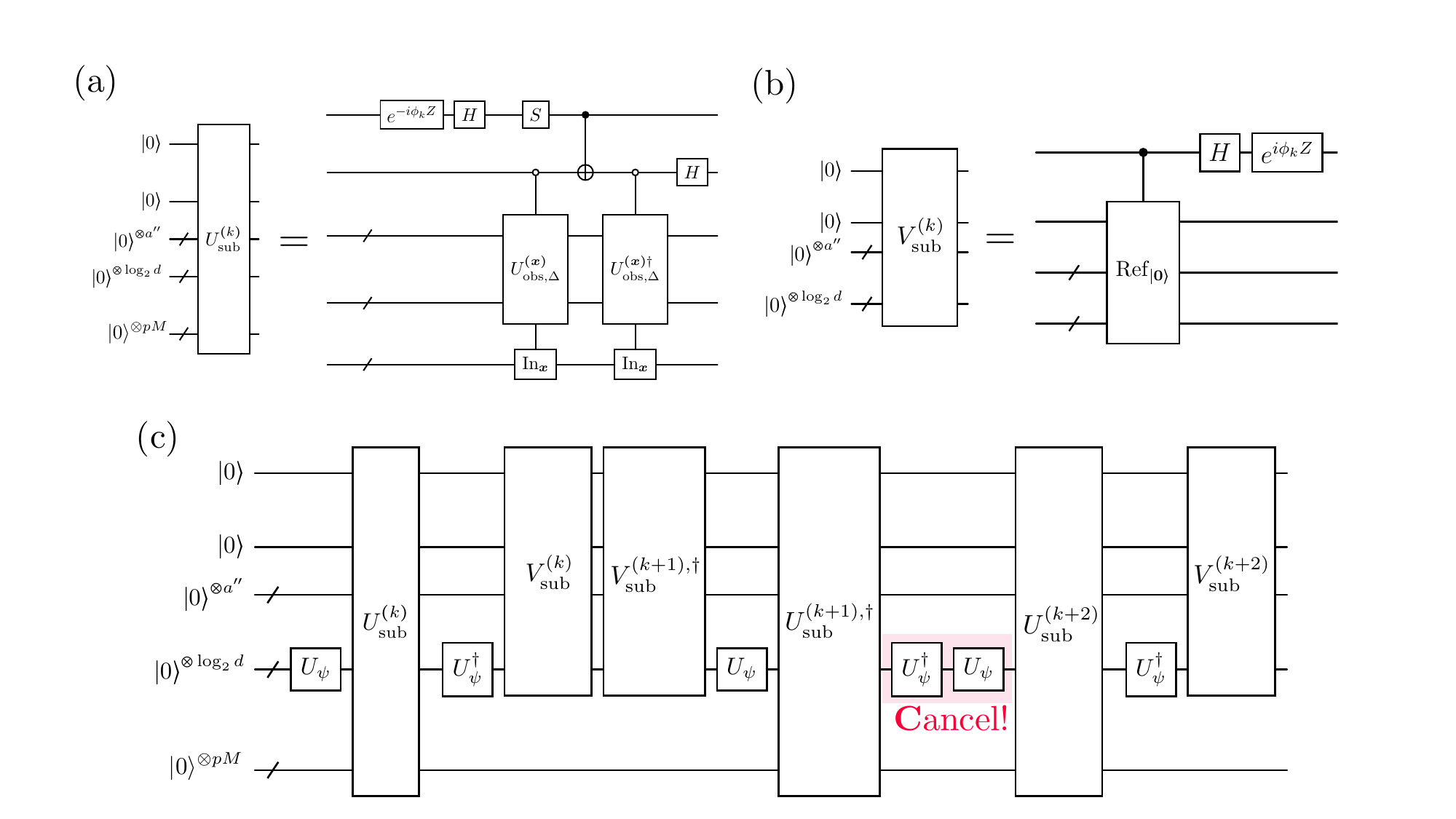}
    \caption{
    \justifying{ 
    The cancellation of $U_\psi$ and its inverse in the quantum circuit sequence shown in Fig.~\ref{fig:HS_state_prep_circ}. For clarity, we define (a) the subroutine circuit $U^{(k)}_{\mtr{sub}}$, which includes the $e^{i\phi_k Z}$ gate, and (b) the subroutine circuit $V^{(k)}_{\mtr{sub}}$, which includes the $e^{-i\phi_{k} Z}$ gate, where $\phi_k$ is the $k$-th phase angle of QSP. The unitary $U_{\mtr{obs}, \Delta}^{(\bm x)}$, the phase sequence $\{ \phi_k \}$, and the reflection $\mtr{Ref}_{\ket{\bm 0}}$ are defined as in Fig.~\ref{fig:HS_state_prep_circ}. (c) The explicit description of sequence of the quantum circuit depicted in Fig.~\ref{fig:HS_state_prep_circ}. Owing to this cancellation, the query complexity of the state preparation unitaries is calculated as $2Q$, where $Q$ is a degree of a polynomial $f(x)$ approximating $e^{ixt}$.
    }
    }
    \label{fig:Rep_circ}
\end{figure}

\begin{lemma}[Probe-state preparation method enhanced by symmetry condition]
  \label{lemma:symmetry_state_prep_by_Hamiltonian_simulation}
  Let $p$ be a positve integer, $\mac{G}$ be a group and $O_j$ be $\mac{G}$-symmetric $d$-dimensional observables with $\norm{O_j} \leq 1$. Suppose that $U_\psi$ is $\log_2 d$-qubit state preparation oracle, 
  $\langle O_j \rangle \coloneq \bra{\psi} O_j \ket{\psi}$, and $u_j \in [-1,1]$. 
  For a subset $\Delta$ of the label set for inequivalent irreducible subspaces defined by $\mathcal{G}$,
  the target state $\ket{\psi}$ is assumed to be supported on the subspaces specified by $\Delta$ and $\Pi_\Delta \ket{\psi} = \ket{\psi} $ holds for the corresponding projector $\Pi_{\Delta}$.
  Additionally, we have access to $U_{\mtr{SEL}}'= \sum_{\bm{x}\in G_p^M}\ketbra{\bm x }\otimes U^{(\bm x)} $ 
  where $U^{(\bm x)}$ is an $(a+\lceil \log_2 M\rceil)$-block-encoding of $M^{-1} \sum_{j=1}^M x_j O_j$ for some $a \in \mathbb{N}$.
  We assume $\sigma_\Delta$ is expressed in Eq.~\eqref{eq:def_sigma_Delta} and holds for $\delta ' =2^{-10}$ and $v \coloneq \mathbb{E}[(2X)^2]$ where \(X\) is a random variable on \(G_p\) with \(\mathrm{Pr}[X = x] = |c_{x}^{(\cos)}|^2\).
    
  Then, for any non-negative integer $q$, there exists a quantum circuit $\mac{U}_{\Upsilon, \text{sym}}$ preparing the $pM$-qubit probe state expressed by 
  \begin{align}
    \ket{\Upsilon(q)}_{\cos} = \sum_{\bm x \in G_p^M} c^{(\cos)}_{\bm x} \exp(2 \pi i 2^p \sum^M_{j=1} x_j \cdot \frac{2^{q}}{\pi} \qty( \langle O_j \rangle -u_j   ) )\ket{\bm x},
  \end{align}
  up to $1/12$ Euclidean distance error, with \(2Q\) uses of \(U_\psi\) or \(U_\psi^\dagger\), \(4mQ\) uses of controlled \(U'_{\mathrm{SEL}}\) (or its inverse), \(\mac{O}(mQ)\) uses of NOT gates controlled by at most \(\mac{O}\Bigl(a + \lceil \log_2 M \rceil + \log_2 d\Bigr)\) qubits, \(\mac{O}(mQ + pM)\) uses of single-qubit or two-qubit gates, an additional \( (\lceil \log_2 M \rceil + \log_2 d + a + 3) \) ancilla qubits, and \(\mac{O}\bigl(\mathrm{poly}(Q) + \mathrm{poly}(m)\bigr)\) classical precomputation for determining the quantum circuit parameters, where 
    \[
    Q = \mac{O}(2^{p+q+1} \sigma_\Delta ) \quad \text{and} \quad m = \mac{O}\Bigl( \frac{M}{\sigma_\Delta}(p+q+\log M)\Bigr).
    \]
\end{lemma}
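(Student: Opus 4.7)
The plan is to realize $\ket{\Upsilon(q)}_{\cos}$ by first preparing the product cosine state $\ket{\cos}^{\otimes M}=\sum_{\bm x\in G_p^M}c^{(\cos)}_{\bm x}\ket{\bm x}$ on the probe register (which factorizes across the $M$ blocks, costing $\mathcal{O}(pM)$ gates), and then applying a controlled-$\bm x$ diagonal phase oracle that maps $\ket{\bm x}\mapsto e^{i\,2^{p+q+1}\sum_j x_j(\langle O_j\rangle-u_j)}\ket{\bm x}$. The $u_j$-dependent factor $e^{-i\,2^{p+q+1}\sum_j x_j u_j}$ is a classically computable linear phase and can be absorbed into $\mathcal{O}(pM)$ single-qubit phase rotations on the probe register. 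The nontrivial part is the $\langle O_j\rangle$ factor, which I implement by performing Hamiltonian simulation of the diagonal operator $D \coloneq \sum_{\bm x}\tilde f(\bm x)\ketbra{\bm x}$ with $\tilde f(\bm x)\approx \sigma_\Delta^{-1}\sum_j x_j\langle O_j\rangle$ for time $T=2^{p+q+1}\sigma_\Delta$, so that the induced phase becomes $e^{iT\tilde f(\bm x)}=e^{i\,2^{p+q+1}\sum_j x_j\langle O_j\rangle}$.

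To build a block-encoding of $D$, I first invoke Lemma~\ref{lem:construction_sigma_delta} on the given oracle $U'_{\mtr{SEL}}$ (a block-encoding of $M^{-1}\sum_j x_j O_j$ controlled on $\bm x$) with $\delta'=2^{-10}$. This yields $U_{\mtr{obs},\Delta}=\sum_{\bm x}\ketbra{\bm x}\otimes U^{(\bm x)}_{\mtr{obs},\Delta}$ whose $U^{(\bm x)}_{\mtr{obs},\Delta}$ is an $(a+\lceil\log_2 M\rceil+1)$-block-encoding of some $H^{(\bm x)}_\Delta$ satisfying $\|\Pi_\Delta(H^{(\bm x)}_\Delta-\sigma_\Delta^{-1}\sum_j x_j O_j)\Pi_\Delta\|\le\varepsilon'$ for every $\bm x$ in a set $F_\Delta\subseteq G_p^M$ of probability (under $|c^{(\cos)}_{\bm x}|^2$) at least $1-\delta'$; this probability bound uses exactly the $v=\mathbb{E}[(2X)^2]$ of the cosine distribution, which is precisely why $\sigma_\Delta$ in Eq.~\eqref{eq:def_sigma_Delta} is tuned to this $v$. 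Conjugating $U_{\mtr{obs},\Delta}$ with $U_\psi\otimes I$ on the state register then gives a block-encoding of $D$ because $\Pi_\Delta\ket{\psi}=\ket{\psi}$ reduces $\bra{\psi}H^{(\bm x)}_\Delta\ket{\psi}$ to $\bra{\psi}\Pi_\Delta H^{(\bm x)}_\Delta\Pi_\Delta\ket{\psi}=\sigma_\Delta^{-1}\sum_j x_j\langle O_j\rangle\pm\varepsilon'$ for $\bm x\in F_\Delta$. Finally, I apply the optimal Hamiltonian simulation of Lemma~\ref{lem:optimal_HS} with time $T$ and precision $\varepsilon''$; the QSVT circuit is wired as in Fig.~\ref{fig:Rep_circ} so that every pair of consecutive $U_\psi U_\psi^\dagger$ inside the QSP sequence collapses to the identity, reducing the $U_\psi/U_\psi^\dagger$ count from $\mathcal{O}(4Q)$ to $2Q$ where $Q=\mathcal{O}(T+\log(1/\varepsilon''))=\mathcal{O}(2^{p+q+1}\sigma_\Delta)$.

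The error budget I would allocate is as follows. First, the ``bad'' $\bm x\notin F_\Delta$ contribute at most $4\sum_{\bm x\notin F_\Delta}|c^{(\cos)}_{\bm x}|^2\le 4\delta'$ to the squared Euclidean distance, i.e.\ at most $2\sqrt{\delta'}\le 1/16$ in distance. Second, on $F_\Delta$ the amplification precision $\varepsilon'$ propagates to a phase error of order $T\varepsilon'=2^{p+q+1}\sigma_\Delta\varepsilon'$, so choosing $\varepsilon'=\Theta(1/T)$ (equivalently $\varepsilon'=\Theta(\sigma_\Delta^{-1}2^{-p-q})$) keeps this contribution to a small constant. Third, the Hamiltonian-simulation precision $\varepsilon''$ can be set to a fixed constant. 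Combining these, the total Euclidean distance is bounded by $1/16$ plus two small constants, which is easily tuned below $1/12$.

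Resource counting then follows directly: $Q=\mathcal{O}(2^{p+q+1}\sigma_\Delta)$, yielding $2Q$ uses of $U_\psi$ or $U_\psi^\dagger$; each of the $Q$ alternating applications of $U_{\mtr{obs},\Delta}$ (or its inverse) costs $m=\mathcal{O}((M/\sigma_\Delta)\log(M\sigma_\Delta^{-1}/\varepsilon'))=\mathcal{O}((M/\sigma_\Delta)(p+q+\log M))$ queries to $U'_{\mtr{SEL}}$ by Lemma~\ref{lem:construction_sigma_delta}, giving the stated $\mathcal{O}(mQ)\sim 4mQ$ total; the controlled NOTs, single-qubit gates, ancilla count, and $\mathcal{O}(\mathrm{poly}(Q)+\mathrm{poly}(m))$ classical precomputation for QSP angles follow from the same lemma and from standard QSVT cost accounting. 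I expect the main technical obstacle to be the careful propagation of the \emph{subspace-localized} block-encoding error---Lemma~\ref{lem:construction_sigma_delta} only controls $H^{(\bm x)}_\Delta-\sigma_\Delta^{-1}\sum_j x_j O_j$ after projection by $\Pi_\Delta$, yet QSVT manipulates the full (unprojected) block-encoded operator; the symmetry assumption $\Pi_\Delta\ket{\psi}=\ket{\psi}$ is what ensures that only the projected matrix element is ever probed, and verifying this rigorously (including across the QSP iterations, which can couple to $\Pi_\Delta^\perp$ through intermediate states) is the delicate step that justifies the resource count above.
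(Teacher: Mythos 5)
Your proposal follows essentially the same route as the paper's proof: invoke Lemma~\ref{lem:construction_sigma_delta} to obtain the subspace-amplified controlled block-encoding, conjugate by $U_\psi$ to form the diagonal Hamiltonian $\sum_{\bm x}\tilde f(\bm x)\ketbra{\bm x}$, run optimal Hamiltonian simulation for time $t=2^{p+q+1}\sigma_\Delta$ with the $U_\psi U_\psi^\dagger$ cancellation of Fig.~\ref{fig:Rep_circ} to get $2Q$ state-preparation queries, absorb the $u_j$ phases into a separable circuit, and split the error budget into the $\bm x\notin F_\Delta$ tail ($O(\sqrt{\delta'})$), the $t\varepsilon'$ phase error on $F_\Delta$, and the simulation error $\varepsilon''+\sqrt{2\varepsilon''}$ --- including the key observation that $\Pi_\Delta\ket{\psi}=\ket{\psi}$ is what converts the projected operator bound into a bound on $\tilde f(\bm x)$. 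The only differences are immaterial choices of constants in the error allocation, so the argument is correct and matches the paper's.
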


\begin{proof}
    To prove our claim, we consider two quantum circuits, \(W\) and \(V\), which produce the state \(\ket{\Upsilon(q)}_{\cos}\). The circuit \(W\), constructed by uniform amplification (in Lemma.~\ref{lem:construction_sigma_delta}) and Hamiltonian simulation, encodes the information about the expectation values \(\langle O_j \rangle\), while \(V\) incorporates parameters \(u_j\) for each \(j \in \{1,\ldots,M\}\). In this proof,
    We first analyze the approximation error in \(W\) that arises from the use of Hamiltonian simulation. Next, we argue that the discrepancy between \(V\) and its ideal operation remains sufficiently small. With these observations in hand, and anticipating the final step's uniform amplification analysis, we derive an upper bound on the distance between the exact state \(\ket{\Upsilon}_{\cos}\) and the approximate probe state constructed by \(W\) and \(V\). Finally, we incorporate the error arising from uniform amplification into the bound, completing our estimation of the total distance.

    We begin with the analysis of $W$ regarding the Hamiltonian simulation.
  Let $\varepsilon'' = 2^{-10} \in (0,1)$ and $\varepsilon' = \sqrt{\delta'}/(2^{p+q+2 }{\sigma_{\Delta}}) \in (0,1/2)$.
  From the assumption and Lemma \ref{lem:construction_sigma_delta}, we have a unitary 
  \begin{align}
    U_{\mtr{obs}} \coloneq \sum_{\bm x \in G_p^M} \ketbra{\bm x} \otimes U_{\mtr{obs}, \Delta}^{(\bm x)}\;,
  \end{align}
  where $U^{(\bm x)}_{\mtr{obs}, \Delta}$ is a $(1, a'' = a+\lceil \log_2 M \rceil +1, \varepsilon')$-block-encoding of $H^{(\bm x)}_{\Delta}$ that satisfies
  $\norm{ \Pi_\Delta \qty(H^{(\bm x)}_\Delta  - {(\sigma_\Delta ) }^{-1} \sum_{j=1}^M x_j O_j ) \Pi_\Delta}\leq \varepsilon' $ if $\bm x \in F_{\Delta} \subset G_p^M$.
  Thus, using state-preparation oracle and this unitary, we can implement a $(1, a'' + \log_2 d, 0)$-block-encoding of the following Hamiltonian
  \begin{align}
    \label{eq:target_hamiltonian}
    \sum_{\bm x \in G_p^M} \tilde{f}(\bm x ) \ketbra{\bm x}\;, \text{where } \tilde{f}(\bm x) &\coloneq \bra{0}^{\otimes a''+\log_2 d} U_\psi^\dagger U^{(\bm x)}_{\mtr{obs}, \Delta} U_\psi \ket{0}^{\otimes a'' +\log_2 d}\;.
  \end{align}
  For this encoding, the Hamiltonian simulation circuit described in Fig.~\ref{fig:HS_state_prep_circ} yields the quantum circuit $W$ for a $(1, a'' + \log_2 d + 2, \varepsilon'')$-block-encoding of time evolution operator 
  \begin{align}
    \label{eq:def_tilde_f}
      \sum_{\bm x \in G_p^M} e^{i \tilde f(\bm x )t} \ketbra{\bm x}, \quad t := 2^{p+q+1} {\sigma_{\Delta}}.
  \end{align}
    We remark that a block encoding $W'$ of $\varepsilon''$-precise time evolution operator $e^{iHt}$ for some Hamiltonian yields $O(\sqrt{\varepsilon''})$-precise 
  time evolved state, 
  \begin{align}
    \label{eq:upper_bound_state_block_encoding}
    \forall\ket{\psi},  \norm{W' \ket{\bm 0 } \ket{\psi} - \ket{\bm 0} e^{iHt} \ket{\psi}} \leq \varepsilon'' + \sqrt{2 \varepsilon''}\;.
  \end{align}
  
  Additionally, we consider the following circuit $V^{(q)}$ such that 
    \begin{align}
        \norm{V^{(q)}  - \sum_{\bm x \in G_p^M } e^{- 2 \pi 2^p \sum_j x_j 2^q \pi^{-1} u_j} \ketbra{\bm x} } \leq \delta_V. 
    \end{align}
    Due to the separable structure, we expect that this circuit is precisely implemented compared with $W$, so we assume that $\delta_V$ is sufficiently small against $\varepsilon''$ and $ \delta'$.

    Next, we derive the upper bound of the distance between the ideal state and the approximating probe state constructed by $W$ and $V^{(q)}$.
   From the above discussion, we obtain  
  \begin{align}
    &\norm{( \1_{a''+\log_2 d + 2} \otimes  V^{(q)} )W \ket{\bm 0} \ket{\cos}^{\otimes M} - \sum_{\bm x \in G_p^M} c_{\bm x}^{(\cos) } e^{2\pi i 2^p \sum x_j 2^{q} \pi^{-1} \qty( \langle O_j \rangle-u_j ) } \ket{\bm 0 } \ket{\bm x}} \\
    &\leq  \norm{( \1_{a''+\log_2 d + 2} \otimes  V^{(q)}) \qty[
    W \ket{\bm 0} \ket{\cos}^{\otimes M} -  \sum_{\bm x \in G_p^M} c_{\bm x}^{(\cos) } e^{2\pi i 2^p \sum x_j 2^{q} \pi^{-1} \langle O_j \rangle  } \ket{\bm 0 } \ket{\bm x}
    ] }  \notag\\
    &+ \norm{( \1_{a''+\log_2 d + 2} \otimes  V^{(q)} )
    \sum_{\bm x \in G_p^M}c_{\bm x}^{(\cos) } e^{2\pi i 2^p \sum x_j 2^{q} \pi^{-1} \langle O_j \rangle  } \ket{\bm 0 } \ket{\bm x} - 
    \sum_{\bm x \in G_p^M} c_{\bm x}^{(\cos) } e^{2\pi i 2^p \sum x_j 2^{q} \pi^{-1} \qty( \langle O_j \rangle -u_j)  } \ket{\bm 0 } \ket{\bm x} 
    } \\
    &= \norm{ 
    W \ket{\bm 0} \ket{\cos}^{\otimes M} -  \sum_{\bm x \in G_p^M} c_{\bm x}^{(\cos) } e^{2\pi i 2^p \sum x_j 2^{q} \pi^{-1} \langle O_j \rangle  } \ket{\bm 0 } \ket{\bm x} } \notag\\
    &+  \norm{\sum_{\bm x \in G_p^M} c_{\bm x}^{(\cos) }e^{2\pi i 2^p \sum x_j 2^{q} \pi^{-1} \langle O_j \rangle  } (V^{(q)} - e^{-2\pi i 2^p \sum x_j 2^{q} \pi^{-1} u_j }) \ket{\bm x}} \\ 
    &\leq \varepsilon'' + \sqrt{2 \varepsilon''} +  \norm{\sum_{\bm x\in G_p^M}  c_{\bm x}^{(\cos) } e^{i \tilde{f}(\bm x)t} \ket{\bm x}- \sum_{\bm x \in G_p^M }  c_{\bm x}^{(\cos) } e^{2\pi i 2^p \sum x_j 2^{q} \pi^{-1} \langle O_j \rangle } \ket{\bm x}} + \delta_V \\
    &\leq \varepsilon'' + \sqrt{2 \varepsilon''} + \sqrt{5 \delta'}+ \delta_V  < \frac1{12}.
  \end{align}
  We use the triangle inequality in the first inequality and apply Eq.~\eqref{eq:upper_bound_state_block_encoding} in the second inequality. The third inequality follows from the following inequality,
  \begin{align}
    \norm{\sum_{x\in G_p^M} c_{\bm x}^{(\cos) }  \qty( e^{i \tilde{f}(\bm x)t} - e^{2\pi i 2^p \sum x_j 2^{q} \pi^{-1} \langle O_j \rangle } ) \ket{\bm x}}^2 
    &\leq \sum_{\bm x \in G_p^M} 
    \abs{c_{\bm x}^{(\cos) }}^2 \abs{e^{i \tilde f (\bm x) t} - e^{2\pi i 2^p \sum x_j 2^{q} \pi^{-1} \langle O_j \rangle } }^2 \\
    \label{eq:inequality_unvalid}
    &\leq 5\delta'.
    \end{align}
    Assuming the validity of Eq.~\eqref{eq:inequality_unvalid}, we conclude that the unitary \(\mathcal{U}^{(q)}_{\Upsilon, \text{sym}} = (\mathds{1}_{a'' + \log_2 d + 2} \otimes V^{(q)}) W (\mathds{1}_{a'' + \log_2 d + 2} \otimes (\chi_p^{(\cos)})^{\otimes M} ) \) generates the desired probe state, where \(\chi_p^{(\cos)} \) denotes a unitary circuit that generates $p$-th qubit cosine state.
    
    To complete our proof, it remains to verify that Eq.~\eqref{eq:inequality_unvalid} is satisfied. To this end, we consider the sequence of $M$ independent random variables $\bm X = (X_1, \ldots,X_M) $ where each $X_j$ is supported on $G_p$ and $\forall j, \mtr{Pr} [X_j = x_j ] = \abs{c_{x_j}^{(\cos) } }^2  $. Since these random variables are symmetric distributed, we can obtain,  
    \begin{align}
    &\sum_{\bm x \in G_p^M} 
    \abs{c_{\bm x}^{(\cos) }}^2 \abs{e^{i \tilde f (\bm x) t} - e^{2\pi i 2^p \sum x_j 2^{q} \pi^{-1} \langle O_j \rangle } }^2 \\
    &= \sum_{\bm x \in G_p^M} \mtr{Pr} [\bm X = \bm x] \abs{e^{i \tilde f (\bm x) t} - e^{2\pi i 2^p \sum x_j 2^{q} \pi^{-1} \langle O_j \rangle } }^2 \\
    &\leq \sum_{\bm x \in G_p^M/F_\Delta} \mtr{Pr} [\bm X = \bm x] \cdot 4 +
    \sum_{\bm x \in F_\Delta} \mtr{Pr} [\bm X = \bm x] \abs{e^{i \tilde f (\bm x) t} - e^{2\pi i 2^p \sum x_j 2^{q} \pi^{-1} \langle O_j \rangle } }^2
    \quad (\because \ \abs{e^{ia}-e^{ib}}^2 \leq 4) \\ 
    &\leq \mtr{Pr} [\bm X \not \in  F_\Delta]\cdot 4 +   \sum_{\bm x \in F_\Delta} \mtr{Pr} [\bm X = \bm x] \abs{e^{i \tilde f (\bm x) t} - e^{2\pi i 2^p \sum x_j 2^{q} \pi^{-1} \langle O_j \rangle }}^2  \\
    &\leq  4 \cdot \delta'+  \sum_{\bm x \in F_\Delta} \mtr{Pr} [\bm X = \bm x] \abs{\tilde{f} (\bm x)t - 2\pi 2^p \sum^M_{j=1} x_j \frac{2^{q} }{\pi} \langle O_j \rangle }^2 \quad (\because \ \abs{e^{ia}-e^{ib}} \leq \abs{a-b}) \\
    \label{eq:eval_eq}
    &= 4 \cdot  \delta' + t^2\sum_{\bm x \in F_\Delta} \mtr{Pr} [\bm X = \bm x] \abs{ 
    \ev**{H_\Delta^{(\bm x)} - \frac1{\sigma_\Delta}\sum_{j=1}^M x_j O_j }{\psi} }^2 \\
    &= 4 \cdot  \delta' + t^2 \sum_{\bm x \in F_\Delta} \mtr{Pr} [\bm X = \bm x] \abs{ 
    \ev**{ \Pi_\Delta \qty( H_\Delta^{(\bm x)} - \frac1{\sigma_\Delta}\sum_{j=1}^M x_j O_j ) \Pi_\Delta }{\psi} }^2  \quad (\because \ket{\psi} = \Pi_\Delta \ket{\psi} )\\
    &\leq 4 \cdot  \delta' +  \mtr{Pr} [\bm X \in F_{\Delta}] (t \varepsilon')^2 \quad (\because H_\Delta^{(\bm x} \text{ satisfies } Eq.~\eqref{eq:subspace_block_encoding}. ) \\
    &\leq4 \cdot  \delta' +  (t \varepsilon')^2 \leq 5 \delta' \;.
  \end{align}
  We use Eq.~\eqref{eq:def_tilde_f} and $t =  2^{p+q+1}\sigma_\Delta$ to derive Eq.~\eqref{eq:eval_eq}. The last equality comes from choosing $t = 2^{p+q+1}\sigma_\Delta \rightarrow \varepsilon' = \sqrt{\delta'}/t$. The fourth inequality comes from $\abs{\bra{\psi}U_1 - U_2 \ket{\psi}  } \leq \norm{U_1-U_2}$ for any unitary operations $U_1$ and $U_2$.
\end{proof}

Here, we finally get ready to evaluate the relation between the root MSE $\varepsilon$ and the total queries to the state preparation oracle when it comes to considering the specific subspace characterized by symmetry.
\\

\noindent
\textit{Proof of Theorem \ref{thm:evaluation_symmetry_QGE}.~}
        Our constructive proof of this theorem is based on Algorithm~\ref{alg:unified_framework} with the subroutine constructed by the state preparation method by Lemma~\ref{lemma:symmetry_state_prep_by_Hamiltonian_simulation}.
        Thanks to the general framework of the adaptive QGE algorithm shown in Algorithm \ref{alg:unified_framework}, we only need to derive the sufficient number of samples $R^{(q)}$ such that the assumption of Theorem~\ref{thm:performance_main_alg} holds. 
        
        We initially evaluate the sufficient number of samples $R^{(q)}$ of expectation values.
        Suppose that at the beginning of iteration \(q\), the gradient estimation condition
    \begin{align}
        \label{eq:cond-grad-est}
        \forall j, \quad \left| \langle O_j \rangle - \tilde{u}_j^{(q)} \right| \le 2^{-q}
    \end{align}
    is satisfied. Under this assumption, a single-shot measurement for the probe state generated by Lemma~\ref{lemma:symmetry_state_prep_by_Hamiltonian_simulation} produces an outcome \(\bm{k} = \{k_1,\ldots,k_M\} \in G_p^M\) with
    \[
    \Pr\!\Biggl[\Bigl| k_j - \frac{2^q\bigl(\langle O_j\rangle - \tilde{u}_j^{(q)}\bigr)}{\pi} \Bigr| > \frac{1}{2\pi}\Biggr] < \mu,
    \]
    where \(\mu = 0.011 + 1/12\) by fixing $p=3$. (Here, the value \(0.011\) is derived from Eq.~\eqref{eq:prob_sine_state} in Sec.~\ref{subsec:modify_probe_state}, and \(1/12\) accounts for the Euclidean distance error as indicated in Lemma~\ref{lemma:symmetry_state_prep_by_Hamiltonian_simulation}.) Next, we define the event
    \[
    A_j \coloneq \left\{ \Bigl| g_j^{(q)}(R^{(q)}) - \frac{2^q\bigl(\langle O_j\rangle - \tilde{u}_j^{(q)}\bigr)}{\pi} \Bigr| \le \frac{1}{2\pi} \right\},
    \]
    where \(g_j^{(q)}(R^{(q)})\) is the median of \(R^{(q)}\) independent outcomes for the \(j\)th observable measurement. 
    When the event \(\bigcap_{j} A_j\) holds, the updated estimate \(\tilde{u}_j^{(q+1)}\) satisfies \(|\langle O_j\rangle - \tilde{u}_j^{(q+1)}| \le 2^{-(q+1)}\). Thus, it suffices to choose \(R^{(q)}\) such that the event \(\bigcap_{j} A_j\) occurs with at least the probability $1-\delta^{(q)}$. 
    To this end, we choose \(R^{(q)}\) so that the tail probability
    \begin{align}
        \label{eq:tail_prob}
        \Phi\bigl(R^{(q)}\bigr) \coloneq \Pr\!\left[\left| g_j^{(q)}(R^{(q)}) - \frac{2^q(\langle O_j\rangle - \tilde{u}_j^{(q)})}{\pi} \right| > \frac{1}{2\pi} \right] \le \frac{\delta^{(q)}}{M}
    \end{align}
    holds for all \(j\). Then, by the union bound, $\Pr\!\left[\bigcap_{j} A_j\right] \ge 1 - \delta^{(q)} $ holds.
    Here, the tail probability satisfies \(\Phi(R) \le 1 - F(\lceil R/2 \rceil - 1; \mu, R)\), where \(F(k; \mu, n)\) denotes the cumulative distribution function of a binomial distribution with success probability \(\mu\) and \(n\) trials (see Sec.~\ref{appsub:eval_tail_prob} for details). By Hoeffding’s inequality, this is further bounded as
    \[
    1 - F\bigl(\lceil R/2 \rceil - 1; \mu, R\bigr) \le \exp\left[-2R\left(\frac{1}{2} - \mu\right)^2\right].
    \]
    Therefore, to satisfy Eq.~\eqref{eq:tail_prob}, it is sufficient to set $R^{(q)} = \frac{1}{2\left(\frac{1}{2} - \mu\right)^2} \log\left(M/{\delta^{(q)}}\right).$
    In particular, when \(\mu = 0.011+1/12 \le 3/8\), we can take \(R^{(q)} = 4 \log(M / \delta^{(q)})\).

    Finally, we derive the query complexity for the state preparation procedure in our algorithm.
    At each iteration \(q\), we prepare \(R^{(q)}\) copies of a quantum state approximating \(\ket{\Upsilon(p)}_{\cos}\).
    The implementation of \(\mathcal{U}_{\Upsilon}^{(\cos)}\) involves \(2Q^{(q)} = \mathcal{O}(2^{p+q+1} \sigma_\Delta)\) queries to \(U_\psi\) and its inverse. Substituting the expression for \(\sigma_\Delta\) from Eq.~\eqref{eq:def_sigma_Delta}, we obtain the query complexity per iteration as
    \begin{align}
        2^{p+q+1} \sigma_\Delta R^{(q)} 
        &= 2^q \log(1/\delta^{(q)}) \cdot
        \mathcal{O}\!\left( \max_{\lambda \in \Delta} \sqrt{ \left\| \sum_{j=1}^M \left(O_j^{(\lambda)}\right)^2 \right\| \log\!\left( \sum_{\lambda' \in \Delta} m_{\lambda'} \right)} \log M \right).
    \end{align}
    Thus, the factor \(\aleph\) is given by
    \begin{align}
        \aleph = \mathcal{O}\!\left( \max_{\lambda \in \Delta} \sqrt{ \left\| \sum_{j=1}^M \left(O_j^{(\lambda)}\right)^2 \right\| \log\!\left( \sum_{\lambda' \in \Delta} m_{\lambda'} \right)} \log M \right).
    \end{align}
    Hence, the total query complexity of the adaptive QGE algorithm with $\mac{U}_{\Upsilon}^{(\cos)}$ is calculated as Eq.~\eqref{eq:query_comp_of_methodI}. 
\hfill \hfill \qed

\subsection{QGE algorithm inspired by parallel scheme}
\label{subsec:parallel_QGE_alg}

In this section, we develop the state-preparation method enhanced by the parallel scheme and analyze its total query complexity. In the original adaptive QGE algorithm, the state preparation process is repeated \(R^{(q)}\) times at each $q$-th iteration, with each repetition requiring approximately \(\mathcal{O}(\sqrt{M})\) queries to simultaneously estimate \(M\) expectation values. Consequently, the total query complexity per iteration is \(\mathcal{O}(R^{(q)} \sqrt{M})\).
Observing that this procedure is equivalent to estimating overall \(MR^{(q)}\) expectation values—and given that the QGE algorithm offers a quadratic reduction in query complexity with respect to $M$—it becomes clear that further improvements are possible through the use of auxiliary qubits. Specifically, if we prepare a larger probe state that estimates all \(MR^{(q)}\) observables simultaneously, only \(\mathcal{O}(\sqrt{MR^{(q)}})\) calls to the target oracle are required, at the expense of additional qubit resources.
Motivated by this observation, we propose a modified QGE algorithm that leverages auxiliary qubits to accelerate the estimation process. We refer to this approach as the \textit{parallel QGE algorithm}, and our main result regarding this method is stated in the following theorem.
\begin{theorem}[Method by parallel scheme]
\label{thm:parallel_QGE_query}
Let $\varepsilon \in (0,1)$ be a target precision. For given $M$ observables $\{O_j\}_{j=1}^M$, and a state preparation oracle $U_\psi$ in $d$ dimension, there exists a quantum algorithm that outputs a sample from estimators $\{\hat{u}_j\}_{j=1}^M$ for $\{\langle O_j \rangle\}$ satisfying
\[
\max_{j=1,2,\ldots,M} \operatorname{MSE}[\hat{u}_j] \le \varepsilon^2, \tag{9}
\]
using
\begin{align}
    \label{eq:comp_parallel}
    \mac{O}\Bigl(   \sqrt{ M \log d \log M} \Bigr) \Big/ \varepsilon
\end{align}
queries to the state preparation oracles $U_\psi$ and $U_\psi^\dagger$ in total. Here, the mean squared error of an estimator $\hat{u}_j$ is defined as
\[
\operatorname{MSE}[\hat{u}_j] := \mathbb{E}\Bigl[\Bigl(\hat{u}_j - \langle O_j \rangle\Bigr)^2\Bigr].
\]
\end{theorem}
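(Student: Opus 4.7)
The plan is to prove Theorem~\ref{thm:parallel_QGE_query} by instantiating Algorithm~\ref{alg:unified_framework} with a state-preparation subroutine that simultaneously encodes all $MR^{(q)}$ expectation values needed for the median-of-medians step into a single probe state, thereby trading ancilla space for queries. Concretely, for each iteration $q$, instead of running the previous probe-state preparation $R^{(q)}$ times, I would introduce $R^{(q)}$ parallel copies of the system register and apply $U_\psi$ to each in tensor product, so that the effective set of observables becomes $\{O_j^{(r)}:=\1^{\otimes(r-1)}\otimes O_j\otimes\1^{\otimes(R^{(q)}-r)}\}_{j,r}$, a total of $MR^{(q)}$ commuting-structure observables acting on disjoint factors of $\ket{\psi}^{\otimes R^{(q)}}$. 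The probe register is correspondingly enlarged to $pMR^{(q)}$ qubits in a product cosine state, and an LCU over $\bm x=(x_{r,j})\in G_p^{MR^{(q)}}$ yields a block-encoding of $\tilde H^{(\bm x)} = (MR^{(q)})^{-1}\sum_{r,j} x_{r,j} O_j^{(r)}$.

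The next step is to bound the spectral norm so that uniform singular value amplification can be applied. Because the $O_j^{(r)}$ act on disjoint factors, $\bigl\|\sum_{r,j}(O_j^{(r)})^2\bigr\| \le \bigl\|\sum_j O_j^2\bigr\|\le M$, and since $x_{r,j}$ are independent symmetric random variables on $G_p$ with bounded variance $v$, the matrix Bernstein inequality (exactly as in Lemma~\ref{lem:construction_sigma_delta} but now over $MR^{(q)}$ summands on a $d^{R^{(q)}}$-dimensional space) gives $\|\sum_{r,j} x_{r,j}O_j^{(r)}\|\le \sigma$ with probability at least $1-\delta'$, for $\sigma=\mathcal{O}\bigl(\sqrt{MR^{(q)}\log(d^{R^{(q)}}/\delta')}\bigr)=\mathcal{O}\bigl(\sqrt{MR^{(q)}\,R^{(q)}\log(d/\delta')}\bigr)$. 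Amplifying the block-encoding by the factor $MR^{(q)}/\sigma$ (using $\mathcal{O}(MR^{(q)}/\sigma)$ queries), and then running optimal Hamiltonian simulation for time $t=2^{p+q+1}\sigma$, prepares a probe state $\approx\sum_{\bm x}c_{\bm x}^{(\cos)}\exp\!\bigl(2\pi i\,2^p\sum_{r,j}x_{r,j}\,2^q\pi^{-1}\langle O_j\rangle\bigr)\ket{\bm x}$ using $\mathcal{O}(2^{p+q}\sigma)=\mathcal{O}\bigl(2^{q}\sqrt{MR^{(q)}\log(d/\delta^{(q)})}\bigr)$ queries to $U_\psi,U_\psi^\dagger$; the proof of correctness mirrors Lemma~\ref{lemma:symmetry_state_prep_by_Hamiltonian_simulation} line by line, with $R^{(q)}$ absorbed into the log.

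After $(\mathrm{QFT}_{G_p}^\dagger)^{\otimes MR^{(q)}}$ and measurement, the joint state factorizes over the $(r,j)$ indices (because the ideal phase is a sum of independent terms and the cosine initial state is a product), so the $R^{(q)}$ outcomes $\{k_{r,j}\}_{r=1}^{R^{(q)}}$ associated with each $j$ are i.i.d.\ copies of the single-observable QPE-like distribution with failure probability at most $\mu=0.011+1/12<3/8$ (from Eq.~\eqref{eq:prob_sine_state} together with the $1/12$ Euclidean-norm probe preparation error). Taking the coordinate-wise median over $r$ for each $j$, Hoeffding's inequality plus a union bound over $j$ shows that $R^{(q)}=\mathcal{O}(\log(M/\delta^{(q)}))$ suffices to guarantee, with probability $\ge 1-\delta^{(q)}$, the per-bit accuracy condition required in Theorem~\ref{thm:performance_main_alg}.

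Finally I would sum the per-iteration cost: using $R^{(q)}=\mathcal{O}(\log(M/\delta^{(q)}))$ and $\delta^{(q)}=c/8^{q_{\max}-q}$, each iteration costs $\mathcal{O}\bigl(2^q\sqrt{M(\log d+q_{\max}-q)(\log M+q_{\max}-q)}\bigr)$; the geometric $2^q$ factor localizes the sum at $q\approx q_{\max}=\lceil\log_2(1/\varepsilon)\rceil$, giving total query complexity $\mathcal{O}\!\bigl(\sqrt{M\log d\,\log M}\bigr)/\varepsilon$, which is Eq.~\eqref{eq:comp_parallel}. The main obstacle I anticipate is the block-encoding / amplification step: one must check that the matrix Bernstein bound survives the enlargement of the ambient space from $\mathbb{C}^d$ to $(\mathbb{C}^d)^{\otimes R^{(q)}}$ without an extra $R^{(q)}$ blowup inside the square root---this is where the orthogonal-support structure $\sum_r(O_j^{(r)})^2 = \sum_r \1^{\otimes(r-1)}\otimes O_j^2\otimes \1^{\otimes(R^{(q)}-r)}$ is critical, since it keeps the operator norm at $M$ rather than $MR^{(q)}$ while the log-dimension factor $\log(d^{R^{(q)}}/\delta')$ is precisely what merges with $R^{(q)}$ to yield the desired $\sqrt{R^{(q)}\log d}$ scaling.
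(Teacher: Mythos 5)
There is a genuine gap, and it sits exactly at the step you flag as "the main obstacle": your norm bound is wrong. For $A=\sum_j O_j^2\succeq 0$ you have
\begin{align}
\sum_{r=1}^{R}\sum_{j}\bigl(O_j^{(r)}\bigr)^2=\sum_{r=1}^{R}\1^{\otimes(r-1)}\otimes A\otimes\1^{\otimes(R-r)},
\end{align}
whose top eigenvalue is attained on $v^{\otimes R}$ with $v$ the top eigenvector of $A$, giving norm $R\,\|A\|$, \emph{not} $\|A\|$. Tensor factors are not direct summands: $\|\1\otimes A+B\otimes\1\|$ behaves like $\|A\|+\|B\|$ for positive operators, unlike $\|A\oplus B\|=\max(\|A\|,\|B\|)$. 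Plugging the correct variance proxy $R\|\sum_j O_j^2\|$ together with the ambient dimension $d^{R}$ (so $\log(2d^{R}/\delta')=R\log d+\log(2/\delta')$) into matrix Bernstein yields $\sigma=\Theta\bigl(R\sqrt{M\log d}\bigr)$ rather than $\sqrt{MR\log d}$; intuitively, the top eigenvalues of the $R$ independent blocks $\sum_j x_{r,j}O_j$ add coherently across tensor factors, so no $\sqrt{R}$ concentration is available. With $R^{(q)}=\Theta(\log M)$ this erases the claimed $\sqrt{\log M}$ saving and returns you to the $\mathcal{O}(\sqrt{M\log d}\,\log M)/\varepsilon$ scaling of the non-parallel method. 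A second, compounding problem is that preparing $\ket{\psi}^{\otimes R^{(q)}}$ inside each call to the block-encoding costs $R^{(q)}$ queries to $U_\psi$, not one, so even the Hamiltonian-simulation stage inflates by a further factor of $R^{(q)}$ under the paper's query-counting convention.

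The paper's construction avoids both issues by never duplicating the system register: it keeps a single copy of $\ket{\psi}$ and a single set of $d$-dimensional observables, and realizes the $R$-fold parallelism purely in the probe register and in the independent random coefficients, encoding $\tilde H^{(\bm x_R)}=(MR)^{-1}\sum_{r}\sum_{j}x_{j,r}O_j$ with all $O_j$ acting on the \emph{same} $d$-dimensional space. Matrix Bernstein is then applied in dimension $d$ with variance proxy $\|\sum_{r,j}O_j^2\|=R\|\sum_j O_j^2\|$, which is precisely what produces $\tilde\sigma=\mathcal{O}\bigl(\sqrt{MR\log(d/\delta')}\bigr)$ and hence the $\sqrt{M\log d\,\log M}$ prefactor; the whole $R$-fold product probe state is prepared with a single pair $U_\psi,U_\psi^{\dagger}$ per QSP step. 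Your median/Hoeffding analysis and the final geometric summation over $q$ are consistent with the paper, but the state-preparation design underlying your $\sigma$ must be replaced by the single-system-register version for the theorem's bound to follow.
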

As anticipated from our rough reasoning, the parallel QGE algorithm achieves a quadratic reduction in the \(\log M\) factor. 

In the parallel scheme, the core idea is to coherently prepare multiple copies of the probe state, enabling simultaneous estimation within a single quantum circuit. Our ultimate goal based on this scheme is to prepare the following state,
\begin{align}
    \ket{\tilde \Upsilon(q)}_{\cos}
     \coloneq  \bigotimes^{R^{(q)}}_{r=1} \left(\sum_{\bm x \in G_p^{M  }}
     c_{\bm x}^{(\cos) }
    e^{2 \pi i 2^p \sum_{j=1}^M x_j 2^q \pi^{-1} ( \langle O_j \rangle - \tilde{u}_j^{(q) } ) } \ket{\bm x}\right). 
\end{align}
To this end, we consider applying an enlarged phase embedding circuit implementing the phase $e^{2\pi i 2^p \sum_j x_j  \frac{2^q}{\pi} \bigl(\langle O_j \rangle - u_j^{(q)}\bigr)} $
simultaneously to all \(R^{(q)}\) cosine states. In what follows, we describe the implementation of this enlarged phase embedding circuit, step by step.

Firstly, we need to construct $(a + \lceil \log_2(M R) \rceil + 1)$-block encoding of 
\begin{align}
    \tilde{H}^{(\bm x_R)} = (M R)^{-1} \sum^{R}_{r=1} \sum^{M}_{j=1}  x_{j,r} O_j
\end{align}
for any element $\bm x_R= (x_{1,1}, \ldots, x_{1,R}, x_{2,1}, \ldots, x_{M,R}) \in (G_p)^{(MR)} $. Based on the original method utilizing the LCU method, the corresponding SELECT operation denoted by $\tilde{U}_{\mtr{SEL}}$ is given by
\begin{align}
    \tilde{U}_{\mtr{SEL}} \coloneq \sum^{M}_{j=1}   \ketbra{j} \otimes \1_R \otimes B_j.
\end{align}
where $B_j$ is a block-encoding of $O_j$. In order to avoid an exponentially deep quantum circuit, we utilize a modified version of the PREPARE operation as described in Ref.~\cite{wada2024Heisenberg}. Specifically, we implement this operation by sequentially applying the $\ket{j}$-bitwise controlled gate (see Figs.~13 and 14 in Ref.~\cite{wada2024Heisenberg}) over the $x_{j,r}$-register, an additional single ancilla, and the $\lceil \log_2 (MR) \rceil$ qubits. This procedure yields a $\bm x_R$-controlled block encoding of $e^{i \mac{Q}^{(\bm x_R)}}$, where 
$\mac{Q}^{(\bm x_R)} \coloneq \sum^R_{r=1} \sum^M_{j=1} x_{j,r} \ketbra{j,r}.
$ Next, by applying an eigenvalue transformation corresponding to the logarithm of unitaries, we obtain an $\epsilon$-precise block encoding of $(2/\pi) \mac{Q}^{(\bm x_R)}$. This transformation requires $O(\log(1/\epsilon))$ calls for the block-encoding of $e^{i \mac{Q}^{(\bm x)}}$. Utilizing the SELECT operation and this block encoding, we then construct an $\epsilon$-precise block-encoding of 
\begin{align}
    \frac{2 MR}{\pi 2^{\lceil \log_2 M R \rceil }}  \frac1{MR}  \sum^{R}_{r=1} \sum^{M}_{j=1}  x_{j,r} O_j 
    = \qty(\bra{\bm 0} H^{\otimes \lceil \log_2(M R)\rceil} \otimes \1 ) 
    \tilde{U}_{\mtr{SEL}} \cdot \frac{2}\pi \mac{Q}^{(\bm x_R) } 
    \qty( H^{\otimes \lceil \log_2(M R) \rceil}\ket{\bm 0} \otimes \1 ) ).
\end{align}
This implementation incurs a cost of at most $\mac{O}(p M R \log(1/\varepsilon))$ operations and requires $\mac O\bigl( p M R  + a + \lceil \log_2(M R) \rceil + \log_2 d \bigr)$ qubits.

The amplification process of the block-encoding requires access to the oracle $U_{\mtr{SEL} }' = \sum_{\bm x \in G_{p}^M }\ketbra{\bm x} \otimes U^{(\bm x)}$, where $U^{(\bm x)} $ is a block-encoding of $M^{-1} \sum_{j=1}^M x_j O_j$ and disregards the internal structure of each $O_j$. Since we possess a $\bm x_R$-controlled block-encoding of $(M R)^{-1} \sum_r \sum_j x_{j,r} O_j $, so we can immediately state the following lemma.

\begin{lemma}
    \label{lem:coherent_amplification}
  Let $\delta'>0, \varepsilon' \in (0,1/2)$, let $ p $ be a positive integer. Suppose that for $d$-dimensional observables $\{O_j\}_{j=1}^M$ with $\norm{O_j} \leq 1$, we have access to the oracle 
  $U'_{\mtr{SEL}}= \sum_{\bm x_R \in (G_{p})^{(MR)}} \ketbra{\bm x_R} \otimes U^{(\bm x)}$ where $U^{(\bm x)}$ is an $(a + \lceil \log_2 M \rceil)$-block encoding of observable
  $\tilde H^{(\bm x)} \coloneq (M R)^{-1} \sum_r \sum_j x_{j,r} O_j $ for some $a \in \mathbb{N}$.
  
  For independent symmetric distributed random variables $\bm X \coloneq \{X_{j,r} \}$ for $j \in \{1,\ldots,M\}, \ r \in \{1,\ldots,R\}$ supported on $(G_{p})^{MR} $ with variance $v \coloneq \mathbb{E}[X_{j,r}^2]$, we assume that the following inequality holds,
    \begin{align}
        \label{eq:def_tilde_sigma2}
        \tilde \sigma \coloneq \sqrt{2  R \norm{\sum_{j=1}^M  O_j^2 } \log(2d/\delta')} < MR,
    \end{align}
    or
    \begin{align}
    \label{eq:def_tilde_sigma}
      \tilde{\sigma}  \coloneq \sqrt{2 v R \norm{\sum_{j=1}^M  O_j^2 } \log(2d/\delta')} + \frac43 \log(2d /\delta')  < MR.
  \end{align}
  Then, there exists a subset $\tilde F \subset (G_{p})^{MR}$ 
  satisfying 
  \begin{align}
      \mtr{Pr}_{
      \bm X} [ \bm X \in \tilde F ] \geq 1 - \delta'.
  \end{align}
  and we can implement 
  a unitary (with $(pM R + \lceil \log_2 M  \rceil + \lceil \log_2 R \rceil + \log_2 d + a + 1)$ qubits in total),
  \begin{align}
    U_{\mtr{obs}} \coloneq \sum_{\bm x_R \in (G_{p})^{MR}} \ketbra{\bm{x}_R} \otimes U_{\mtr{obs}}^{(\bm{x}_R)}\;,
  \end{align}
  such that $U^{(\bm x)}_{\mtr{obs}}$ is a $(1, a + \lceil \log_2 M \rceil+ \lceil \log_2 R \rceil + 1, \varepsilon')$-block-encoding of the Hamiltonian $H^{(\bm{x}_R)} $ satisfying
  \begin{align}
  \label{eq:subspace_block_encoding2}
    \frac1{\tilde{\sigma}  } \sum_{r=1}^R \sum_{j=1}^M x_{j,r} O_j   \;, \ \text{if } \bm x_R \in \tilde F \subset (G_p)^{RM}\;,
  \end{align}
  For $m = \mac{O}(M \tilde{\sigma}^{-1} \log(M  \tilde{\sigma}^{-1}/\varepsilon')) $, this implementation of $U_{\mtr{obs}}$ requires $m$ queries to $U'_{\mtr{SEL}}$ or its inverse, $2m$ X gates controlled by $a + \lceil \log_2 M \rceil $ qubits, $\mac{O}(m)$ single-qubit gates, $\mac{O}(\mtr{poly}(m))$ classical precomputation.
\end{lemma}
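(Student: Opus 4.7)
The proof plan proceeds in parallel to that of Lemma~\ref{lem:construction_sigma_delta}, adapted to the enlarged random sum over $MR$ coefficients and with trivial symmetry (full-space amplification). First, I would set the amplification factor $\gamma := MR/\tilde\sigma > 1$, which is well-defined by the standing assumption $\tilde\sigma < MR$, so that the desired amplified operator is $\gamma \tilde H^{(\bm x_R)} = \tilde\sigma^{-1}\sum_{r,j} x_{j,r} O_j$. To invoke the uniform singular value amplification result (Lemma~\ref{lem:Amplitude_Amp_subspace}) with trivial projector $\Pi_\Delta = \1$, it suffices to exhibit a subset $\tilde F \subset (G_p)^{MR}$ of probability at least $1-\delta'$ on which $\|\tilde H^{(\bm x_R)}\| \leq (1-\delta)/\gamma$ for some fixed $\delta \in (0,1/2)$ (e.g.\ $\delta=1/2$).

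Next, I would apply the matrix Bernstein inequality to the random Hermitian matrix $S := \sum_{r=1}^{R}\sum_{j=1}^{M} 2X_{j,r} O_j$. The crucial observation---and the only place where one must resist a naive counting argument---is that although $S$ is a sum of $MR$ independent summands, the observables $\{O_j\}$ are replicated across the $R$-index, so the matrix-variance proxy collapses to $\sum_{r,j}\mathbb{E}[(2X_{j,r})^2] O_j^2 = vR\sum_j O_j^2$. Hence the variance entering the Bernstein exponent is exactly $vR\|\sum_j O_j^2\|$, which matches the expression under the square root of Eq.~\eqref{eq:def_tilde_sigma}. Choosing the deviation parameter $t = \tilde\sigma$ and solving the resulting quadratic (the very same manipulation that appears in the proof of Lemma~\ref{lem:construction_sigma_delta}, relying on $\sqrt{a+b}\leq \sqrt{a}+\sqrt{b}$) gives $\Pr[\|S\| \geq \tilde\sigma] \leq \delta'$. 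For the alternative choice in Eq.~\eqref{eq:def_tilde_sigma2}, I would instead invoke the subgaussian matrix concentration bound from Theorem~35 of Ref.~\cite{apeldoorn2023quantum} and run the same argument.

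Defining $\tilde F := \{\bm x_R \in (G_p)^{MR} : \|(MR)^{-1}\sum_{r,j} x_{j,r} O_j\| \leq 1/(2\gamma)\}$, the concentration bound above yields $\Pr_{\bm X}[\bm X \in \tilde F] \geq 1-\delta'$. For any $\bm x_R \in \tilde F$, feeding the controlled block-encoding $U'_{\mathrm{SEL}}$ into Lemma~\ref{lem:Amplitude_Amp_subspace} with $\Pi_\Delta=\1$, parameters $\gamma$ and $\delta$ as above, and target precision $\varepsilon'$, yields a controlled $(a + \lceil\log_2 M\rceil + \lceil\log_2 R\rceil + 1)$-block-encoding of $\gamma\tilde H^{(\bm x_R)}$ with spectral-norm error at most $\varepsilon'$, using $m = \mathcal{O}(\gamma\log(\gamma/\varepsilon')) = \mathcal{O}(M\tilde\sigma^{-1}\log(M\tilde\sigma^{-1}/\varepsilon'))$ queries to $U'_{\mathrm{SEL}}$ and its inverse, plus the multi-controlled NOT and single-qubit gates stated. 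Assembling these controlled operations block-diagonally across the $\bm x_R$ register produces the unitary $U_{\mathrm{obs}} = \sum_{\bm x_R}\ketbra{\bm x_R}\otimes U_{\mathrm{obs}}^{(\bm x_R)}$ claimed by the lemma.

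The main obstacle---really the only non-routine step---is correctly identifying the scaling of the matrix-Bernstein variance with $R$ rather than $MR$; once this is pinned down, the rest of the argument is a faithful transcription of the proofs of Lemmas~\ref{lem:construction_sigma_delta} and~\ref{lem:Amplitude_Amp_subspace}, together with routine bookkeeping of qubit and gate counts through the LCU/amplification pipeline.
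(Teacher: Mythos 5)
Your proposal is correct and follows essentially the same route as the paper's own (sketched) proof: replace $M$ by $MR$ in the argument of Lemma~\ref{lem:construction_sigma_delta}, observe that replicating the observables across the $R$-index makes the matrix-Bernstein variance proxy $vR\|\sum_j O_j^2\|$ rather than anything scaling with $MR$, and then feed the resulting high-probability norm bound into the uniform singular value amplification of Lemma~\ref{lem:Amplitude_Amp_subspace} with trivial projector. The only cosmetic discrepancy is that with $\gamma = MR/\tilde\sigma$ the query count is $\mathcal{O}(\gamma\log(\gamma/\varepsilon')) = \mathcal{O}(MR\tilde\sigma^{-1}\log(MR\tilde\sigma^{-1}/\varepsilon'))$, so the factor $R$ appears to have been dropped in the lemma statement rather than in your argument.
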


\begin{Remark}
    In this theorem, we present two alternative choices for \(\tilde \sigma\) as specified in Eqs.~\eqref{eq:def_tilde_sigma2} and \eqref{eq:def_tilde_sigma}. Specifically, Eq.~\eqref{eq:def_tilde_sigma} yields a numerically smaller value, which is advantageous for precise numerical analysis in Sec.~\ref{sec:k-RDM-with-QGE}, while Eq.~\eqref{eq:def_tilde_sigma2} offers a simpler expression that facilitates asymptotic evaluations.
\end{Remark}

\noindent
\textit{Sketch of Proof.~}
We only need to follow the proof of Lemma \ref{lem:Amplitude_Amp_subspace} without imposing a specific symmetry to target observables. In this case, we consider the random variables $\{X_{j,r}\}_{j,r} $ that are identical and symmetrically distributed on $G_p^{(RM)}$. 
By replacing $M$ into $MR$ and $\norm{\sum_{j=1}^M O_j^2}$ into $\norm{\sum_{r=1}^M \sum_{j=1}^M O_j^2} = R \norm{\sum_{j=1}^M O_j^2}$ in $\sigma_\Delta$ of Eqs.~\eqref{eq:def_sigma_Delta2} or Eqs.~\eqref{eq:def_sigma_Delta} without imposing a specific symmetry, we can evaluate the upper bound of $\tilde \sigma $ satisfying the following matrix inequality,
\begin{align}
    \mtr{Pr} \qty[ \frac1M \norm{\sum^R_{r=1} \sum^M_{j=1} X_{j,r} O_j} \geq \frac{\tilde \sigma}{2MR}] \leq \delta'.
\end{align}
Such $\sigma$ is evaluated as Eq.~\eqref{eq:def_tilde_sigma} or Eq.~\eqref{eq:def_tilde_sigma2}, which results in the output block-encoding of the Hamiltonian $H^{(\bm x)}$ satisfying Eq.~\eqref{eq:subspace_block_encoding2} and the gate complexity.
\hfill \hfill \qed 

Similarly, we can derive an analogous version of Lemma \ref{lemma:symmetry_state_prep_by_Hamiltonian_simulation} for the parallel QGE algorithm. 
Note that the difference of selecting $\delta' $ occurs compared with the original lemma to make $\mac E$ Euclidean distance error approximate state.

\begin{lemma}[parallel-scheme state preparation by Hamiltonian simulation]
  \label{lemma:parallel_prep_by_Hamiltonian_simulation}
  Let $p,R$ be a positive integer, $\{O_j\}_{j=1}^M$ be $M$ $d$-dimensional observables with $\norm{O_j} \leq 1$. Suppose that $U_\psi$ is $N$-qubit state preparation oracle, 
  $\langle O_j \rangle \coloneq \bra{\psi} O_j \ket{\psi}$, and $u_j \in [-1,1]$. 
  Additionally, we have access to $U_{\mtr{SEL}}'= \sum_{\bm x_R \in (G_p)^{MR} }\ketbra{\bm x_R }\otimes U^{(\bm x)}_{\mtr{obs}} $ 
  where $U_{\mtr{obs}}^{(\bm x)}$ is an $(a+\lceil \log_2 MR \rceil)$-block-encoding of $H^{(\bm x)} = (M R)^{-1} \sum_r \sum_j x_{j,r} O_j$ for some $a \in \mathbb{N}$.
  We assume $\tilde \sigma$ is provided as
  \begin{align}
      \tilde{\sigma}  \coloneq   \sqrt{2 v R \norm{\sum_{j=1}^M  O_j^2 } \log(2 d /\delta')} + \frac43 \log(2 d /\delta')  < {MR},
  \end{align}
  and holds for $\delta ' =\mac{E}^2/20 $ where $\mac{E}$ satisfies $0 < \mac{E} <1 $  and $v \coloneq \mathbb{E}[(2X)^2]$ where \(X\) is a random variable on \(G_p\) with \(\mathrm{Pr}[X = x] = |c_{x}^{(\cos)}|^2\).

  Then for any non-negative integer $q$, there exists a state-preparation subroutine  $\mac{U}_{\Upsilon,\text{para}}$ preparing $pM R$-qubit state expressed by 
  \begin{align}
    \ket{\tilde{\Upsilon}(q) }_{\cos} = \bigotimes_{r=1}^{R} \left[\sum_{\bm x \in G_p^M} c^{(\cos)}_{\bm x} \exp(2 \pi i 2^p \sum^M_{j=1} x_j \cdot \frac{2^{q}}{\pi} \qty( \langle O_j \rangle -u_j   ) )\ket{\bm x}\right]
  \end{align}
  up to $\mac{E} $ Euclidean distance error,  with $2Q$ uses of \(U_\psi\) or \(U_\psi^\dagger\), \(4mQ\) uses of controlled \(U'_{\mathrm{SEL}}\) (or its inverse), \(\mac{O}(mQ)\) uses of NOT gates controlled by at most \(\mac{O}\Bigl(a + \lceil \log_2 M \rceil +\lceil \log_2 R \rceil + \log_2 d\Bigr)\) qubits, \(\mac{O}(mQ + pMR)\) uses of single-qubit or two-qubit gates, an additional \( (\lceil \log_2 M \rceil +\lceil \log_2 R \rceil+ \log_2 d + a + 3) \) ancilla qubits, and \(\mac{O}\bigl(\mathrm{poly}(Q) + \mathrm{poly}(m)\bigr)\) classical precomputation for determining the quantum circuit parameters, where 
    \[
    Q = \mac{O}(2^{p+q+1} \tilde{\sigma} + \log(1/\mac{E})) \quad \text{and} \quad m = \mac{O}\Bigl( \frac{MR}{\tilde{\sigma}}(p+q+\log M + \log R)\Bigr).
    \]
\end{lemma}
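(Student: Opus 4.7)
The plan is to mirror the proof of Lemma~\ref{lemma:symmetry_state_prep_by_Hamiltonian_simulation} but with the symmetric subspace projector $\Pi_\Delta$ replaced by the identity, the quantity $\sigma_\Delta$ replaced by $\tilde\sigma$, and the $MR$ grid coordinates $\bm x_R\in (G_p)^{MR}$ treated as a single enlarged probe index of dimension $pMR$. I will construct two unitaries: a Hamiltonian-simulation circuit $W$ that injects the expectation-value phase $\exp(2\pi i 2^p \sum_{r,j} x_{j,r}\cdot 2^q\pi^{-1}\langle O_j\rangle)$ controlled by $\bm x_R$, and a diagonal unitary $V^{(q)}$ that multiplies by $\exp(-2\pi i 2^p \sum_{r,j} x_{j,r}\cdot 2^q\pi^{-1} u_j)$. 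Because the latter is tensor-factorisable across the index $(j,r)$, it can be synthesised from single-qubit rotations with negligible implementation error $\delta_V$, and the total circuit to be analysed is $(\mathds{1}\otimes V^{(q)})\,W\,(\mathds{1}\otimes(\chi_p^{(\cos)})^{\otimes MR})$ applied to $\ket{\bm 0}\ket{\bm 0}^{\otimes MR}$.

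First I invoke Lemma~\ref{lem:coherent_amplification} with $\varepsilon' = \sqrt{\delta'}/(2^{p+q+2}\tilde\sigma)$, obtaining a controlled block-encoding $U_{\mathrm{obs}} = \sum_{\bm x_R}\ketbra{\bm x_R}\otimes U_{\mathrm{obs}}^{(\bm x_R)}$ of Hamiltonians $H^{(\bm x_R)}$ satisfying $\|H^{(\bm x_R)} - \tilde\sigma^{-1}\sum_{r,j}x_{j,r}O_j\|\le \varepsilon'$ whenever $\bm x_R$ lies in a good set $\tilde F$ with $\mathrm{Pr}_{\bm X}[\bm X\in \tilde F]\ge 1-\delta'$. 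Composing with the state-preparation oracle $U_\psi$ as in Fig.~\ref{fig:HS_state_prep_circ} yields a block-encoding of $\sum_{\bm x_R}\tilde f(\bm x_R)\ketbra{\bm x_R}$, where $\tilde f(\bm x_R) := \bra{\bm 0}U_\psi^\dagger U_{\mathrm{obs}}^{(\bm x_R)}U_\psi\ket{\bm 0}$. Feeding this into the optimal Hamiltonian simulation (Lemma~\ref{lem:optimal_HS}) with evolution time $t = 2^{p+q+1}\tilde\sigma$ and precision $\varepsilon''$ produces $W$ as an $\varepsilon''$-precise block-encoding of $\sum_{\bm x_R}e^{i\tilde f(\bm x_R)t}\ketbra{\bm x_R}$. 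A key observation is that although the ideal target factorises as $R$ identical copies, a single Hamiltonian simulation acting on all $pMR$ probe qubits simultaneously suffices, because the encoded Hamiltonian is diagonal in $\bm x_R$; therefore the query cost to $U_\psi$ is only $2Q$ with $Q = \mathcal{O}(t+\log(1/\mathcal{E}))$, rather than $2QR$.

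The final step is the triangle-inequality decomposition, which carries over essentially unchanged from the proof of Lemma~\ref{lemma:symmetry_state_prep_by_Hamiltonian_simulation}:
\begin{align}
    &\bigl\|(\mathds{1}\otimes V^{(q)})\,W\ket{\bm 0}\ket{\cos}^{\otimes MR} - \ket{\bm 0}\ket{\tilde\Upsilon(q)}_{\cos}\bigr\| \notag\\
    &\le \varepsilon''+\sqrt{2\varepsilon''}+\delta_V+\Bigl(\sum_{\bm x_R}\bigl|c^{(\cos)}_{\bm x_R}\bigr|^2\,\bigl|e^{i\tilde f(\bm x_R)t}-e^{2\pi i\cdot 2^p\sum_{r,j}x_{j,r}\cdot 2^q\pi^{-1}\langle O_j\rangle}\bigr|^2\Bigr)^{1/2}.
\end{align}
Splitting the last sum according to membership in $\tilde F$, the contribution from $\bm x_R\notin\tilde F$ is bounded by $4\,\mathrm{Pr}[\bm X\notin\tilde F]\le 4\delta'$ (using $|e^{ia}-e^{ib}|\le 2$), while the contribution from $\bm x_R\in\tilde F$ is bounded by $(t\varepsilon')^2 = \delta'$ via $|e^{ia}-e^{ib}|\le|a-b|$ together with the block-encoding accuracy on $\tilde F$ and $\ket{\psi} = U_\psi\ket{\bm 0}$. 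Thus the total error is at most $\varepsilon''+\sqrt{2\varepsilon''}+\delta_V+\sqrt{5\delta'}$, and choosing $\delta' = \mathcal{E}^2/20$ together with $\varepsilon''$ and $\delta_V$ of order $\mathcal{O}(\mathcal{E}^2)$ yields the required $\mathcal{E}$ Euclidean distance error; the claimed query, gate, and ancilla counts follow by plugging $Q$ into Fig.~\ref{fig:HS_state_prep_circ} and reading off $m = \mathcal{O}((MR/\tilde\sigma)\log(MR/(\tilde\sigma\varepsilon')))$ from Lemma~\ref{lem:coherent_amplification}. The main subtlety, and the only new ingredient beyond the symmetric case, is verifying that a single coherent Hamiltonian simulation on the enlarged $pMR$-qubit probe faithfully reproduces the $R$-fold tensor-product structure of $\ket{\tilde\Upsilon(q)}_{\cos}$; this follows from the diagonality of the encoded Hamiltonian in $\bm x_R$ and the separability of $\tilde f(\bm x_R)$ across the $(j,r)$ indices, so no new technical difficulty arises.
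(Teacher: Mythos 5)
Your proposal is correct and follows essentially the same route as the paper's proof: invoke the parallel-scheme amplification lemma to build the $\bm x_R$-controlled block-encoding, compose with $U_\psi$ and the optimal Hamiltonian simulation at time $t=2^{p+q+1}\tilde\sigma$ to obtain $W$, append the separable diagonal unitary $V^{(q)}$ for the $u_j$ shifts, and bound the total Euclidean error by $\varepsilon''+\sqrt{2\varepsilon''}+\delta_V+\sqrt{5\delta'}\le\mathcal{E}$ after splitting the phase-error sum over the good set $\tilde F$ and its complement. The parameter choices ($\varepsilon'=\sqrt{\delta'}/(2^{p+q+2}\tilde\sigma)$, $\delta'=\mathcal{E}^2/20$, $\varepsilon''=\mathcal{O}(\mathcal{E}^2)$) and the observation that a single coherent simulation over all $pMR$ probe qubits yields the $R$-fold product state with only $2Q$ queries to $U_\psi$ match the paper's argument.
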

\begin{proof}
     Following the proof of Lemma \ref{lemma:symmetry_state_prep_by_Hamiltonian_simulation},we consider two quantum circuits, \(\tilde W\) and \(\tilde V^{(q)} \), which produce the state \(\ket{\tilde \Upsilon(q)}_{\cos}\). The circuit \(\tilde W\), built using uniform amplification (Lemma~\ref{lem:coherent_amplification}) and Hamiltonian simulation, encodes the $R$ copies of $M$ expectation values \(\langle O_j \rangle\), while \(\tilde V^{(q)} \) incorporates parameters \(u_j\) for each \(j \in \{1,\ldots,M\}\) to corresponding qubits on the probe system. We first analyze the approximation error in \(\tilde{W}\) from Hamiltonian simulation, then show that the discrepancy in \(V\) remains small. Finally, we  derive an upper bound on the distance between the ideal probe state and the approximate probe state \(\ket{\tilde \Upsilon (q)}\).

    Let $\varepsilon'' = \mac{E}^2/ 2^{6} \in (0,1)$ and $\varepsilon' = \sqrt{\delta'}/(2^{p+q+2 }\tilde{\sigma}) \in (0,1/2)$. 
    we begin with the analysis of $\tilde W$ regarding with the Hamiltonian simulation.
  From the assumption and Lemma \ref{lem:coherent_amplification}, we have a unitary 
  \begin{align}
    U_{\mtr{obs}} \coloneq \sum_{\bm{x}_R \in G_p^{MR}} \ketbra{\bm{x}_R} \otimes U_{\mtr{obs}}^{(\bm{x}_R)}\;,
  \end{align}
  where $U^{(\bm x)}_{\mtr{obs}}$ is a $(1, a'' = a+\lceil \log_2 M \rceil + \lceil \log_2 R \rceil  +1, \varepsilon')$-block-encoding of $\frac1{\tilde{\sigma}  } \sum_{r=1}^R \sum_{j=1}^M x_{j,r} O_j   \;, \ \text{if } \bm x_R \in \tilde F \subset (G_p)^{RM}$.
  Thus, using state-preparation oracle and this unitary, we can implement a $(1, a'' + \log_2 d, 0)$-block-encoding of the following Hamiltoinan 
  \begin{align}
    \label{eq:target_hamiltonian}
    \sum_{\bm{x}_R \in G_p^(MR)} \tilde{f}(\bm{x}_R) \ketbra{\bm{x}_R}\;, \text{where } \tilde{f}(\bm x_R) &\coloneq \bra{0}^{\otimes a''+\log_2 d} U_\psi^\dagger U^{(\bm{x}_R)}_{\mtr{obs},} U_\psi \ket{0}^{\otimes a'' +\log_2 d}\;.
  \end{align}
  For this encoding, the Hamiltonian simulation circuit described in Fig.~\ref{fig:HS_state_prep_circ} yields the quantum circuit $W$ for a $(1, a'' + \log_2 d + 2, \varepsilon'')$-block-encoding of time evolution operator 
  \begin{align}
    \label{eq:def_tilde_f}
      \sum_{\bm x \in G_p^M} e^{i \tilde f(\bm x )t} \ketbra{\bm x}, \quad t := 2^{p+q+1} {\tilde{\sigma}}.
  \end{align}

  Additionally, we consider the following circuit $\tilde V^{(q)}$ such that 
    \begin{align}
        \norm{\tilde{V}^{(q)}  - \bigotimes_{r=1}^R \sum_{\bm x \in G_p^M } e^{- 2 \pi 2^p \sum_j x_j 2^q \pi^{-1} u_j} \ketbra{\bm x} } \leq \delta_V. 
    \end{align}
    Due to the separable structure, we expect that this circuit is precisely implemented compared with $\tilde W$, so we assume that $\delta_V$ is sufficiently small against $\varepsilon''$ and $ \delta'$.

    Next, we derive the upper bound of the distance between the ideal state and the approximating probe state constructed by $\tilde W$ and $\tilde V^{(q)}$. Here, we denote $c_{\bm x_R}^{(\cos)} \coloneq \prod_{r=1}^R  \prod_{j=1}^M c_{x_{j,r} }^{(\cos)} $. From the above discussion, we obtain  
  
    \begin{align}
    &\norm{( \1_{a''+\log_2 d + 2} \otimes  \tilde V^{(q)} ) \tilde W \ket{\bm 0} \ket{\cos}^{\otimes (MR)} - \bigotimes_{r=1}^R \sum_{\bm x \in G_p^M} c_{\bm x}^{(\cos) } e^{2\pi i 2^p \sum_j x_j 2^{q} \pi^{-1} \qty( \langle O_j \rangle-u_j ) } \ket{\bm 0 } \ket{\bm x}} \\
    &\leq  \norm{( \1_{a''+\log_2 d + 2} \otimes  \tilde V^{(q)}) \qty[
    \tilde W \ket{\bm 0} \ket{\cos}^{\otimes (MR)} -  \bigotimes_{r=1}^R \sum_{\bm x \in G_p^M} c_{\bm x}^{(\cos) } e^{2\pi i 2^p \sum_j x_j 2^{q} \pi^{-1} \langle O_j \rangle  } \ket{\bm 0 } \ket{\bm x}
    ] }  \notag\\
    &+ \norm{( \1_{a''+\log_2 d + 2} \otimes  \tilde V^{(q)} )\bigotimes_{r=1}^R  \qty[
    \sum_{\bm x \in G_p^M}c_{\bm x}^{(\cos) } e^{2\pi i 2^p \sum_j x_j 2^{q} \pi^{-1} \langle O_j \rangle  } \ket{\bm 0 } \ket{\bm x} - 
    \sum_{\bm x \in G_p^{M}} c_{\bm x}^{(\cos) } e^{2\pi i 2^p \sum_j x_j 2^{q} \pi^{-1} \qty( \langle O_j \rangle -u_j)  } \ket{\bm 0 } \ket{\bm x} ]
    } \\
    &= \norm{ 
    \tilde W \ket{\bm 0} \ket{\cos}^{\otimes (MR)} -  \bigotimes_{r=1}^R \sum_{\bm x \in G_p^M} c_{\bm x}^{(\cos) } e^{2\pi i 2^p \sum_j x_j 2^{q} \pi^{-1} \langle O_j \rangle  } \ket{\bm 0 } \ket{\bm x} } \notag\\
    &+  \norm{\sum_{\bm x_R \in G_p^{(MR)} } c_{\bm x_R}^{(\cos) }e^{2\pi i 2^p \sum_j x_{j,r} 2^{q} \pi^{-1} \langle O_j \rangle  } (\tilde V^{(q)} - e^{-2\pi i 2^p \sum_j x_{j,r} 2^{q} \pi^{-1} u_j }) \ket{\bm x_R}} \\ 
    &\leq \varepsilon'' + \sqrt{2 \varepsilon''} +  \norm{\sum_{\bm x\in G_p^{(MR)}}  c_{\bm x_R}^{(\cos) } e^{i \tilde{f}(\bm x_R)t} \ket{\bm x_R}- \sum_{\bm x_R \in G_p^{(MR)} }  c_{\bm x_R}^{(\cos) } e^{2\pi i 2^p \sum x_j 2^{q} \pi^{-1} \langle O_j \rangle } \ket{\bm x_R}} + \delta_V \\
    &\leq \varepsilon'' + \sqrt{2 \varepsilon''} + \sqrt{5 \delta'} + \delta_V\\
        &= \mac{E}^2/2^6 +  \mac{E}/2\sqrt{2} + \mac{E}/2 + \delta_V \leq \mac{E}  \;.
    \end{align}
    In the last inequality, we assume that \(\delta_V\) is relatively small compared to \(\varepsilon''\) and \(\delta'\). The third inequality follows from the following bound:
    \begin{align}
        \norm{\sum_{\bm{x} \in G_p^{(MR)}} c_{\bm{x}_R}^{(\cos)} e^{i \tilde{f}(\bm{x}_R) t} \ket{\bm{x}_R} - \sum_{\bm{x}_R \in G_p^{(MR)}} c_{\bm{x}_R}^{(\cos)} e^{2\pi i 2^p \sum x_j 2^q \pi^{-1} \langle O_j \rangle} \ket{\bm{x}_R}} \leq \sqrt{5 \delta'}.
    \end{align}
    This inequality can be derived using the same procedure as in the last part of the proof of Lemma~\ref{lemma:symmetry_state_prep_by_Hamiltonian_simulation}. Thus, we define $\mac{U}_{\text{para}}^{(q)} \coloneq ( \1_{a''+\log_2 d + 2} \otimes  \tilde V^{(q)} ) \tilde W (\1_{a''+\log_2 d + 2} \otimes (\chi_p^{(\cos)})^{\otimes (MR)} )$ as a state preparation subroutine that generates the target probe state.
\end{proof}

From this discussion, we can construct the quantum circuit implementation to generate a probe state approximating $\ket{\tilde{\Upsilon}(q)}_{\cos}$. By using the adaptive QGE algorithm with this circuit, we can complete the proof of Theorem \ref{thm:parallel_QGE_query}.

\ 
\\

\noindent
\textit{Proof of Theorem \ref{thm:parallel_QGE_query}. }
By using the general framework of the adaptive QGE algorithm presented in Algorithm~\ref{alg:unified_framework}, it suffices to determine the required number of samples \(R^{(q)}\), and to evaluate the total query complexity based on the computational cost of \(\mac{U}_{\tilde \Upsilon}^{(\cos)}\) given in Lemma~\ref{lemma:parallel_prep_by_Hamiltonian_simulation}.

We begin by determining the sufficient number of samples \(R^{(q)}\) and the Euclidean distance \(\mathcal{E}\) between the ideal probe state and the generated probe state, in order to bound the additive error of the temporal estimates $\{ \tilde u_j^{(q)} \}_{j=1}^M$. 
To this end, we analyze the measurement outcomes obtained from the ideal state \((\mathrm{QFT}^\dagger_{G_p})^{\otimes M R^{(q)}} \ket{\tilde{\Upsilon}(q)}_{\cos}\), and derive an appropriate choice of \(R^{(q)}\). Based on this analysis, we then select \(\mathcal{E}\) such that the intermediate estimates \(\tilde{u}_j^{(q+1)}\), obtained by measuring the state prepared by \(\mathcal{U}_{\tilde{\Upsilon}}^{(\cos)}\), approximate \(\langle O_j \rangle\) with success probability at least \(1 - \delta^{(q)}\).

For $(\mtr{QFT}^\dagger_{G_p})^{(M R^{(q)})}\ket{\tilde \Upsilon(q) }_{\cos} $ with $p=3$, measurement results $\bm k_R^{(q)} \coloneq \{k_{1,1}, k_{1,2}, \ldots, k_{1,R^{(q)}}, k_{2,1}, \ldots, k_{M,R^{(q)}} \}$ satisfy,
\begin{align}
    \forall j \in \{1,\ldots, M\}, r \in \{1,\ldots, R^{(q)}\}, 
        \mtr{Pr} \qty[ \abs{k_{j,r} - \frac{2^q(\langle O_j\rangle - \tilde u_j^{(q)})}\pi} > \frac1{2 \pi} ] <0.011,
\end{align}
under the assumption of 
\begin{align}
        \forall j, \quad \left| \langle O_j \rangle - \tilde{u}_j^{(q)} \right| \le 2^{-q}.
\end{align}
Here, we define the $j$-th median of the samples $k_{j,1}, \ldots, k_{j, R^{(q)}}$ as $\tilde{g}^{(q)}_j$ and consider the event
\begin{align}
    B_j^{(q)} \coloneq \left\{ \abs{\tilde{g}_j^{(q)}(R^{(q)}) -\frac{2^q(\langle O_j\rangle - \tilde u_j^{(q)})}\pi } \leq \frac{1}{2\pi} \right\}. 
\end{align}
Since each measurement result is obtained independently, following the argument presented in Thm.~\ref{thm:evaluation_symmetry_QGE}, if we choose \(R^{(q)}\) to bound the tail probability 
\begin{align}
    \label{eq:tail_prob-2}
    \tilde \Phi\bigl(R^{(q)}\bigr) \coloneq \Pr\!\Bigl[\Bigl| \tilde{g}_j^{(q)} - \frac{2^q\bigl(\langle O_j\rangle - \tilde{u}_j^{(q)}\bigr)}{\pi} \Bigr| > \frac{1}{2\pi}\Bigr] \le \frac{\delta^{(q)}}{2M},
\end{align}
then by the union bound,
\[
\Pr\!\Bigl[\bigcap_{j} B_j^{(q)}\Bigr] \ge 1-\delta^{(q)}/2.
\]
An upper bound of $R^{(q)} $ satisfying Eq.\eqref{eq:tail_prob-2} is evaluated as $3 \log(2M/\delta^{(q)} ) $ by using Hoeffding inequality (See the detail in Sec.~\ref{appsub:eval_tail_prob}).

Since we actually work with a probe state that approximates \(\ket{\tilde{\Upsilon}(p)}_{\cos}\) within Euclidean distance \(\mathcal{E}\), as shown in Lemma~\ref{lemma:parallel_prep_by_Hamiltonian_simulation}, the probability of the event \(\bigcap_j B_j^{(q)}\) for the probe state prepared by \(\mathcal{U}_{\tilde{\Upsilon}}^{(\cos)}\) changes by at most \(\mathcal{E}\) \cite{apeldoorn2023quantum}. 
Therefore, by choosing \(\mathcal{E} = \delta^{(q)}/2\), we ensure that the probability of \(\bigcap_j B_j^{(q)}\) is at least \(1 - \delta^{(q)}\).

    Finally we consider the total query complexity. If a single probe state is prepared via the technique in Lemma~\ref{lemma:parallel_prep_by_Hamiltonian_simulation}, we need to implement $\mac{E}^2/2^6 (= \delta^{(q)}/2^8)$-precise Hamiltonian simulation circuit. Then, this implementation requires \(2Q^{(q)} = \mac{O}(2^{p+q+1}\tilde \sigma^{(q)} + \log(\mac E^{-1} ) ) = \mac{O}(2^{p+q+1}\tilde \sigma^{(q)}  + \log(1/\delta^{(q)}  ) )  \) queries are needed for \(U_\psi\) and its inverse, where $\tilde \sigma^{(q)} $ is defined by choosing $R = R^{(q)}$ and $\delta' = \mac{E}^2/20 = (\delta^{(q)})^2/80 $ in $\tilde \sigma$. Note that $\tilde \sigma^{(q)} = \mac{O} \qty(\sqrt{R^{(q)}M \log (2d/\delta^{(q) } )} )$, so the dominant scaling of $2Q^{(q)} $ is $\mac O (2^{p+q+1}\tilde \sigma^{(q)} )$.
    Here the evaluation of $\tilde \sigma$ is expressed as follows 
    \begin{align}
        \tilde \sigma^{(q)} &= \sqrt{2 R^{(q)} M \log(160d/(\delta^{(q)})^2 )} \\
        &\leq \sqrt{6M \log(160d/(\delta^{(q)})^2 ) \log(2M/\delta^{(q)}) } \\
        &\leq \mac{O}\qty(\sqrt{M \log d\log M} ) +\mac{O}  \qty( \sqrt{M} (\sqrt{\log d} + \sqrt{\log M}) ) \log(1/\delta^{(q)} )  .
    \end{align}
    The first inequality comes from $R^{(q)} \leq 3 \log(2M/\delta^{(q)} ) $.
    Hence, the query complexity of \(\mathcal{U}_{\tilde{\Upsilon}}^{(\cos)}\) is evaluated as \(\mathcal{O}\left(2^q \log (1/\delta^{(q)}) \sqrt{M \log d \log M}\right)\), and the corresponding factor \(\aleph\) becomes \(\mathcal{O}\left(\sqrt{M \log d \log M}\right)\). 
    Accordingly, the total query complexity of the adaptive QGE algorithm with \(\mathcal{U}_{\Upsilon}^{(\cos)}\) is given by Eq.~\eqref{eq:comp_parallel}.
\hfill \hfill \qed

\subsection{Method II : Integrating all}
\label{subsec:Integral-QGE}

In the Sec.~\ref{subsec:Symmetry-QGE-algorithm} and \ref{subsec:parallel_QGE_alg}, we have implemented Method I and the QGE algorithm inspired by the parallel scheme, respectively. In this section, we propose Method II by integrating these methods and concisely state the total query complexity.

\begin{theorem}[Method II]
\label{thm:evaluation_symmetry_parallel_QGE}
Let $\varepsilon \in (0,1)$ be a target precision and $\mac{G}$ be a group. For given $M$ $\mac{G}$-symmetric observables $\{O_j\}_{j=1}^M$ of bounded spectral norm $\norm{O_j}\leq 1$, assume that each $O_j$ has the irreducible decomposition $
O_j = \bigoplus_{\lambda \in \Lambda} \1(\mathbb{C}^{d_\lambda}) \otimes O^{(\lambda)}_j,
$
where $\lambda$ labels the irreducible subspaces and each $O^{(\lambda)}_j$ is an $m_\lambda$-dimensional operator.
For a target state $\ket{\psi}$ prepared by a given oracle $U_{\psi}$, we further assume that there exists an index subset $\Delta\subseteq\Lambda$ such that the irreducible subspaces specified by $\lambda\in \Delta$ contain the target state $\ket{\psi}$.
Then, there exists a quantum algorithm that outputs a sample from estimators $\{\hat{u}_j\}_{j=1}^M$ for the expectation values $\{\langle \psi|O_j|\psi \rangle\}$ satisfying
\begin{align}
\max_{j=1,2,\ldots,M} \operatorname{MSE}[\hat{u}_j] \le \varepsilon^2, \tag{9}
\end{align}
using
\begin{align}
    \label{eq:total_query_unified}
    \varepsilon^{-1} \cdot \mac{O}\qty(  \log^{1/2}(M) \log^{1/2}(m_\Delta)  \max_{\lambda \in \Delta}   \norm{ \sum_{j=1}^M [O_j^{(\lambda)}]}^{1/2} ).
\end{align}
queries to the state preparation oracles $U_\psi$ and $U_\psi^\dagger$ in total, where $m_{\Delta}$ is defined as $m_{\Delta}:=\sum_{\lambda'\in \Delta}m_{\lambda'}$.
Here, the mean squared error of an estimator $\hat{u}_j$ is defined as
\[
\operatorname{MSE}[\hat{u}_j] := \mathbb{E}\Bigl[\Bigl(\hat{u}_j - \langle \psi|O_j |\psi\rangle\Bigr)^2\Bigr].
\]
\end{theorem}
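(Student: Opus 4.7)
The plan is to construct a probe-state preparation subroutine $\mathcal{U}_{\Upsilon,\mathrm{II}}^{(q)}$ that simultaneously inherits (i) the subspace amplification of Method~I, in which the block-encoding of $\tilde{H}^{(\bm{x})}=(MR)^{-1}\sum_{r,j}x_{j,r}O_{j}$ is amplified only on the symmetry sector $\mathcal{H}_{\Delta}$ containing $\ket{\psi}$, and (ii) the coherent multi-copy construction of Sec.~\ref{subsec:parallel_QGE_alg}, which prepares $R^{(q)}$ tensor copies of the cosine probe in a single run so that the effective ``number of observables'' becomes $MR^{(q)}$ rather than $M$. The target probe state is the $R^{(q)}$-fold product of Method~I's probe, namely $\ket{\tilde{\Upsilon}(q)}_{\cos}$ as in Lemma~\ref{lemma:parallel_prep_by_Hamiltonian_simulation}, but implemented through the subspace-amplified block-encoding of Lemma~\ref{lem:construction_sigma_delta}.

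First, I would establish a hybrid amplification lemma by running the argument of Lemma~\ref{lem:construction_sigma_delta} on the parallel-indexed operator $(MR)^{-1}\sum_{r=1}^{R}\sum_{j=1}^{M}X_{j,r}O_{j}$, restricted to $\Pi_{\Delta}$. Because the $\mathcal{G}$-symmetric direct-sum structure $O_{j}=\bigoplus_{\lambda}\mathds{1}\otimes O_{j}^{(\lambda)}$ is preserved under linear combinations and tensor products of independent random coefficients, the norm on the relevant block is controlled by the matrix Bernstein inequality applied to $\sum_{r,j}2X_{j,r}O_{j}^{(\lambda)}$ for each $\lambda\in\Delta$. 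This yields the proper amplification scale
\begin{align}
    \tilde{\sigma}_{\Delta}\coloneq \max_{\lambda\in\Delta}\sqrt{2vR\,\Bigl\|\sum_{j=1}^{M}[O_{j}^{(\lambda)}]^{2}\Bigr\|\log(2m_{\Delta}/\delta')}+\tfrac{4}{3}\log(2m_{\Delta}/\delta'),
\end{align}
using a union bound over $\lambda\in\Delta$ and the identity $\|A\oplus B\|=\max(\|A\|,\|B\|)$ exactly as in Eqs.~\eqref{eq:direct_sum_use}--\eqref{eq:tensor_prod_use}. The resulting block-encoding is then plugged into the optimal Hamiltonian simulation of Fig.~\ref{fig:HS_state_prep_circ} with $T=2^{p+q+1}\tilde{\sigma}_{\Delta}$, and the Euclidean-distance analysis of Lemma~\ref{lemma:parallel_prep_by_Hamiltonian_simulation} goes through verbatim once $\Pi_{\Delta}\ket{\psi}=\ket{\psi}$ is invoked to discard the orthogonal complement in Eq.~\eqref{eq:eval_eq}.

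Next, I would feed $\mathcal{U}_{\Upsilon,\mathrm{II}}^{(q)}$ into Algorithm~\ref{alg:unified_framework} and invoke Theorem~\ref{thm:performance_main_alg}. Reusing the tail-bound analysis from the proof of Theorem~\ref{thm:parallel_QGE_query}, one may take $R^{(q)}=\mathcal{O}(\log(M/\delta^{(q)}))$ and $\mathcal{E}=\delta^{(q)}/2$ so that coordinate-wise medians succeed with probability at least $1-\delta^{(q)}$. The per-iteration query count to $U_{\psi}$ and $U_{\psi}^{\dagger}$ equals $2Q=\mathcal{O}(2^{p+q+1}\tilde{\sigma}_{\Delta}+\log(1/\delta^{(q)}))$, and substituting the bound on $\tilde{\sigma}_{\Delta}$ with $R=R^{(q)}$ gives the prefactor
\begin{align}
    \aleph=\mathcal{O}\!\left(\max_{\lambda\in\Delta}\sqrt{\Bigl\|\sum_{j=1}^{M}[O_{j}^{(\lambda)}]^{2}\Bigr\|\log(m_{\Delta})\log(M)}\right).
\end{align}
Summing $\sum_{q=0}^{q_{\max}}2^{q}\log(1/\delta^{(q)})=\mathcal{O}(1/\varepsilon)$ as in Sec.~\ref{sec:general_framwork} then yields exactly Eq.~\eqref{eq:total_query_unified}.

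The main obstacle I anticipate is bookkeeping the two sources of randomness consistently: the matrix Bernstein tail must be applied to the $MR^{(q)}$ independent coefficients $\{X_{j,r}\}$ while the projector $\Pi_{\Delta}$ is used to extract the subspace norm $\|\sum_{j}[O_{j}^{(\lambda)}]^{2}\|$ instead of $\|\sum_{j}O_{j}^{2}\|$. The subtle point is that the two enhancements are genuinely compatible because the irreducible decomposition commutes with taking sums with classical scalar coefficients, so the union bound incurs only the cost $\log m_{\Delta}$ rather than $\log d$, and the factor $R^{(q)}$ that would appear inside the square root is absorbed into $\log(1/\delta^{(q)})\cdot 2^{q}$ after summation over $q$. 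Once this bookkeeping is done, the rest is a direct concatenation of the proofs of Theorems~\ref{thm:evaluation_symmetry_QGE} and~\ref{thm:parallel_QGE_query}.
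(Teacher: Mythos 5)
Your proposal is correct and follows essentially the same route as the paper: you merge the subspace amplification of Lemma~\ref{lem:construction_sigma_delta} with the coherent multi-copy construction of Lemma~\ref{lem:coherent_amplification} to get a block-encoding of $(MR)^{-1}\sum_{r,j}x_{j,r}O_j$ amplified on $\mathcal{H}_\Delta$ with scale $\tilde{\sigma}_\Delta$, feed the resulting probe preparation into Algorithm~\ref{alg:unified_framework} with $R^{(q)}=\mathcal{O}(\log(M/\delta^{(q)}))$ and $\mathcal{E}=\delta^{(q)}/2$, and extract $\aleph=\mathcal{O}\bigl(\max_{\lambda\in\Delta}\|\sum_j[O_j^{(\lambda)}]^2\|^{1/2}\log^{1/2}(m_\Delta)\log^{1/2}(M)\bigr)$, exactly as in the paper's sketch. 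The only cosmetic difference is that you phrase the $\sqrt{R^{(q)}}$ bookkeeping as "absorbed after summation over $q$," whereas the paper substitutes $R^{(q)}\le 3\log(2M/\delta^{(q)})$ directly inside $\tilde{\sigma}_\Delta^{(q)}$ before summing; both lead to the same bound.
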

\noindent
\textit{Sketch of Proof. }
By integrating Lemmas \ref{lem:construction_sigma_delta} and \ref{lem:coherent_amplification}, we can implement a following unitary circuit for any positve integers $R$,
\begin{align}
    \tilde U_{\mtr{obs}, \Delta} \coloneq \sum_{\bm x_R \in G_p^{(RM)}} \ketbra{\bm x_R } \otimes \tilde U_{\mtr{obs}, \Delta}^{(\bm x_R) },
\end{align}
where $\tilde U_{\mtr{obs}, \Delta}^{(\bm x_R) }$ is a $(a+ \lceil \log_2 M \rceil + \lceil \log_2 R \rceil +1$-block encoding of the Hamiltonian $\tilde H_\Delta^{(\bm x_R)}$ satisfying 
\begin{align}
  \label{eq:subspace_block_encoding_para}
    \norm{ \Pi_\Delta \qty(\tilde H^{(\bm x)}_\Delta  - \frac1{\tilde \sigma_\Delta  } \sum_{r=1}^R \sum_{j=1}^M x_{j,r} O_j ) \Pi_\Delta}\leq \varepsilon'  \;, \ \text{if } \bm x_R \in \tilde F_{\Delta} \subset {G_p^{(MR)}}\;,
\end{align}
if $\tilde \sigma_\Delta$ satisfies 
\begin{align}
        \label{eq:def_tilde_sigma_Delta2}
        \tilde \sigma_\Delta \coloneq \max_{\lambda \in \Delta } \sqrt{2 R \norm{\sum_{j=1}^M  [O_j^{(\lambda)}]^2 } \log(2 m_\Delta/\delta')} < MR,
\end{align}
By using this block-encoding, we can implement a probe state preparation subroutine $\mac{U}_{\text{unified}}$ that generates $\ket{\tilde \Upsilon(q)}_{\cos}$ up to $\mac{E}$ Euclidiean distance error where $\mac{E} \in (0,1)$, with $2Q$ uses of $U_\psi$ and its inverse. Here, $Q = \mac{O}(2^{p+q+1} \tilde \sigma_\Delta  + \log(1/\mac{E}) )$.

To calculate the total query complexity when using \(\mathcal{U}_{\text{unified}}\) in the general framework of the adaptive QGE algorithm presented in Algorithm \ref{alg:unified_framework}, it suffices to determine the required number of samples \(R^{(q)}\) and a parameter \(\mathcal{E}\). However, this choice is independent of the symmetry condition, so we can choose \(R^{(q)}\) and \(\mathcal{E}\) as in the proof of Theorem \ref{thm:parallel_QGE_query}. Additionally, the evaluation of the query complexity of \(\mathcal{U}_{\text{unified}}^{(q)}\) follows the same procedure. The evaluation of \(\tilde{\sigma}^{(q)}_\Delta\) expressed as 
\begin{align}
    \tilde \sigma_\Delta ^{(q)}\coloneq \max_{\lambda \in \Delta } \sqrt{2 R^{(q)} \norm{\sum_{j=1}^M  [O_j^{(\lambda)}]^2 } \log(2 m_\Delta/\delta')}
\end{align}
is given by
\begin{align}
    \tilde{\sigma}^{(q)}_\Delta &= \max_{\lambda \in \Delta} \sqrt{2 R^{(q)} \left\| \sum_{j=1}^M [O_j^{(\lambda)}]^2 \right\| \log\left( \frac{160 m_\Delta}{(\delta^{(q)})^2} \right)} \\
    &\leq \max_{\lambda \in \Delta} \sqrt{6 \left\| \sum_{j=1}^M [O_j^{(\lambda)}]^2 \right\| \log\left( \frac{160 m_\Delta}{(\delta^{(q)})^2} \right) \log\left( \frac{2M}{\delta^{(q)}} \right)} \\
    &\leq \mathcal{O}\left( \max_{\lambda \in \Delta} \sqrt{\left\| \sum_{j=1}^M [O_j^{(\lambda)}]^2 \right\| \log m_\Delta \log M} \right) 
    + \mathcal{O}\left( \max_{\lambda \in \Delta} \sqrt{\left\| \sum_{j=1}^M [O_j^{(\lambda)}]^2 \right\|} (\sqrt{\log m_\Delta} + \sqrt{\log M}) \log\left( \frac{1}{\delta^{(q)}} \right) \right).
\end{align}
The first inequality comes from \(R^{(q)} \leq 3 \log\left( \frac{2M}{\delta^{(q)}} \right)\). Hence, the query complexity of \(\mathcal{U}_{\text{unified}}^{(q)}\) is evaluated as
\(\mathcal{O}\qty(2^q \log (1/\delta^{(q)})  \log^{1/2}(M) \log^{1/2}(m_\Delta)  \max_{\lambda \in \Delta}   \norm{ \sum_{j=1}^M [O_j^{(\lambda)}]^2}^{1/2} ) \) 
and the corresponding factor \(\aleph\) becomes
\(\mathcal{O}\left( \log^{1/2}(M) \log^{1/2}(m_\Delta)  \max_{\lambda \in \Delta}   \norm{ \sum_{j=1}^M [O_j^{(\lambda)}]^2}^{1/2}\right)\).
Accordingly, the total query complexity of the adaptive QGE algorithm with \(\mathcal{U}_{\text{unified}}\) is given by Eq.~\eqref{eq:total_query_unified}.
\hfill \hfill \qed


\section{Applications}
\label{sec:Application}
One major area where quantum computers offer a significant speedup is the study of electron correlation effects in quantum systems. To investigate these effects, the $k$-body reduced density matrix ($k$-RDM) serves as an essential tool for capturing the physical properties of quantum many-body systems. The $k$-RDM of a pure state $\ket{\chi}$ supported on $N$ fermionic modes is represented as a tensor specified by $N^{2k}$ matrix elements, given by
\[
\bra{\psi} a_{p_1}^\dagger \cdots a_{p_k}^\dagger a_{q_1} \cdots a_{q_k} \ket{\psi} =: \bra{\psi} A^{\bm p}_{\bm q} \ket{\psi} ,
\]
where the indices $p_j, q_j$ run over the $N$ fermionic modes.

In this section, we investigate the efficiency of the QGE algorithm for measuring $k$-RDM elements on a practical-size system. For simplicity, we set that the ultimate goal of estimating of $k$-RDM elements is to efficiently obtain each estimators $\hat u^{\bm p}_{\bm q}, \hat v^{\bm p}_{\bm q} $ the following values for given a pure state $\ket{\psi}$ and precision $\varepsilon \in (0,1)$:
\begin{align}
    \label{eq:cond1}
    \forall \bm p \neq \bm q, \mathbb{E} \qty[ \abs{\hat u^{\bm p}_{\bm q} - \bra{\psi} A^{\bm p}_{\bm q} + A^{\bm q}_{\bm p}  \ket{\psi} }^2 ] \leq \varepsilon^2,  \\
    \label{eq:cond2}
    \forall \bm p \neq \bm q, \mathbb{E} \qty[ \abs{\hat v^{\bm p}_{\bm q} - \bra{\psi} i(A^{\bm p}_{\bm q} - A^{\bm q}_{\bm p} ) \ket{\psi} }^2 ] \leq \varepsilon^2, \\
    \label{eq:cond3}
    \forall \bm p ,\qty[ \abs{\hat u^{\bm p}_{\bm p} - \bra{\psi} (A^{\bm p}_{\bm p}\ket{\psi} }^2 ] \leq \varepsilon^2
\end{align}

Section \ref{sec:main-result} summarizes the main improvements by showing both the asymptotic and numerical results.
Section \ref{sec:k-RDN-with-CS} reviews the classical shadow algorithm for this setting. Section \ref{sec:k-RDM_with_QAE} presents the query complexity of the HL QAE algorithm. In Section \ref{sec:k-RDM-with-QGE}, we evaluate the total query complexity of proposed QGE algorithms.

\subsection{Main results}
\label{sec:main-result}

In this section, we present our main result on the asymptotic computational cost and the numerical evaluation of the total query complexity for $k$-RDM elements estimation in advance. The details of the analysis are presented in later sections.

\begin {table}[t]
\label{tab:title} 
\caption{
\justifying{Comparison of the worst-case complexities for determining the fermionic $k$-RDM of an $N$-qubit system with fixed $\eta$ particles where $\eta = k + \mac{O}(1)$ or $\eta = N - \mac{O}(1)$. The complexities are evaluated in terms of state preparation oracle queries and the required number of qubits across various strategies for measuring multiple observables. The compared approaches include Bell sampling+gentle measurement, naive sampling, classical shadow tomography, quantum amplitude estimation, prior QGE algorithms, and the proposed method. As shown in Appendix \ref{appsub:construction-kRDM}, we have access to block-encoding of target observables associated with $k$-RDM elements, which have the direct-sum structure. The dagger $(^\dagger)$ indicates that, if we have access to the block-encoding of projection operator on $\eta$ particle subspace, adaptive QGE algorithm in Ref.~\cite{wada2024Heisenberg} shows the same query complexity as in Method I. Here, $\tilde{\mac{O}}$ hides logarithmic factors in the dominant term.
}}
\begin{center}
 \label{tab:quantum_limits_bt}
   \begin{tabular}{lccc}
    \toprule
    \textbf{Method} & \textbf{Reference} & \textbf{Query Complexity} & \textbf{Space Complexity} \\
    \midrule
    Bell sampling+Gentle measurements  & \cite{huang2021information}
    &
    $\tilde{\mac{O} }(k \log(M) / \varepsilon^4 )$ & $2N$ \\
    Naive Sampling &
      & \( {\mathcal{O}}\bigl(N^{2k}\bigr)/\varepsilon^2\)
      & \(N\) \\
    Fermionic classical shadow tomography & \cite{Zhao:2020vxp} 
      & \( {\mathcal{O}}\bigl(N^k \bigr) /\varepsilon^2 \)
      & \(N\) \\
    Commuting Majorana pairs partitioning
    & \cite{bonet2020nearly}
    & $\mac{O}(N^k)/\varepsilon^2$ 
    & $N$ \\
    \midrule
     QAE algorithm & Sec.~\ref{sec:Expectation_base_QAE}
      & \({\mathcal{O}}\bigl(N^{2k}\bigr)/\varepsilon\)
      & \(N + \log_2(1/\varepsilon) + o(N)\) \\
     QGE algorithm & \cite{huggins2022nearly} 
      & \(\tilde{\mathcal{O}}\bigl(N^{k}/\varepsilon\bigr)\)
      & \(\mathcal{O}\bigl(N^{2k}\log(1/\varepsilon)\bigr)\) \\
     Adaptive QGE algorithm$^\dagger$ & \cite{wada2024Heisenberg} 
      & \(\tilde{\mathcal{O}}\bigl(N^{k+1/2}\bigr)/\varepsilon\)
      & \(\mathcal{O}\bigl(N^{2k}\bigr)\) \\
      \hline 
      \hline 
    Method I (ours) & Sec.~\ref{subsec:Symmetry-QGE-algorithm}
      & \(\tilde{\mathcal{O}}\bigl(N^{k/2}\bigr)/\varepsilon \)
      & \(\mathcal{O}\bigl(N^{2k}\bigr) \) \\
    Method II (ours)& Sec.~\ref{subsec:Integral-QGE}
      & \(\tilde{\mathcal{O}}\bigl(N^{k/2}\bigr)/\varepsilon\)
      & \(\mathcal{O}\bigl(k\,N^{2k}\log(N/\varepsilon)\bigr) \) \\
    \bottomrule
  \end{tabular}
\end{center}
\end{table}

We first present in Table~\ref{tab:quantum_limits_bt} the comparison of computational cost required to estimate all elements of fermionic $k$-RDM with root MSE of $\varepsilon$. Here, we consider the $N$-qubit system with a fixed particle number $\eta $ ($\eta = N - O(1), \eta = k + O(1)$). 
While shadow tomography~\cite{huang2021information} achieves an exponential reduction in the number of measurements at the cost of precision, high-precision estimation is required to accurately capture the rich structure of strongly correlated fermionic systems.
In this regard, we can see that our method achieves a clear asymptotic advantage when $\varepsilon \in o(N^{-k/6})$.
We further note that the best-known fermionic shadow tomography algorithm~\cite{Zhao:2020vxp} and the most efficient partitioning method~\cite{bonet2020nearly} require ${\mathcal{O}}(N^k)/\varepsilon^2$ queries, while our protocol reduces both the complexity on observable count $M$ and the scaling on the target precision $\varepsilon$. Remarkably, the scaling on the number of modes $N$ is improved even compared to methods that attain the Heisenberg-limited (HL) scaling. Concretely, our proposal realizes a quartic speedup over QAE algorithm, and a quadratic improvement over all prior QGE algorithms.
Taken together, these results establish our method as the most asymptotically efficient protocol for fermionic systems in the high-precision regime.

From a practical point of view, it is crucial to investigate how the constant and logarithmic factors pile up. As summarized in Fig.~\ref{fig:main_result_app}, we have  performed numerical evaluations to compare the performance of various algorithms in practical settings, in terms of the number of calls to the target state preparation unitary $U_\psi$ and its inverse. While the evaluation is valid for arbitrary fermionic system,
 the parameters setup implicitly assumes application to (1) the FeMo cofactor using 152 spin orbitals for the active space filled by 113 electrons and (2) $7/8$-filling Fermi-Hubbard model on $N$ qubits. 
The panels (a-c) of Fig.~\ref{fig:main_result_app} show the results for the FeMo cofactor. We find that Method II achieves the lowest query complexity for 1-RDM and 2-RDM estimation when $\varepsilon \leq 10^{-3}$. It is further remarkable that, for 3-RDM estimation, both Methods I and II outperform existing algorithms. The panels (d-f) of Fig.~\ref{fig:main_result_app} demonstrate that, for the $7/8$-filling Fermi-Hubbard model, Method II offers the best performance for 1-RDM estimation when $N \geq 80$. For 2-RDM estimation, both Methods I and II surpass all competitors for  in the range $10 \leq N \leq 100$.

\begin{figure}[h]
    \centering    
    \includegraphics[width=0.9\linewidth]{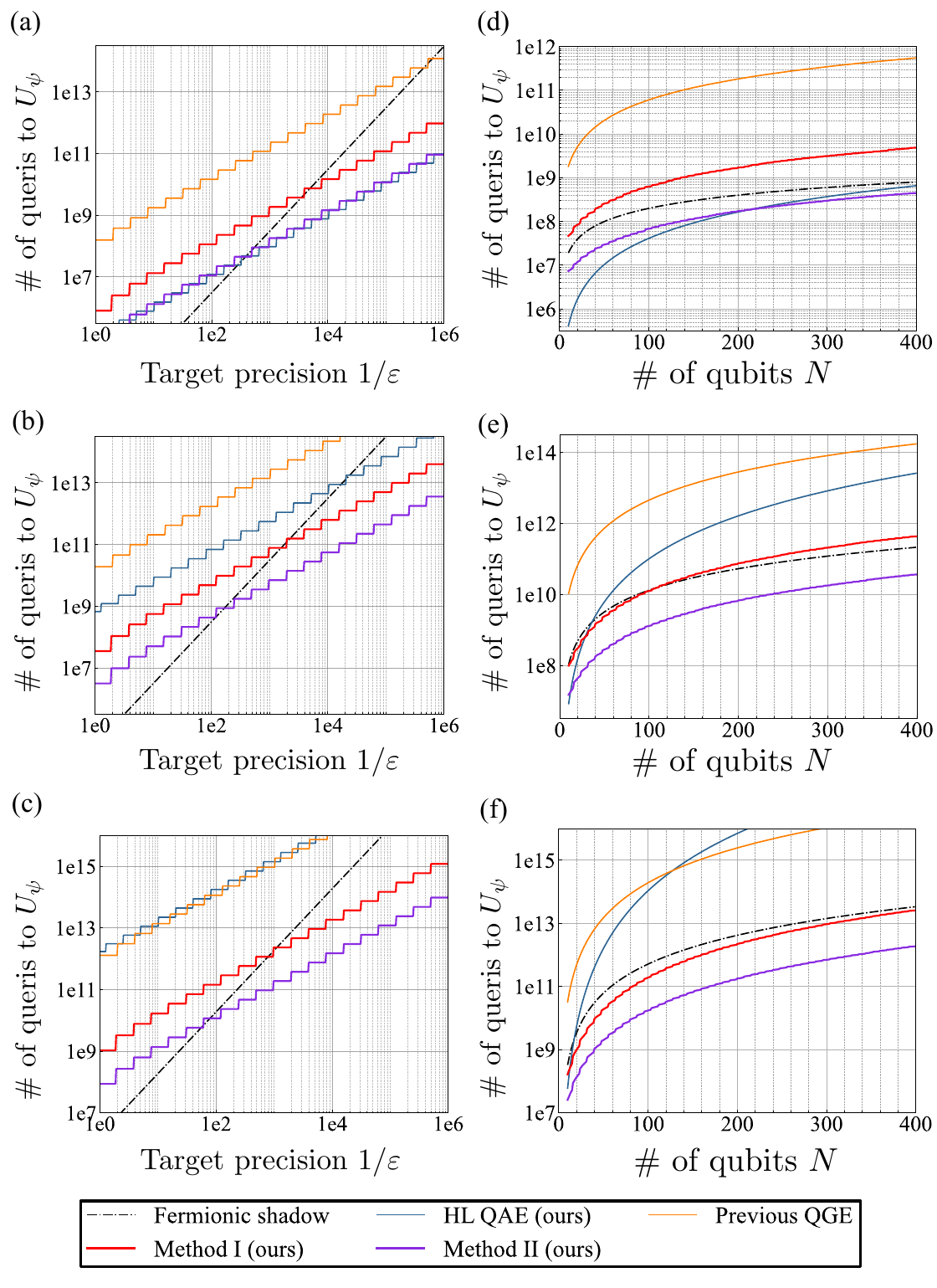}
    \caption{
    \justifying{
    The total query complexity in terms of state preparation unitary $U_\psi$ to evaluate (a) $1$-RDM elements, (b) $2$-RDM elements, and (c) $3$-RDM elements for the FeMo cofactor using active space of $N=152$ spin orbitals filled with $\eta=113$ electrons. The panels (d), (e), and (f) are similar plots for $N$-mode fermionic system with $\eta = \lceil (7/8) N \rceil$ particles, under fixed target precision $\varepsilon = 10^{-3}.$ 
    The approaches compared include, the Fermionic shadow tomography\cite{Zhao:2020vxp}, the quantum amplitude estimation (QAE) algorithm with HL scaling, the previous QGE algorithm \cite{wada2024Heisenberg}, and our proposals. 
    }
    }
    \label{fig:main_result_app}
\end{figure}

\subsection{$k$-RDM measurement with Classical shadow tomography}
\label{sec:k-RDN-with-CS}

Classical shadow techniques have been extensively studied for the task of $k$-RDM estimation, with various approaches and situations proposed in Refs.~\cite{Zhao:2020vxp, low2022classical, wan2023matchgate, babbush2023quantum, oGorman2022fermionic}. When only a single copy of the target state $\rho = \ketbra{\psi}$ is available, any sampling-based algorithm for estimating fermionic $k$-RDMs in $N$-qubit systems faces a lower bound of $\Omega(N^k / \varepsilon^2)$ queries.
Among these, the method by Zhao, Rubin, and Miyake~\cite{Zhao:2020vxp} achieves this lower bound, and thus, we focus on this classical shadow tomography protocol.

Let  $\gamma_j$ be a Majorana operator defined by 
\begin{align}
    \gamma_{2p} \coloneq a_p + a_p^\dagger, \gamma_{2p+1} \coloneq -i(a_p - a_p^\dagger).
\end{align}
We further define $2k$-degree Majorana operator as
\begin{align}
    \Gamma_{\bm \mu^{(2k)}} \coloneqq (-i)^k \gamma_{\mu_1} \cdots \gamma_{\mu_{2k} }, 
\end{align}
where $\bm \mu^{(2k)} \coloneq (\mu_1,\ldots,\mu_{2k} ), 0 \leq \mu_1 < \mu_2 < \cdots < \mu_{2k} $.  $\gamma_j$ 
By linearity, all $k$-RDM elements can be equivalently expressed by using $2k$-degree Majorana operator. For instance, 
\begin{align}
    a_1^\dagger a_0 = \frac14 (-\gamma_0 \gamma_2 - i \gamma_0\gamma_3 + i \gamma_1 \gamma_2 + \gamma_1 \gamma_3), \\
    a_0^\dagger a_1 = \frac14 (\gamma_0 \gamma_2 - i \gamma_0\gamma_3 + i \gamma_1 \gamma_2 - \gamma_1 \gamma_3).
\end{align}
Hence, the estimates of expectation values of Majorana operators lead to the estimates of $k$-RDM elements. 

It has been recognized that classical shadow tomography for fermionic systems benefits from using the number-conserving subgroup of fermionic Gaussian unitaries FGU$(N)$ for randomized basis transformation before the measurement~\cite{Zhao:2020vxp}.
By using such a group, the upper bound of the shadow norms over all $k$-degree Majorana operators can be evaluated as follows,
\begin{lemma}[Shadow norm using number-conserving FGU  ensemble \cite{Zhao:2020vxp}]
    Consider all $2k$-degree Majorana operators $\Gamma_{\bm \mu^{(2k)}} $ on $N$ fermionic modes, where $\bm \mu $ runs the set of all $2k$-combinations of $\{ 0, \ldots, 2N-1\}$. Under the FGU ensemble, the shadow norm satisfies,
    \begin{align}
        \norm{\Gamma_{\bm \mu^{(2k)} }}^2_{\mtr{FGU}} \leq \mqty(2N \\ 2k)\;\Big/\;\mqty(N \\ k) =  O(N^k).
    \end{align}
    where the shadow norm for any traceless observable \(O_j\) and an ensemble of random unitaries \(\mathcal{U}\) is defined as
    \begin{align}
    \|O\|_\mathcal{U}^2 
    \;\coloneq\; 
    \max_{\sigma}\,
    \mathbb{E}_{U \sim \mathcal{U},\, z \sim U \sigma U^\dagger}
    \Bigl[\bigl\langle z \bigr| U \,\mathcal{M}_{\mathcal{U}}^{-1}(O_j)\,U^\dagger \bigl| z \bigr\rangle^2\Bigr].
\end{align}
Here, \(\mathbb{E}_{U \sim \mathcal{U}, |z\rangle \sim U \rho U^\dagger}\) denotes the expectation value over random unitary transformations $\mac{U}$ and the possible measurement result $z$.
\end{lemma}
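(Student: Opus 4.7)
The plan is to exploit the group structure of the number-conserving FGU ensemble, which is isomorphic to $U(N)$ acting on the $N$ fermionic modes. Under this action a Majorana operator transforms as $U\gamma_\mu U^\dagger=\sum_\nu R_{\mu\nu}\gamma_\nu$ with an orthogonal $R\in O(2N)$ that preserves the canonical complex structure, so the grading of the Clifford algebra by Majorana degree is preserved by the measurement channel $\mathcal{M}_{\mathrm{FGU}}$, and in particular the span of all $2k$-degree Majorana monomials is an invariant subspace.

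\textbf{Step 1: Diagonalize the measurement channel on $2k$-degree Majoranas.} First I would invoke Schur's lemma on this invariant subspace. Because the $U(N)$-representation on the span of $\{\Gamma_{\bm\mu^{(2k)}}\}$ decomposes into isotypic components on which $\mathcal{M}_{\mathrm{FGU}}$ must act as a scalar, the inverse channel takes the simple form $\mathcal{M}_{\mathrm{FGU}}^{-1}(\Gamma_{\bm\mu^{(2k)}})=c_k^{-1}\,\Gamma_{\bm\mu^{(2k)}}$. I would evaluate $c_k$ by computing $\mathcal{M}_{\mathrm{FGU}}(\Gamma_{(0,1,\ldots,2k-1)})$, whose Jordan-Wigner image $(-1)^k\prod_{j=0}^{k-1}(1-2n_j)$ is diagonal in the computational basis; averaging over the Haar measure on $U(N)$ and counting the fermionic occupation configurations that contribute yields $c_k=\binom{N}{k}/\binom{2N}{2k}$.

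\textbf{Step 2: Bound the second moment.} Substituting $\mathcal{M}_{\mathrm{FGU}}^{-1}(\Gamma_{\bm\mu^{(2k)}})=c_k^{-1}\Gamma_{\bm\mu^{(2k)}}$ into the definition of the shadow norm gives
\begin{equation*}
\|\Gamma_{\bm\mu^{(2k)}}\|^2_{\mathrm{FGU}}=c_k^{-2}\max_{\sigma}\mathbb{E}_{U\sim\mathrm{FGU}}\sum_z\langle z|U\sigma U^\dagger|z\rangle\,\langle z|U\Gamma_{\bm\mu^{(2k)}}U^\dagger|z\rangle^2,
\end{equation*}
and the remaining task is to show that the $U$-average is bounded by $c_k$ uniformly in $\sigma$. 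I would rewrite the expectation as $\mathrm{Tr}[\sigma M_{\bm\mu}]$ with the \emph{variance operator} $M_{\bm\mu}:=\mathbb{E}_U\sum_z \langle z|U\Gamma_{\bm\mu^{(2k)}}U^\dagger|z\rangle^2\,U^\dagger|z\rangle\langle z|U$, which is FGU-invariant by construction and therefore block-diagonal with respect to the same isotypic decomposition as in Step~1. Using $\Gamma_{\bm\mu^{(2k)}}^2=\1$ together with Weingarten calculus for $U(N)$ to collapse the fourth-moment integral into Wick contractions among the four Majorana bilinears, each block of $M_{\bm\mu}$ should satisfy $\|M_{\bm\mu}\|\leq c_k$, yielding $\|\Gamma_{\bm\mu^{(2k)}}\|^2_{\mathrm{FGU}}\leq c_k^{-1}=\binom{2N}{2k}/\binom{N}{k}=O(N^k)$.

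\textbf{Main obstacle.} The hardest step will be the uniform fourth-moment bound on $M_{\bm\mu}$. Because the number-conserving FGU is far from a unitary $3$-design on the full $2^N$-dimensional Hilbert space, generic Clifford-style averaging arguments do not apply, and the Weingarten sums over $U(N)$ grow combinatorially with $k$. I anticipate needing a careful bookkeeping of the non-crossing contraction patterns among the four Majoranas, with the saving from $O(N^{2k})$ down to the advertised $O(N^k)$ coming from the cancellation of crossing diagrams enforced by the particle-number selection rule that is the hallmark of the \emph{number-conserving} subgroup.
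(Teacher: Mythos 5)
The first thing to note is that the paper does not prove this lemma at all: it is imported verbatim from Zhao, Rubin, and Miyake~\cite{Zhao:2020vxp}, so there is no in-paper argument to compare against. Judged on its own merits, your outline identifies the correct two-stage structure (diagonalize $\mathcal{M}_{\mathrm{FGU}}$ on the degree-$2k$ Majorana subspace, then bound the variance operator), and your value $c_k=\binom{N}{k}/\binom{2N}{2k}$ is the right eigenvalue. But both stages have genuine gaps. In Step~1, Schur's lemma does not deliver a single scalar: the representation of the \emph{number-conserving} subgroup $U(N)\subset O(2N)$ on $\mathrm{span}\{\Gamma_{\bm\mu^{(2k)}}\}$ is reducible (it splits into charge sectors according to how many of the $2k$ Majoranas are resolved into creation versus annihilation operators), so a priori you only get a block-scalar action with sector-dependent eigenvalues. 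Concluding a degree-only-dependent eigenvalue requires either restricting attention to number-conserving observables or a direct combinatorial computation of the channel action, which is what the cited reference actually does. Your proposed evaluation on the single representative $\Gamma_{(0,\ldots,2k-1)}$ therefore does not suffice without first establishing that the eigenvalue is constant over the whole degree-$2k$ space.

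The more serious gap is Step~2, which you yourself flag as unresolved. The Weingarten/fourth-moment route you sketch is not known to close and is in any case far harder than necessary. The standard proof sidesteps it entirely by replacing the Haar-random ensemble with its discrete Clifford subgroup (signed mode permutations), under which $U\Gamma_{\bm\mu^{(2k)}}U^\dagger=\pm\Gamma_{\pi_U(\bm\mu)}$ is again a single Majorana monomial. Then $\langle z|U\Gamma_{\bm\mu^{(2k)}}U^\dagger|z\rangle^2\in\{0,1\}$ equals $1$ precisely when $\Gamma_{\pi_U(\bm\mu)}$ is a product of the $k$ diagonal pairs $\gamma_{2p}\gamma_{2p+1}=i(1-2n_p)$, \emph{independently of} $z$; hence
\begin{align}
\mathbb{E}_U\sum_z\langle z|U\sigma U^\dagger|z\rangle\,\langle z|U\Gamma_{\bm\mu^{(2k)}}U^\dagger|z\rangle^2
=\Pr_U\bigl[\pi_U(\bm\mu)\ \text{diagonal}\bigr]
=\binom{N}{k}\Big/\binom{2N}{2k},
\end{align}
uniformly in $\sigma$, and the shadow norm is $c_k^{-2}\cdot c_k=\binom{2N}{2k}/\binom{N}{k}$. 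No non-crossing-diagram cancellation or $3$-design property is needed; the entire bound reduces to counting the $\binom{N}{k}$ diagonal degree-$2k$ monomials among all $\binom{2N}{2k}$. Without this (or an equivalent resolution of your ``main obstacle''), the proposal does not constitute a proof.
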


Based on this shadow norm, we can state the lemma regarding as the query complexity of classical shadow tomography against $k$-RDM elements estimation,
\begin{lemma}
    Let $\varepsilon \in (0,1) $.
    For a state preparation oracle $U_\psi$, 
    the classical shadow tomography yields the estimators $\hat u^{\bm p}_{\bm q}, \hat v^{\bm p}_{\bm q} $ satisfying Eqs.~\eqref{eq:cond1}, \eqref{eq:cond2}, and \eqref{eq:cond3}, requiring at most
    $
        \varepsilon^{-2} \cdot \mqty(2N \\ 2k)\;\Big/\; \mqty(N \\ k)
        = \mac{O}(N^k) / \varepsilon^2.
    $
    queries for $U_\psi$.
\end{lemma}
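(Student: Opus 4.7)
The plan is to invoke the standard mean-squared-error guarantee of the classical shadow estimator together with the shadow-norm bound for $2k$-degree Majorana operators that was just stated. First I would rewrite each target Hermitian combination---$A^{\bm p}_{\bm q}+A^{\bm q}_{\bm p}$, $i(A^{\bm p}_{\bm q}-A^{\bm q}_{\bm p})$, and the diagonal $A^{\bm p}_{\bm p}$---as a real linear combination of $2k$-degree Majorana operators $\Gamma_{\bm \mu^{(2k)}}$, using the defining relations $\gamma_{2p}=a_p+a_p^\dagger$, $\gamma_{2p+1}=-i(a_p-a_p^\dagger)$. Each such combination expands into at most a $k$-dependent (but $N$-independent) number of Majorana monomials with coefficients of modulus $\mathcal{O}(1)$; this is the same reduction illustrated by the two explicit examples given in Sec.~\ref{sec:k-RDN-with-CS} for $a_1^\dagger a_0$ and $a_0^\dagger a_1$.

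Second, I would run the classical shadow protocol with the number-conserving fermionic-Gaussian-unitary ensemble $\mathrm{FGU}(N)$. Each of the $N_s$ snapshots consumes one call to $U_\psi$ (plus a random FGU circuit and a computational-basis measurement) and produces a classical description of $\rho=\ket{\psi}\bra{\psi}$ from which every target observable can be estimated in classical post-processing. Using the sample-mean estimator over $N_s$ independent snapshots, the variance of the estimator $\hat{o}$ for any traceless observable $O$ is bounded by $\|O\|_{\mathrm{FGU}}^2/N_s$. Combining this with the shadow-norm bound $\|\Gamma_{\bm \mu^{(2k)}}\|_{\mathrm{FGU}}^2 \le \binom{2N}{2k}\binom{N}{k}^{-1}$ from the previous lemma, and applying the triangle inequality on the Majorana expansion of Step~1, the MSE of each estimator $\hat{u}^{\bm p}_{\bm q}$, $\hat{v}^{\bm p}_{\bm q}$ is upper bounded by $C_k\binom{2N}{2k}\binom{N}{k}^{-1}/N_s$ for a constant $C_k$ depending only on $k$.

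Third, I would simply set $N_s = \varepsilon^{-2}\binom{2N}{2k}\binom{N}{k}^{-1}$ (absorbing $C_k$ into the $\mathcal{O}(\cdot)$ factor of the stated bound) so that every MSE is simultaneously at most $\varepsilon^2$. Crucially, the same set of $N_s$ snapshots is reused for \emph{all} $\mathcal{O}(N^{2k})$ target observables, which avoids any union-bound or $\log M$ overhead because the statement is phrased in terms of MSE rather than a simultaneous high-probability bound. The total query count to $U_\psi$ is therefore $N_s = \mathcal{O}(N^k)/\varepsilon^2$, matching the claim.

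The main technical point---rather than a true obstacle---is controlling the constants $C_k$ in the Majorana expansion and ensuring they remain independent of $N$; this reduces to counting the fixed number of Majorana monomials generated by expanding $a_{p_1}^\dagger\cdots a_{p_k}^\dagger a_{q_1}\cdots a_{q_k}$ via $\gamma$'s, which is at most $4^k$. A secondary subtlety is verifying that the shadow-norm bound for a single $\Gamma_{\bm \mu^{(2k)}}$ transfers to its linear combinations through the triangle inequality on $\sqrt{\mathrm{Var}[\cdot]}$; this is standard and introduces only the $k$-dependent constant already absorbed above.
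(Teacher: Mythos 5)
Your proposal follows essentially the same route as the paper: expand the Hermitian combinations in Majorana monomials, bound the shadow norm of the combination by the triangle inequality together with the FGU shadow-norm bound, and choose the number of snapshots so that the variance (equal to the MSE, since the estimator is unbiased) falls below $\varepsilon^2$, reusing the same snapshots for all observables. However, two points separate your argument from the paper's and affect whether you actually obtain the stated bound. First, the lemma claims the \emph{explicit} sample count $\varepsilon^{-2}\binom{2N}{2k}/\binom{N}{k}$ with no prefactor; the paper gets this by observing that $\|O_j\|\le 1$ forces the $1$-norm of the Majorana coefficients, $|h_\emptyset|+\sum_{l,\bm\mu}|h_{\bm\mu^{(2l)}}|$, to be at most $1$, so the triangle inequality yields a shadow norm bounded by $\max_{\bm\mu}\|\Gamma_{\bm\mu}\|_{\mathrm{FGU}}^2$ itself. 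Your version instead absorbs a $k$-dependent constant $C_k$ (up to $4^k$ terms with $O(1)$ coefficients) into a big-$O$, which proves only the asymptotic $O(N^k)/\varepsilon^2$ claim and not the prefactor-free bound that the paper's numerical evaluation (its Algorithm for the shadow query count) relies on; in fact, as the paper's explicit examples show, the coefficient $1$-norm is exactly $1$ for these observables, so the constant is not needed at all. Second, you assert that every target expands into $2k$-degree Majorana monomials; this fails whenever $\bm p$ and $\bm q$ share indices (including the diagonal terms $A^{\bm p}_{\bm p}$), where lower-degree monomials $\Gamma_{\bm\mu^{(2l)}}$ with $l<k$ and an identity component appear. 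The paper closes this by noting that $\|\Gamma_{\bm\mu^{(2l)}}\|^2_{\mathrm{FGU}}\le\binom{2N}{2k}/\binom{N}{k}$ for all $l\le k$, an observation you would need to add for your triangle-inequality step to go through for those terms.
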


\begin{proof}

We now show that the shadow norm of the target observables can be upper bounded by that of $2k$-Majorana operators. For $k$-RDM estimation, the relevant observables $\{O_j\}_{j=1}^M$—corresponding to $\bigl(A_{\bm q}^{\bm p} + A_{\bm p}^{\bm q}\bigr)$, $\,i\bigl(A_{\bm q}^{\bm p} - A_{\bm p}^{\bm q}\bigr)$, or $A_{\bm p}^{\bm p}$—can be written in the form
\begin{align}
    h_{\emptyset}  \cdot \1 + \sum_{l=1}^{k} \sum_{\bm \mu^{(2l)} } h_{\bm \mu^{(2l)}} \Gamma_{\bm \mu^{(2l)} },
\end{align}
where $h_{\bm{\mu}^{(2l)}}, h_{\emptyset} \in \mathbb{R}$ are coefficients, and $h_{\emptyset}\,\1$ represents the identity part, which does not affect the MSE. Because $\|O_j\| \le 1$, one finds
$\abs{h_{\emptyset} } + \sum_{l=1}^{k} \sum_{\bm \mu^{(2l)} } \abs{h_{\bm \mu^{(2l)}} } \leq 1$.
Additionally, the shadow norm obeys the triangle inequality. Therefore, the shadow norm of the above expression satisfies
\begin{align}
    \norm{ h_{\emptyset}  \cdot \1 + \sum_{l=1}^{k} \sum_{\bm \mu^{(2l)} } h_{\bm \mu^{(2l)}} \Gamma_{\bm \mu^{(2l)} }
    }_{\mtr{FGU}} &\leq \sum_{l=1}^{k} \sum_{\bm \mu^{(2l)} } \abs{h_{\bm \mu^{(2l)}} } \norm{\Gamma_{\bm \mu^{(2l)} } }_{\mtr{FGU}}
    \\ 
    \label{eq:second_ineq_eval}
    &\leq \sum_{l=1}^{k} \sum_{\bm \mu^{(2l)} } \abs{h_{\bm \mu^{(2l)}} } \mqty(2N \\ 2k)\;\Big/\; \mqty(N \\ k) \\
    \label{eq:third_ineq_eval} 
    &\leq \mqty(2N \\ 2k)\;\Big/\; \mqty(N \\ k) .
\end{align}
The first inequality arises from the triangle inequality, and the second uses the fact that $l \leq k \Rightarrow \norm{\Gamma_{\bm \mu^{(2l)} } }_{\mtr{FGU}} \leq \mqty(2N \\ 2k)\;\Big/\; \mqty(N \\ k)$
Hence, the shadow norms of all observables associated with $k$-RDM elements are bounded above by those of $2k$-degree Majorana operators. Consequently, to ensure that classical-shadow estimators achieve an MSE below $\varepsilon^2$, it suffices to take
$L = \varepsilon^{-2} \cdot \mqty(2N \\ 2k)\;\Big/\; \mqty(N \\ k)$
samples.
\end{proof}

Note that we cannot relax our estimate of the number of classical-shadow samples in Eqs.~\eqref{eq:second_ineq_eval} and \eqref{eq:third_ineq_eval}. For $k$-RDM estimation, there exists an observable involving only $2k$ Majorana operators, which can be written as
\begin{align}
    \label{eq:decompose_2k_majorana}
    \sum_{\bm \mu^{(2k)} } {h_{\bm \mu^{(2k)}} } {\Gamma_{\bm \mu^{(2k)} } },
\end{align}
where $\sum_{\bm \mu^{(2k)} } \abs{h_{\bm \mu^{(2k)}} }  =1$. For instance,
\begin{align}
    a_1^\dagger a_0 + a_0^\dagger a_1 
    &= \frac12\!\bigl(\Gamma_{(0,3)} \;-\;\Gamma_{(1,2)}\bigr),
    \\
    a_3^\dagger a_2^\dagger a_1 a_0 + a_1^\dagger a_0^\dagger a_3 a_2
    &= \frac{1}{8}\Bigl(
    \Gamma_{(0,2,4,6)}
    -\Gamma_{(0,2,5,7)}
    +\Gamma_{(0,3,4,7)}
    +\Gamma_{(0,3,5,6)}
    +\Gamma_{(1,2,4,7)}
    +\Gamma_{(1,2,5,6)}
    -\Gamma_{(1,3,4,6)}
    +\Gamma_{(1,3,5,7)}
    \Bigr).
\end{align}
The shadow norm of such observables satisfies 
\begin{align}
    \Bigl\|
    \sum_{\bm{\mu}^{(2k)}}
    h_{\bm{\mu}^{(2k)}}\,\Gamma_{\bm{\mu}^{(2k)}}
    \Bigr\|_{\mathrm{FGU}}
    \;\;\le\;\;
    \sum_{\bm{\mu}^{(2k)}} \bigl\lvert h_{\bm{\mu}^{(2k)}}\bigr\rvert
    \,\bigl\|\Gamma_{\bm{\mu}^{(2k)}}\bigr\|_{\mathrm{FGU}}
    \;\;=\;\;
     \bigl\|\Gamma_{\bm{\mu}^{(2k)}}\bigr\|_{\mathrm{FGU}}.
\end{align}
In the last equality, we use $\sum_{\bm \mu^{(2k)} } \abs{h_{\bm \mu^{(2k)}} }  =1$.
To ensure that the MSEs of estimators for this observable are all below $\varepsilon^2$, it is necessary that $L \geq \varepsilon^{-2} \cdot \norm{\Gamma_{\bm \mu^{(2k)} } } $ hold for all $\bm{\mu}^{(2k)}$ with nonzero coefficient $h_{\bm{\mu}^{(2k)}}$ in Eq.~\eqref{eq:decompose_2k_majorana}.

From the above discussion, we can numerically compute the query complexity of classical-shadow-based estimators. The pseudocode for this calculation is given below.

\begin{algorithm}[H]
    \caption{ Calculation of the total query complexity of classical shadow tomography for $k$-RDM. }
  \label{alg:calc-query_CS}
  \begin{algorithmic}[1]
  \Statex \textbf{Input:} Qubit number $N$, Parameter $k \in \{1,2,\ldots,N\} $ that determines the kind of target RDM, Target precision parameter $\varepsilon \in (0, 1)$. 
  \Statex \textbf{Output:} 
  Total query complexity $L$ for $U_\psi$ and its inverse, sufficient to produce an estimator $\hat{u}$ that approximates the real or imaginary components of the entire \(k\)-RDM with MSE \(\varepsilon^2\).
  \State \textbf{return} $L := \left\lceil \varepsilon^{-2} \cdot \mqty(2N \\ 2k) \Big/ \mqty(N \\ k) \right\rceil$
  \end{algorithmic}
\end{algorithm}

\subsection{$k$-RDM measurement with QAE algorithm with HL scaling }
\label{sec:k-RDM_with_QAE}

In this section, we evaluate the total query complexity of the QAE algorithm for fermionic $k$-RDM estimation, assuming that we use the improved QAE algorithm presented in Fig.~\ref{fig:heisenberg-limited_improved_circuit} in Sec.~\ref{sec:est_with_QAE}. Given block-encodings of observables associated with $k$-RDM elements and a state preparation oracle, we can compute the total query complexity of the QAE algorithm. Since the explicit construction of block-encodings is somewhat involved, we defer the details to Sec.~\ref{appsub:construction-kRDM}, and focus solely on evaluating the number of calls to $U_\psi$ and its inverse here. For an $N$-mode fermionic system, the number of target observables to perform $k$-RDM estimation is given by $\binom{N}{k} \times \binom{N}{k}$.
Therefore, we can directly derive the following lemma by using Lemma.~\ref{thm:query_comp_HL_amp_est}:
\begin{lemma}
    Let $\varepsilon \in (0,1) $.
    For a state preparation oracle $U_\psi$ and $\mqty(N \\ k)^2$ $a$-block-encodings for observables associated with $k$-RDM elements, 
    the HL amplitude estimation algorithm yields the estimators $\hat u^{\bm p}_{\bm q}, \hat v^{\bm p}_{\bm q} $ satisfying Eqs.~\eqref{eq:cond1}, \eqref{eq:cond2}, and \eqref{eq:cond3}, requiring ${\mac{O} \qty(N^{2k})}/ {\varepsilon}$
    queries for $U_\psi$.
\end{lemma}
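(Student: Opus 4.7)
The plan is to reduce the claim to a direct application of Lemma~\ref{thm:query_comp_HL_amp_est} (HL-scaling expectation-value estimation via QAE) to each observable associated with the $k$-RDM, followed by a counting argument. First I would enumerate the target Hermitian observables: for every pair of ordered multi-indices $(\bm p, \bm q) \in \{0,\dots,N-1\}^k \times \{0,\dots,N-1\}^k$ with strictly increasing entries there are $\binom{N}{k}$ choices of each, and the quantities to be estimated are $A^{\bm p}_{\bm q} + A^{\bm q}_{\bm p}$ and $i(A^{\bm p}_{\bm q} - A^{\bm q}_{\bm p})$ for $\bm p \neq \bm q$, together with the diagonal terms $A^{\bm p}_{\bm p}$. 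The total count is $2 \cdot \binom{\binom{N}{k}}{2} + \binom{N}{k} = \binom{N}{k}^2 = \mac{O}(N^{2k})$, which will appear as a multiplicative factor in the final query complexity.

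Second, for each such observable $O_j$, the hypothesis provides a controlled $a$-block-encoding $B_j$ along with the state-preparation oracle $U_\psi$. Its spectral norm is bounded by $1$ by construction (a normalization step is implicit when taking these linear combinations of creation/annihilation operator products; this is precisely the setup consistent with Problem~\ref{prob:query_estimation}). I would then invoke Lemma~\ref{thm:query_comp_HL_amp_est}, choosing the smallest integer $q$ such that $(\pi/2^q)^2 + \mac{O}(2^{-3q}) \le \varepsilon^2$; in the asymptotic regime $\varepsilon \ll 1$ this yields $q = \lceil \log_2(\pi/\varepsilon)\rceil$ and a per-observable query cost of $2^q + 1 = \mac{O}(1)/\varepsilon$ uses of $U_\psi$ and $U_\psi^\dagger$.

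Finally, because each of the $\binom{N}{k}^2$ observables is estimated independently through its own HL-QAE run, the total number of queries is the product
\begin{equation}
\mqty(N \\ k)^2 \cdot \mac{O}(1)/\varepsilon \;=\; \mac{O}(N^{2k})/\varepsilon,
\end{equation}
and the MSE bounds in Eqs.~\eqref{eq:cond1}--\eqref{eq:cond3} follow directly from the per-observable MSE guarantee of Lemma~\ref{thm:query_comp_HL_amp_est}. There is essentially no obstacle beyond bookkeeping; the only subtlety worth stating explicitly is that no union-bound amplification is required, since each estimator individually satisfies the MSE criterion and the problem asks only for the per-coordinate MSE rather than a joint high-probability bound.
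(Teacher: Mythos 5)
Your proposal is correct and follows essentially the same route as the paper: count the $\binom{N}{k}^2 = \mac{O}(N^{2k})$ observables associated with the $k$-RDM, apply Lemma~\ref{thm:query_comp_HL_amp_est} to each with $q = \lceil \log_2(\pi/\varepsilon)\rceil$ to get $\mac{O}(1)/\varepsilon$ queries per observable, and multiply. Your explicit remark that no union-bound amplification is needed (since the target is per-coordinate MSE) is a point the paper leaves implicit, but the argument is otherwise identical.
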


Additionally, we can numerically calculate the query complexity of QAE-based estimators. The pseudocode for this calculation is given in Algorithm~\ref{alg:calc-query_CS}.

\begin{algorithm}[H]
    \caption{ Calculation of the total query complexity for $k$-RDM estimation using QAE algorithm. }
  \label{alg:calc-query_CS}
  \begin{algorithmic}[1]
  \Statex \textbf{Input:} Qubit number $N$, Parameter $k \in \{1,2,\ldots,N\} $ indicating the locality of the RDM, target precision $\varepsilon \in (0, 1)$. 
  \Statex \textbf{Output:} 
  Total query complexity $L$ for $U_\psi$ and its inverse, sufficient to produce an estimator $\hat{u}$ that approximates the real or imaginary components of the entire \(k\)-RDM with MSE \(\varepsilon^2\).
  \State Set $q = \lceil\log(\pi/\varepsilon) \rceil$ and $M = \binom{N}{k}^2$.
  \State \textbf{return} $L \leftarrow  M(2^{q} +1)$, in accordance with Eq.~\eqref{eq:the_query_comp_HLamp}.
  \end{algorithmic}
\end{algorithm}

\subsection{$k$-RDM measurement with the QGE algorithm}
\label{sec:k-RDM-with-QGE}

In this section, we consider the property of observables corresponding to $k$-RDM elements, which leads to significant speedup for Method I proposed in the Sections \ref{subsec:Symmetry-QGE-algorithm}. Our main goal is to demonstrate that our proposal quadratically reduces total query complexity regarding the number of the target observables.

For $k$-RDMs elements estimation, Methods I and II can boost the efficiency of the QGE algorithm because fermionic pairs preserve the number conservation. Additionally, their nonzero elements on the computational basis are scarce, resulting in the reduction of $\norm{\sum O_j^2}$, which offers a significant benefit for the query complexity. Such an observation can be formally stated as the following lemma, using slightly abused notations of ${\rm Re}[A_{\bm p}^{\bm q}]\coloneqq (A_{\bm p}^{\bm q} + A_{\bm q}^{\bm p})/2, {\rm Im}[A_{\bm p}^{\bm q}]\coloneqq (A_{\bm p}^{\bm q} - A_{\bm q}^{\bm p})/2i$:
\begin{lemma}
    \label{lem:evaluation_norm}
    For $N$-mode fermionic system with $\eta$ particles~($k \leq \eta \leq N-k$), the 
    observables 
    $2 \mtr{Re}[A^{\bm p}_{\bm q}], 2 \mtr{Im}[A^{\bm p}_{\bm q}], A^{\bm p}_{\bm p} $ 
    satisfy the following inequality,
    \begin{align}
    \label{eq:eval_norm_kRDM}
        \norm{  \sum_{ \norm{\bm q}_1 < \norm{\bm p}_1   } \qty(2 \Pi_\eta \mtr{Re}[A^{\bm p}_{\bm q}] \Pi_\eta)^2 + \qty(2 \Pi_\eta \mtr{Im}[A^{\bm p}_{\bm q}] \Pi_\eta)^2 + \sum_{\bm p} (\Pi_\eta A^{\bm p}_{\bm p} \Pi_\eta )^2   }  
        &\leq 2 \mqty(\eta \\ k) \mqty(N - \eta  + k \\ k  ),
    \end{align}
    where $\Pi_\eta$ denotes the projection on $\eta$-particle subspaces. Here $\norm{\bm p}_1 \coloneq p_1 + p_2 + \ldots +p_k$ for an argument $\bm p = (p_1, p_2, \ldots,p_k)$.
\end{lemma}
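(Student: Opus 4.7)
The plan is to evaluate the operator inside the norm exactly on the $\eta$-particle Fock basis by exploiting that it is positive semidefinite and diagonal in the number-state basis. First, I would apply the elementary identity
\begin{equation*}
(X + X^\dagger)^2 + \bigl(-i(X - X^\dagger)\bigr)^2 = 2\bigl(X X^\dagger + X^\dagger X\bigr)
\end{equation*}
to each term with $X = A^{\bm p}_{\bm q}$ and $X^\dagger = A^{\bm q}_{\bm p}$, and then exploit that every $A^{\bm p}_{\bm q}$ is number-conserving (so it commutes with $\Pi_\eta$) to pull the projectors through the squares, giving $(\Pi_\eta X \Pi_\eta)^2 = \Pi_\eta X^2 \Pi_\eta$. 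Using $(A^{\bm p}_{\bm p})^2 = A^{\bm p}_{\bm p}$ (since $A^{\bm p}_{\bm p} = \prod_{i\in\bm p} n_i$ is a product of commuting projectors) and converting the sum over unordered pairs $\{\bm p,\bm q\}$ into a sum over ordered pairs $\bm p \neq \bm q$ via the swap-symmetry of $A^{\bm p}_{\bm q}A^{\bm q}_{\bm p} + A^{\bm q}_{\bm p}A^{\bm p}_{\bm q}$, the operator reduces to
\begin{equation*}
T = 2\,\Pi_\eta \Biggl(\sum_{\bm p \neq \bm q} A^{\bm p}_{\bm q}\, A^{\bm q}_{\bm p}\Biggr)\Pi_\eta + \Pi_\eta \Biggl(\sum_{\bm p} A^{\bm p}_{\bm p}\Biggr)\Pi_\eta.
\end{equation*}

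Second, I would show that $T$ is diagonal in the $\eta$-particle Fock basis. Each summand $A^{\bm p}_{\bm q}A^{\bm q}_{\bm p} = A^{\bm p}_{\bm q}(A^{\bm p}_{\bm q})^\dagger$ is positive semidefinite. A direct action shows that, for any Fock state $|S\rangle$ with $|S|=\eta$, the operator $A^{\bm q}_{\bm p}$ maps $|S\rangle$ (up to sign) to the Fock state labelled by $(S \setminus \bm p) \cup \bm q$ precisely when $\bm p \subseteq S$ and $\bm q \setminus \bm p \subseteq S^c$, and $A^{\bm p}_{\bm q}$ subsequently returns this back to $|S\rangle$. Positivity rules out a minus sign, so $A^{\bm p}_{\bm q}A^{\bm q}_{\bm p}|S\rangle = \mathbb{1}[\bm p \subseteq S,\ \bm q \setminus \bm p \subseteq S^c]\,|S\rangle$, and analogously $A^{\bm p}_{\bm p}|S\rangle = \mathbb{1}[\bm p \subseteq S]\,|S\rangle$.

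Third, I would count the contributions to the diagonal entry of $T$ at any fixed $|S\rangle$ with $|S|=\eta$. Decomposing $\bm q$ into $\bm q \cap \bm p$ (a $j$-subset of $\bm p$, chosen in $\binom{k}{j}$ ways) and $\bm q \setminus \bm p$ (a $(k-j)$-subset of $S^c$, chosen in $\binom{N-\eta}{k-j}$ ways), Vandermonde's identity yields $\sum_{j=0}^{k}\binom{k}{j}\binom{N-\eta}{k-j} = \binom{N-\eta+k}{k}$, so the number of ordered $k$-subset pairs $(\bm p,\bm q)$ with $\bm p \subseteq S$ and $\bm q \setminus \bm p \subseteq S^c$ is $\binom{\eta}{k}\binom{N-\eta+k}{k}$. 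Subtracting the $\binom{\eta}{k}$ diagonal cases $\bm p = \bm q$ from the off-diagonal sum and adding back the diagonal contribution gives
\begin{equation*}
T|S\rangle = \Bigl[\,2\tbinom{\eta}{k}\tbinom{N-\eta+k}{k} - \tbinom{\eta}{k}\Bigr]|S\rangle,
\end{equation*}
which is independent of $S$ and therefore equal to $\|T\|$. The claimed bound then follows from the trivial inequality $-\binom{\eta}{k} \leq 0$.

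The main obstacle is the correct combinatorial identification of the diagonal entries of $A^{\bm p}_{\bm q}A^{\bm q}_{\bm p}$: the non-vanishing condition is $\bm q \setminus \bm p \subseteq S^c$, which permits overlap between $\bm p$ and $\bm q$, rather than the naive disjointness $\bm p \cap \bm q = \emptyset$. This distinction is what allows Vandermonde's identity to produce the factor $\binom{N-\eta+k}{k}$ rather than $\binom{N-\eta}{k}$ and is therefore responsible for the symmetry-enhanced speedup. Tracking fermionic anticommutation signs along the composition is in principle delicate, but positivity of $A^{\bm p}_{\bm q}(A^{\bm p}_{\bm q})^\dagger$ fixes the sign without further calculation.
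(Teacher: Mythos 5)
Your proof is correct, and it takes a genuinely different route from the paper's. The paper expands $(2\mtr{Re}[A^{\bm p}_{\bm q}])^2$ and $(2\mtr{Im}[A^{\bm p}_{\bm q}])^2$ into explicit rank-one projectors in the occupation basis, counts the total number of surviving projectors globally (organized by the overlap size $m$ of $\bm p$ and $\bm q$), argues by symmetry that the sum is proportional to $\1_\eta$, divides by $\dim = \binom{N}{\eta}$, and then needs a separate Pochhammer/Vandermonde-convolution identity (Appendix~\ref{appsub:matrix_norm_eval}) to close the form. You instead use the operator identity $(X+X^\dagger)^2+(-i(X-X^\dagger))^2=2(XX^\dagger+X^\dagger X)$ to reduce everything to $2\sum_{\bm p\neq\bm q}A^{\bm p}_{\bm q}A^{\bm q}_{\bm p}+\sum_{\bm p}(A^{\bm p}_{\bm p})^2$, observe that each summand is diagonal in the Fock basis with $0/1$ entries (with positivity fixing the fermionic sign), and count the pairs $(\bm p,\bm q)$ contributing at a fixed $\ket{S}$, where the one-line Vandermonde $\sum_j\binom{k}{j}\binom{N-\eta}{k-j}=\binom{N-\eta+k}{k}$ appears directly. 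This buys three things: the proportionality to the identity becomes a computation rather than a symmetry heuristic, the heavier combinatorial identity of the appendix is bypassed, and you obtain the exact eigenvalue $2\binom{\eta}{k}\binom{N-\eta+k}{k}-\binom{\eta}{k}$, which is slightly sharper than the stated bound. Two cosmetic points: with the paper's operator ordering one has $A^{\bm p}_{\bm p}=(-1)^{k(k-1)/2}\prod_{i}n_{p_i}$, so $(A^{\bm p}_{\bm p})^2=A^{\bm p}_{\bm p}$ holds only up to that sign — but since only the square (whose diagonal entry is $\mathds{1}[\bm p\subseteq S]$) enters, your count is unaffected; and the identification of the sum over $\norm{\bm q}_1<\norm{\bm p}_1$ with one representative per unordered pair omits pairs with equal index sums, but this is the same convention the paper adopts and in any case only decreases the left-hand side, so the inequality is unharmed.
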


\begin{proof}
    We first consider two ordered arguments of distinct elements, \(\bm{p} = (p_1 < p_2 < \dots < p_k)\) and \(\bm{q} = (q_1 < q_2 < \dots < q_k)\), both subsets of the set \(\{0, 1, \dots, N-1\}\). Suppose that these arguments \(\bm{p}\) and \(\bm{q}\) share \(m\) elements, where \(0 \leq m \leq k\). To simplify the counting, we first select the \(k\) elements of argument \(\bm{p}\) from the full set \(\{0,1,\dots,N-1\}\) in \(\binom{N}{k}\) ways. 
    Next, among these \(k\) elements, we choose \(m\) elements to serve as the common elements with argument \(\bm{q}\), which can be done in \(\binom{k}{m}\) ways. Finally, having used \(k\) elements for \(\bm{p}\), there remain \(N-k\) elements, from which we must select the additional \(k-m\) elements unique to \(\bm{q}\), giving \(\binom{N-k}{k-m}\) possibilities. Multiplying these factors yields the total number of ordered pairs \((\bm{p},\bm{q})\) sharing exactly \(m\) elements:
    \begin{align}
        \mqty(N \\ k)\mqty(N- k \\ k - m) \mqty(k \\ m ). 
    \end{align}

For arguments \(\bm{p}\) and \(\bm{q}\) with \(m\) common elements, where \(m \neq k\), the actions of the operators \(\left(2 \mathrm{Re}[A^{\bm{p}}_{\bm{q}}]\right)^2\) or \(\left(2 \mathrm{Im}[A^{\bm{p}}_{\bm{q}}]\right)^2\) are expressed as follows:
\begin{align}
    \left(2 \mathrm{Re}[A^{\bm{p}}_{\bm{q}}]\right)^2 = \left(2 \mathrm{Im}[A^{\bm{p}}_{\bm{q}}]\right)^2 = \Biggl(
  \prod_{p_i \in \bm{p}} (\ketbra{1}{1})_{p_i}
  \prod_{q_j \in (\bm{q} \setminus \bm{p})} (\ketbra{0}{0})_{q_j}
  + \prod_{p_i \in (\bm{p} \setminus \bm{q})} (\ketbra{0}{0})_{p_i}
  \prod_{q_j \in \bm{q}} (\ketbra{1}{1})_{q_j}
\Biggr)
\otimes \bigotimes_{r \notin (\bm{p}  \cup\bm{q})} I_r,
\end{align}
where \((\ketbra{0})_i\) denotes the projection operator on the \({i}\)-th qubit. 
From this expression, we observe that \(\left(2 \mathrm{Re}[A^{\bm{p}}_{\bm{q}}]\right)^2\) and \(\left(2 \mathrm{Im}[A^{\bm{p}}_{\bm{q}}]\right)^2\) can be expressed as a linear combination of rank-1 projectors
    $\prod_{n=0}^{N-1} (\ketbra{b_n})_n,$
where \(b_n \in \{0,1\}\) for all \(n \in \{0, 1, \dots, N-1\}\).

Note that a projection onto a fixed \(\eta\)-particle subspace \(\Pi_\eta\) imposes the following restriction on the rank-1 projection:
\begin{align}
    \Pi_\eta \left( \prod_{n=0}^{N-1} (\ketbra{b_n})_n \right) \Pi_\eta =
    \begin{cases} 
        \prod_{n=0}^{N-1} (\ketbra{b_n})_n &\quad \text{if } \sum_{n=0}^{N-1} b_n = \eta, \\
        0 &\quad \text{otherwise.}
    \end{cases}
\end{align}
This indicates that
the number of rank-1 projection terms in \(\left( 2 \mathrm{Re}[A^{\bm{p}}_{\bm{q}}] \right)^2\) and \(\left( 2 \mathrm{Im}[A^{\bm{p}}_{\bm{q}}] \right)^2\) is reduced under the action of \(\Pi_\eta\).
To quantitatively examine such a reduction, consider a typical rank-1 projection term appearing in \(\left( 2 \mathrm{Re}[A^{\bm p}_{\bm q}] \right)^2\):
\begin{align}
    \prod_{p_i \in \bm{p} } (\ketbra{1}{1})_{p_i}
    \prod_{q_j \in (\bm q \setminus \bm p)} (\ketbra{0}{0})_{q_j} 
    \prod_{r \notin (\bm p \cup \bm q)} (\ketbra{b_r})_r,
\end{align}
where \(b_r \in \{0,1\}\). Under the projection \(\Pi_\eta\), the bit string \(\{b_r\}\) must satisfy
\(
\sum_{r} b_r = \eta - k,
\)
since \(|\bm p| = k\) $(\ketbra{1})_{p_i}$ have already appeared. Given that \(|\bm p \cup \bm q| = 2k - m\), the number of unfixed positions \(r \notin (\bm p \cup \bm q)\) is \(N - (2k - m)\). Therefore, the number of valid bitstrings \(\{b_r\}\) is
\(
\binom{N - (2k - m)}{\eta - k}.
\)
Hence, the number of rank-1 projection terms involved in 
\(\left( 2 \mathrm{Re}[A^{\bm p}_{\bm q}] \right)^2\) or \(\left( 2 \mathrm{Im}[A^{\bm p}_{\bm q}] \right)^2\) for fixed $(\bm p, \bm q)$ and $m$ is \(
2\binom{N - (2k - m)}{\eta - k}.
\)

Combining this with the number of configurations for choosing \(\bm{p}, \bm{q}\), the total number of rank-1 projection terms in \(\left( 2 \mathrm{Re}[A^{\bm{p}}_{\bm{q}}] \right)^2\) and \(\left( 2 \mathrm{Im}[A^{\bm{p}}_{\bm{q}}] \right)^2\) for \(\bm{p} \neq \bm{q}\) is given by
\begin{align}
    \label{eq:combination}
    4 \sum_{m=1}^k \binom{N}{k} \binom{N-k}{k - m} \binom{k}{m} \cdot \binom{N - (2k - m)}{\eta - k}.
\end{align}
Here, we consider the total number of rank-1 projection terms of $\sum_{\norm{\bm q}_1 < \norm{\bm p }_1 } \left( 2 \mathrm{Re}[A^{\bm p}_{\bm q}] \right)^2 + \left( 2 \mathrm{Im}[A^{\bm p}_{\bm q}] \right)^2$, and this number becomes half of Eq.~\eqref{eq:combination}.

In the special case where \(\bm p = \bm q\), corresponding to $m=k$, we consider the square of a operator $A^{\bm p}_{\bm p}$:
\begin{align}
    (A^{\bm p}_{\bm p})^2 = \prod_{p_i \in \bm p} (\ketbra{1}{1})_{p_i} \prod_{r \notin \bm p} (\ketbra{b_r})_r,
\end{align}
where again \(b_r \in \{0,1\}\) and the constraint \(\sum_r b_r = \eta - k\) must be satisfied. Hence, the number of valid configurations is
\(
\binom{N - k}{\eta - k}.
\) 
Hence, the number of rank-1 projection in $\sum_{\bm p} \qty([A^{\bm p}_{\bm p}]^2 )$ is expressed as 
\begin{align}
    \binom{N}{k} \binom{N-2k}{\eta-k}.
\end{align}

Combining both cases, the total number of $1$-rank projection appearing in 
\begin{align}
    \label{eq:target_combination_op}
    \sum_{\norm{\bm q}_1 < \norm{\bm p}_1   } \qty(2 \Pi_\eta \mtr{Re}[A^{\bm p}_{\bm q}] \Pi_\eta)^2 + \qty(2 \Pi_\eta \mtr{Im}[A^{\bm p}_{\bm q}] \Pi_\eta)^2 + \sum_{\bm p} (\Pi_\eta A^{\bm p}_{\bm p} \Pi_\eta )^2 
\end{align}
is expressed as 
\begin{align}
    \binom{N}{k} \binom{N-2k}{\eta-k}  + 2\sum^k_{m=1}  \mqty(N \\ k)\mqty(N- k \\ k - m) \mqty(k \\ m ) \cdot \mqty(N-(2k - m ) \\ \eta-k).
\end{align}
If we consider all pairs \((\bm{p}, \bm{q})\) where \(\|\bm{q}\|_1 < \|\bm{p}\|_1\), each rank-1 projection should appear the same number of times. From this insight, we observe that Eq.~\eqref{eq:target_combination_op} is proportional to the identity operator on the \(\eta\)-particle Hilbert space.
Since the number of choices of \(\{b_n\}\) satisfying \(\sum_{n=0}^{N-1} b_r = \eta\) is \(d_\eta \coloneq \binom{N}{\eta}\), which corresponds to the dimension of the \(\eta\)-particle subspace,
 Eq.~\eqref{eq:target_combination_op} can be expressed as
\begin{align}
    \binom{N}{\eta}^{-1} \left[ \binom{N}{k} \binom{N - 2k}{\eta - k} + 2 \sum_{m=1}^k \binom{N}{k} \binom{N - k}{k - m} \binom{k}{m} \cdot \binom{N - (2k - m)}{\eta - k} \right] \cdot \mathds{1}_\eta,
    \label{eq:operator_particle_number}
\end{align}
where \(\mathds{1}_\eta\) denotes the identity operator on the \(\eta\)-particle Hilbert space.

The spectral norm of Eq.~\eqref{eq:operator_particle_number} is equivalent to its coefficient. For simplicity, we approximate the upper bound of this value as follows:
\begin{align}
    \label{eq:complex_formula}
    2 \binom{N}{\eta}^{-1}\sum_{m=0}^{k} \binom{N}{k} \binom{N - k}{k - m} \binom{k}{m} \cdot \binom{N - (2k - m)}{\eta - k} = 2 \binom{\eta}{k} \binom{N-\eta + k}{k}.
\end{align}
Deriving the right-hand side of this equality is somewhat complex, so we defer the details to Appendix~\ref{appsub:matrix_norm_eval}, which completes the proof of Eq.~\eqref{eq:eval_norm_kRDM}.
\end{proof}

From Theorems \ref{thm:evaluation_symmetry_QGE}, \ref{thm:evaluation_symmetry_parallel_QGE} and Lemma ~\ref{lem:evaluation_norm}, we can state the total query complexities for this problem with our QGE algorithms as follows,

\begin{lemma}
    Let $\varepsilon \in (0,1) $.
    For an oracle $U_\psi$ generating a state supported on $\eta$-electron subspace and $\mqty(N \\ k)^2$ $a$-block-encodings for $k$-RDM elements, 
    Method I yields the estimators $\hat u^{\bm p}_{\bm q}, \hat v^{\bm p}_{\bm q} $ satisfying Eqs.~\eqref{eq:cond1}, \eqref{eq:cond2}, and \eqref{eq:cond3}, requiring 
    \begin{align}
        \mac{O} \qty( \sqrt{ \mqty(\eta \\ k)
        \mqty(N - \eta+k  \\ k) \log( \mqty(N \\ \eta ) ) } \log(\mqty(N \\ k ) )     ) \Big/ \varepsilon = \tilde{\mac{O}} \qty( \sqrt{ \mqty(\eta \\ k) \mqty(N - \eta+k  \\ k) 
        } )\Big/ \varepsilon 
    \end{align}
    queries for $U_\psi$ and its inverse.
\end{lemma}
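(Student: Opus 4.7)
The plan is to derive this as a direct corollary of Theorem \ref{thm:evaluation_symmetry_QGE} (Method~I) combined with the norm bound in Lemma~\ref{lem:evaluation_norm}. First, I would identify the relevant target observables as the set
$\{2\mtr{Re}[A^{\bm p}_{\bm q}],\; 2\mtr{Im}[A^{\bm p}_{\bm q}]\}_{\|\bm q\|_1<\|\bm p\|_1}\cup\{A^{\bm p}_{\bm p}\}_{\bm p}$, whose cardinality is $M=\mac{O}\bigl(\binom{N}{k}^2\bigr)$, and confirm that each has spectral norm bounded by $1$. The Hermitian combinations $\mtr{Re}[A^{\bm p}_{\bm q}]$ and $\mtr{Im}[A^{\bm p}_{\bm q}]$ are chosen so that estimating their expectation values immediately produces estimators $\hat u^{\bm p}_{\bm q},\hat v^{\bm p}_{\bm q}$ satisfying Eqs.~\eqref{eq:cond1}--\eqref{eq:cond3}.

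Next, I would specify the symmetry group $\mathcal{G}=U(1)$ acting via the total particle-number operator, so that every $k$-body fermionic operator $A^{\bm p}_{\bm q}$ is $\mathcal{G}$-symmetric and admits a direct-sum decomposition over $\eta'$-particle subspaces. Because $U_\psi$ prepares a state supported on the $\eta$-particle subspace, the relevant index set reduces to the singleton $\Delta=\{\eta\}$, and the dimension is $m_\Delta=\binom{N}{\eta}$. Under this choice, $\Pi_\Delta=\Pi_\eta$, and the quantities $O_j^{(\lambda)}$ appearing in Theorem~\ref{thm:evaluation_symmetry_QGE} become $\Pi_\eta O_j\Pi_\eta$ for $\lambda=\eta$.

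Then I would invoke Lemma~\ref{lem:evaluation_norm} to bound the spectral norm appearing inside the square root of Eq.~\eqref{eq:query_comp_of_methodI}:
\begin{equation}
\max_{\lambda\in\Delta}\Bigl\|\sum_{j=1}^M [O_j^{(\lambda)}]^2\Bigr\|\;\le\; 2\binom{\eta}{k}\binom{N-\eta+k}{k}.
\end{equation}
Substituting this bound, together with $\log m_\Delta=\log\binom{N}{\eta}$ and $\log M=\mac{O}\bigl(\log\binom{N}{k}\bigr)$, into Eq.~\eqref{eq:query_comp_of_methodI} yields the stated total query complexity. The final $\tilde{\mac{O}}$ form then follows by absorbing the $\log\binom{N}{\eta}$ and $\log\binom{N}{k}$ factors into the tilde notation.

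I do not expect significant obstacles beyond bookkeeping: the nontrivial content (the norm estimate and the general query complexity of Method~I) has already been established. The only subtle points are (i) ensuring that the construction of block-encodings of the Hermitian combinations $\mtr{Re}[A^{\bm p}_{\bm q}],\mtr{Im}[A^{\bm p}_{\bm q}]$ is compatible with the input $a$-block-encodings of $A^{\bm p}_{\bm q}$ (which is addressed in Sec.~\ref{appsub:construction-kRDM}), and (ii) verifying that the hypothesis $M\ge 2\log_2 d+24$ of Algorithm~\ref{alg:unified_framework} is satisfied for the asymptotically interesting regime; both are straightforward.
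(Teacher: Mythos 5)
Your proposal is correct and follows essentially the same route as the paper, which presents this lemma as an immediate consequence of Theorem~\ref{thm:evaluation_symmetry_QGE} and Lemma~\ref{lem:evaluation_norm} with the $U(1)$ particle-number symmetry, $\Delta=\{\eta\}$, $m_\Delta=\binom{N}{\eta}$, and the norm bound $2\binom{\eta}{k}\binom{N-\eta+k}{k}$ substituted into Eq.~\eqref{eq:query_comp_of_methodI}. Your additional bookkeeping (block-encodings of the Hermitian combinations, the hypothesis on $M$) only makes explicit what the paper leaves implicit.
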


\begin{lemma}
    Let $\varepsilon \in (0,1) $.
    For an oracle $U_\psi$ generating a state supported on $\eta$-electron subspace and $\mqty(N \\ k)^2$ $a$-block-encodings for $k$-RDM elements, 
    Method II yields the estimators $\hat u^{\bm p}_{\bm q}, \hat v^{\bm p}_{\bm q} $ satisfying Eqs.~\eqref{eq:cond1}, \eqref{eq:cond2}, and \eqref{eq:cond3}, requiring 
    \begin{align}
        \mac{O} \qty( \sqrt{ \mqty(\eta \\ k)
        \mqty(N - \eta+k  \\ k) \log( \mqty(N \\ \eta ) ) \log(\mqty(N \\ k ) ) }   ) \Big/ \varepsilon = \tilde{\mac{O}} \qty( \sqrt{ \mqty(\eta \\ k) \mqty(N - \eta+k  \\ k) 
        } )\Big/ \varepsilon 
    \end{align}
    queries for $U_\psi$ and its inverse.
\end{lemma}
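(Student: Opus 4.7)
The plan is to derive the stated complexity as an immediate corollary of Theorem on Method II (Sec.~\ref{subsec:Integral-QGE}) applied to the $k$-RDM setting, with the key observable-norm bound supplied by Lemma~\ref{lem:evaluation_norm}. First, I would identify the symmetry structure: since the state $U_\psi\ket{0}$ is supported on the $\eta$-particle subspace, we take $\mathcal{G}=U(1)$ (particle number conservation), and the relevant index subset is $\Delta=\{\eta\}$, so that $\Pi_\Delta=\Pi_\eta$ and $m_\Delta=m_\eta=\binom{N}{\eta}$. The $M$ target observables are the Hermitian combinations $2\mathrm{Re}[A^{\bm p}_{\bm q}]$, $2\mathrm{Im}[A^{\bm p}_{\bm q}]$ (for $\|\bm q\|_1<\|\bm p\|_1$), and $A^{\bm p}_{\bm p}$, which are all number-conserving and hence $U(1)$-symmetric; therefore each admits the required irreducible decomposition $O_j=\bigoplus_\lambda \mathds{1}\otimes O_j^{(\lambda)}$. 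Checking that conditions~\eqref{eq:cond1}, \eqref{eq:cond2}, \eqref{eq:cond3} correspond to $\mathrm{MSE}[\hat u_j]\leq\varepsilon^2$ reduces the problem to invoking Theorem~\ref{thm:evaluation_symmetry_parallel_QGE}.

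Next, I would substitute into the query-complexity bound~\eqref{eq:total_query_unified} of Method II. The two nontrivial inputs are the count $M=\mathcal{O}(\binom{N}{k}^2)$ of observables and the operator norm
\begin{equation}
\Big\|\sum_{j=1}^M [O_j^{(\eta)}]^2\Big\|=\Big\|\Pi_\eta\!\Big(\sum_j O_j^2\Big)\Pi_\eta\Big\|,
\end{equation}
which is exactly what Lemma~\ref{lem:evaluation_norm} bounds by $2\binom{\eta}{k}\binom{N-\eta+k}{k}$. Inserting this bound along with $\log m_\Delta=\log\binom{N}{\eta}$ and $\log M=\mathcal{O}(\log\binom{N}{k})$ into Eq.~\eqref{eq:total_query_unified} gives
\begin{equation}
\varepsilon^{-1}\cdot\mathcal{O}\!\left(\sqrt{\binom{\eta}{k}\binom{N-\eta+k}{k}\,\log\!\binom{N}{\eta}\,\log\!\binom{N}{k}}\right),
\end{equation}
matching the claim. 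The $\tilde{\mathcal O}$ form then follows by absorbing the logarithmic factors.

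The only non-routine step is ensuring that Lemma~\ref{lem:evaluation_norm}, proved for the full observable collection on the $\eta$-particle subspace, indeed applies to the collection $\{O_j^{(\eta)}\}$ indexed in the exact form assumed by Theorem~\ref{thm:evaluation_symmetry_parallel_QGE}. Because the irreducible block on the $\eta$-sector is unitarily equivalent to $\Pi_\eta O_j\Pi_\eta$ (acting on the multiplicity space), $\sum_j[O_j^{(\eta)}]^2$ and $\Pi_\eta(\sum_j O_j^2)\Pi_\eta$ have identical spectra, so Lemma~\ref{lem:evaluation_norm} transfers verbatim. The remaining bookkeeping---verifying $\|O_j\|\leq 1$ for the Hermitian combinations and counting pairs $(\bm p,\bm q)$ so that $M=\mathcal{O}(\binom{N}{k}^2)$---is straightforward, and I expect no substantive obstacle. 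The analogous statement for Method~I follows the same template with the slightly different prefactor from Eq.~\eqref{eq:query_comp_of_methodI}.
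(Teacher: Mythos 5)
Your proposal is correct and follows exactly the route the paper takes: the lemma is stated there as an immediate consequence of Theorem~\ref{thm:evaluation_symmetry_parallel_QGE} combined with the norm bound of Lemma~\ref{lem:evaluation_norm}, with $\Delta=\{\eta\}$, $m_\Delta=\binom{N}{\eta}$, and $M=\mathcal{O}\bigl(\binom{N}{k}^2\bigr)$ substituted into Eq.~\eqref{eq:total_query_unified}. Your additional remark identifying $\bigl\|\sum_j [O_j^{(\eta)}]^2\bigr\|$ with $\bigl\|\Pi_\eta(\sum_j O_j^2)\Pi_\eta\bigr\|$ is a correct and slightly more careful justification than the paper spells out.
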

Note that if \(k \le \eta \ll N\) and \(k \ll \eta \sim N\), then
$
\tilde{\mathcal{O}}\!\Bigl(\sqrt{\tbinom{\eta}{k}\,\tbinom{N - \eta + k}{k}}\Bigr) = 
\tilde{\mathcal{O}}(N^{k/2}).
$
Under these conditions, we therefore obtain an additional quadratic speedup in terms of the qubit number \(N\) for $k$-RDM elements estimation.

Next, we focus on numerically evaluating the total query complexities of our proposals for $k$-RDM estimation including the constant factors. The procedures described in Algorithms \ref{alg:query-symmetry-QGE} and \ref{alg:query-symmetry+parallel-QGE} provide detailed outlines of these calculations. As a subroutine, Algorithm~\ref{alg:func_degree_HS} describes a function named $\texttt{HS\_degree}$ which determines the minimum polynomial degree \(Q\) for optimal Hamiltonian simulation. The core principle, following Ref.~\cite{gilyen2019quantum}, is that, for a nonzero parameter \(t \in \mathbb{R}\), the minimal degree \(Q\) of a polynomial \(f(x)\) approximating \(e^{ixt}\) with accuracy \(\varepsilon''\) is given by
\cite{gilyen2019quantum}
\begin{align}
    Q = -1 + \min \left\{ l \in \mathbb{N} : \frac{4t^l}{2^l l!} \leq \frac{\varepsilon''}8 \right\}.
\end{align}
To find the smallest integer \(l \) satisfying $(4t^l)/(2^l l!) \leq \varepsilon''/8 $, we employ a binary search, as shown in Algorithm~\ref{alg:func_degree_HS}. In Algorithms \ref{alg:query-symmetry-QGE} and \ref{alg:query-symmetry+parallel-QGE}, we set \(c = {1}/{(80(1+\pi)^2}) \) to reduce \(q_{\max}\) for a given target precision \(\varepsilon\). By choosing this parameter, we slightly decrease \(q_{\max}\) from \(\lceil\log_2\!\bigl(1/\varepsilon\bigr)\rceil\) to \(\lceil\log_2\!\bigl(1/(\sqrt{2}\,\varepsilon)\bigr)\rceil\). To make this explicit, we refer to the analysis in Eq.~\eqref{eq:the_choice_of_c}.
When we define \(q \coloneq \lceil\log_2\bigl(1/\tilde{\varepsilon}\bigr)\rceil\) and let \(c = {1}/{(80(1+\pi)^2}) \), the MSE of the estimators satisfies
\begin{align}
    \mathbb{E}[(\hat u_j - \langle O_j\rangle)^2] &\leq \frac{1}{2^{2(q_{\max} + 1)}} + (1+\pi)^2 c\, 2^{-2 q_{\max}+1} \\
    &= \frac{11}{40} \cdot \frac{1}{2^{2q_{\max} }}  
    \leq \frac{11\tilde\varepsilon^2}{40}\sim \frac{\tilde\varepsilon^2}{4}.
\end{align}
Hence, by setting \(c\) in this manner, we can choose \(\tilde{\varepsilon} = \sqrt{40/11}\,\varepsilon \sim 2 \varepsilon\), which slightly reduces the overall query complexity.

\begin{algorithm}[H]
    \caption{ The function yielding the degree of the polynomial approximating Hamiltonian simulation \cite{gilyen2019quantum}. }
  \label{alg:func_degree_HS}
\begin{algorithmic}[1]
\Statex \textnormal{Input: time parameter $t$, and target precision parameter $\varepsilon'' \in (0, 1)$.}
\Statex \textnormal{Output: the minimum degree $Q$ of a polynomial $f(x)$ that approximates $e^{ixt}$ with accuracy $\varepsilon''$.}
\Function{\texttt{HS\_degree}}{$t,\varepsilon''$}
    \State $Q_{\mathrm{upp}} \gets t$, $Q_{\mathrm{low}} \gets t$
    \While{true}
        \State $Q_{\mathrm{upp}} \gets 2\cdot Q_{\mathrm{upp}}$
        \If{$\dfrac{4t^{\,Q_{\mathrm{upp}}}}{2^{\,Q_{\mathrm{upp}}}\,Q_{\mathrm{upp}}!} < \dfrac{\varepsilon''}{8}$}
            \State break
        \EndIf
    \EndWhile
    \While{true}
        \State $Q_{\mathrm{mid}} \gets \lceil (Q_{\mathrm{upp}}+Q_{\mathrm{low}})/2 \rceil$
        \If{$\dfrac{4t^{\,Q_{\mathrm{mid}}}}{2^{\,Q_{\mathrm{mid}}}\,Q_{\mathrm{mid}}!} < \dfrac{\varepsilon''}{8}$}
            \State $Q_{\mathrm{low}} \gets Q_{\mathrm{mid}}$
        \Else
            \State $Q_{\mathrm{upp}} \gets Q_{\mathrm{mid}}$
        \EndIf
        \If{$Q_{\mathrm{mid}} = \lceil (Q_{\mathrm{upp}}+Q_{\mathrm{low}})/2 \rceil$}
            \State break
        \EndIf
    \EndWhile
    \State \textbf{return} $Q = \lceil (Q_{\mathrm{upp}}+Q_{\mathrm{low}})/2 \rceil$
\EndFunction
\end{algorithmic}
\end{algorithm}

\begin{algorithm}[H]
    \caption{Numerical evaluation of the total query complexity of Method I for fermionic $k$-RDM estimation
    }
  \label{alg:query-symmetry-QGE}
  \begin{algorithmic}[1]
  \Statex \textbf{Input:} qubit number $N$, particle number $\eta$, parameter $k \in \{1,2,\ldots,N\} $, target precision parameter $\varepsilon \in (0, 1)$. 
  \Statex \textbf{Output:} 
  The total query complexity $L$ for $U_\psi$ and its inverse required to yield an estimator $\hat{u} = (\hat{u}_1, \dots, \hat{u}_M)$ whose $j$-th element estimates $\langle O_j \rangle := \langle \psi | O_j | \psi \rangle$ within MSE $\epsilon^2$ as 
  $$
  \max_{j=1,2,\dots,M} \mathbb{E}[(\hat{u}_j - \langle O_j \rangle)^2] \leq \epsilon^2
  $$
  \State Set $p=3, c = 1/(80(1+\pi)^2)$, $M = \mqty(N \\ k)^2$, and calculate the $\sigma_\Delta$ expressed in Eq.~\eqref{eq:def_sigma_Delta} for $k$-RDM elements. Explicitly,
      \begin{align}
          \sigma_\Delta = \left\lceil  \sqrt{4 v \mqty(\eta \\k ) \mqty(N-\eta +k \\ k) \log(2 \mqty(N \\ \eta ) /\delta')} + \frac43 \log(2 \mqty(N \\ \eta ) /\delta')  \right\rceil ,
      \end{align}
      where $\delta' = 2^{-10} $ and $v = \mathbb{E}[(2X)^2] = 0.1652$ comes from the variance of random variable $X$ with  $\forall j, \mtr{Pr} [X = x ] = \abs{c_{x}^{(\cos) } }^2  $.
  \For{$q = 0, 1, \dots, q_{\max} \coloneq \lceil \log_2(1/(\sqrt {40/11}\epsilon) ) \rceil$}
      \State Set 
      $\delta^{(q)} \leftarrow \frac{c}{8^{q_{\max} - q} }, \quad R^{(q)} \leftarrow \tilde{\Phi}^{-1}(\delta^{(q)} / 2M) ,$
      where $\tilde \Phi (R ) \coloneq 1- F(\lfloor (R+1)/2 \rfloor-1; 0.011+1/12, R)$ and $F(k; \mu, n)$ is the cumulative distribution function of a binomial distribution with success probability $\mu$ and $n$ trials.
      \State $t \leftarrow 2^{p+q+1}\sigma_\Delta$,   $Q \leftarrow \texttt{HS\_DEGREE} (t, \varepsilon'' = 2^{-14})$
      \State $L^{(q+1) } \leftarrow L^{(q)} + 2QR^{(q)} $
  \EndFor
  \State $L \leftarrow L^{(q_{\rm max}+1)}$
  \State \textbf{return} $L$
  \end{algorithmic}
\end{algorithm}

\begin{algorithm}[H]
    \caption{ Numerical evaluation of the total query complexity of Method II for fermionic $k$-RDM  estimation}
   \label{alg:query-symmetry+parallel-QGE}
  \begin{algorithmic}[1]
  \Statex \textbf{Input:} qubit number $N$, particle number $\eta$, parameter $k \in \{1,2,\ldots,N\} $, target precision parameter $\varepsilon \in (0, 1)$. 
  \Statex \textbf{Output:} 
  The total query complexity $L$ for $U_\psi$ and its inverse required to yield an estimator $\hat{u} = (\hat{u}_1, \dots, \hat{u}_M)$ whose $j$-th element estimates $\langle O_j \rangle := \langle \psi | O_j | \psi \rangle$ within MSE $\epsilon^2$ as 
  $$
  \max_{j=1,2,\dots,M} \mathbb{E}[(\hat{u}_j - \langle O_j \rangle)^2] \leq \epsilon^2
  $$
  \State Set $p=3, c = 1/(80(1+\pi)^2)$, and $M = \mqty(N \\ k)^2$.
  \For{$q = 0, 1, \dots, q_{\max} \coloneq \lceil \log_2(1/(\sqrt {40/11}\epsilon) ) \rceil$}
      \State Set 
      $\delta^{(q)} \leftarrow \frac{c}{8^{q_{\max} - q} }, \quad R^{(q)} \leftarrow \tilde{\Phi}^{-1}(\delta^{(q)} / 2M) ,$
      where $\tilde \Phi (R ) \coloneq 1- F(\lfloor (R+1)/2 \rfloor-1; 0.011, R)$ and $F(k; \mu, n)$ is the cumulative distribution function of a binomial distribution with success probability $\mu$ and $n$ trials.
      \State Calculate the $\tilde{\sigma}^{(q)}_\Delta$ for $k$-RDM elements expressed as,
      \begin{align}
          \tilde{\sigma}^{(q)}_\Delta = \left\lceil  \sqrt{4 v R^{(q)} \mqty(\eta \\k ) \mqty(N-\eta +k \\ k) \log(2 \mqty(N \\ \eta ) /\delta')} + \frac43 \log(2 \mqty(N \\ \eta ) /\delta') \right\rceil ,
      \end{align}
      where $\delta' = (\delta^{(q)})^2/80 $ and $v = \mathbb{E}[(2X)^2] = 0.1652$ comes from the variance of random variable $X$ with  $\forall j, \mtr{Pr} [X = x ] = \abs{c_{x}^{(\cos) } }^2  $.
      \State $t \leftarrow 2^{p+q+1}\tilde{\sigma}^{(q)}_\Delta$,  $Q \leftarrow \texttt{HS\_DEGREE} (t, \varepsilon'' = (\delta^{(q)})^2 / 2^{6})$
      \State $L^{(q+1) } \leftarrow L^{(q)} + 2Q $
  \EndFor
  \State $L \leftarrow L^{(q_{\rm max}+1)}$
  \State \textbf{return} $L$
  \end{algorithmic}
\end{algorithm}

\section{Discussion} \label{sec:discussion}
In this work, we have generalized the adaptive scheme of the QGE algorithm and presented two highly practical variants. The first variant, Method I, leverages the inherent symmetries of quantum systems and Method II adopts a framework that partially parallelizes the probe system during adaptive procedure  at the cost of additional qubits. We have shown that our proposals not only maintain the quantum enhancement achieved in the adaptive QGE algorithm, but also significantly reduce the total query complexity. As a result, our method not only delivers better asymptotic performance, but also offers tangible benefits for practical problems such as fermionic partial tomography. Given the central role the fermionic $k$-RDM play in analyzing quantum correlations, our approach holds promise for wide-ranging applications in fields such as condensed matter physics, quantum chemistry, and high-energy physics.

There are many potential avenues for future work. Firstly, it is of practical importance to understand how algorithmic errors and hardware-induced noise influence the algorithm’s performance. In particular, it remains unclear whether phase estimation calibration methods can be directly used to stabilize measurement outcomes~\cite{kimmel2015robust}.
Secondly, since the algorithm encodes observable information into quantum phases through Hamiltonian simulation, a natural theoretical question is whether it can benefit from fast-forwarding techniques—especially those capable of exponentially reducing the computational cost for commuting Hamiltonians~\cite{atia2017fast}.
Thirdly, we envision that it would be crucial for practitioners of quantum computing to perform in-depth resource estimation on early FTQC architecture.

\section*{Acknowledgements}
The authors wish to thank Dominic Berry, Yosuke Mitsuhashi, Takahiro Sagawa, and Kento Tsubouchi for fruitful discussions.
Y. K. is supported by the Program for Leading Graduate Schools (MERIT-WINGS).
K. W. was supported by JSPS KAKENHI Grant Number JP24KJ1963.
W. M. is supported by MEXT Quantum Leap Flagship Program
(MEXTQLEAP) Grant No. JPMXS0120319794,  the
JST COI-NEXT Program Grant No. JPMJPF2014,
the JST ASPIRE Program Grant No. JPMJAP2319,
and the JSPS
Grants-in-Aid for Scientific Research (KAKENHI) Grant
No. JP23H03819.
N.Y. is supported by JST Grant Number JPMJPF2221, JST CREST Grant Number JPMJCR23I4, IBM Quantum, JST ASPIRE Grant Number JPMJAP2316, JST ERATO Grant Number JPMJER2302, and Institute of AI and Beyond of the University of Tokyo.

\bibliography{ref}

\appendix

\section{Quantum arithmetic techniques}
\label{sec:Quantum_arithmetic}
In this section, we state key results related to  QSVT. By leveraging these QSVT techniques, we can explicitly construct novel QGE algorithms and thereby derive the total query complexity for $U_\psi$ and its inverse and other computational costs.

\begin{lemma}[Uniform singular value amplification \cite{gilyen2019quantum, low2017hamiltonian}]
\label{lem:uniform_amplification}
Let \(\gamma > 1\) and let \(\delta, \varepsilon \in (0,1/2)\). Suppose we have an \(a\)-block-encoding \(U\) of \(A\) with \(\|A\| \leq \frac{1-\delta}{\gamma}\). Then, one can implement a \((1, a+1, \varepsilon)\)-block-encoding of \(\gamma A\) using 
$
m = \mathcal{O}\!\left(\frac{\gamma}{\delta} \log\!\Bigl(\frac{\gamma}{\varepsilon}\Bigr)\right)
$
queries to \(U\) or \(U^\dagger\), \(2m\) controlled NOT gates with control on \(a\) qubits, \(\mathcal{O}(m)\) single-qubit gates, and \(\mathcal{O}(\mathrm{poly}(m))\) classical computation to determine the necessary circuit parameters.
\end{lemma}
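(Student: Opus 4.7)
The plan is to reduce the amplification task to a polynomial approximation problem on $[-1,1]$, and then invoke the standard QSVT machinery to realize that polynomial transformation on the singular values of $A$ encoded in $U$. Concretely, if $U$ is an $a$-block-encoding of $A$ and $P$ is an odd real polynomial satisfying $|P(x)|\le 1$ on $[-1,1]$, then QSVT produces, using $\deg(P)$ queries to controlled-$U$ and controlled-$U^\dagger$, an $(1,a+1,0)$-block-encoding of $P^{(SV)}(A)$ (the singular-value transform of $A$ by $P$). The circuit cost beyond the queries is $2\deg(P)$ reflections controlled on the $a$ block-encoding ancillas (realized by multi-controlled NOTs) together with $\mathcal{O}(\deg(P))$ single-qubit $Z$-rotations whose angles are determined classically from $P$ in time $\mathrm{poly}(\deg P)$.

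The central task is therefore to exhibit an odd polynomial $P_{\gamma,\delta,\varepsilon}$ that (i) is bounded by $1$ on $[-1,1]$, (ii) satisfies $|P_{\gamma,\delta,\varepsilon}(x)-\gamma x|\le \varepsilon$ on the interval $\bigl[-\tfrac{1-\delta}{\gamma},\tfrac{1-\delta}{\gamma}\bigr]$ where the singular values of $A$ live, and (iii) has degree $O\!\bigl(\tfrac{\gamma}{\delta}\log(\gamma/\varepsilon)\bigr)$. The standard route is to build $P$ from a smooth approximation of the piecewise-linear ``clipped amplifier'' $x\mapsto \min(\gamma x,1)$ glued to $\max(\gamma x,-1)$. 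One convenient realization is to take a suitable odd linear combination of shifted error-function approximations $\tfrac12(\mathrm{erf}(k(x+c))-\mathrm{erf}(k(x-c)))$: these can be approximated to additive error $\varepsilon$ on $[-1,1]$ by an odd polynomial of degree $O\!\bigl(k\log(1/\varepsilon)\bigr)$ via the Jacobi--Anger expansion and Chebyshev truncation, and choosing the sharpness $k=\Theta(1/\delta)$ ensures the transition region fits inside the boundary layer of width $\delta/\gamma$ near $\pm(1-\delta)/\gamma$. A small uniform rescaling by $1-O(\varepsilon)$ then enforces the boundedness constraint $|P|\le 1$ on $[-1,1]$ without spoiling (ii). This is essentially the construction given in Gilyén--Su--Low--Wiebe; I will quote their Lemma on polynomial approximation of $\gamma x$ if allowed, otherwise reproduce the Chebyshev/erf argument.

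Given such a $P$, one step of QSVT applied to the block-encoding $U$ yields a $(1,a+1,0)$-block-encoding of $P^{(SV)}(A)$. Because $P$ is $\varepsilon$-close to $\gamma x$ on an interval containing the spectrum $\|A\|\le (1-\delta)/\gamma$, the singular-value transform satisfies $\|P^{(SV)}(A)-\gamma A\|\le \varepsilon$, so the resulting circuit is a $(1,a+1,\varepsilon)$-block-encoding of $\gamma A$. Setting $m:=\deg(P)=O\!\bigl(\tfrac{\gamma}{\delta}\log(\tfrac{\gamma}{\varepsilon})\bigr)$, the QSVT template uses exactly $m$ (alternating) queries to $U$ and $U^\dagger$, $2m$ projector-controlled reflections (each a NOT controlled on the $a$-qubit flag register), $\mathcal{O}(m)$ single-qubit rotations for the QSP phases, and $\mathrm{poly}(m)$ classical work to compute those phases.

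The main obstacle is step (ii)--(iii): constructing an odd polynomial that is simultaneously $\varepsilon$-close to a linear amplifier of slope $\gamma$ on $[-(1-\delta)/\gamma,(1-\delta)/\gamma]$ and uniformly bounded by $1$ on $[-1,1]$, with the sharp degree $O\!\bigl(\tfrac{\gamma}{\delta}\log(\gamma/\varepsilon)\bigr)$. The uniformity constraint on $[-1,1]$ is what forces the $\gamma/\delta$ factor (the width of the transition region), while the $\log(\gamma/\varepsilon)$ factor comes from the Chebyshev truncation of the smoothing kernel. Once this polynomial is in hand, applying the generic QSVT construction and tallying resources is mechanical.
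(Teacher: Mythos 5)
The paper states Lemma~\ref{lem:uniform_amplification} without proof, importing it directly from the cited references (Gilyén et al.\ and Low--Chuang), so there is no in-paper argument to compare against. Your sketch correctly reproduces the standard proof from those references: an odd polynomial amplifier, built from erf/rectangle-function approximations, that is $\varepsilon$-close to $\gamma x$ on $[-(1-\delta)/\gamma,(1-\delta)/\gamma]$, bounded by $1$ on $[-1,1]$, of degree $m=\mathcal{O}\bigl(\tfrac{\gamma}{\delta}\log(\gamma/\varepsilon)\bigr)$, followed by the generic QSVT template whose resource tally ($m$ queries, $2m$ projector-controlled NOTs, $\mathcal{O}(m)$ single-qubit rotations, $\mathrm{poly}(m)$ classical phase computation) matches the lemma statement exactly.
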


\begin{lemma}[Optimal block-Hamiltonian simulation \cite{low2019hamiltonian}]
\label{lem:optimal_HS}
Let \(t \in \mathbb{R}\setminus\{0\}\) and \(\varepsilon'' \in (0,1)\), and suppose that \(U\) is a \((1, a, 0)\)-block-encoding of a Hamiltonian \(H\). Then, one can implement a \((1, a+2, \varepsilon'')\)-block-encoding of \(e^{itH}\) using \(4Q\) queries to the controlled version of \(U\) (or its inverse), \(2Q\) NOT gates controlled by \((a+1)\)-qubit registers, \(\mathcal{O}(Q)\) single-qubit or two-qubit gates, and \(\mathcal{O}(\mathrm{poly}(Q))\) classical computation to determine the circuit parameters, where
$
Q = \mathcal{O}\Bigl(|t| + \log\Bigl({1}/{\varepsilon''}\Bigr)\Bigr).
$
\end{lemma}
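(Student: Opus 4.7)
The plan is to reduce Hamiltonian simulation to a polynomial approximation problem on the spectrum of $H$, which can then be executed via quantum singular value transformation (QSVT) on the given block-encoding $U$. The key identity is the Jacobi--Anger expansion
\begin{equation*}
e^{itH}=J_{0}(t)\,\1+2\sum_{k=1}^{\infty}i^{k}J_{k}(t)\,T_{k}(H),
\end{equation*}
where the $T_{k}$ are Chebyshev polynomials of the first kind and the $J_{k}$ are Bessel functions of the first kind. Splitting into real and imaginary parts, both $\cos(tH)$ and $\sin(tH)$ admit uniformly convergent Chebyshev expansions whose tails are controlled by the standard bound $|J_{k}(t)|\le (|t|/2)^{k}/k!$.

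First I would fix the truncation degree $Q$ by demanding that the truncated Chebyshev series approximate $e^{itx}$ uniformly on $[-1,1]$ to additive precision $\varepsilon''$ in operator norm. Stirling's estimate applied to $(|t|/2)^{k}/k!$ shows that the tail decays super-geometrically once $k$ exceeds a fixed constant multiple of $|t|$, so the choice $Q=\mathcal{O}(|t|+\log(1/\varepsilon''))$ is sufficient. Let the resulting truncations be $P_{\cos}(x)$ and $P_{\sin}(x)$; each has degree at most $Q$, definite parity (even and odd respectively), and modulus bounded by $1$ on $[-1,1]$, possibly after an overall rescaling by $1/2$ which only affects constants absorbed into the final LCU step.

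Second I would invoke the QSVT construction to realize $P_{\cos}(H)$ and $P_{\sin}(H)$ on the top-left block of a unitary built from $\mathcal{O}(Q)$ interleaved calls to the controlled $U$, its inverse, and projector-controlled phase rotations parameterized by QSP angles. Each polynomial thereby becomes a $(1,a+1,0)$-block-encoding requiring $\mathcal{O}(Q)$ controlled queries to $U$ or $U^{\dagger}$. The classical precomputation of the $\mathcal{O}(Q)$ QSP phase angles from the Chebyshev coefficients of $P_{\cos}$ and $P_{\sin}$ is standard and runs in $\mathcal{O}(\mathrm{poly}(Q))$ time, which accounts for the classical-cost claim.

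Finally I would combine the two block-encodings into $P_{\cos}(H)+iP_{\sin}(H)\approx e^{itH}$ via a one-ancilla linear combination of unitaries: a Hadamard on a fresh ancilla, a controlled application of the two polynomial block-encodings carrying the appropriate relative phase $i$, and a second Hadamard yield a $(1,a+2,\varepsilon'')$-block-encoding of $e^{itH}$. The LCU doubles the per-polynomial query count, which matches the stated $4Q$ queries to the controlled $U$ or its inverse; the NOT gates controlled on $a+1$ qubits arise from the reflections about $\ket{\bm{0}}$ interleaved in the QSVT sequence, and the $\mathcal{O}(Q)$ single- and two-qubit gates come from the QSP rotations. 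The main obstacle is the tight bookkeeping of the Jacobi--Anger truncation, specifically pinning down the constant in $Q=\mathcal{O}(|t|+\log(1/\varepsilon''))$ precisely enough that no extra $\log\log$ factor appears and so that the cosine and sine truncations can be combined without inflating the block-encoding error beyond $\varepsilon''$; once this bookkeeping is in place, the remainder is a mechanical assembly of the QSVT and LCU primitives established in Ref.~\cite{low2019hamiltonian}.
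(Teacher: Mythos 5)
The paper does not prove this lemma at all: it is imported verbatim (with the constant $4Q$) from Ref.~\cite{low2019hamiltonian} and used as a black box, so there is no in-paper argument to compare against. Your sketch is the standard proof from that literature --- Jacobi--Anger truncation of $e^{itx}$ into even and odd Chebyshev pieces with tail controlled by $|J_k(t)|\le (|t|/2)^k/k!$, QSVT realization of each definite-parity polynomial with one query per degree, and a one-ancilla LCU to reassemble $P_{\cos}(H)+iP_{\sin}(H)$ --- and the truncation degree $Q=\mathcal{O}(|t|+\log(1/\varepsilon''))$ you obtain is correct (the sharper bound with the $\log/\log\log$ refinement is not needed here, since the lemma only claims an upper bound).

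The one place your bookkeeping does not close is the subnormalization. A Hadamard-based LCU of the two block-encodings produces a block-encoding of $\tfrac12\bigl(P_{\cos}(H)+iP_{\sin}(H)\bigr)\approx\tfrac12 e^{itH}$, i.e.\ a $(2,a+2,\cdot)$-block-encoding of $e^{itH}$, not the claimed $(1,a+2,\varepsilon'')$-block-encoding; saying that ``the LCU doubles the per-polynomial query count, which matches the stated $4Q$'' conflates two different factors of two. The extra queries in the actual construction come from removing this $1/2$ (equivalently, from realizing each \emph{real} definite-parity polynomial as $\tfrac12(P_{\Phi}+P_{-\Phi})$ inside QSVT, which already costs $2Q$ per piece before the final combination), or alternatively from a round of (robust oblivious) amplitude amplification; without one of these steps the output is not a valid block-encoding of the unitary $e^{itH}$ with subnormalization $1$. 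Since the statement is asymptotic in $Q$ this does not threaten correctness of the lemma, but if you want to reproduce the specific constants ($4Q$ controlled queries, $2Q$ multiply-controlled NOTs) you need to make this normalization-restoration step explicit rather than attributing the factor to the LCU alone.
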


\section{Classical shadow tomography}
\label{sec:classical-shadow-review}

In this section, we concisely review the 
classical shadow tomography~\cite{Huang:2020tih} and analyze its sample complexity in terms of achieving a desired MSE. The core idea of classical shadow tomography is to perform randomized measurement such that numerous (local and non-commuting) observables can be estimated efficiently than ordinary projective measurement protocols. Among various extensions, we later discuss the performance of classical shadow tomography tailored for fermionic systems in Sec.~\ref{sec:Application}.

Suppose $\rho$ is an $n$-qubit quantum state, and $\{O_1, \ldots, O_M\}$ is a set of $M$ traceless observables whose expectation values $\mathrm{tr}(O_1 \rho), ..., \mathrm{tr}(O_M \rho)$ we aim to estimate. 
The classical shadow tomography relies on a simple measurement primitive that consists of three steps: (i) for each preparation of \(\rho\), apply a unitary transformation \(\rho \mapsto U \rho U^\dagger\) with \(U\) randomly sampled from an ensemble \(\mathcal{U}\); (ii) perform a projective measurement in the computational basis \(\{|z\rangle \,|\, z \in \{0,1\}^n\}\); and (iii) record the measurement outcome $z$ along with the corresponding unitary \(U\).
With three steps combined all together, the process corresponds to the following quantum channel:
\begin{align}
\label{eq:shadow_channel}
    \mathcal{M}_\mathcal{U}(\rho) := \mathbb{E}_{U \sim \mac{U}, z \sim U \rho U^\dagger}[U^\dagger \ketbra{z} U],
\end{align}
where the measurement outcome $z$ follows the probability distribution determined from the Born's rule as 
$p_{z|U} = \bra{z} U \rho U^\dagger \ket{z}$. This is indicated shorthandly by the expression $z\sim U \rho U^\dagger.$


When the channel $\mathcal{M}_\mathcal{U}$ is invertible and efficiently expressed on a classical computer, we can have a classical description of the following state:
\begin{align}\label{eq:shadow_def}
    \hat{\rho}_{U, z} := \mathcal{M}_\mathcal{U}^{-1}(U^\dagger \ketbra{z} U),
\end{align}
where $\hat{\rho}$ is called {\it classical shadow}. From Eqs.~\eqref{eq:shadow_channel} and~\eqref{eq:shadow_def}, it is straightforward to see that mixture of classical shadows based on the Born's rule gives an unbiased estimator of $\rho$ as 
\begin{eqnarray}
    \rho = \mathbb{E}_{U \sim \mac{U}, z \sim U\rho U^\dagger}[\hat{\rho}_{U, z}]
\end{eqnarray}
Accordingly, we can estimate expectation values for any observable $O$:
\begin{align}
    \mathrm{tr}(O \rho) = \mathbb{E}_{U \sim \mathcal{U}, z \sim U \rho U^\dagger}[\mathrm{tr}(O \hat{\rho}_{U,z})]
\end{align}


Here we are interested in evaluating the sample complexity in terms of the MSE since it can bound the worst-case behavior of the algorithm. 
To this end, we require an upper bound on the variances of all estimators for \(\mathrm{tr}(O_j\,\rho)\). This can be derived from the shadow norm \cite{Huang:2020tih}, defined by
\begin{align}
    \|O_j\|_\mathcal{U}^2 
    \;\coloneq\; 
    \max_{\sigma}\,
    \mathbb{E}_{U \sim \mathcal{U},\, z \sim U \sigma U^\dagger}
    \Bigl[\bigl\langle z \bigr| U \,\mathcal{M}_{\mathcal{U}}^{-1}(O_j)\,U^\dagger \bigl| z \bigr\rangle^2\Bigr].
\end{align}
From definition, $\mathbb{E} [ \abs{\tr(O_j \hat{\rho} ) -\tr(O_j \rho)   }^2 ]\leq \norm{O_j}{\mac U}^2 $ holds.
Because a classical shadow yields an unbiased estimator, its MSE coincides with its variance. For this context, we gather \(N\) independent classical-shadow samples 
\(\{\hat{\rho}^{(l)}\}_{l=1}^N\) and define the empirical mean for each observable by
\begin{align}
    \hat{\omega}_j(N) \coloneq \frac{1}{N}\sum_{l=1}^N \mathrm{tr}\!\bigl(O_j\,\hat{\rho}^{(l)}\bigr).
\end{align}
One then finds, for all \(j \in \{1,\dots,M\}\),
\begin{align}
    \mathrm{MSE}\bigl[\hat{\omega}_j(N)\bigr] 
    \coloneq 
    \mathbb{E}\Bigl[\bigl(\hat{\omega}_j(N)\;-\;\mathrm{tr}\!\bigl(O_j\,\rho\bigr)\bigr)^2\Bigr]
    \;\le\; 
    \frac{\|O_j\|_{\mathcal{U}}^2}{N},
\end{align}
Thus, for a given \(\varepsilon \in (0,1)\), choosing $
N = 
{\max_{\,1 \le j \le M}\,\bigl\|O_j\bigr\|_{\mathcal{U}}^2}/{\varepsilon^2}
$
ensures that \(\mathrm{MSE}\bigl[\hat{\omega}_j(N)\bigr] \le \varepsilon^2\) for all \(j\in\{1,\dots,M\}.\)

\section{Subspace QSVT}
\label{sec:subspace_QSVT}
\textit{Quantum singular value transformation} (QSVT) \cite{gilyen2019quantum} is a powerful framework that enables an efficient implementation of matrix polynomials on a quantum computer. Given a block-encoding \( U_A \) of a matrix \( A \), one can systematically construct block-encodings for a variety of matrix polynomials \( f(A) \) by using the circuit expressed in Fig.~\ref{fig:description_QSVT}, even without any prior knowledge of the structure of \( U_A \) or \( A \). Recent studies, however, have demonstrated that focusing on low-energy subspaces can improve the query complexity of QSVT \cite{low2017hamiltonian, zlokapa2024hamiltonian}. Motivated by these insights, we propose the \textit{subspace QSVT} framework, which generalizes the application of QSVT from low-energy subspaces to more general subspaces. This approach has the potential to significantly improve QSVT's efficiency across a broad range of subspaces. Notably, this approach offers remarkable computational benefits for estimating expectation values of observables related to symmetry via the QGE algorithm, as we demonstrate in later sections.

\begin{figure}[t]
    \centering
    \includegraphics[width=0.8\linewidth]{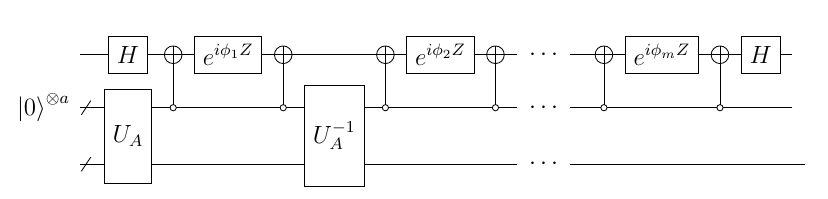}
    \caption{The QSVT circuit \(U_\Phi\) transforms a block-encoding \(U_A\) of \(A\) into a block-encoding of \(f(A)\), where \(f: [-1,1] \to [-1,1]\) is a definite-parity polynomial of degree \(m\). The phase angles \(\{\phi_i\}\) can be classically computed.}
    \label{fig:description_QSVT}
\end{figure}

For simplicity, we first consider the case where the target operator $O \in \mathbb{C}^{d \times d}$ is Hermitian, and later extend the discussion to more general operations. In particular, we consider the situation where \( O\) has the following direct sum decomposition:
\begin{align}
    O & = \bigoplus_{\lambda \in \Lambda} O^{(\lambda)}
    = \begin{pmatrix}
O^{(1)} & \bm O & \cdots & \bm O  \\
\bm O  & O^{(2)}  & \cdots & \bm O  \\
\vdots & \vdots & \ddots & \vdots \\
\bm O  & \bm O  & \cdots & O^{(\abs{\Lambda})}
\end{pmatrix}\,  
\end{align}
where we define \(\Lambda = \{1,2,\ldots,|\Lambda|\}\) as the label set, \(O^{(\lambda)} \in \mathbb{C}^{d_\lambda \times d_\lambda } \) is the summand corresponding to the label \(\lambda\), and \(\bm O\) denotes the zero matrix. With an appropriate choice of basis, we assume that there exists an \(a\)-block-encoding \(U_O\) such that
\begin{align}
    (\1 \otimes \bra{\bm 0}^{\otimes a} ) U_O  (\1 \otimes \ket{\bm 0}^{\otimes a} ) = O^{(1)} \oplus O^{(2)} \oplus \cdots \oplus O^{(\abs{\Lambda})}.
\end{align}
For an operator expressed as a direct sum, we now state the following lemma.
\begin{lemma}[Subspace QSVT]
    \label{lem:subspace-qsvt}
    Let $U_O$ be a $a$-block-encoding of Hermitian operator $O \in \mathbb{C}^{d \times d}$ satisfying $O =  \bigoplus_{\lambda \in \Lambda} O^{(\lambda)} $, where we define $\Lambda = \{1,2,\ldots, \abs{\Lambda} \} $ as a label set and $O^{(\lambda)} \in \mathbb{C}^{d_\lambda \times d_\lambda} $ is a corresponding summand.  We also assume $\Phi = (\phi_1, \ldots, \phi_m) \in \mathbb{R}^{m} $ be the sequence of phase factors such that $\mtr{QSP}(\Phi, x)$ computes the degree-$m$ polynomial $f(x)$. Then, the QSVT circuit $U_\Phi$ described in Fig.~\ref{fig:description_QSVT} for $U_O$ is a $(a+1)$-block encoding of  
    \begin{align}
        \bigoplus_{\lambda \in \Lambda } f(O^{(\lambda)}  )
        = \begin{pmatrix}
f\qty( O^{(\lambda_1)} ) & \bm O & \cdots & \bm O  \\
\bm O  & f\qty( O^{(\lambda_2)}  ) & \cdots & \bm O  \\
\vdots & \vdots & \ddots & \vdots \\
\bm O  & \bm O  & \cdots & f \qty( O^{(\abs{\Lambda})} )
\end{pmatrix}\,
    \end{align}
\end{lemma}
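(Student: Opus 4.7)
The plan is to reduce the claim to the standard QSVT theorem combined with the functional calculus for Hermitian operators with a direct-sum structure. First, I would recall the standard QSVT statement: for an $a$-block-encoding $U_O$ of a Hermitian operator $O$ and a phase sequence $\Phi=(\phi_1,\ldots,\phi_m)$ realizing a definite-parity polynomial $f$ of degree $m$, the circuit $U_\Phi$ depicted in Fig.~\ref{fig:description_QSVT} is an $(a+1)$-block-encoding of $f(O)$, where $f(O)$ is defined via the spectral (functional) calculus applied to the eigendecomposition of $O$. In particular, the action of $U_\Phi$ on the encoded subspace depends only on the spectral data of $O$ and not on the internal basis labeling.

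Next, I would exploit the assumed direct-sum decomposition $O=\bigoplus_{\lambda\in\Lambda}O^{(\lambda)}$. Since each summand is Hermitian, the functional calculus commutes with the direct sum: for any polynomial $f$, one has
\begin{equation}
    f(O) = f\!\left(\bigoplus_{\lambda\in\Lambda}O^{(\lambda)}\right) = \bigoplus_{\lambda\in\Lambda} f(O^{(\lambda)}).
\end{equation}
This identity follows immediately by diagonalizing each $O^{(\lambda)}$ independently and noting that monomials $O^k$ act block-diagonally, hence so does any polynomial, and the mixed off-diagonal blocks remain zero throughout the QSVT recursion. Combining this observation with the standard QSVT guarantee then yields the claim: the QSVT circuit $U_\Phi$ applied to $U_O$ is an $(a+1)$-block-encoding of $\bigoplus_\lambda f(O^{(\lambda)})$.

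The only step that requires any care is to verify that the block-structure hypothesis on $U_O$—namely, that the top-left $a$-ancilla block is exactly the direct sum $\bigoplus_\lambda O^{(\lambda)}$ in the chosen basis—is preserved under the QSVT construction. Since QSVT operates by alternating the (controlled) oracle $U_O$, its inverse, and phase rotations on the single signal qubit, and none of these operations mix the direct-sum blocks of the system register, the resulting transformation is again block-diagonal with respect to the decomposition labeled by $\lambda$. This is the main (though minor) obstacle: showing rigorously that the ancilla-conditioned dynamics factor through each subspace. One can make this explicit either by direct induction on the depth $m$ of the QSVT sequence, tracking that after every layer the encoded operator remains of the form $\bigoplus_\lambda P_k(O^{(\lambda)})$ for some polynomial $P_k$, or more concisely by invoking the spectral argument above. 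Either route completes the proof.
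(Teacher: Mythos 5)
Your proposal is correct and follows essentially the same route as the paper: invoke the standard QSVT guarantee to obtain an $(a+1)$-block-encoding of $f(O)$, then observe that the polynomial functional calculus commutes with the direct sum, $f\bigl(\bigoplus_\lambda O^{(\lambda)}\bigr)=\bigoplus_\lambda f(O^{(\lambda)})$, which the paper establishes by writing the eigendecomposition of $O$ as a direct sum of the eigendecompositions of the summands. Your extra remark about block-structure preservation through the QSVT layers is subsumed by the spectral argument, exactly as in the paper.
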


\begin{proof}
From the direct sum decomposition, we can derive the eigenvalue decomposition of each $d_\lambda$-dimensional operator $O^{(\lambda)}$ and denote it as $O^{(\lambda)} = U^{(\lambda)} \Sigma^{(\lambda)} (U^{(\lambda)})^\dagger$ where $U^{(\lambda)}$ is a $d_\lambda \times d_\lambda$ unitary matrix. Since the $O^{(\lambda)}$ is Hermitian, there exists an eigenvalue decomposition $O^{(\lambda)} = U^{(\lambda)} \Sigma^{(\lambda)} (U^{(\lambda)})^\dagger$ where $U^{(\lambda)}$ is a unitary matrix and $\Sigma^{(\lambda)}$ is a diagonal matrix. This eigenvalue decomposition leads to 
\begin{align}
     O 
     = \bigoplus_{\lambda \in \Lambda} U^{(\lambda)}  \Sigma^{(\lambda) \dagger} U^{(\lambda)}.
\end{align}
Here, we define $U \coloneq \bigoplus_{\lambda \in \Lambda} U^{(\lambda)} $, and it is trivial to show that $U$ is unitary. Since $\bigoplus_{\lambda \in \Lambda} \Sigma^{(\lambda)} $ is a diagonal matrix, we can consider the eigenvalue decomposition of $O$ is equivalent to $U \bigoplus_{\lambda \in \Lambda} \Sigma^{(\lambda)}   (U)^\dagger $.

Given $\Phi$ related to the degree-$m$ polynomial $f(x)$ and a $a$-block-encoding $U_O$ of Hermitian operator $O$, QSVT circuit $U_\Phi$ \cite{gilyen2019quantum} provides a method for implementing 
\begin{align}
    f^{\mtr{(SV)}} (O) \coloneq U f \qty(  \bigoplus_{\lambda \in \Lambda } \Sigma^{(\lambda)}   ) U^\dagger.
\end{align}
Since $f$ is a polynomial, we can demonstrate that
\begin{align}
    U f \qty(  \bigoplus_{\lambda \in \Lambda } \Sigma^{(\lambda)}) U^\dagger &= f  \qty( U  \qty( \bigoplus_{\lambda \in \Lambda }  \Sigma^{(\lambda)} )  U^\dagger ) \\
    &= f \qty(\bigoplus_{\lambda \in \Lambda} U^{(\lambda)} \Sigma^{(\lambda)} (U^{(\lambda)})^\dagger )  \\
    &= f \qty( \bigoplus_{\lambda \in \Lambda} O^{(\lambda)} ) = \bigoplus_{\lambda \in \Lambda } f(O^{(\lambda)} ).
\end{align}
Therefore, $( \1 \otimes \bra{\bm 0} ) U_\Phi ( \1 \otimes \ket{\bm 0} )$ is $(a+1)$-block-encoding of $\bigoplus_{\lambda \in \Lambda } f(O^{(\lambda)} )  $.
\end{proof}

\begin{Remark}
    This technique is derived directly from a primitive construction of a QSVT circuit. Nonetheless, we restate this lemma because it reveals that the conditions for eigenvalues to successfully implement QSVT can be eased when our interest is confined to certain subspaces—especially when the target operator is represented as a direct sum. 
    In other words, by using prior knowledge of the operator's structure, such as its symmetry or low-energy domain \cite{low2017hamiltonian, zlokapa2024hamiltonian}, we may employ lower degrees of polynomial to approximate the desired transformation.
    Indeed, the significant advantages of this technique are concretely illustrated as examples in Secs.~\ref{subsec:Symmetry-QGE-algorithm} and \ref{sec:k-RDM-with-QGE}.
\end{Remark}

In the above lemma, we have focused on the case where the target operator is Hermitian. Here, we show a more general version to transform the singular values of block-encoded operators that have a direct sum structure.
Let $U_{A}$ be an $a$-block-encoding of an operator $A \in \mathbb{C}^{d' \times d}$. Here, we assume that 
$A$ has the following direct sum decomposition:
\begin{align}
    A = \bigoplus_{\lambda \in \Lambda} A^{(\lambda)} 
    =
    \begin{pmatrix}A_{1} & \bm O & \cdots & \bm O  \\
\bm O  & A_{2}  & \cdots & \bm O  \\
\vdots & \vdots & \ddots & \vdots \\
\bm O  & \bm O  & \cdots & A_{\abs{\Lambda}}
\end{pmatrix}\, 
 \end{align}
 where we define \(\Lambda = \{1,2,\ldots,|\Lambda|\}\) as the label set, and \(A^{(\lambda)} \in \mathbb{C}^{d_\lambda' \times d_\lambda} \) is the summand corresponding to the label \(\lambda\).   
For each $A^{(\lambda)}$, we write its singular value decomposition as
\begin{equation}
    A^{(\lambda)} = \sum_{k=1}^{\min(d_{\lambda}', d_\lambda)} \varsigma_{k}^{(\lambda)} 
    |\widetilde{\psi}^{(\lambda)}_k\rangle\langle{{\psi}^{(\lambda)}_k}|.
\end{equation}
Using this notation, we have
\begin{equation}
    A=\bigoplus_{\lambda\in \Lambda} A^{(\lambda)} =\bigoplus_{\lambda\in \Lambda} \left[\sum_{k=1}^{\min(d_{\lambda}', d_\lambda)} \varsigma_{k}^{(\lambda)} |\widetilde{\psi}^{(\lambda)}_k\rangle\langle{{\psi}^{(\lambda)}_k}| \right]
\end{equation}
and for $\Pi:=\ketbra{0}^{\otimes a}\otimes \1_d $ and $\Pi':=\ketbra{0}^{\otimes a}\otimes \1_{d'} $,
\begin{equation}
    \Pi' U_{A} \Pi = \ketbra{0}^{\otimes a}\otimes \bigoplus_{\lambda\in \Lambda} \left[\sum_{k=1}^{\min(d_{\lambda}', d_\lambda)} \varsigma_{k}^{(\lambda)}  |\widetilde{\psi}^{(\lambda)}_k\rangle\langle{{\psi}^{(\lambda)}_k}| \right]
\end{equation}
This corresponds to the singular value decomposition of $\Pi' U_{A} \Pi$, and thus the corresponding QSVT circuit $U_{\Phi}$ for a polynomial $f$ satisfies
\begin{align}
    \Pi' U_{\Phi} \Pi &= f^{(\rm SV)}\left(\Pi' U_{A} \Pi\right)=\begin{cases}
        \ketbra{0}^{\otimes a} \otimes \bigoplus_{\lambda\in \Lambda} \left[\sum_{k=1}^{\min( d_{\lambda}', d_\lambda) } f(\varsigma_{k}^{(\lambda)}) |\widetilde{\psi}^{(\lambda)}_k\rangle\langle{{\psi}^{(\lambda)}_k}|\right],~~~\mbox{if}~N~\mbox{is~odd},\\
        \ketbra{0}^{\otimes a} \otimes \bigoplus_{\lambda\in \Lambda}  \left[\sum_{k=1}^{\min( d_{\lambda}', d_\lambda)} f(\varsigma_{k}^{(\lambda)}) |{\psi}^{(\lambda)}_k\rangle\langle{{\psi}^{(\lambda)}_k}\right],~~~\mbox{if}~N~\mbox{is~even}
    \end{cases}
\end{align}
from Theorem~17 in Ref.~\cite{gilyen2019quantum}.
This immediately leads to 
\begin{align}\label{eq:general_symmetric_QSVT}
    \qty(\bra{0}_a \otimes \1 ) U_{\Phi} \qty( \ket{0}_a\otimes \1)
    & =
    \begin{cases}
        \bigoplus_{\lambda\in \Lambda} \left[\sum_{k=1}^{\min( d_{\lambda}', d_\lambda) } f(\varsigma_{k}^{(\lambda)}) |\widetilde{\psi}^{(\lambda)}_k\rangle\langle{{\psi}^{(\lambda)}_k}|\right],~~~\mbox{if}~N~\mbox{is~odd},\\
        \bigoplus_{\lambda\in \Lambda}  \left[\sum_{k=1}^{\min( d_{\lambda}', d_\lambda)} f(\varsigma_{k}^{(\lambda)}) |{\psi}^{(\lambda)}_k\rangle\langle{{\psi}^{(\lambda)}_k}\right],~~~\mbox{if}~N~\mbox{is~even}
    \end{cases}\notag\\
    &=\bigoplus_{\lambda\in \Lambda} \left[f^{(\rm SV)}(A^{(\lambda)}) \right].
\end{align}
Note that if $A^{(\lambda)}$ is Hermitian, $f^{(\rm SV)}(A^{(\lambda)})=f(A^{(\lambda)})$ holds, and thus, Eq.~\eqref{eq:general_symmetric_QSVT} recovers Lemma~\ref{lem:subspace-qsvt}. 


As the application of subspace QSVT and the key technique in the adaptive QGE algorithm enhanced by symmetry condition, the singular value amplification on specific subspaces is used in Section \ref{subsec:Symmetry-QGE-algorithm}. In what follows, we show the details of proof of Lemma \ref{lem:Amplitude_Amp_subspace}.

\noindent
\textit{Proof of Lemma \ref{lem:Amplitude_Amp_subspace}.} The uniform singular value amplification circuit is designed to implement the linear function $P(x) = \gamma \cdot x$ with $\varepsilon$-precision on the domain $[-(-1+\delta)/\gamma, (1-\delta)/\gamma]$ \cite{gilyen2019quantum}. By applying this circuit to $U_O$ and Lemma \ref{lem:subspace-qsvt}, we obtain an $(a+1)$-block-encoding of $\tilde{O}$ satisfying
    \begin{align}
        \tilde{O} = \bigoplus_{\lambda \in \Lambda} P(O^{(\lambda)}).
    \end{align}
    Restricting to the subspace \(\Delta \subseteq \Lambda\) via the projector \(\Pi_\Delta\), we have $\Pi_\Delta \tilde{O} \Pi_\Delta = \bigoplus_{\lambda \in \Delta} P(O^{(\lambda)})$. Thus, if all singular values of \(\bigoplus_{\lambda \in \Delta} O^{(\lambda)}\) lie within the domain \(\left[-\frac{1-\delta}{\gamma}, \frac{1-\delta}{\gamma}\right]\) (i.e., if \(\|\Pi_\Delta O \Pi_\Delta\| \leq \frac{1-\delta}{\gamma}\)), then the amplified operator \(\Pi_\Delta \tilde{O} \Pi_\Delta\) approximates \(\Pi_\Delta (\gamma O) \Pi_\Delta\) with \(\varepsilon\)-precision. 
\hfill \hfill \qed

\section{Technical lemmas}
In this section, we provide several technical but less central facts. Section~\ref{appsub:proof_matrix_ineq_subgaussian} presents an explicit proof of the matrix Bernstein inequality. Section~\ref{appsub:eval_tail_prob} derives a tighter probability bound for the median. Section~\ref{appsub:construction-kRDM} describes the implementation of block-encodings for observables associated with $k$-RDM elements. Finally, Section~\ref{appsub:matrix_norm_eval} provides a brief proof of the equality in Eq.~\eqref{eq:complex_formula}.

\subsection{The proof of Eq.\eqref{eq:matrix_inequality_for_subgaussian} }
\label{appsub:proof_matrix_ineq_subgaussian}

In this section, we show a detailed proof of the Bernstein inequality for a sum of Hermitian operators with independent, symmetrically distributed random variable coefficients based on Theorem 6.6.1 in Ref.~\cite{tropp2015introduction} and Theorem 35 in Ref.~\cite{apeldoorn2023quantum}. Here, for $d \times d$ matrix $A$ and $B$, we denote $B \geq A$ when $B - A$ is positive semidefinite.  

To prove the Bernstein inequality, we state some results from Ref.~\cite{tropp2015introduction},
\begin{lemma}[Transfer Rule, Proposition 2.1.4 in Ref.~\cite{tropp2015introduction}]
    \label{lem:Transfer_Rule}
    Let $f$ and $g$ be real-valued functions defined on $\mathbb{R} $, and let $A$ be a Hermitian matrix whose eigenvalues are contained in $I \subseteq \mathbb{R}$. Then, $\forall a \in I, f(a) \leq g(a) \Rightarrow f(A) \leq g(A)$.
\end{lemma}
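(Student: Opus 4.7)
The plan is to prove the Transfer Rule by appealing directly to the spectral calculus for Hermitian matrices. The key observation is that the functional calculus reduces operator inequalities to scalar inequalities evaluated on the spectrum, which is exactly what the hypothesis provides.

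First, I would invoke the spectral theorem to write $A = \sum_{i} \lambda_i \ketbra{v_i}$, where $\{\lambda_i\} \subseteq I$ are the real eigenvalues of $A$ and $\{\ket{v_i}\}$ is an orthonormal eigenbasis. By definition of the functional calculus for a real-valued function $h$ on $\mathbb{R}$, we have
\begin{align}
    h(A) = \sum_i h(\lambda_i)\,\ketbra{v_i}.
\end{align}
Applying this to both $f$ and $g$ separately, I would then form the difference
\begin{align}
    g(A) - f(A) = \sum_i \bigl(g(\lambda_i) - f(\lambda_i)\bigr)\,\ketbra{v_i}.
\end{align}

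Next I would use the hypothesis: since every eigenvalue $\lambda_i$ lies in $I$, and $f(a) \leq g(a)$ for all $a \in I$, each scalar coefficient $g(\lambda_i) - f(\lambda_i)$ is non-negative. A sum of non-negative scalars times rank-one projectors $\ketbra{v_i}$ is manifestly positive semidefinite, so $g(A) - f(A) \geq 0$, which is exactly $f(A) \leq g(A)$ in the operator ordering.

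There is no substantive obstacle here; the entire proof amounts to invoking the spectral decomposition and noting that the functional calculus is pointwise on eigenvalues. The only conceptual point worth emphasizing is that the hypothesis $\lambda_i \in I$ is essential: without it, the scalar inequality $f(a) \leq g(a)$ may not hold at the relevant points, and the argument would break down. No continuity or measurability assumption is needed on $f$ and $g$ beyond being defined on $I$, since $A$ has only finitely many eigenvalues.
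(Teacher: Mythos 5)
Your proof is correct: the spectral decomposition reduces the operator inequality to the pointwise scalar inequality on the eigenvalues, and the hypothesis $\lambda_i \in I$ delivers exactly what is needed. Note that the paper itself does not prove this lemma at all --- it simply cites it as Proposition 2.1.4 of Tropp's monograph and uses it as a black box in the proof of the matrix Bernstein inequality --- so there is no in-paper argument to compare against; your one-line functional-calculus proof is the standard one and fills that gap correctly, including the observation that no regularity of $f$ and $g$ is needed since only finitely many eigenvalues are involved.
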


\begin{lemma}[Master Bound for a Sum of Independent Random Matrices, Theorem 3.6.1 in Ref.~\cite{tropp2015introduction}]
    \label{lem:Master_Bound}
   Let $H_1, \ldots, H_M$ be $d \times d$ Hermitian matrices with $\norm{H_j} \leq 1$ for $j \in \{1,\ldots,M\}$. Let $\{\lambda_j\}_{j=1}^M$ be a sequence of independent, symmetrically distributed random variables supported on $[-1,1]$. Then, for all $t \in \mathbb{R}$ 
   \begin{align}
       \mtr{Pr}_{\lambda_1, \ldots, \lambda_M} \qty[ \sigma_{\max} (Y) \geq t] 
        &\leq \inf_{\theta > 0 } e^{-\theta t} tr\qty( \exp(\sum_j \log (\mathbb{E} [e^{\theta \lambda_j H_j}]))), \\ 
        \mtr{Pr}_{\lambda_1, \ldots, \lambda_M} \qty[ \sigma_{\min} (Y) \leq t] 
        &\leq \inf_{\theta < 0 } e^{-\theta t} \tr\qty( \exp(\sum_j \log (\mathbb{E} [e^{\theta \lambda_j H_j}]))).
   \end{align}
   where $\sigma_{\max}(Y) , \sigma_{\min}(Y) $ represent the maximum and minimum eigenvalue for $Y$, respectively.
\end{lemma}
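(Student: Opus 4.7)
The plan is to establish the bound via the standard matrix Laplace transform method, which reduces the problem to controlling the trace of a matrix moment generating function, and then invoke Lieb's concavity theorem to handle the sum of independent matrices.

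First, I would apply a Chernoff-style bound. For any $\theta > 0$, since $Y$ is Hermitian, one has the operator inequality $e^{\theta \sigma_{\max}(Y)} = \sigma_{\max}(e^{\theta Y}) \leq \tr e^{\theta Y}$, because $e^{\theta Y}$ is positive semidefinite and its largest eigenvalue is dominated by its trace. Combining this with Markov's inequality gives
\begin{equation}
    \Pr[\sigma_{\max}(Y) \geq t] \leq e^{-\theta t} \mathbb{E}[e^{\theta \sigma_{\max}(Y)}] \leq e^{-\theta t} \mathbb{E}\left[\tr \exp\!\left(\theta \sum_{j=1}^{M} \lambda_j H_j\right)\right].
\end{equation}
Taking the infimum over $\theta > 0$ would yield the claimed inequality, provided that the expectation of the trace exponential can be replaced by the expression stated in the lemma.

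The main technical step, and the hard part, is to show
\begin{equation}
    \mathbb{E}\left[\tr \exp\!\left(\sum_{j=1}^{M} \theta\lambda_j H_j\right)\right] \leq \tr \exp\!\left(\sum_{j=1}^{M} \log \mathbb{E}[e^{\theta \lambda_j H_j}]\right).
\end{equation}
For this I would invoke Lieb's concavity theorem, which asserts that the map $A \mapsto \tr \exp(L + \log A)$ is concave on the positive-definite cone for every fixed Hermitian $L$. By conditioning on $\lambda_1,\ldots,\lambda_{M-1}$, writing $\exp(\sum_{j<M}\theta\lambda_j H_j + \theta\lambda_M H_M) = \exp(\sum_{j<M}\theta\lambda_j H_j + \log e^{\theta\lambda_M H_M})$, and applying Jensen's inequality to Lieb's concave function in the variable $A = e^{\theta\lambda_M H_M}$, one passes the expectation over $\lambda_M$ inside the logarithm. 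Iterating this procedure over $j = M-1, M-2, \ldots, 1$, using independence at each stage to legitimately condition, produces the desired inequality.

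Finally, the minimum-eigenvalue bound follows by the symmetry $\sigma_{\min}(Y) = -\sigma_{\max}(-Y)$: applying the already-proved bound to $-Y = \sum_j \lambda_j(-H_j)$ with parameter $|\theta|$ and then relabeling $\theta \mapsto -\theta < 0$ reproduces the stated inequality, noting that $\mathbb{E}[e^{\theta \lambda_j H_j}] = \mathbb{E}[e^{-\theta \lambda_j (-H_j)}]$ and that $\lambda_j$ is symmetrically distributed so $\lambda_j(-H_j)$ has the same law as $\lambda_j H_j$. The symmetry assumption on $\lambda_j$ also ensures that $\mathbb{E}[e^{\theta \lambda_j H_j}]$ is a well-defined positive-definite matrix for all real $\theta$, so that its logarithm is unambiguous; this is the place where the symmetric-distribution hypothesis enters the proof.
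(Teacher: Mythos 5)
Your proof is correct. The paper does not actually prove this lemma---it imports it verbatim as Theorem 3.6.1 of Tropp's monograph---and your argument is precisely the standard proof given there: the matrix Laplace transform bound $\Pr[\sigma_{\max}(Y)\geq t]\leq e^{-\theta t}\,\mathbb{E}\,\mathrm{tr}\,e^{\theta Y}$ obtained from Markov's inequality together with $e^{\theta\sigma_{\max}(Y)}=\sigma_{\max}(e^{\theta Y})\leq \mathrm{tr}\,e^{\theta Y}$, followed by the subadditivity of matrix cumulant generating functions, which is exactly the iterated Lieb--Jensen argument you describe, and finally the reflection $\sigma_{\min}(Y)=-\sigma_{\max}(-Y)$ for the lower tail. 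One small correction to your closing remark: the symmetric-distribution hypothesis is not what makes $\mathbb{E}[e^{\theta\lambda_j H_j}]$ positive definite---that holds for any bounded random coefficient, since each realization $e^{\theta\lambda_j H_j}$ is already positive definite and expectations preserve positive definiteness. The master bound is valid without symmetry; in this paper the symmetry of the $\lambda_j$ is only exploited downstream, in the Bernstein-type corollary, where it forces $\mathbb{E}[\theta\lambda_j H_j]=0$ and thereby eliminates the first-order term in the expansion of the matrix moment generating function.
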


\begin{lemma}[Matrix Bernstein]
    Let $H_1, \ldots, H_M$ be $d \times d$ Hermitian matrices with $\norm{H_j} \leq 1$ for $j \in \{1,\ldots,M\}$. Let $\{X_j\}_{j=1}^M$ be a sequence of independent, symmetrically distributed random variables supported on $[-1,1]$ and let $Y = \sum^M_{j=1} X_j H_j$. Then, $v(Y) \leq \mathbb{E}[X_1^2] \norm{\sum_j H_j^2}$ and 
    \begin{align}
        \label{eq:Matrix_Bernstein}
        \mtr{Pr}_{X_1, \ldots, X_M } \qty[ \norm{Y} \geq t] \leq 2d \exp(-\frac{t^2/2}{v(Y) + t/3 }).
    \end{align}
\end{lemma}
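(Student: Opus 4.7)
The plan is to invoke the Master Bound (Lemma \ref{lem:Master_Bound}), which reduces the tail estimate to controlling the matrix moment generating functions $\mathbb{E}[e^{\theta X_j H_j}]$ in the positive-semidefinite order, and then to combine a scalar Bernstein bound with the Transfer Rule to produce that control. The variance bound $v(Y)\leq \mathbb{E}[X_1^2]\,\|\sum_j H_j^2\|$ comes out immediately: by independence and symmetry one has $\mathbb{E}[X_j]=0$, so all cross terms in $\mathbb{E}[Y^2]$ vanish, leaving $\mathbb{E}[Y^2]=\sum_j \mathbb{E}[X_j^2]\,H_j^2$; taking the spectral norm and using the common second moment $\mathbb{E}[X_1^2]$ gives the claim.

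For the mgf bound, the first step is a scalar Taylor-series estimate: for $|u|\leq 1$ and $|\theta|<3$, the bound $k!\geq 2\cdot 3^{k-2}$ for $k\geq 2$ yields
\begin{equation}
e^{\theta u}\;\leq\;1+\theta u+\frac{(\theta u)^2/2}{1-|\theta|/3}.
\end{equation}
Since $X_j$ is symmetric (odd moments vanish) and $|X_j|\leq 1$ (so $\mathbb{E}[X_j^{2k}]\leq \mathbb{E}[X_j^2]$ for $k\geq 1$), averaging over $X_j$ gives $\mathbb{E}[e^{\theta X_j u}]\leq \exp\!\bigl(\tfrac{g(\theta)}{2}\mathbb{E}[X_j^2]u^2\bigr)$ with $g(\theta):=\theta^2/(1-|\theta|/3)$. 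Because $H_j$ is Hermitian with eigenvalues in $[-1,1]$, the Transfer Rule (Lemma \ref{lem:Transfer_Rule}) promotes this scalar estimate to
\begin{equation}
\mathbb{E}[e^{\theta X_j H_j}]\;\preceq\;\exp\!\Bigl(\tfrac{g(\theta)}{2}\mathbb{E}[X_j^2] H_j^2\Bigr),
\end{equation}
and the operator monotonicity of the logarithm then yields $\log\mathbb{E}[e^{\theta X_j H_j}]\preceq \tfrac{g(\theta)}{2}\mathbb{E}[X_j^2] H_j^2$.

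Plugging this into Lemma \ref{lem:Master_Bound}, summing the right-hand sides over $j$, and using $\tr\exp(A)\leq d\,e^{\lambda_{\max}(A)}$ for Hermitian $A$ produces
\begin{equation}
\mathrm{Pr}[\sigma_{\max}(Y)\geq t]\;\leq\;d\inf_{0<\theta<3}\exp\!\Bigl(-\theta t+\tfrac{g(\theta)}{2}\,v(Y)\Bigr).
\end{equation}
The optimal choice $\theta^\star=t/(v(Y)+t/3)$ drives the right-hand side to $d\exp\!\bigl(-\tfrac{t^2/2}{v(Y)+t/3}\bigr)$. Because each $X_j$ is symmetric, $-Y$ has the same distribution as $Y$, so the same argument bounds $\mathrm{Pr}[\sigma_{\min}(Y)\leq -t]$; combining the two via a union bound for $\|Y\|=\max(\sigma_{\max}(Y),-\sigma_{\min}(Y))$ produces the overall factor of $2d$ asserted in Eq.~\eqref{eq:Matrix_Bernstein}.

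The main obstacle is the operator-theoretic passage from a PSD inequality on $\mathbb{E}[e^{\theta X_j H_j}]$ to one on its logarithm, followed by the summation of non-commuting terms $H_j^2$ inside a trace exponential. This is precisely the content packaged by the Master Bound through the Lieb concavity theorem, so by invoking Lemma \ref{lem:Master_Bound} as a black box the delicate non-commutative work is already handled; the remaining steps are the scalar Bernstein estimate, the Transfer Rule, and the scalar optimization in $\theta$.
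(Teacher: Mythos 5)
Your proposal is correct and follows essentially the same route as the paper: the Master Bound reduces everything to controlling $\log\mathbb{E}[e^{\theta X_j H_j}]$, which both you and the paper obtain from the Bernstein-type Taylor estimate with $k!\geq 2\cdot 3^{k-2}$, the Transfer Rule, vanishing odd moments by symmetry, and the choice $\theta = t/(v(Y)+t/3)$, with the factor $2d$ coming from the same symmetry/union-bound argument. The only cosmetic difference is that you average the scalar inequality over $X_j$ before lifting it to matrices via the Transfer Rule, whereas the paper applies the Transfer Rule to the Taylor remainder $f(XH)\leq f(1)\mathds{1}$ first and then takes the expectation; both orderings yield the identical mgf bound.
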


\begin{proof}
   Let $\theta$ be a parameter, $X$ a random variable supported on $[-1,1]$, and $H$ a $d \times d$ Hermitian matrix. We begin by expanding the exponential $e^{\theta X H}$ as
\begin{align}
    e^{\theta X H} = \1 + \theta X H + \left(e^{\theta X H} - \1 - \theta X H\right) = \1 + \theta X H + \lambda^2 H f(X H) H,
\end{align}
where the function $f: \mathbb{R} \to \mathbb{R}$ is defined by $f(x) \coloneq \frac{e^{\theta x} - \theta x - 1}{x^2}$ for $x \neq 0$, and $f(0) \coloneq \theta^2 / 2$. Since $f(x)$ is monotonically increasing, $x \leq L$ implies $f(x) \leq f(L)$. From Lemma \ref{lem:Transfer_Rule}, $\norm{X H} \leq 1$ leads to $f(X H) \leq f(1)\cdot \1$, yielding
\begin{align}
    e^{\theta X H} \leq \1 + \theta X H + f(1)\cdot (X H)^2.
\end{align}
We further bound $f(1)$ as
\begin{align}
    f(1) = e^{\theta} - \theta - 1 = \sum_{k=2}^{ \frac{\theta^k}{k!}} \leq \frac{\theta^2}{2} \sum_{k=2} \left(\frac{\theta}{3}\right)^{k-2} = \frac{\theta^2 / 2}{1 - \theta / 3}.
\end{align}
Thus, we obtain the bound
\begin{align}
    e^{\theta X H} \leq \1 + \theta X H + \frac{\theta^2 / 2}{1 - \theta / 3} (X H)^2.
\end{align}
Taking the expectation over $X$, we can obtain $\mathbb{E}_X[\theta X H] = 0$ because $X$ is symmetrically distributed. Hence, we calculate
\begin{align}
    \mathbb{E}_X \left[e^{\theta X H}\right] \leq \1 + \frac{\theta^2 / 2}{1 - \theta / 3} \mathbb{E}_X[X^2] H^2 \leq \exp\left(\frac{\theta^2 / 2}{1 - \theta / 3} \mathbb{E}_X[X^2] H^2\right).
\end{align}
Since the logarithm is operator monotone for positive definite matrices, it follows that
\begin{align}
    \log \left( \mathbb{E}_X \left[e^{\theta X H}\right] \right) \leq \frac{\theta^2 / 2}{1 - \theta / 3} \mathbb{E}_X[X^2] H^2.
\end{align}
For a sequence of independent, symmetrically distributed random variables $\{X_j\}_{j=1}^M$, we sum over $j$ to obtain
\begin{align}
    \sum_j \log \left( \mathbb{E}_{X_j} \left[e^{\theta X_j H_j}\right] \right) \leq \frac{\theta^2 / 2}{1 - \theta / 3} \sum_j \mathbb{E}[X_j^2] H_j^2.
\end{align}
By applying the operator trace inequality $A \leq B \Rightarrow \tr\left(e^A\right) \leq \tr\left(e^B\right)$, we derive the tail bound for the maximum eigenvalue of $Y \coloneq \sum_j X_j H_j$:
\begin{align}
    \mtr{Pr}_{X_1, \ldots, X_M} \left[ \sigma_{\max}(Y) \geq t \right]
    &\leq \inf_{\theta > 0} \tr \left( \exp\left( \sum_j \log \mathbb{E} \left[ e^{\theta X_j H_j} \right] \right) \right) e^{-\theta t} \\
    &\leq \inf_{\theta > 0} d e^{-\theta t} \exp\left( \frac{\theta^2 / 2}{1 - \theta / 3} \sum_j \mathbb{E}[X_j^2] \norm{H_j}^2 \right).
\end{align}
Choosing $\theta = \frac{t}{v(Y) + t / 3}$ with $v(Y) \coloneq \sum_j \mathbb{E}[X_j^2] \norm{H_j}^2$,
we obtain the following inequality:
\begin{align}
    \mtr{Pr}_{\lambda_1, \ldots, \lambda_M} \left[ \sigma_{\max}(Y) \geq t \right] \leq d \exp\left( -\frac{t^2 / 2}{v(Y) + t / 3} \right).
\end{align}


The same argument applies to the eigenvalue of $Y$. We consider the following probability
\begin{align}
    \mtr{Pr}[\sigma_{\min}(Y) \leq -t] = \mtr{Pr}[\sigma_{\max} (-Y) \geq t].
\end{align}
Since $X_j$ in $Y$ is symmetrically distributed, 
\begin{align}
    \inf_{\theta > 0} \tr \left( \exp\left( \sum_j \log \mathbb{E} \left[ e^{-\theta X_j H_j} \right] \right) \right) e^{-\theta t} 
    = \inf_{\theta > 0} \tr \left( \exp\left( \sum_j \log \mathbb{E} \left[ e^{\theta X_j H_j} \right] \right) \right) e^{-\theta t} .
\end{align}
Hence, we can obtain the same form inequality, 
\begin{align}
    \mtr{Pr}_{\lambda_1, \ldots, \lambda_M} \left[ \sigma_{\min}(Y) \leq -t \right] \leq d \exp\left( -\frac{t^2 / 2}{v(Y) + t / 3} \right).
\end{align}
Combining these facts, we obtain the matrix Bernstein inequality \eqref{eq:Matrix_Bernstein}. 

\end{proof}





\subsection{Tighter probability bound for median estimate}
\label{appsub:eval_tail_prob}

In this section, we present a tighter probability bound for the median of independent samples, improving upon Hoeffding's inequality. Notably, our target bound involves the cumulative distribution function (CDF), which is generally intractable to compute analytically. However, the CDF can be evaluated numerically, leading to a reduced constant factor in the total query complexity of our proposal in numerical calculations.

\begin{lemma}
Let $R$ be a positive integer. Suppose that $k^{(r)}$ for $r \in \{1, 2, \ldots, R\}$ are independent random variables satisfying $\mtr{Pr}\qty[\abs{k^{(r)} - k^*} > \varepsilon] \leq \mu$ for $\varepsilon > 0$ and $0 < \mu < \frac{1}{2}$. Then, the probability that the median deviates from the true value exceeds $\varepsilon$ is bounded as
\begin{align}
    \mtr{Pr}\qty[\abs{k^{(\mathrm{med})} - k^*} > \varepsilon] \leq 1 - F\qty(\left\lceil \frac{R}{2} \right\rceil - 1; \mu, R),
\end{align}
where $k^{(\mathrm{med})}$ denotes the median of $\{k^{(r)}\}_{r=1}^R$, and $F(n; \mu, R)$ is the cumulative distribution function of the binomial distribution with success probability $\mu$ and $R$ trials.
    
\end{lemma}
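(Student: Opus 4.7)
The plan is to reduce the bound on the tail probability of the median to a bound on the binomial tail via two observations: (i) a ``bad'' median forces many individual samples to be bad, and (ii) the count of bad individual samples is stochastically dominated by a $\mathrm{Binomial}(R,\mu)$ random variable because each sample is bad with probability at most $\mu$ and the samples are independent.

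Concretely, I would first introduce the random variable $X := \#\{r\in\{1,\ldots,R\} : |k^{(r)} - k^*| > \varepsilon\}$ and establish the key set inclusion
\begin{align}
\bigl\{|k^{(\mathrm{med})}-k^*| > \varepsilon\bigr\} \;\subseteq\; \bigl\{ X \geq \lceil R/2 \rceil \bigr\}.
\end{align}
To prove this inclusion, split into the two cases $k^{(\mathrm{med})} > k^* + \varepsilon$ and $k^{(\mathrm{med})} < k^* - \varepsilon$; by the definition of median, in either case at least $\lceil R/2 \rceil$ of the order statistics lie on the ``wrong'' side of the $\varepsilon$-interval around $k^*$, and those samples are automatically bad. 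Each of the sub-cases is a short combinatorial argument on order statistics that works uniformly for both odd and even $R$.

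Next, I would invoke stochastic dominance: since $k^{(r)}$ are independent and each satisfies $\mathrm{Pr}[|k^{(r)}-k^*|>\varepsilon]\leq \mu$, the random variable $X$ is stochastically dominated by a random variable $Y\sim\mathrm{Binomial}(R,\mu)$. Hence for every integer $m$,
\begin{align}
\mathrm{Pr}[X \geq m] \;\leq\; \mathrm{Pr}[Y \geq m] \;=\; 1 - F(m-1;\mu,R).
\end{align}
Applying this with $m=\lceil R/2\rceil$ and combining with the set inclusion above yields exactly the claimed bound.

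The ``hard'' part is essentially bookkeeping rather than any deep argument: I have to be a bit careful when justifying the set inclusion for even $R$, where the median is ambiguous (lower median, upper median, or their average). The argument must work under whichever convention is used in Algorithm~\ref{alg:unified_framework}; fortunately, in each standard convention the statement ``at least $\lceil R/2\rceil$ samples are bad'' still follows from $|k^{(\mathrm{med})}-k^*|>\varepsilon$, because strictly more than $R/2$ samples must lie on one side of $k^*\pm\varepsilon$ to pull the median outside the interval. After verifying this case analysis explicitly, the rest is a direct application of stochastic dominance and the definition of the binomial CDF.
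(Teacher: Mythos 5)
Your proposal is correct and follows essentially the same route as the paper's proof: the inclusion $\{|k^{(\mathrm{med})}-k^*|>\varepsilon\}\subseteq\{\text{at least }\lceil R/2\rceil\text{ samples are bad}\}$, followed by comparison of the count of bad samples with a $\mathrm{Binomial}(R,\mu)$ variable and the definition of the binomial CDF (note $\lceil R/2\rceil=\lfloor (R+1)/2\rfloor$, so your index matches the paper's). Your explicit treatment of the stochastic-dominance step and of the even-$R$ median convention only makes precise what the paper leaves implicit.
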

\begin{proof} 
By definition of the median, the event $\{\abs{k^{(\text{med})} - k^*} > \varepsilon\}$ implies that at least $R/2$ of the samples $\{k^{(r)}\}_{r=1}^R$ deviate from $k^*$ by more than $\varepsilon$, yielding
\begin{align}
    \left\{ \abs{k^{(\text{med})} - k^*} > \varepsilon \right\} \subseteq \left\{ \sum_{r=1}^R \chi\qty[\abs{k^{(r)} - k^*} > \varepsilon] \geq \frac{R}{2} \right\},
\end{align}
where $\chi[\cdot]$ denotes the indicator function.
Defining $Z_r \coloneq \chi\qty[\abs{k^{(r)} - k^*} > \varepsilon]$, we have independent Bernoulli random variables with its mean satisfying $\mathbb{E}[Z_r] \leq \mu$. It follows that
\begin{align}
    \mtr{Pr}\qty[\abs{k^{(\text{med})} - k^*} > \varepsilon]
    &\leq \mtr{Pr}\qty[\sum_{r=1}^R Z_r \geq \frac{R}{2}] \\
    &= 1 - \mtr{Pr}\qty[\sum_{r=1}^R Z_r < \frac{R}{2}] \\
    &= 1 - \mtr{Pr}\qty[\sum_{r=1}^R Z_r \leq \left\lfloor \frac{R+1}{2} \right\rfloor - 1].
\end{align}
For a non-negative integer $n \leq R$, we introduce the CDF of the binomial distribution:
\begin{align}
    F(n; \mu, R) \coloneq  \sum_{m=0}^n \binom{R}{m} \mu^m (1 - \mu)^{R - m}.
\end{align}
Using this notation, the failure probability is compactly expressed as
\begin{align}
    \mtr{Pr}\qty[\abs{k^{(\text{med})} - k^*} > \varepsilon] \leq 1 - F\qty(\left\lfloor \frac{R+1}{2} \right\rfloor - 1; \mu, R).
\end{align}
\end{proof}

Note that $\Phi(R) \coloneq 1 - F\qty(\left\lfloor \frac{R+1}{2} \right\rfloor - 1; \mu, R)  $ is upper bounded by $\exp(-2 R(1/2 -\mu)^2 )$ by using Hoeffding's inequality. 

\subsection{Explict implementation of block-encoding for fermionic $k$-RDM estimation}
\label{appsub:construction-kRDM}

In this section, we examine the explicit implementation of block-encodings for operators used to evaluate fermionic $k$-RDMs. In order to estimate the value of fermionic $k$-RDM elements with QAE algorithm and our proposals,
we must develop concrete implementations of the block-encodings of target observables. Since the $k$-RDM elements are not always real numbers, we decompose the target values into real and imaginary parts. Let us consider a $k$-body operator as 
\begin{align}
    A^{\bm p}_{\bm q} =  a_{p_1}^\dagger \cdots a_{p_k}^\dagger a_{q_1} \cdots a_{q_k}.
\end{align}
Then, the real and imaginary parts of expectation values can be obtained by measuring the following operators
\begin{align}
    \mtr{Re} [A^{\bm p}_{\bm q} ] \coloneq \frac{ A^{\bm p}_{\bm q} +  A^{\bm q}_{\bm p}}2, \quad 
    \mtr{Im} [A^{\bm p}_{\bm q} ] \coloneq \frac{ A^{\bm p}_{\bm q} -  A^{\bm q}_{\bm p}}{2i}. 
\end{align}
Note that if $\bm p = \bm q $, we only consider ${A^{\bm p}_{\bm p}}$ and that $ \mtr{Re} [A^{\bm p}_{\bm q} ] =  \mtr{Re} [A^{\bm q}_{\bm p} ], \mtr{Im} [A^{\bm p}_{\bm q} ] =  -\mtr{Im} [A^{\bm q}_{\bm p} ]$, so we can generally take $p < q$. 

We first describe the block encoding for the case of 1-RDM.
By Jordan-Wigner transformation, the operators are expressed as 
\begin{gather}
    2\mtr{Re}[A^p_q] = \frac{ X_p \vec{Z} X_q + Y_p \vec{Z} Y_q }2, (p < q) \\
    2\mtr{Im}[A^p_q] = \frac{ X_p \vec{Z} X_q -  Y_p \vec{Z} Y_q }{2}, (p < q) \\
    {A^p_p} = \frac{\1 - Z_p}2
\end{gather}
where the notation $P_p \vec{Z} P_q$ denotes the operator $A_p Z_{p+1} \cdots Z_{q-1} A_q $ for $ p < q $.
From these expressions, we can easily construct block-encoding of these observables with one ancilla qubit: 
\begin{align}
    U_{p,q}^{(1)} &\coloneq \ketbra{0} \otimes X_p \vec{Z} X_q + \ketbra{1} \otimes Y_p \vec{Z} Y_q, \quad \text{for $p < q$.} \\
    U_{p,q}^{(1, \mtr{Re})} &\coloneq (H \otimes \1)U_{p,q}^{(1)}  (H \otimes \1) \Rightarrow  (\bra{0} \otimes \1)U_{p,q}^{(1, \mtr{Re} )} (\ket{0} \otimes \1) = 2\mtr{Re}[A^p_q] \\
     U_{p,q}^{(1, \mtr{Im})} &\coloneq (H \otimes \1)U_{p,q}^{(1)}  (ZH \otimes \1) \Rightarrow  (\bra{0} \otimes \1)U_{p,q}^{(1, \mtr{Im} )} (\ket{0} \otimes \1) = 2\mtr{Im}[A^p_q] \\
     U_{p,p}^{(1)} &\coloneq  (H \otimes \1) [ \ketbra{0} \otimes \1 
     + \ketbra{1} \otimes Z_p ] (ZH \otimes \1).  
\end{align}

For 2-RDMs, 
it is beneficial to provide the following identites:
\begin{align}
\label{eq:identity-1}
\text{Re}(A^{p_1}_{q_1} A^{p_2}_{q_2}) = \text{Re}(A^{p_1}_{q_1}) \cdot \text{Re}(A^{p_2}_{q_2}) - \text{Im}(A^{p_1}_{q_1}) \cdot \text{Im}(A^{p_2}_{q_2}), \\
\label{eq:identity-2}
\text{Im}(A^{p_1}_{q_1} A^{p_2}_{q_2}) = \text{Re}(A^{p_1}_{q_1}) \cdot \text{Im}(A^{p_2}_{q_2}) + \text{Im}(A^{p_1}_{q_1}) \cdot \text{Re}(A^{p_2}_{q_2}).
\end{align}
From the anti-commutative property of fermion operators as $a_{p_1}^\dagger a_{q_1} a_{p_2}^\dagger a_{q_2} = -a_{p_1}^\dagger a_{p_2}^\dagger a_{q_1} a_{q_2}$, we further derive 
\begin{align}
    \mtr{Re} [A^{\bm p}_{\bm q}]  = -  \text{Re}(A^{p_1}_{q_1}) \cdot \text{Re}(A^{p_2}_{q_2}) + \text{Im}(A^{p_1}_{q_1}) \cdot \text{Im}(A^{p_2}_{q_2}), \\
    \mtr{Im} [A^{\bm p}_{\bm q}]  = -  \text{Re}(A^{p_1}_{q_1}) \cdot \text{Im}(A^{p_2}_{q_2}) - \text{Im}(A^{p_1}_{q_1}) \cdot \text{Re}(A^{p_2}_{q_2}).
\end{align}
From this expression, when $p_1 < q_1 <  p_2 < q_2 $, then 
block-encoding is expressed as 
\begin{align}
    U_{\bm p, \bm q}^{(2, \mtr{Re})} &\coloneq (H \otimes \1 ) \qty[\ketbra{0} \otimes U_{p_1,q_1}^{(1,\mtr{Im } )} U_{p_2,q_2}^{(1,\mtr{Im })} + 
    \ketbra{1} \otimes U_{p_1,q_1}^{(1,\mtr{Re})} U_{p_2,q_2}^{(1,\mtr{Re })} ](ZH \otimes \1 ) \\ 
    U_{\bm p, \bm q}^{(2, \mtr{Im})} &\coloneq -(H \otimes \1 ) \qty[\ketbra{0} \otimes U_{p_1,q_1}^{(1,\mtr{Re } )} U_{p_2,q_2}^{(1,\mtr{Im })} + 
    \ketbra{1} \otimes U_{p_1,q_1}^{(1,\mtr{Im})} U_{p_2,q_2}^{(1,\mtr{Re })} ] (H \otimes \1 )
\end{align}
These implementations yield block-encodings of $2\mtr{Re}[A^{\bm p}_{\bm q}]$ and $2\mtr{Im}[A^{\bm p}_{\bm q}]$, respectively, in the case where  $p_1 < q_1 <  p_2 < q_2 $. For cases in which some indices coincide, we disregard the contribution from some imaginary parts. This construction ensures that the norm of the embedded observables remains $1$, avoiding the shrink in the norm of observables.
We remark that, even in the case where the arguments are not ordered as such, we may construct the block encoding by modifying the structure of the Z string.

For higher-order RDM elements, $A^{\bm p}_{\bm q}$ can be decomposed into a linear combination of products of $1$-RDM terms, as in Eqs.~\eqref{eq:identity-1} and \eqref{eq:identity-2}. Following the same procedure of $2$-RDM elements, block-encodings for all higher-order RDM elements can be systematically constructed.

\subsection{The evaluation of Eq.\eqref{eq:complex_formula} }
\label{appsub:matrix_norm_eval}

\begin{lemma}
    \label{lem:eval_combination_series}
    For integers $k,\eta$, if $0 \leq k \leq \eta$ and $\eta+k \leq N$ holds, then,  
    \begin{align}
        \sum^{k}_{m=0}  \mqty(N \\ k)\mqty(N- k \\ k - m) \mqty(k \\ m )  \cdot \mqty(N-(2k - m ) \\ \eta - k ) / \mqty(N \\ \eta)
        = \mqty(\eta \\ k) \mqty(N - \eta  + k \\ k  )
    \end{align}
\end{lemma}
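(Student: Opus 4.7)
\medskip

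\noindent
\textbf{Proof plan for Lemma \ref{lem:eval_combination_series}.}
The plan is to establish the identity by a double-counting argument on a single combinatorial object. Multiplying through by $\binom{N}{\eta}$, the claim becomes
\begin{equation*}
    \sum_{m=0}^{k}\binom{N}{k}\binom{N-k}{k-m}\binom{k}{m}\binom{N-2k+m}{\eta-k}
    \;=\;
    \binom{N}{\eta}\binom{\eta}{k}\binom{N-\eta+k}{k},
\end{equation*}
and I will prove this by showing both sides count the number of ordered triples $(P,Q,S)$ of subsets of $[N]:=\{1,\dots,N\}$ such that
$|P|=|Q|=k$, $|S|=\eta-k$, and $S\cap(P\cup Q)=\emptyset$.

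First I would interpret the left-hand side by fixing $m=|P\cap Q|$. The factor $\binom{N}{k}$ chooses $P$; then $\binom{k}{m}$ selects the $m$-element subset $P\cap Q\subseteq P$; then $\binom{N-k}{k-m}$ picks the remaining $k-m$ elements of $Q$ from $[N]\setminus P$, so that in total $\binom{N}{k}\binom{k}{m}\binom{N-k}{k-m}$ enumerates ordered pairs $(P,Q)$ of $k$-subsets with $|P\cap Q|=m$. Since $|P\cup Q|=2k-m$, the remaining factor $\binom{N-2k+m}{\eta-k}$ counts the ways to pick $S\subseteq[N]\setminus(P\cup Q)$ with $|S|=\eta-k$. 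Summing over $m=0,\ldots,k$ exhausts all possible intersection sizes and thus counts exactly the triples described above.

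Next I would recount the same collection by grouping through $R:=P\cup S$. Since $P\cap S=\emptyset$ (as $S$ is disjoint from $P\cup Q\supseteq P$) and $|P|=k$, $|S|=\eta-k$, we have $|R|=\eta$. Conversely, any such triple is uniquely reconstructed by the data $(R,P,Q)$ where: (i) $R\subseteq[N]$ with $|R|=\eta$, selectable in $\binom{N}{\eta}$ ways; (ii) $P\subseteq R$ with $|P|=k$, which fixes $S:=R\setminus P$, selectable in $\binom{\eta}{k}$ ways; (iii) $Q\subseteq[N]\setminus S$ with $|Q|=k$ (the only remaining constraint on $Q$ is disjointness from $S$, since it is automatically allowed to intersect $P$), selectable in $\binom{N-|S|}{k}=\binom{N-\eta+k}{k}$ ways. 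Multiplying yields the right-hand side.

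The hypotheses $0\leq k\leq\eta$ and $\eta+k\leq N$ are precisely what ensures all the binomial coefficients above are well-defined and nonnegative (in particular $\eta-k\geq 0$ and $N-\eta+k\geq 2k\geq k$), so no boundary cases need separate treatment. There is no real obstacle here: the only potentially delicate point is keeping track of which intersections are forced to be empty and which are free, but the change of variables $R=P\cup S$ makes this transparent. Dividing the resulting equality by $\binom{N}{\eta}$ recovers the statement of the lemma.
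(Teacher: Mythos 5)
Your proof is correct, and it takes a genuinely different route from the paper. The paper proceeds algebraically: it substitutes $j=k-m$, rewrites $\binom{N-(k+j)}{\eta-k}$ as $\binom{N-k}{\eta-k}$ times a ratio of Pochhammer symbols, applies Vandermonde's convolution to evaluate $\sum_j\binom{N-\eta}{j}\binom{k}{k-j}=\binom{N-\eta+k}{k}$, and then simplifies the resulting ratio of binomial coefficients. You instead give a bijective double count of triples $(P,Q,S)$ with $|P|=|Q|=k$, $|S|=\eta-k$, and $S\cap(P\cup Q)=\emptyset$: stratifying by $m=|P\cap Q|$ reproduces the left-hand side, and regrouping via $R=P\cup S$ reproduces the right-hand side. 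Both arguments are complete; your accounting of which disjointness constraints are forced and which are free is accurate, the map $(P,Q,S)\mapsto(R,P,Q)$ is indeed a bijection, and your observation that the hypotheses $0\le k\le\eta$ and $\eta+k\le N$ keep every binomial coefficient well-defined (so no degenerate cases arise) is exactly right. The paper's route is mechanical and generalizes readily to other hypergeometric-type sums, whereas yours is more elementary — it avoids Pochhammer symbols and Vandermonde entirely — and it explains \emph{why} the identity holds, which also makes it self-verifying on small cases. Either proof could stand in the appendix.
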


\begin{proof}
    Firstly we rewrite the LHS of the target equation,
    \begin{align}
    \label{eq:original_form}
        \sum^{k}_{m=0}  \mqty(N \\ k)\mqty(N- k \\ k - m) \mqty(k \\ m )  \cdot \mqty(N-(2k - m ) \\ \eta - k ) / \mqty(N \\ \eta)
        = \frac{\mqty(N \\ k)}{\mqty(N \\ \eta)} \sum^k_{m=0} \mqty(N-k \\ k-m) \mqty(k \\ m ) \mqty(N-(2k-m) \\ \eta - k).
    \end{align}
    For the part of sum, we introduce a new variable $j = k-m$ and obtain,
    \begin{align}
        \label{eq:Alice}
         \sum^k_{m=0} \mqty(N-k \\ k-m) \mqty(k \\ m ) \mqty(N-(2k-m) \\ \eta - k)= \sum^k_{j=0} \mqty(N-k \\ j) \mqty(k \\ k-j) \mqty(N - (k+j) \\ \eta - k) .
    \end{align}
    Regarding the last term, we can demonstrate the following identity,
    \begin{align}
        \mqty(N - (k+j)  \\ \eta - k ) = \mqty(N - k \\ \eta -k) \cdot 
        \frac{ (-(N-\eta) )_j }{(-(N-k))_j},
    \end{align}
    where $(a )_j $ is a Pochhammer symbol defined for $j \geq 0$ by
    \begin{align}
        (a)_j \coloneq \begin{cases} 
        1, &\quad (j=0) \\
        a (a+1) \cdots (a+j-1), &\quad \text{otherwise.}
        \end{cases}
    \end{align}
    From the fact $\mqty(c \\ j) = \frac{(-1)^j}{j !} (-c)_j$ for $c \geq j$, we can express Eq.~\eqref{eq:Alice} as follows,
    \begin{align}
        \sum^k_{j=0} \mqty(N-k \\ j) \mqty(k \\ k-j) \mqty(N - (k+j) \\ \eta - k) 
        &= \mqty( N -k \\ \eta -k ) \sum^k_{j =0} \frac{(-1)^j}{j!} \qty(-(N-k))_j \cdot \frac{ (-(N-\eta))_j }{ (-(N-k))_j } \mqty(k \\ k-j) \\
        &= \mqty( N -k \\ \eta -k ) \sum^k_{j =0} \mqty(N-\eta \\ j) \mqty(k \\k- j) \\
        &= \mqty( N -k \\ \eta -k )  \mqty(N -\eta + k \\ k ).
    \end{align}
    The last equation comes from Vandermonde's convolution. Inserting this result to Eq.\eqref{eq:original_form}, we have
    \begin{align}
        \frac{\mqty(N \\ k) \cdot \mqty(N-k \\ \eta- k )}{\mqty(N \\ \eta)} \mqty(N -\eta + k \\ k ) = \mqty(\eta \\ k) \mqty(N - \eta  + k \\ k  ).
    \end{align}
\end{proof}

\end{document}